\newtheorem{lemma}{Lemma}[section]
\newtheorem{theorem}{Theorem}[section]
\newtheorem{corollary}{Corollary}[section]
\newtheorem{definition}{Definition}[section]
\newtheorem{claim}{Claim}[section]
\newcommand{\pmc}{\mathsf{Pmc}}
\newcommand{\cloud}{\mcal{C}}
\newcommand{\mincut}{\mathrm{C}}
\newcommand{\aaaa}{\mathrm{(a)}}
\newcommand{\bbbb}{\mathrm{(b)}}
\newcommand{\cccc}{\mathrm{(c)}}
\newcommand{\dddd}{\mathrm{(d)}}
\newcommand{\eeee}{\mathrm{(e)}}
\newcommand{\lp}{\left(}
\newcommand{\rp}{\right)}
\newcommand{\lb}{\left[}
\newcommand{\rb}{\right]}
\newcommand{\lbp}{\left\{}
\newcommand{\rbp}{\right\}}
\newcommand{\ol}{\overline}
\newcommand{\mcal}{\mathcal}
\newcommand{\mbb}{\mathbb}
\newcommand{\msf}{\mathsf}
\newcommand{\mfrak}{\mathfrak}
\newcommand{\ra}{\rightarrow}
\newcommand{\lra}{\leftrightarrow}
\title{Two Unicast Information Flows over Linear Deterministic Networks}
\author{
\authorblockN{I-Hsiang Wang, Sudeep U. Kamath, and David N. C. Tse}\\
\authorblockA{Wireless Foundations\\
University of California at Berkeley,\\
Berkeley, California 94720, USA\\
\textsf{\{ihsiang, sudeep, dtse\}@eecs.berkeley.edu}}
%%\and
%%\authorblockN{David Tse}
%%\authorblockA{Wireless Foundations\\
%%University of California at Berkeley,\\
%%Berkeley, CA94720, USA\\
%%\textsf{dtse@eecs.berkeley.edu}}
%\thanks{This work was supported by National Science Foundation under grant \# CCF-0830796 and a gift from Qualcomm Corporate.}
}
\begin{document}
\maketitle

\begin{abstract}
We investigate the two unicast flow problem over layered linear deterministic networks with arbitrary number of nodes. When the minimum cut value between each source-destination pair is constrained to be $1$, 
%When each source can reach its own destination, 
it is obvious that the triangular rate region $\{(R_1,R_2):R_1,R_2\ge 0, R_1+R_2\le 1\}$ can be achieved, and that one cannot achieve beyond the square rate region $\{(R_1,R_2):R_1,R_2\ge 0, R_1\le1,R_2\le1\}$. Analogous to the work by Wang and Shroff for \emph{wired} networks \cite{WangShroff_10}, we provide the necessary and sufficient conditions for the capacity region to be the triangular region and the necessary and sufficient conditions for it to be the square region. Moreover, we completely characterize the capacity region and conclude that there are exactly three more possible capacity regions of this class of networks, in contrast to the result in wired networks where only the triangular and square rate regions are possible. Our achievability scheme is based on linear coding over an extension field with at most four nodes performing special linear coding operations, namely interference neutralization and zero forcing, while all other nodes perform random linear coding.
%Moreover we find that, in contrast to the wired network \cite{WangShroff_10} where once the capacity region is strictly larger than the triangular region, $(1,1)$ is achievable, in the linear deterministic network either $(1/2,1)$ or $(1,1/2)$ must be achievable. We also furnish examples where one of these two points lies on the boundary of the capacity region as a corner point while the other is strictly outside the capacity region.

%We investigate the two unicast flow problem over linear deterministic networks with arbitrary number of nodes. When each source can reach its own destination, it is obvious that the triangular rate region $\{(R_1,R_2):R_1,R_2\ge 0, R_1+R_2\le 1\}$ can be achieved. Analogous to the work by Wang and Shroff for \emph{wired} networks \cite{WangShroff_10}, we give the necessary and sufficient condition for the capacity region being the triangular region. Moreover we find that, in contrast to the wired network \cite{WangShroff_10} where once the capacity region is strictly larger than the triangular region, $(1,1)$ is achievable, in the linear deterministic network either $(1/2,1)$ or $(1,1/2)$ must be achievable. We also furnish examples where one of these two points lies on the boundary of the capacity region as a corner point while the other is strictly outside the capacity region.
\end{abstract}

\section{Introduction}
Characterizing the fundamental limit of delivering information from multiple sources to multiple destinations over networks is the holy grail in network information theory. The ultimate goal is to characterize the capacity region of multi-source-multi-destination information flows over arbitrary networks.
%Characterizing the capacity region of multiple unicast flows over an arbitrary network is the holy grail in network information theory. 
Exploring \emph{wired} network models yields fruitful understanding in this problem, and the capacity of single unicast \cite{FordFulkerson_56} and multicast \cite{AhlswedeCai_00}
%multiple-source-single-destination (multiple-access), and single-source-multiple-destination (broadcast) 
are fully characterized. 
In wired networks, however, all links are orthogonal to one another, and such a model cannot fully capture the \emph{broadcast} and \emph{superposition} nature of wireless networks.
In \cite{AvestimehrDiggavi_09}, a deterministic approach is proposed as a bridge for using results in wired networks to help understand wireless network information flow. The proposed \emph{linear deterministic network} model turns out to be very useful for studying wireless networks as it preserves the broadcast and superposition aspects. Capacity of several traffic patterns are characterized completely in linear deterministic networks and approximately in Gaussian networks, including single unicast and multicast \cite{AvestimehrDiggavi_09}.

In the above mentioned problems where good understanding has been established, there is only one user's information flow in the network and no \emph{interference} from other users. However, as for how multiple information flows interact as they interfere with one another, very little is known.
%, even for two unicast flows, let alone the general multi-source-multi-destination information flow problem. 
To the best of our knowledge, even for the two unicast problem, there is no capacity results for general wired networks, let alone the general multi-source-multi-destination information flow problem.
Instead of attempting directly to characterize the capacity region for the general two unicast problem, in \cite{WangShroff_10} Wang and Shroff study the \emph{solvability} of two-unicast wired networks, or equivalently, the achievability of the $(1,1)$ rate pair, for two unicast flows over arbitrary wired networks with integer link capacities, to make progress in this problem. They provide the necessary and sufficient condition for achieving the $(1,1)$ rate pair. They show that a simple sum rate outer bound called the Network Sharing Bound \cite{YanYang_06} turns out to be tight for the $(1,1)$ point, i.e. if the integer-valued bound is strictly greater than $1$, then $(1,1)$ can be achieved. The result in \cite{WangShroff_10} can also be understood as characterizing the capacity region for the class of wired networks where the minimum cut value between each source-destination pair is constrained to be $1$. This is because that rate pairs outside the square rate region $\mfrak{S}:=\{ (R_1,R_2): R_1,R_2\ge 0, R_1\le1,R_2\le1\}$ cannot be achieved, while those in the triangular rate region $\mfrak{T}:=\{ (R_1,R_2): R_1,R_2\ge 0,R_1+R_2\le1\}$ can always be achieved by time-sharing and routing. The result in \cite{WangShroff_10} implies that once one can achieve beyond the triangular region $\mfrak{T}$, one can achieve the square region $\mfrak{S}$. Hence, there are only two possible capacity regions for this class of networks, the triangular region $\mfrak{T}$ and the square region $\mfrak{S}$. See Fig.~\ref{fig_CapRegions} for an illustration of these rate regions.

\begin{figure}[htbp]
{\center
%\subfigure[Wired Network]{\includegraphics[width=1.7in]{CapRegions_wired.pdf}}
%\subfigure[Linear Deterministic Network]{\includegraphics[width=1.7in]{CapRegions_wireless.pdf}}
%\subfigure[Wired Network]{\includegraphics[width=1.7in]{CapRegions_wired.pdf}}\\
\subfigure[Network Sharing Bound $=1$]{\includegraphics[width=2in]{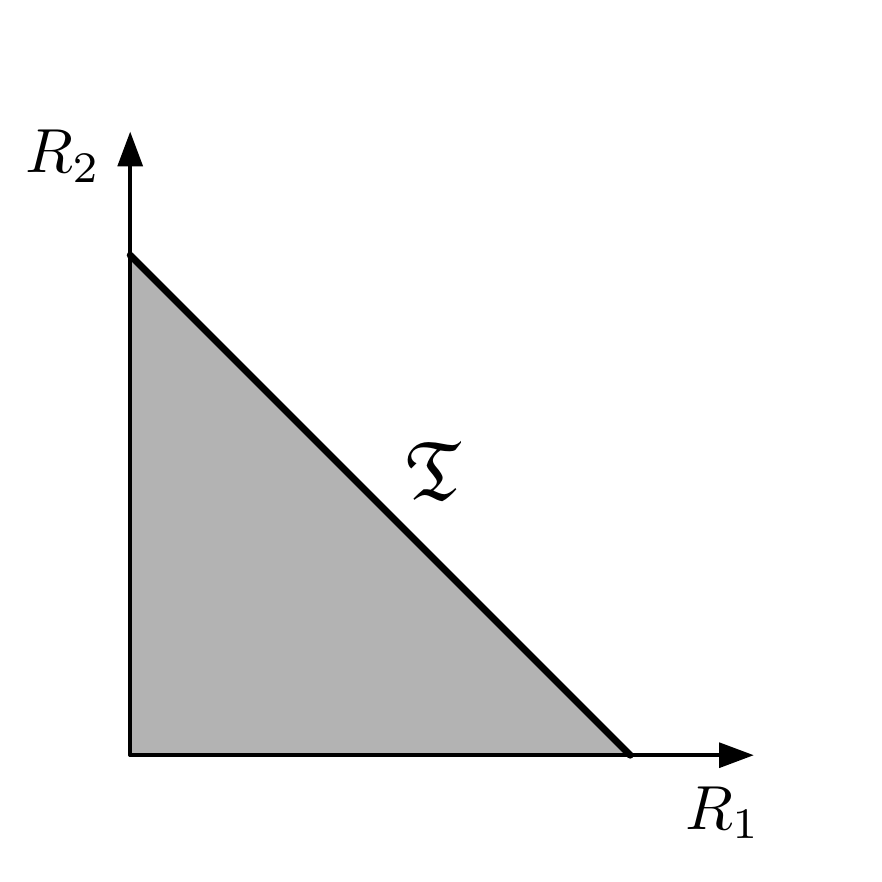}}
\subfigure[Network Sharing Bound $\ge2$]{\includegraphics[width=2in]{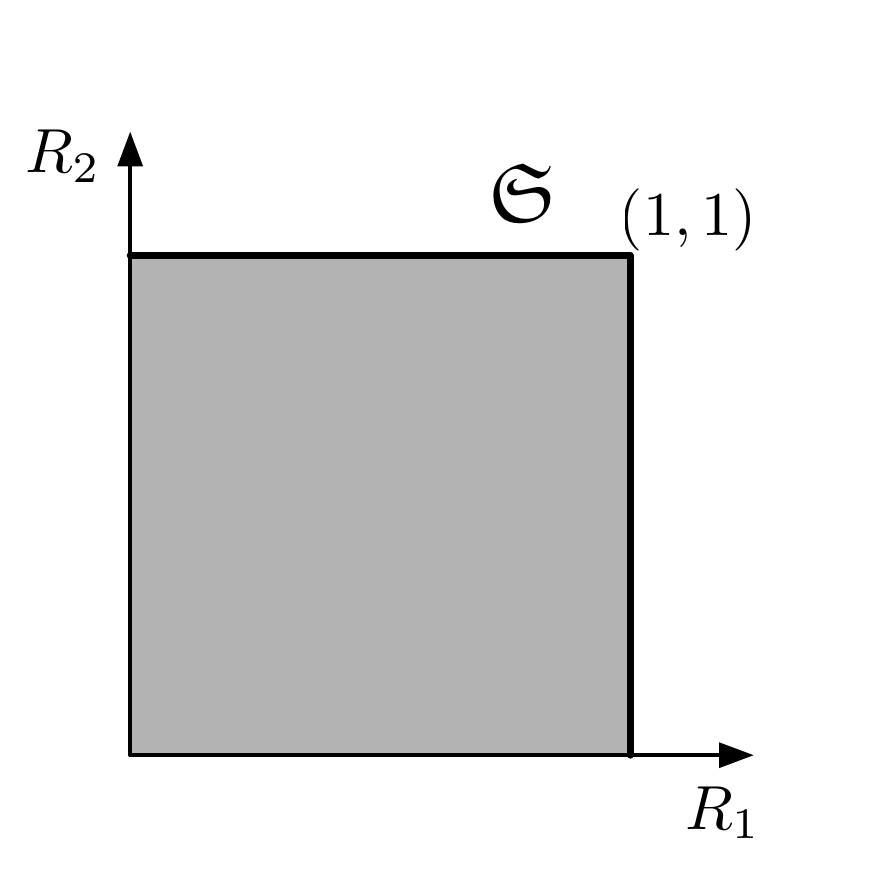}}
\caption{Capacity Regions for Wired Networks \cite{WangShroff_10}}
\label{fig_CapRegions}
}
\end{figure}

%\subsection*{Our Contribution}
In this paper, we take an initial step towards understanding the two unicast flow problem over linear deterministic networks \cite{AvestimehrDiggavi_09} with arbitrary number of nodes. Our main result is an analog of \cite{WangShroff_10} over linear deterministic networks. We assume that all channel strengths are zero or unity, that the network is layered and that each source can reach its own destination, and hence the minimum cut value between each source-destination pair is constrained to be $1$. Similar to wired networks, rate pairs outside the square rate region $\mfrak{S}$ cannot be achieved, and rate pairs inside the triangular rate region $\mfrak{T}$ can be achieved by time-sharing between two users' single unicast flows.
For this class of networks, we completely characterize the capacity region. We show that the capacity region of such a network must be one of the five regions depicted in Fig.~\ref{fig_CapRegions_2}, and provide the necessary and sufficient conditions for the capacity region to be each of them. 

\begin{figure}[htbp]
{\center
%\subfigure[An omniscient node exists]{\includegraphics[width=2in]{CapRegions_T.pdf}}
%\subfigure[No omniscient node and $\textrm{T}^{(12)}$]{\includegraphics[width=2in]{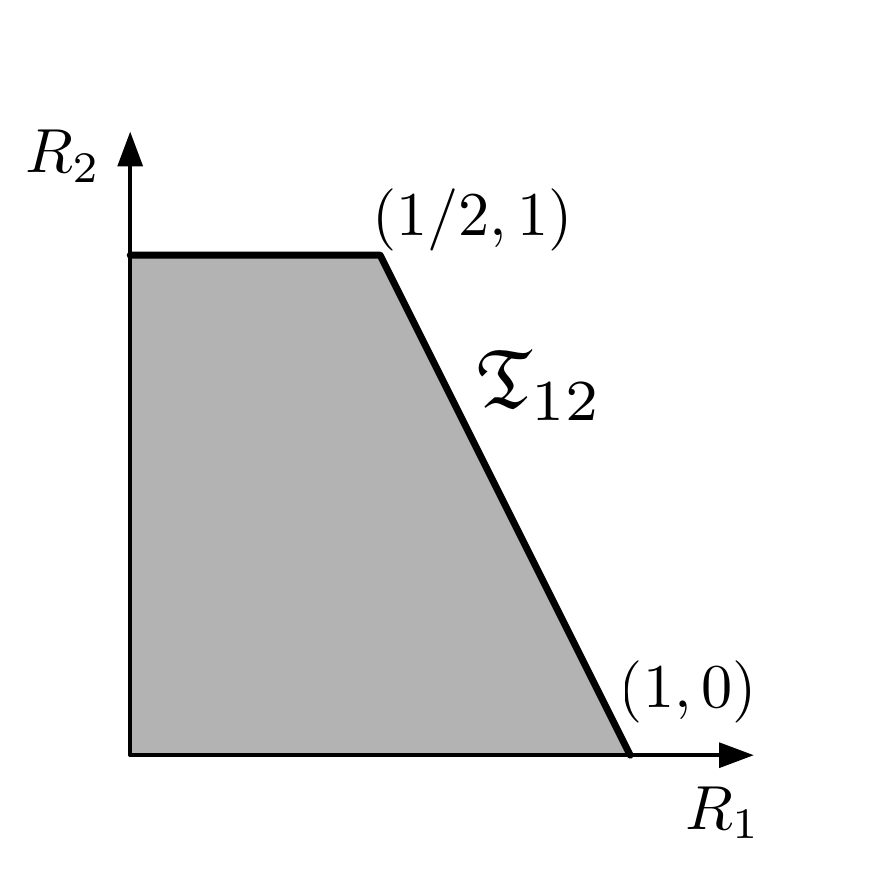}}
%\subfigure[No omniscient node and $\textrm{T}^{(21)}$]{\includegraphics[width=2in]{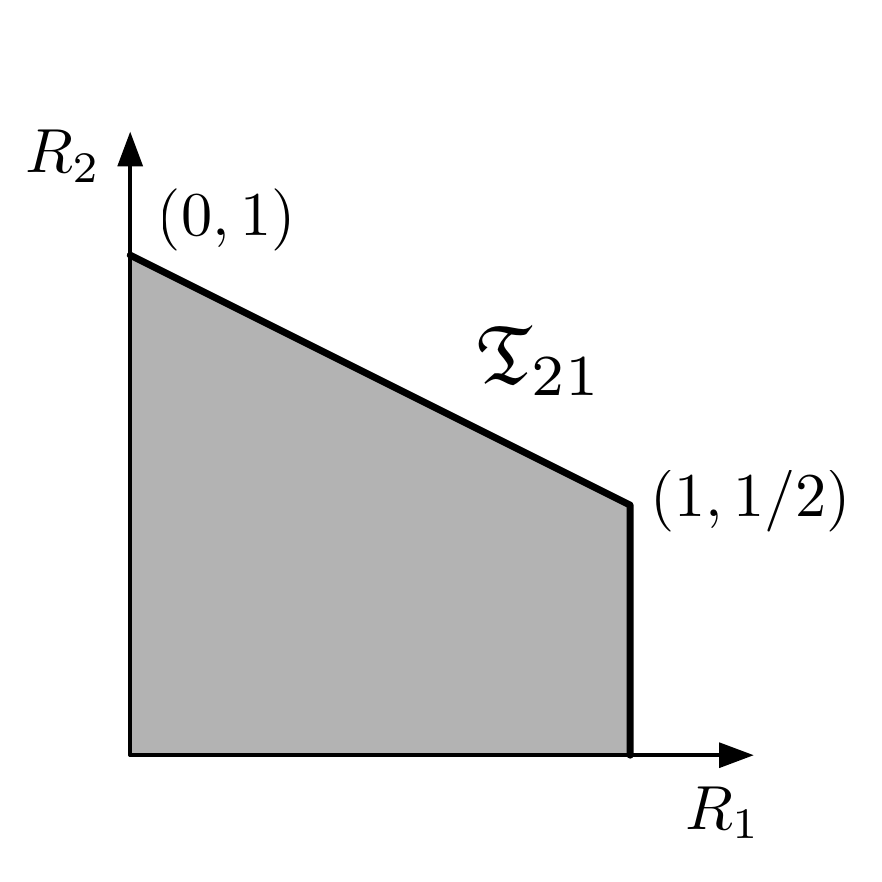}}
%\subfigure[No omniscient node, neither $\textrm{T}^{(12)}$ nor $\textrm{T}^{(21)}$ is true, and $\textrm{P}^{(12)}\cup\textrm{P}^{(21)}$ is satisfied]{\includegraphics[width=2in]{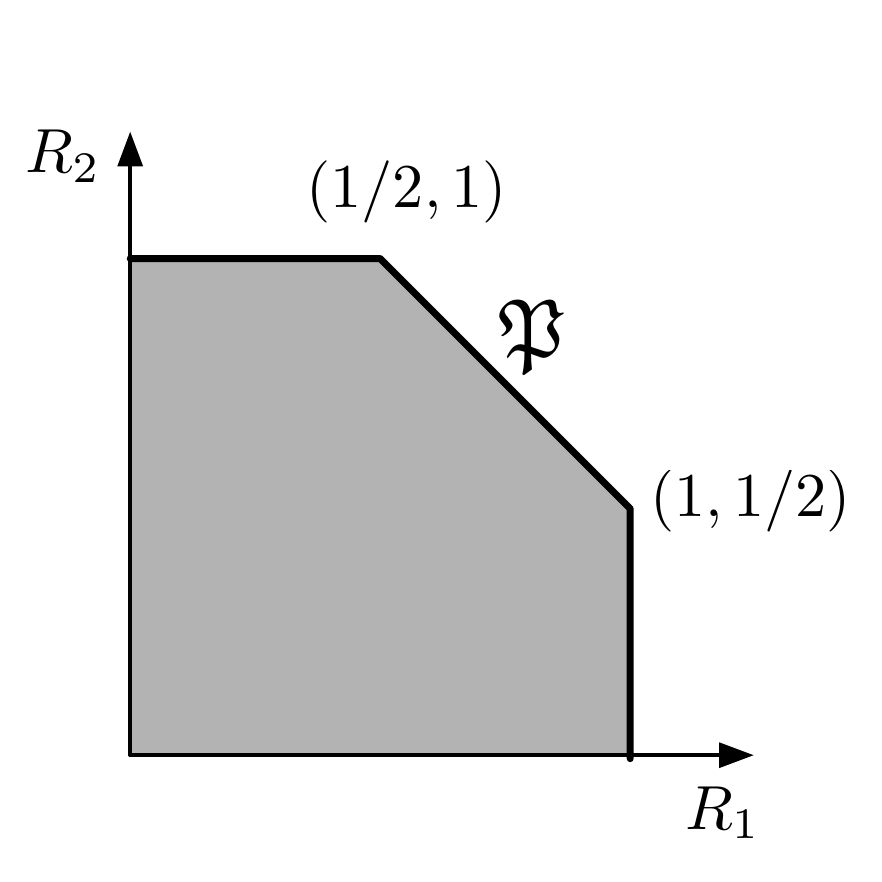}}
%\subfigure[No omniscient node and neither $\textrm{Q}^{(12)}$ nor $\textrm{Q}^{(21)}$ is true]{\includegraphics[width=2in]{CapRegions_S.pdf}}
\subfigure[$\textrm{O}$]{\includegraphics[width=2in]{CapRegions_T.pdf}}
\subfigure[$\textrm{T}^{(12)}\setminus \textrm{O}$]{\includegraphics[width=2in]{CapRegions_T12.pdf}}
\subfigure[$\textrm{T}^{(21)}\setminus \textrm{O}$]{\includegraphics[width=2in]{CapRegions_T21.pdf}}
\subfigure[$\textrm{P}\setminus \lp \textrm{T}\cup \textrm{O}\rp$]{\includegraphics[width=2in]{CapRegions_P.pdf}}
\subfigure[$\ol{\textrm{Q}}\setminus \textrm{O}$]{\includegraphics[width=2in]{CapRegions_S.pdf}}
\caption{Capacity Regions for Linear Deterministic Network}
\label{fig_CapRegions_2}
}
\end{figure}

%Our proof is inspired by another interpretation of the result in \cite{WangShroff_10}, as follows. Let us consider a network where each source can reach its own destination and the individual mincut values are at least $1$. The main result of \cite{WangShroff_10} provides the {necessary and sufficient condition} for the capacity region to be the triangular rate region $\mfrak{T}$. The condition is that the Network Sharing Bound is equal to $1$. Moreover, once the condition is violated (Network Sharing Bound is greater than 1), one can achieve the square rate region $\mfrak{S}$. 
%Therefore, the result in \cite{WangShroff_10} answers completely the two non-trivial questions regarding when one \emph{can} achieve beyond the trivially-achievable triangular region $\mfrak{T}$ and when one \emph{can} achieve the full square $\mfrak{S}$, and the two answers are identical.

%For the class of linear deterministic networks considered in this paper, to answer the first question about 
Regarding when one can achieve beyond the trivially achievable $\mfrak{T}$, we provide a novel sum rate outer bound on two unicast flows over linear deterministic networks, analogous to the Network Sharing Bound. This outer bound is intimately related to the Generalized Network Sharing outer bound \cite{KamathTseAnantharam_11} for wired networks. We show that if this bound does not constrain the sum rate to be upper bounded by $1$, then indeed one can achieve \emph{beyond} the triangular rate region $\mfrak{T}$, and hence establish the necessary and sufficient condition for the capacity region being $\mfrak{T}$. In contrast however, to achievability of the $(1,1)$ point in \cite{WangShroff_10}, we find that we cannot always achieve $(1,1)$. Instead, we show that once one can achieve beyond $\mfrak{T}$, one can achieve either one of the two trapezoid rate regions: $\mfrak{T}_{12} := \{ (R_1,R_2): R_1,R_2\ge 0, R_2\le 1,2R_1+R_2\le 2\}$ and $\mfrak{T}_{21} := \{ (R_1,R_2): R_1,R_2\ge 0, R_1\le 1,R_1+2R_2\le 2\}$, and there are networks whose capacity regions are $\mfrak{T}_{12}$ or $\mfrak{T}_{21}$.

%To answer the second question about 
Regarding when one can achieve the full square $\mfrak{S}$, we investigate the achievability of the $(1,1)$ point, and find the necessary and sufficient conditions for it. For single source unicast and multicast problems, random linear coding over a large finite field at all nodes suffices to achieve capacity in wired as well as linear deterministic networks \cite{AhlswedeCai_00}, \cite{AvestimehrDiggavi_09}. This is no longer the case for the two-unicast problem since each destination is interested only in the message of its own source. Indeed, we can identify two nodes, one for each destination, that must be able to decode the messages of their respective destinations. We call these two nodes \emph{critical nodes} and their receptions are required to be completely free of \emph{interference} from the other user. 
For this purpose, at certain nodes interference from the other user has to be cancelled ``over-the-air", which is called \emph{interference neutralization} in the literature \cite{MohajerDiggavi_11} \cite{GouJafar_10}. Other than the nodes performing interference neutralization, all other nodes may perform random linear coding. The parents of each critical node are the natural candidates to perform interference neutralization, although they are not the only ones. 
%We introducing \emph{induced graphs} $\mcal{G}_{12}$ and $\mcal{G}_{21}$ to capture the effect on the rest of the network caused by interference neutralization, and provide the graph-theoretic necessary and sufficient conditions for $(1,1)$-achievability. 
We introduce a systematic approach to capture the effect on the rest of the network caused by interference neutralization, and provide the graph-theoretic necessary and sufficient conditions for $(1,1)$-achievability. 
Moreover, we show that if $(1,1)$ cannot be achieved, then the capacity region is contained in the pentagon region $\mfrak{P}:= \{ (R_1,R_2): R_1,R_2\ge 0, R_1,R_2\le 1, R_1+R_2\le 3/2\}$. Moreover, there are networks whose capacity regions are $\mfrak{P}$.

%It turns out that there are networks whose capacity regions are $\mfrak{T}_{12}$, $\mfrak{T}_{21}$, and $\mfrak{P}$ respectively, and hence the answer to the two motivating questions are not identical in the linear deterministic network. Continuing further, we show that the capacity region of such a network must be one of the five: $\mfrak{T},\mfrak{T}_{12},\mfrak{T}_{21},\mfrak{P},\mfrak{S}$, and we fully characterize the necessary and sufficient conditions for the capacity region to be each of them. See Fig.~\ref{fig_CapRegions_2} for an illustration of these rate regions. 

Continuing further, we characterize the necessary and sufficient conditions for the capacity region to be $\mfrak{T}_{12}$, $\mfrak{T}_{21}$, and $\mfrak{P}$ respectively. The outer bounds on $2R_1+R_2, R_1+2R_2$ for the trapezoids $\mfrak{T}_{12},\mfrak{T}_{21}$ respectively and that on $R_1+R_2$ for the pentagon $\mfrak{P}$ are inspired from the interference channel outer bounds \cite{El-GamalCosta_82}. 
%The achievability proof is in spirit, inspired from the achievability of $(1,1)$ rate pair in Wang and Shroff \cite{WangShroff_10}.
The scheme we propose is linear over the \emph{extension field} $\mbb{F}_{2^r}$ for $r$ sufficiently large. Note that unlike single multicast where a \emph{random} (vector) linear scheme over the \emph{base field} $\mbb{F}_2$ suffices to achieve the capacity \cite{AvestimehrDiggavi_09}, in the two-unicast problem not only does the linear scheme operate on a larger field but also some nodes need to perform special linear coding (in contrast to random linear coding), including \emph{interference neutralization} (canceling interference over the air) and \emph{zero forcing} (canceling interference within a node). Later we will show by an example that both operating on a larger field and special coding at certain nodes are necessary for achieving capacity. It turns out that, fortunately, the number of nodes which are required to take special coding operation is bounded above by $4$ and can be found explicitly. More specifically, they are usually parents of the two critical nodes and hence lie in two layers at most. Other than these special nodes, others can perform \emph{random linear coding} (RLC) over the extension field.

\subsection*{Related Works}
In the literature, the study of two unicast information flows over wireless networks using the deterministic approach begins with the investigation of the two-user interference channel \cite{HanKobayashi_81} \cite{El-GamalCosta_82} \cite{EtkinTse_07} and its variants, including interference channels with cooperation \cite{PrabhakaranViswanathSRC_09} \cite{PrabhakaranViswanath_09} \cite{WangTse_09} \cite{WangTse_11} and two-hop interference networks \cite{MohajerDiggavi_11} \cite{GouJafar_10}. Focusing on small networks (four nodes in total), researchers are able to characterize the capacity region exactly in the linear deterministic case \cite{El-GamalCosta_82} \cite{WangTse_09} \cite{WangTse_11} and to within a bounded gap in the Gaussian case \cite{EtkinTse_07} \cite{WangTse_09} \cite{WangTse_11}, but the extension to larger networks seems non-trivial \cite{MohajerDiggavi_11}. The present work takes a step in that direction.
%It is pointed out \cite{MohajerDiggavi_11} that more advanced schemes, including interference neutralization, plays a key role in achieving capacity.

Another approach is directly looking at the Gaussian model but focusing on a cruder metric, degrees of freedom, instead of bounded gap to capacity. In \cite{GouJafar_10}, a systematic approach for interference neutralization called ``aligned interference neutralization" is proposed for the 2x2x2 interference network, and it is shown that full degrees of freedom (one for each user) can be achieved \emph{almost surely}. Later, in a recent independent work \cite{ShomoronyAvestimehr_11} such a scheme is employed and the authors characterize the degrees-of-freedom region of two unicast Gaussian networks almost surely. Interestingly, it is shown that \cite{ShomoronyAvestimehr_11} there are five possible degrees-of-freedom regions \emph{almost surely} and they are the same as the five regions reported in this paper. The connection between the two results is yet to be understood and explored. These degrees-of-freedom results, however, rely heavily either on the assumption that there is infinite channel diversity, or on the rationality/irrationality of the channel gains for the scheme to work.

The rest of the paper is organized as follows. In Section~\ref{sec_Def}, we formulate the problem and give several useful definitions. In Section~\ref{sec_Results}, we state our main results, and in Section~\ref{sec_Examples} we furnish examples that motivate linear scheme based on field extension and illustrate several important elements in achievability and outer bounds. Then we devote to details of achievability proof as well as outer bounds in Section~\ref{sec_Achieve} and \ref{sec_OutBd}, respectively. Finally, we conclude the paper by discussing possible extensions to more general linear deterministic networks in Section~\ref{sec_Conclude}.

\section{Problem Formulation}\label{sec_Def}
A two-source-two-destination layered network is 
a \emph{directed, acyclic, layered} graph $\mcal{G} = (\mcal{V}, \mcal{E})$, i.e. where the collection of nodes $\mcal{V}$ can be partitioned into $L+2$ \emph{layers} ($L\geq 0$): 
\begin{align*}
&\mcal{V} = \bigcup_{k=0}^{L+1} \mcal{L}_k,\ \mcal{L}_k \cap \mcal{L}_j \neq \emptyset,\ \forall k\ne j,
\end{align*}
such that for any edge $(\msf{u},\msf{v}) \in\mcal{E}$, $\exists\ k,\ 0\leq k\leq L\ \mathrm{s.t.}\ \msf{u}\in\mcal{L}_k, \msf{v}\in\mcal{L}_{k+1}.$ The first layer $\mcal{L}_0 = \{ \msf{s}_1, \msf{s}_2\}$ consists of the two source nodes, and the last layer $\mcal{L}_{L+1} = \{\msf{d}_1, \msf{d}_2\}$ consists of the two destination nodes. Without loss of generality we assume each node in the network can be reached by at least one of the source nodes and can reach at least one of the destination nodes.

For each node $\msf{v}\in\mcal{V}\setminus\{\msf{s}_1,\msf{s}_2\}$, we define nodes that can reach $\msf{v}$ as its \emph{predecessors}. Let $\mcal{P}(\msf{v})$ denote the set of predecessors that can reach $\msf{v}$ in one step. We will call the nodes in $\mcal{P}(\msf{v})$ as the \emph{parents} of $\msf{v}.$ Let $X_{\msf{u}}, Y_{\msf{u}}\in\mbb{F}_2$ denote the transmission and reception of node $\msf{u}$ respectively. The reception of a node is the binary XOR of the transmission of its parents:
$Y_{\msf{v}} = \bigoplus_{\msf{u}\in \mcal{P}(\msf{v})} X_{\msf{u}}$. For example, in Fig.~\ref{fig_examples}(a), the reception at node $\msf{u}_4$ will be given by $Y_{\msf{u}_4} = X_{\msf{u}_1}\oplus X_{\msf{u}_2}.$

\begin{figure}[htbp]
{\center
\subfigure[Zigzag Network]{\includegraphics[width=2.5in]{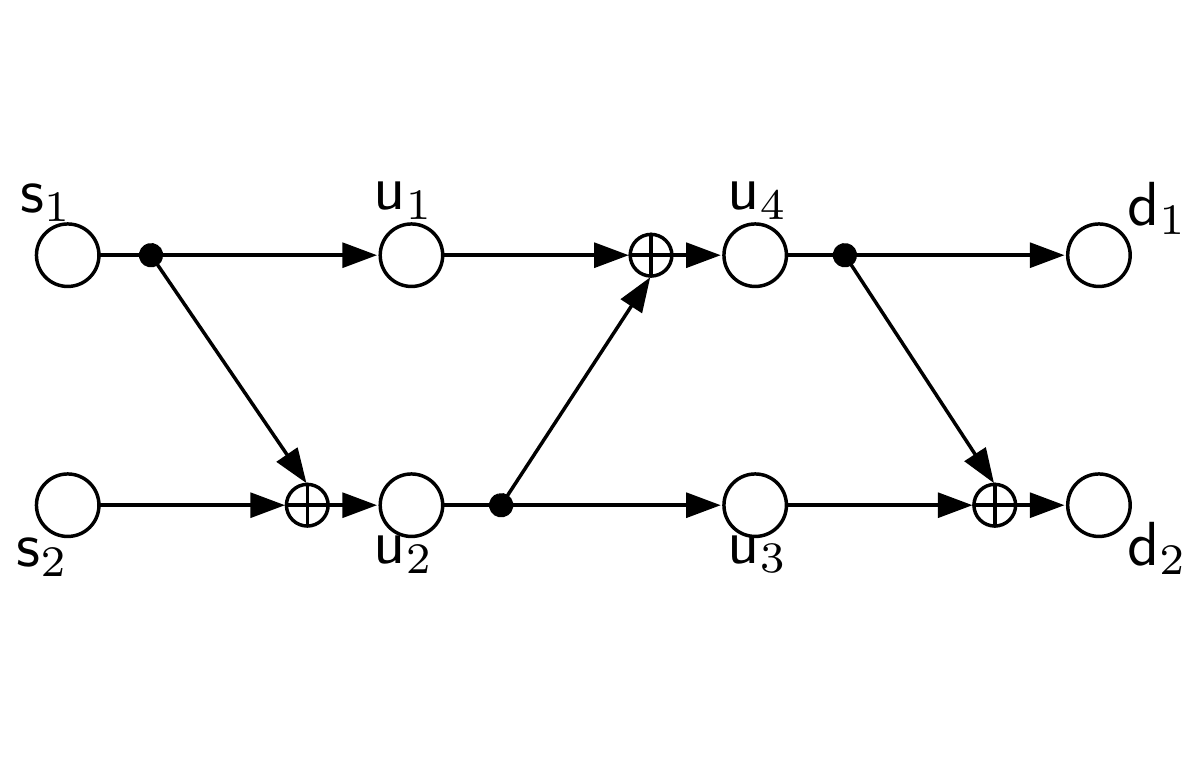}}
\subfigure[An Asymmetric Network]{\includegraphics[width=2.5in]{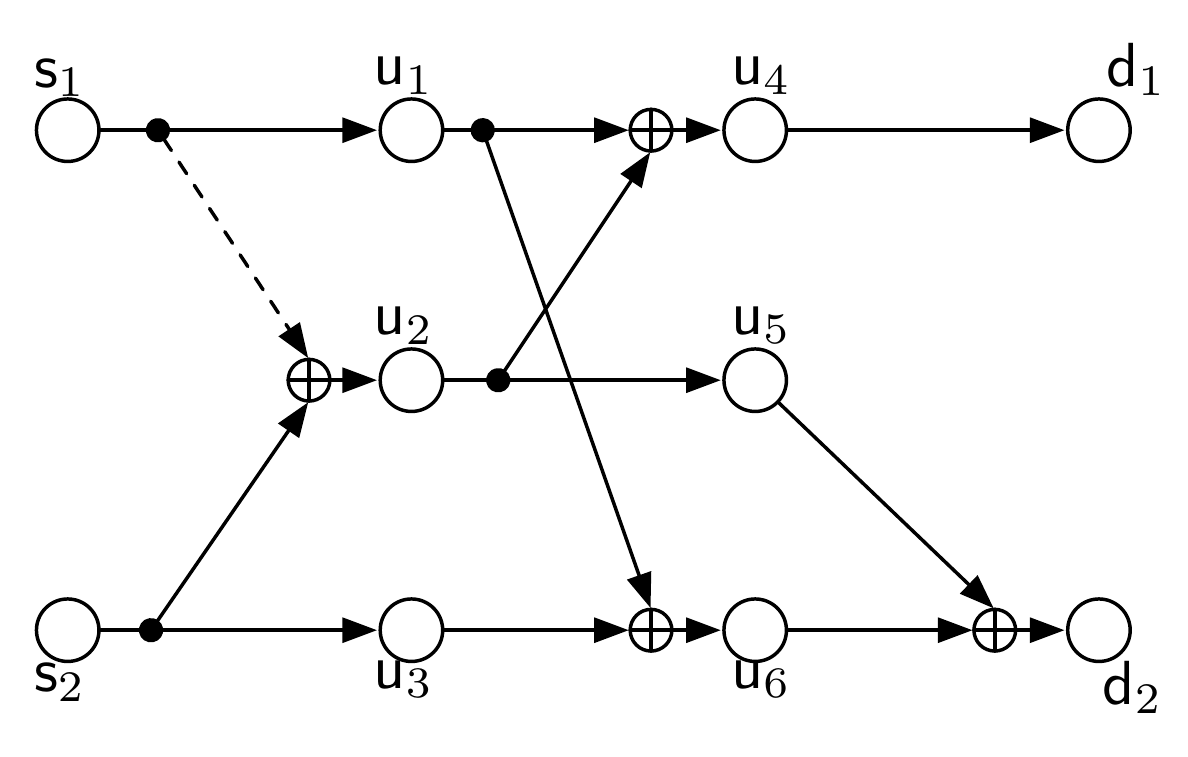}}
\caption{Examples}
%\caption{Critical Nodes in Different Layers, the first situation: $\mcal{P}^{\msf{S}_2}(\msf{U}_2)\setminus \mcal{P}^{\msf{S}_2}(\msf{V}_1)\ne \emptyset$. Red nodes in $\mcal{L}_{k_1-1}$ are $\msf{S}_2$ reachable.}
\label{fig_examples}
}
\end{figure}

The channel model we have used is a special case of the linear deterministic network from \cite{AvestimehrDiggavi_09}. The simplification is that if there is a link from one node to another, then the channel strength is unity. We note that the essential nature of the linear deterministic network, namely broadcast and superposition, is preserved. As an example, in the network in Fig.~\ref{fig_examples}(a), the transmission of $\msf{u}_2$ is broadcasted to $\msf{u}_3$ and $\msf{u}_4$, and hence the two edges $(\msf{u}_2,\msf{u}_3)$ and $(\msf{u}_2,\msf{u}_4)$ carry the same signal. The reception of $\msf{u}_4$, as mentioned above, is the binary XOR of the transmission of $\msf{u}_1$ and $\msf{u}_2$.

%Note also that while the broadcast aspect can be simulated in a wired network by adding an additional node, the superposition aspect cannot be captured in the wired setting. This makes the problem setup very different from the traditional network coding setup.

\section{Main Result}\label{sec_Results}
If, for each $i=1,2,$ $\msf{s}_i$ can reach $\msf{d}_i,$ then it is trivial to see that the \emph{triangular} rate region $\mfrak{T}$
%$\mfrak{T}:=\{ (R_1,R_2): R_1,R_2\ge 0, R_1+R_2\le1\}$ 
can be achieved, and that one cannot achieve beyond the square rate region $\mfrak{S}$. However, it is not clear under what conditions the triangular region or the square region is the capacity region. Our main result gives a complete answer to this question (and beyond). To state the result, we will need a few definitions.

A node is \emph{$\msf{s}_i$-reachable} if it can be reached by $\msf{s}_i$. It is \emph{$\msf{s}_i$-only-reachable} if it can be reached by $\msf{s}_i$ but not $\msf{s}_j$, $j\ne i$. It is \emph{$\msf{s}_1\msf{s}_2$-reachable} if it can be reached by both $\msf{s}_1$ and $\msf{s}_2$. 

For each node $\msf{v}\in\mcal{V}\setminus\{\msf{s}_1,\msf{s}_2\},$
\begin{itemize}
\item let $\mcal{P}(\msf{v})$ denote the set of parents of $\msf{v}$ that are reachable from at least one of $\msf{s}_1,\msf{s}_2,$
\item let $\mcal{P}^{\msf{s}_i}(\msf{v})\subseteq \mcal{P}(\msf{v})$ denote the set of parents of $\msf{v}$ \emph{reachable} by source $\msf{s}_i$, $i=1,2,$
\item let $\mcal{K}(\msf{v}):=\lbp \msf{u}: \mcal{P}(\msf{u}) = \mcal{P}(\msf{v})\rbp$ denote the \emph{clones} of $\msf{v},$ the set of nodes that receive the same signal as $\msf{v},$ 
\item let $\mcal{K}^{\msf{s}_i}(\msf{v}):=\lbp \msf{u}: \mcal{P}^{\msf{s}_i}(\msf{u}) = \mcal{P}^{\msf{s}_i}(\msf{v})\rbp, i=1,2,$ the set of nodes that have the same $\msf{s}_i$-reachable parents as $\msf{v}.$ We called these nodes the $\msf{s}_i$-clones of $\msf{v}$.
\end{itemize}

The following table illustrates these sets of nodes for the node $\msf{u}_4$ in the two example networks in Fig.~\ref{fig_examples}. For the network in (b), we assume for now that there is no edge from $\msf{s}_1$ to $\msf{u}_2$.
%\begin{center}
%{\small
%\begin{tabular}{|c|c|c|c|c|c|c|c| }
%  \hline
%  & Fig.~\ref{fig_examples}(a) & Fig.~\ref{fig_examples}(b) & & Fig.~\ref{fig_examples}(a) & Fig.~\ref{fig_examples}(b) \\ \hline
%  $\mcal{P}(\msf{u}_4)$ & $\{\msf{u}_1,\msf{u}_2\}$ & $\{\msf{u}_1,\msf{u}_2\}$ & $\mcal{K}(\msf{u}_4)$ & $\{\msf{u}_4\}$ & $\{\msf{u}_4\}$ \\ \hline
%  $\mcal{P}^{\msf{s}_1}(\msf{u}_4)$ & $\{\msf{u}_1,\msf{u}_2\}$ & $\{\msf{u}_1\}$ & $\mcal{K}^{\msf{s}_1}(\msf{u}_4)$ & $\{\msf{u}_4\}$ & $\{\msf{u}_4,\msf{u}_6\}$ \\ \hline
%  $\mcal{P}^{\msf{s}_2}(\msf{u}_4)$ & $\{\msf{u}_2\}$ & $\{\msf{u}_2\}$ & $\mcal{K}^{\msf{s}_2}(\msf{u}_4)$ & $\{\msf{u}_3,\msf{u}_4\}$ & $\{\msf{u}_4,\msf{u}_5\}$ \\ \hline
%%  $\mcal{K}(\msf{u}_4)$ & $\{\msf{u}_4\}$ & $\{\msf{u}_4\}$ \\ \hline
%%  $\mcal{K}^{\msf{s}_1}(\msf{u}_4)$ & $\{\msf{u}_4\}$ & $\{\msf{u}_4,\msf{u}_6\}$ \\ \hline
%%  $\mcal{K}^{\msf{s}_2}(\msf{u}_4)$ & $\{\msf{u}_3,\msf{u}_4\}$ & $\{\msf{u}_4,\msf{u}_5\}$ \\ \hline
%\end{tabular}
%}
%\end{center}
\begin{center}
{\small
\begin{tabular}{|c|c|c|c|c|c|c|c| }
  \hline
  & Fig.~\ref{fig_examples}(a) & Fig.~\ref{fig_examples}(b) \\ \hline
  $\mcal{P}(\msf{u}_4)$ & $\{\msf{u}_1,\msf{u}_2\}$ & $\{\msf{u}_1,\msf{u}_2\}$ \\ \hline
  $\mcal{P}^{\msf{s}_1}(\msf{u}_4)$ & $\{\msf{u}_1,\msf{u}_2\}$ & $\{\msf{u}_1\}$ \\ \hline
  $\mcal{P}^{\msf{s}_2}(\msf{u}_4)$ & $\{\msf{u}_2\}$ & $\{\msf{u}_2\}$ \\ \hline
  $\mcal{K}(\msf{u}_4)$ & $\{\msf{u}_4\}$ & $\{\msf{u}_4\}$ \\ \hline
  $\mcal{K}^{\msf{s}_1}(\msf{u}_4)$ & $\{\msf{u}_4\}$ & $\{\msf{u}_4,\msf{u}_6\}$ \\ \hline
  $\mcal{K}^{\msf{s}_2}(\msf{u}_4)$ & $\{\msf{u}_3,\msf{u}_4\}$ & $\{\msf{u}_4,\msf{u}_5\}$ \\ \hline
\end{tabular}
}
\end{center}

For two sets of nodes $\mcal{U}_1$ and $\mcal{U}_2$, we say a collection of nodes $\mcal{T}$ is a $(\mcal{U}_1;\mcal{U}_2)$-vertex-cut if in the graph obtained from the deletion of $\mcal{T},$ there are no paths from any node in $\mcal{U}_1\setminus \mcal{T}$ to any node in $\mcal{U}_2\setminus\mcal{T}.$ Note that this definition allows $\mcal{T}$ to have nodes from $\mcal{U}_1$ or $\mcal{U}_2.$

We say a node $\msf{v}\in\mcal{V}$ is \emph{omniscient} if it satisfies either of (A) or (B) below:
\begin{itemize}
{\rm
\item[(A)] $\mcal{K}(\msf{v})$ is a $\lp\msf{s}_1,\msf{s}_2; \msf{d}_1\rp$-vertex-cut and $\mcal{K}^{\msf{s}_2}(\msf{v})$ is a $\lp \msf{s}_2; \msf{d}_2\rp$-vertex-cut.% \emph{OR}
\item[(B)] $\mcal{K}(\msf{v})$ is a $\lp\msf{s}_1,\msf{s}_2; \msf{d}_2\rp$-vertex-cut and $\mcal{K}^{\msf{s}_1}(\msf{v})$ is a $\lp \msf{s}_1; \msf{d}_1\rp$-vertex-cut.
}
\end{itemize}

%\begin{description}
%\item [($\textrm{O}^{(12)}$)] 
%$\mcal{K}^{\msf{s}_1}(\msf{v}_2^*)$ is a $(\msf{s}_1;\msf{d}_1)$-vertex-cut.
%\item [($\textrm{O}^{(21)}$)] 
%$\mcal{K}^{\msf{s}_2}(\msf{v}_1^*)$ is a $(\msf{s}_2;\msf{d}_2)$-vertex-cut.
%\end{description}

%: Theorem 1
\begin{theorem}[Characterization of $\mfrak{T}$]\label{thm_1}
%$ $\par
Assume that $\msf{s}_i$ can reach $\msf{d}_i$ for $i=1,2.$
{\flushleft \rm (a)}
If there exists an omniscient node in the network, then the capacity region is the triangular region $\mfrak{T}$.
{\flushleft \rm (b)}
Conversely, if no node in the network is omniscient, then the capacity region is strictly larger than $\mfrak{T}.$ Further, the capacity region contains at least one of the trapezoid regions $\mfrak{T}_{12}$ and $\mfrak{T}_{21}.$ In particular, $(2/3,2/3)$ is achievable and at least one of $(1/2,1)$ and $(1,1/2)$ is achievable.
\end{theorem}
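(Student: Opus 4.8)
The plan is to prove the two halves separately, treating (a) as an outer bound and (b) as an achievability statement.

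For part (a), I would argue that the existence of an omniscient node $\msf{v}$ forces $R_1+R_2\le 1$. Suppose $\msf{v}$ satisfies condition (A); the case (B) is symmetric. The idea is that $\msf{v}$ (together with its clones) forms a bottleneck through which all of $\msf{s}_1$'s information must pass on its way to $\msf{d}_1$, and simultaneously $\msf{v}$'s $\msf{s}_2$-clones form a bottleneck for $\msf{s}_2 \to \msf{d}_2$. Since $\mcal{K}(\msf{v})$ is an $(\msf{s}_1,\msf{s}_2;\msf{d}_1)$-vertex-cut, the received signal at the set $\mcal{K}(\msf{v})$ — which is effectively a single symbol in $\mbb{F}_2$ per channel use, because clones receive identical signals — determines everything $\msf{d}_1$ can learn, so $nR_1 \le n + o(n)$ is immediate but we need more. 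The key is that this same one-symbol observation, after we subtract off the $\msf{s}_1$-contribution, gives a degraded version of what is needed to decode $W_2$: because $\mcal{K}^{\msf{s}_2}(\msf{v})$ is an $(\msf{s}_2;\msf{d}_2)$-vertex-cut, the $\msf{s}_2$-reachable part of the signal at $\msf{v}$ is all that stands between $\msf{s}_2$ and $\msf{d}_2$. Concretely I would give $\msf{d}_1$ a genie providing $W_1$ (or enough side information), reconstruct the transmissions of all $\msf{s}_1$-only-reachable nodes feeding $\mcal{K}(\msf{v})$, subtract them from $Y_{\mcal{K}(\msf{v})}$, and observe that the residual is a function of the $\msf{s}_2$-reachable inputs to the cut $\mcal{K}^{\msf{s}_2}(\msf{v})$, hence suffices to decode $W_2$. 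A standard Fano/cut-set manipulation then yields $n(R_1+R_2)\le I(\text{inputs};Y_{\mcal{K}(\msf{v})}) + n\epsilon_n \le n + n\epsilon_n$, i.e. $R_1+R_2\le 1$, which together with the trivial achievability of $\mfrak{T}$ gives the capacity region exactly $\mfrak{T}$.

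For part (b), I would proceed contrapositively in the achievability direction: assuming \emph{no} node is omniscient, I must exhibit a coding scheme achieving a point strictly outside $\mfrak{T}$, and in fact one of $(1/2,1)$, $(1,1/2)$. The natural route is to negate the omniscience condition: for \emph{every} node $\msf{v}$, either $\mcal{K}(\msf{v})$ fails to be an $(\msf{s}_1,\msf{s}_2;\msf{d}_1)$-vertex-cut, or $\mcal{K}^{\msf{s}_2}(\msf{v})$ fails to be an $(\msf{s}_2;\msf{d}_2)$-vertex-cut — and similarly for the (B) alternative. I would first identify the ``critical node'' for each destination (a node through which, by a min-cut argument using that $\msf{s}_i$ reaches $\msf{d}_i$ with min-cut $1$, all of user $i$'s flow is funneled), and then use the failure of omniscience to show that at least one critical node can receive its own source's signal while the other source's interference can be steered away — either neutralized over the air at a parent, or simply not present on the relevant cut. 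This is exactly the mechanism the introduction flags (interference neutralization at up to four special nodes, random linear coding elsewhere over $\mbb{F}_{2^r}$). To get the explicit points: time-sharing between a single-unicast scheme for user $2$ at rate $1$ and a joint scheme delivering $R_1 = 1$ with $R_2 = 1/2$ surviving gives the segment through $(1,1/2)$; symmetrically for $(1/2,1)$; and $(2/3,2/3)$ lies in $\mfrak{T}_{12}\cap\mfrak{T}_{21}$, so it follows once either trapezoid is shown achievable. The bulk of the work is showing that failure of omniscience is enough to run the neutralization scheme, which I expect to invoke a Menger-type / linear-algebraic argument over the extension field (generic assignment of coding coefficients makes the relevant transfer matrix full rank precisely when the combinatorial cut conditions hold).

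The main obstacle is part (b): translating the purely graph-theoretic negation of omniscience into a working linear scheme. The delicate point is that failing to be a vertex-cut at the level of \emph{clones} $\mcal{K}(\msf{v})$ (or $\msf{s}_i$-clones) is a subtle, non-local condition, and one must show that whenever it fails everywhere, there is enough ``room'' in the network to route user $i$'s message to its critical node while keeping the other message's contribution neutralizable by a bounded number of nodes. I anticipate this requires setting up an auxiliary graph that encodes the effect of neutralization on the rest of the network (as the introduction promises), proving a Menger-type duality on it, and then invoking a Schwartz–Zippel / generic-rank argument to conclude that random linear coding over a large enough $\mbb{F}_{2^r}$ at the non-special nodes completes the scheme with the four special nodes doing neutralization and zero forcing. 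The outer-bound half (a) should be comparatively routine once the right genie is chosen.
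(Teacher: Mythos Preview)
Your proposal is correct and follows essentially the same route as the paper: for (a), the paper likewise shows that the omniscient node $\msf{v}$ can itself decode $W_1$ (because $Y_{\msf{d}_1}$ is a function of $Y_{\msf{v}}$) and then, knowing $X_{\msf{s}_1}^N$, reconstruct the reception of every node in $\mcal{K}^{\msf{s}_2}(\msf{v})$ and hence $Y_{\msf{d}_2}$, giving $R_1+R_2\le H(Y_{\msf{v}})\le 1$ --- so the genie belongs at $\msf{v}$, not at $\msf{d}_1$. For (b), the paper also pivots on the critical nodes $\msf{v}_1^*,\msf{v}_2^*$, introduces the induced graph $\mcal{G}_{12}$ you anticipate, and carries out a case analysis (on $k_1^*$ versus $k_2^*$ and on the structure of the parent sets $\mcal{P}_1,\mcal{P}_2,\mcal{P}_{12}$) to build explicit two-time-slot linear schemes over $\mbb{F}_{2^r}$ with neutralization/zero-forcing at a few special nodes and RLC elsewhere, validated by polynomial-nonvanishing (Schwartz--Zippel style) arguments --- exactly the mechanism you describe, though note that $(1,1/2)$ and $(1/2,1)$ are obtained directly by these two-slot schemes, not by time-sharing.
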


It turns out that we are able to give the necessary and sufficient condition for the capacity region to be either $\mfrak{T}_{12}$ or $\mfrak{T}_{21}$. Before describing the theorem, we need some extra definitions. 

\begin{definition}[Critical Nodes]
For each $i=1,2$, we define the \emph{critical node} $\msf{v}_i^*$ as any node with the smallest possible layer index such that $\mcal{K}(\msf{v}_i^*)$ is a $\lp\msf{s}_1,\msf{s}_2; \msf{d}_i\rp$-vertex-cut.
\begin{itemize}
\item \emph{Existence}: $\{\msf{d}_i\}$ is a $\lp\msf{s}_1,\msf{s}_2; \msf{d}_i\rp$-vertex-cut.
\item \emph{Uniqueness up to clones}: if $\msf{u},\msf{w}$ are nodes in the same layer with $\mcal{K}(\msf{u})$ and $\mcal{K}(\msf{w})$ both being $\lp\msf{s}_1,\msf{s}_2; \msf{d}_i\rp$-vertex-cuts, then $\mcal{K}(\msf{u})=\mcal{K}(\msf{w}),$ i.e. $\msf{u}$ and $\msf{w}$ are clones.
\end{itemize}
We use $\mcal{L}_{k^*_i}$ to denote the layer where critical nodes $\msf{v}^*_i$ lies, for $i=1,2$. 
\end{definition}
For example in Fig.~\ref{fig_examples}, $\msf{v}_1^* = \msf{u}_4, k_1^*=2$ and $\msf{v}_2^* = \msf{d}_2, k_2^*=3$ for both networks.

The critical nodes defined here are directly analogous to the edges performing the ``reset'' operation in the add-up-and-reset construction of Wang and Shroff \cite{WangShroff_10}.

Below we describe one scenario where we get a result similar to the one in \cite{WangShroff_10}. This lemma strengthens part (b) of Theorem~\ref{thm_1} in this special scenario.
\begin{lemma}\label{k_1^*=0}
Suppose in a network $\msf{s}_2$ cannot reach $\msf{d}_1,$ i.e. $k_1^*=0.$ Then, the capacity region of this network is the triangle $\mfrak{T}$ or the square $\mfrak{S}$ depending on whether there is an omniscient node in the network or not, i.e. depending on whether $\msf{v}_2^*$ is omniscient or not (using Lemma~\ref{lem_critical_omniscient}). If $k_2^*=0,$ then $(1,1)$ can be achieved by all nodes performing random linear coding. If $k_2^*>0$ and there is no omniscient node, then $(1,1)$ is achieved with high probability when all nodes except nodes in $\mcal{P}(\msf{v}_2^*)$ performing random linear coding over a sufficiently large field.
\end{lemma}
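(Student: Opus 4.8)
The plan is to treat the hypothesis $k_1^*=0$ as saying that $\msf{s}_2$ is blocked from $\msf{d}_1$ altogether, so that $\mcal{K}(\msf{s}_1)=\{\msf{s}_1,\msf{s}_2\}$'s relevant behavior degenerates: in fact $k_1^*=0$ forces $\mcal{K}(\msf{d}_1)$-type cut conditions to already be met at layer $0$, i.e.\ there is no path at all from $\msf{s}_2$ to $\msf{d}_1$. I would first record this reformulation and note that it makes user~$1$'s flow ``clean'': the subnetwork reachable from $\msf{s}_1$ that leads to $\msf{d}_1$ carries no interference from $\msf{s}_2$. Consequently, $R_1=1$ is automatically available (single-unicast min-cut $1$ from $\msf{s}_1$ to $\msf{d}_1$, achieved by random linear coding on the $\msf{s}_1$-subnetwork), and the only question is whether $\msf{s}_2$ can simultaneously push rate $1$ to $\msf{d}_2$, i.e.\ whether $(1,1)$ is achievable; if not, the dichotomy ``triangle vs.\ square'' will follow from Theorem~\ref{thm_1}(b) once I show that failing $(1,1)$ here coincides with the existence of an omniscient node.

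Next I would handle the two sub-cases on $k_2^*$. If $k_2^*=0$, then $\msf{s}_1$ cannot reach $\msf{d}_2$ either, so the two unicast flows live on edge-disjoint (indeed vertex-disjoint apart from possibly shared relays that see only one source) portions of the network; each is a single-unicast problem with min-cut $1$, and the standard result that random linear coding over a large field achieves single-unicast capacity in layered linear deterministic networks (cited as \cite{AvestimehrDiggavi_09}) gives $(1,1)$ with every node doing RLC. The substantive case is $k_2^*>0$ with no omniscient node. Here $\msf{v}_2^*$ is the critical node for user~$2$: $\mcal{K}(\msf{v}_2^*)$ is an $(\msf{s}_1,\msf{s}_2;\msf{d}_2)$-vertex-cut, so everything $\msf{d}_2$ receives is a function of the signals at the clone-class $\mcal{K}(\msf{v}_2^*)$, and $\msf{v}_2^*$ is thus the node that must decode $W_2$ interference-free. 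Since there is no omniscient node, in particular condition (B) fails at $\msf{v}_2^*$, and I would argue (invoking Lemma~\ref{lem_critical_omniscient}, referenced in the statement) that this failure means the $\msf{s}_1$-reachable parents of $\msf{v}_2^*$ do \emph{not} cut $\msf{s}_1$ off from $\msf{d}_1$ — equivalently, after we force the parents in $\mcal{P}(\msf{v}_2^*)$ to present an interference-neutralized (pure-$W_2$) signal to $\msf{v}_2^*$, user~$1$ still retains a live path of value $1$ from $\msf{s}_1$ to $\msf{d}_1$.

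The achievability argument I would then run: fix the coding at the nodes in $\mcal{P}(\msf{v}_2^*)$ to be a specific linear combination chosen so that the $\msf{s}_1$-contribution cancels at $\msf{v}_2^*$ (interference neutralization — possible because $k_1^*=0$ means the $\msf{s}_1$-signal reaching these parents is itself a scalar multiple of a single symbol $W_1$, so one linear constraint per parent suffices and does not overdetermine anything), while letting all other nodes do random linear coding over $\mbb{F}_{2^r}$. I would then show, via a Schwartz–Zippel / generic-determinant argument on the two relevant transfer scalars — the $\msf{s}_2\!\to\!\msf{v}_2^*$ gain and the $\msf{s}_1\!\to\!\msf{d}_1$ gain — that for $r$ large both are nonzero with high probability over the random choices, using the combinatorial guarantees (min-cut $1$ for user~$2$ up to $\mcal{K}(\msf{v}_2^*)$, and the surviving $\msf{s}_1\!\to\!\msf{d}_1$ connectivity from the non-omniscience of $\msf{v}_2^*$) to exhibit monomials that make each transfer function not identically zero. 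From $\msf{v}_2^*$ onward to $\msf{d}_2$, the signal is a scalar multiple of $W_2$ by the cut property, so RLC downstream delivers $W_2$ to $\msf{d}_2$; meanwhile $\msf{d}_1$ decodes $W_1$ from its nonzero gain. Hence $(1,1)$ is achieved w.h.p., the capacity region is $\mfrak{S}$, and combined with Theorem~\ref{thm_1}(a) (omniscient $\Rightarrow \mfrak{T}$) we get the stated dichotomy.

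The main obstacle I expect is the interference-neutralization feasibility-plus-genericity step: one must simultaneously satisfy the \emph{algebraic} neutralization constraints at $\mcal{P}(\msf{v}_2^*)$ and preserve \emph{two} generically-nonzero transfer functions, and a priori the neutralization constraint could conspire to kill the $\msf{s}_1\!\to\!\msf{d}_1$ path. Resolving this is exactly where non-omniscience is used — it guarantees a path for $W_1$ to $\msf{d}_1$ that avoids (or survives) the neutralized parents — but making the ``survives'' precise requires carefully decomposing each parent's signal into its $W_1$-component and $W_2$-component, showing the neutralization pins only a one-dimensional direction, and checking that the freedom left (plus the independent randomness elsewhere) keeps the $\msf{s}_1\!\to\!\msf{d}_1$ determinant a nonzero polynomial. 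I would isolate this as a lemma about ``neutralize at a cut while keeping a disjoint unicast alive'' and prove it by the determinant/monomial method, which is the technical heart of the argument.
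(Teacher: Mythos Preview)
Your overall strategy matches the paper's: neutralize the $\msf{s}_1$-contribution at $\mcal{P}(\msf{v}_2^*)$, do RLC everywhere else, and use non-omniscience of $\msf{v}_2^*$ to keep an $\msf{s}_1\to\msf{d}_1$ flow alive. There is, however, a slip in how you unpack non-omniscience. Failure of condition~(B) at $\msf{v}_2^*$ says that $\mcal{K}^{\msf{s}_1}(\msf{v}_2^*)$ (the $\msf{s}_1$-\emph{clones}, a subset of layer $k_2^*$) is not an $(\msf{s}_1;\msf{d}_1)$-vertex-cut; it does \emph{not} say that $\mcal{P}^{\msf{s}_1}(\msf{v}_2^*)$ (the $\msf{s}_1$-reachable \emph{parents}, in layer $k_2^*-1$) fails to be one. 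The distinction is exactly what makes your ``survives'' case nontrivial: the paper produces a node $\msf{u}_1\in\cloud_1\cap\mcal{L}_{k_2^*}$ with $\mcal{P}(\msf{u}_1)\ne\mcal{P}^{\msf{s}_1}(\msf{v}_2^*)$ and then splits into two cases. If $\mcal{P}(\msf{u}_1)\setminus\mcal{P}^{\msf{s}_1}(\msf{v}_2^*)\ne\emptyset$ (your ``avoids'' case), one first arranges $\mcal{P}(\msf{v}_2^*)$ so that $\msf{v}_2^*$ decodes $b$, then uses the extra parent to place $a$ at $\msf{u}_1$. If instead $\mcal{P}(\msf{u}_1)\subsetneq\mcal{P}^{\msf{s}_1}(\msf{v}_2^*)$, every $\msf{s}_1\to\msf{d}_1$ path through $\msf{u}_1$ still passes through $\mcal{P}^{\msf{s}_1}(\msf{v}_2^*)$, but $\msf{u}_1$ sees a \emph{different} linear functional of those parents' outputs than the one being zeroed at $\msf{v}_2^*$; the paper lets $\mcal{P}(\msf{u}_1)$ do RLC (so $\msf{u}_1$ gets $a$) and uses the remaining nodes $\mcal{P}^{\msf{s}_1}(\msf{v}_2^*)\setminus\mcal{P}(\msf{u}_1)$ to neutralize at $\msf{v}_2^*$.

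Your proposed Schwartz--Zippel lemma can be made to cover both cases, but only once you replace ``$\mcal{P}^{\msf{s}_1}(\msf{v}_2^*)$ does not cut $\msf{s}_1$ from $\msf{d}_1$'' with the correct statement ``some $\msf{u}_1\in\cloud_1\cap\mcal{L}_{k_2^*}$ has $\mcal{P}(\msf{u}_1)\ne\mcal{P}^{\msf{s}_1}(\msf{v}_2^*)$''; the key observation is then that the neutralization functional $\sum_{\msf{w}\in\mcal{P}^{\msf{s}_1}(\msf{v}_2^*)}\alpha_{\msf{w}}\beta_{\msf{w},\msf{s}_1}$ and the $\msf{u}_1$-reception functional $\sum_{\msf{w}\in\mcal{P}(\msf{u}_1)}\alpha_{\msf{w}}\beta_{\msf{w},\msf{s}_1}$ are linearly independent precisely because those two parent sets differ, so imposing the first leaves the second generically nonzero. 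The paper also first rules out $k_2^*=1$ (both sources would then be parents of $\msf{v}_2^*$, making it omniscient), which you should note so that Lemma~\ref{lem_critical} applies and $\mincut(\msf{s}_1,\msf{s}_2;\mcal{P}(\msf{v}_2^*))=2$ is available.
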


Next we define cut values and min-cut on the network.
\begin{definition}[Cut Value and Min-Cut]
Fix a set of nodes in layer $k,$ $\mcal{U}\subseteq \mcal{L}_k.$ Consider a partition of $\mcal{V}$ into $(\mcal{T},\mcal{T}^c)$ with $\msf{s}_1,\msf{s}_2\in\mcal{T}$ and $\mcal{U}\subseteq \mcal{T}^c.$ Construct the transfer matrix $G$ with rows indexed by elements of $\mcal{T}$ and columns indexed by elements of $\mcal{T}^c$ where the $(\msf{u},\msf{w})$ entry of $G$ is 1 if there is a directed edge from $\msf{u}$ to $\msf{w}$ and 0 otherwise. The \emph{rank-mincut} \cite{AvestimehrDiggavi_09} from $\{\msf{s}_1,\msf{s}_2\}$ to $\mcal{U}$ is defined as the minimum rank of the transfer matrix $G$ over all such partitions $(\mcal{T},\mcal{T}^c)$, and is denoted by $\mincut\lp\msf{s}_1.\msf{s}_2;\mcal{U}\rp$.
\end{definition}

%The following two lemmas motivate our definition of the critical nodes. Lemma~\ref{lem_critical} that the rank-influence from the sources to destination $\msf{d}_i$ drops precisely at the critical node $\msf{v}_i^*$ and hence, the nodes in $\mcal{P}(\msf{v}_i^*)$ are natural candidates for special coding so as to cancel interference and arrange user $i$'s symbol(s) to be received at $\msf{v}_i^*$ even while other nodes may perform random linear coding (to be explained in detail later). Note that this kind of special coding is a linear operation over the finite field $\mbb{F}_{2^r}$ making use of the superposition feature of the channel. Lemma~\ref{lem_critical_omniscient} shows that the critical nodes suffice to capture the property of existence of an omniscient node in the network. Their proofs are left in the appendix.
%% and is done ``over the air".
The following two lemmas provide some important properties of critical nodes. Their proofs are left in the appendix.
\begin{lemma}\label{lem_critical} 
For $i=1,2$, $\mincut\lp\msf{s}_1,\msf{s}_2;\mcal{P}(\msf{v}_i^*)\rp=2$ if $k_i^*\geq 2$.
\end{lemma}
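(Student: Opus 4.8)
The plan is to show $\mincut(\msf{s}_1,\msf{s}_2;\mcal{P}(\msf{v}_i^*))=2$ when $k_i^*\ge 2$ by ruling out the value $1$ (the value $0$ is impossible since $\msf{s}_i$ reaches $\msf{d}_i$ through some parent of $\msf{v}_i^*$, and it cannot exceed $2$ since there are only two sources). So suppose for contradiction that there is a partition $(\mcal{T},\mcal{T}^c)$ with $\msf{s}_1,\msf{s}_2\in\mcal{T}$, $\mcal{P}(\msf{v}_i^*)\subseteq\mcal{T}^c$, and transfer matrix $G$ of rank $1$. The key structural fact I would extract is that a rank-$1$ binary transfer matrix, by the layered structure, forces the ``information reaching layer $k_i^*-1$ across this cut'' to be effectively one-dimensional: every parent $\msf{w}\in\mcal{P}(\msf{v}_i^*)$ that actually has an in-edge crossing the cut receives (over $\mbb{F}_2$) the \emph{same} linear combination of the transmissions of the boundary nodes in $\mcal{T}$. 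I would make this precise by noting that $\rank G=1$ means all nonzero rows of $G$ are equal, so there is a single subset $S\subseteq\mcal{T}$ (the support of that common row) such that the reception of every $\msf{w}\in\mcal{P}(\msf{v}_i^*)$ with nonempty crossing in-neighborhood equals $\bigoplus_{\msf{u}\in S}X_{\msf{u}}$.

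Next I would convert this into a statement about clones and vertex-cuts so as to contradict the \emph{minimality} of the layer index $k_i^*$ in the definition of the critical node. The idea is that the rank-$1$ cut exhibits a node (or a small set of nodes) in a layer strictly below $k_i^*$ whose clone-set is already an $(\msf{s}_1,\msf{s}_2;\msf{d}_i)$-vertex-cut. Concretely: since $\mcal{K}(\msf{v}_i^*)$ is an $(\msf{s}_1,\msf{s}_2;\msf{d}_i)$-vertex-cut, every $\msf{s}_1$-or-$\msf{s}_2$ path to $\msf{d}_i$ passes through some node in $\mcal{K}(\msf{v}_i^*)$, hence through some node of $\mcal{P}(\msf{v}_i^*)$ in layer $k_i^*-1$, hence (tracing one more step back, using acyclicity and the layered property) through the cut $(\mcal{T},\mcal{T}^c)$; combined with $\rank G=1$, every such path carries, at the moment it crosses into $\mcal{T}^c$, the same one-dimensional signal $\bigoplus_{\msf{u}\in S}X_{\msf{u}}$. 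I would then argue that the set $S$ itself — or, more carefully, the set of nodes in the layer of $S$ whose transmissions feed this common combination, which will be a clone-class — is an $(\msf{s}_1,\msf{s}_2;\msf{d}_i)$-vertex-cut lying in a layer $\le k_i^*-1<k_i^*$. This contradicts the defining minimality of $k_i^*$.

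The main obstacle, and the place where I would spend the most care, is the passage ``rank-$1$ transfer matrix $\Rightarrow$ a genuine vertex-cut at a lower layer.'' The subtlety is that $\rank G=1$ is an algebraic/linear-algebraic statement over $\mbb{F}_2$, whereas a vertex-cut is purely combinatorial (about existence of paths), and these are not a priori the same: a set can be a ``rank cut'' without disconnecting the graph. I would handle this by exploiting that in the layered model with all channel gains $0/1$, the rank-mincut to a layer-$k$ set equals the size of the smallest vertex-cut separating $\{\msf{s}_1,\msf{s}_2\}$ from that set \emph{within the relevant subnetwork} — a Menger/Ford–Fulkerson-type equivalence that underlies the single-unicast results of \cite{AvestimehrDiggavi_09} and is presumably available here. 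Invoking that equivalence, $\rank G=1$ yields an honest $1$-node vertex-cut $\{\msf{z}\}$ separating the sources from $\mcal{P}(\msf{v}_i^*)$; then $\mcal{K}(\msf{z})$ (its clone-class) separates the sources from $\msf{v}_i^*$ and hence from $\msf{d}_i$, and $\msf{z}$ lies strictly below layer $k_i^*$ — contradiction. I would also need the minor observation that one can take this separating node in a layer strictly less than $k_i^*$: if it were forced into layer $k_i^*$ it would, by the uniqueness-up-to-clones property already recorded in the Critical Nodes definition, be a clone of $\msf{v}_i^*$, but then $\mcal{P}(\msf{v}_i^*)$ would be separated from the sources by one of its own predecessors in the same layer, which is vacuous/impossible once $k_i^*\ge 2$ forces at least two distinct source-paths into $\mcal{P}(\msf{v}_i^*)$.
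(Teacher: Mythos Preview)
Your overall strategy---contradict the minimality of $k_i^*$ by extracting, from a hypothetical rank-$1$ cut, a clone class in a lower layer that is already an $(\msf{s}_1,\msf{s}_2;\msf{d}_i)$-vertex-cut---is the same as the paper's. The gap is exactly where you locate it, and your proposed resolution does not work. The implication ``rank-mincut $=1\Rightarrow$ a single node $\msf{z}$ separates the sources from $\mcal{P}(\msf{v}_i^*)$'' is false in this model: if $\msf{s}_1,\msf{s}_2$ are both adjacent to two nodes $a,b$ in layer~$1$ and $a,b$ are both adjacent to the target set in layer~$2$, the cut $\mcal{T}=\{\msf{s}_1,\msf{s}_2\}$ has the all-ones $2\times 2$ transfer matrix of rank $1$, yet no single vertex disconnects. (Your first paragraph is also imprecise: rank $1$ constrains only the \emph{crossing} edges, so even if every boundary node of $\mcal{T}^c$ receives $\bigoplus_{\msf{u}\in S}X_{\msf{u}}$ across the cut, nodes in $\mcal{P}(\msf{v}_i^*)$ may still receive additional contributions from parents inside $\mcal{T}^c$.)

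The paper sidesteps any Menger-type appeal. It restricts to the subgraph $\mcal{A}$ of nodes that are both reachable from $\{\msf{s}_1,\msf{s}_2\}$ and can reach $\mcal{P}(\msf{v}_i^*)$, and observes that in $\mcal{A}$ any bipartition placing two same-layer nodes on opposite sides has transfer-matrix rank at least $2$: one then finds crossing edges in two distinct layer interfaces, and the layered structure forces the corresponding $2\times 2$ submatrix to be the identity. Hence a rank-$1$ cut in $\mcal{A}$ must be \emph{vertical}, $B=\bigl(\bigcup_{l\le t}\mcal{L}_l\bigr)\cap\mcal{A}$ for some $t$; then rank $1$ across this vertical cut forces all nodes of $\mcal{A}\cap\mcal{L}_{t+1}$ to share the same parent set, i.e., to be clones. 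Their common clone class is an $(\msf{s}_1,\msf{s}_2;\msf{d}_i)$-vertex-cut in layer $t+1\le k_i^*-1$, contradicting the minimality of $k_i^*$. This ``rank-$1$ cut is vertical in the restricted subgraph'' step is the missing ingredient in your proposal.
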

\begin{lemma}\label{lem_critical_omniscient}
A network has an omniscient node if and only if one of the critical nodes $\msf{v}_1^*$ or $\msf{v}_2^*$ is omniscient.
\end{lemma}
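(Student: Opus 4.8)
The "if" direction is immediate: if one of $\msf{v}_1^*$ or $\msf{v}_2^*$ is omniscient, then by definition there exists an omniscient node. So the content is in the "only if" direction: given an arbitrary omniscient node $\msf{v}$, produce an omniscient critical node. The plan is to show that omniscience of $\msf{v}$ forces one of the critical nodes to inherit the two cut properties in (A) or (B).

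Suppose $\msf{v}$ satisfies (A), so $\mcal{K}(\msf{v})$ is an $(\msf{s}_1,\msf{s}_2;\msf{d}_1)$-vertex-cut and $\mcal{K}^{\msf{s}_2}(\msf{v})$ is an $(\msf{s}_2;\msf{d}_2)$-vertex-cut (the case of (B) is symmetric, swapping the roles of the two users). The first step is to compare the layer index $k$ of $\msf{v}$ with $k_1^*$, the layer of the critical node $\msf{v}_1^*$. Since $\mcal{K}(\msf{v})$ is an $(\msf{s}_1,\msf{s}_2;\msf{d}_1)$-vertex-cut living in layer $k$, minimality in the definition of the critical node gives $k_1^* \le k$. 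The idea is then to "push the cut down" to layer $k_1^*$: the cut $\mcal{K}(\msf{v}_1^*)$ in layer $k_1^*$ lies above $\mcal{K}(\msf{v})$, and every $(\msf{s}_1,\msf{s}_2;\msf{d}_1)$-path must cross layer $k_1^*$ through $\mcal{K}(\msf{v}_1^*)$; I want to argue this also makes $\mcal{K}(\msf{v}_1^*)$ — or rather $\mcal{K}^{\msf{s}_2}(\msf{v}_1^*)$ — inherit the $(\msf{s}_2;\msf{d}_2)$-cut property of $\mcal{K}^{\msf{s}_2}(\msf{v})$. The key observation should be that an $\msf{s}_2$-flow reaching $\msf{d}_2$, if it avoided $\mcal{K}^{\msf{s}_2}(\msf{v}_1^*)$, could be combined with the structure of the layered graph to reach $\msf{d}_1$ avoiding $\mcal{K}(\msf{v})$ as well (using that $\mcal{K}(\msf{v}_1^*)$ sits above $\mcal{K}(\msf{v})$ and that any node $\msf{s}_2$ can reach in layer $k_1^*$ that is not a clone of $\msf{v}_1^*$ can route onward to $\msf{d}_1$, contradicting that $\mcal{K}(\msf{v})$ is the $(\msf{s}_1,\msf{s}_2;\msf{d}_1)$-cut). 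So $\msf{v}_1^*$ is omniscient. Symmetrically, if $\msf{v}$ satisfies (B), one shows $\msf{v}_2^*$ is omniscient.

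The main obstacle I anticipate is the careful combinatorial bookkeeping relating the three relevant cuts — $\mcal{K}(\msf{v})$, $\mcal{K}^{\msf{s}_2}(\msf{v})$, and the corresponding clone-sets of $\msf{v}_1^*$ — and in particular handling the interaction between the "full" vertex-cut $\mcal{K}(\cdot)$ (which blocks $\msf{s}_1,\msf{s}_2$) and the $\msf{s}_i$-restricted vertex-cut $\mcal{K}^{\msf{s}_i}(\cdot)$ (which only needs to block $\msf{s}_i$), since these sets need not be nested and the notion of clone depends on which set of parents one looks at. I expect to need a small auxiliary claim, roughly: if $\mcal{K}(\msf{v})$ is an $(\msf{s}_1,\msf{s}_2;\msf{d}_1)$-cut in layer $k$ and $\msf{v}_1^*$ is the critical node in layer $k_1^* \le k$, then every path from a node of $\mcal{P}^{\msf{s}_2}(\msf{v}_1^*)$'s far side toward $\msf{d}_2$ that is not blocked by $\mcal{K}^{\msf{s}_2}(\msf{v}_1^*)$ can be rerouted to witness a violation of the $\mcal{K}^{\msf{s}_2}(\msf{v})$-cut. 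Once that auxiliary claim is in place the rest is routine. I would also double-check the edge case $k_1^* \in \{0\}$ separately, invoking Lemma~\ref{k_1^*=0}-style reasoning, though in that degenerate case $\msf{v}_1^* = \msf{s}_1$ and the statement should hold trivially or vacuously.
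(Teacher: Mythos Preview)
Your high-level plan matches the paper's: assuming $\msf{v}$ satisfies (A), you correctly note $k_1^*\le k$ and aim to show $\msf{v}_1^*$ is omniscient by checking that $\mcal{K}^{\msf{s}_2}(\msf{v}_1^*)$ is an $(\msf{s}_2;\msf{d}_2)$-cut. But the mechanism you propose for the contradiction is backwards. You write that ``any node $\msf{s}_2$ can reach in layer $k_1^*$ that is not a clone of $\msf{v}_1^*$ can route onward to $\msf{d}_1$.'' This is false: precisely because $\mcal{K}(\msf{v}_1^*)$ is an $(\msf{s}_1,\msf{s}_2;\msf{d}_1)$-vertex-cut lying entirely in layer $k_1^*$, a node in layer $k_1^*$ outside $\mcal{K}(\msf{v}_1^*)$ \emph{cannot} reach $\msf{d}_1$. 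So you cannot manufacture a source-to-$\msf{d}_1$ path avoiding $\mcal{K}(\msf{v})$ this way, and the contradiction you aim at (against the $(\msf{s}_1,\msf{s}_2;\msf{d}_1)$-cut property of $\mcal{K}(\msf{v})$) is the wrong target.

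The paper's argument runs the two cut properties of $\msf{v}$ in the opposite order. Take an $\msf{s}_2\to\msf{d}_2$ path $P$ avoiding $\mcal{K}^{\msf{s}_2}(\msf{v}_1^*)$, with $\msf{u}_{k_1^*}\in\mcal{L}_{k_1^*}$ and $\msf{u}_k\in\mcal{L}_k$ on $P$. Since $\mcal{K}^{\msf{s}_2}(\msf{v})$ \emph{is} an $(\msf{s}_2;\msf{d}_2)$-cut, the path must hit it: $\msf{u}_k\in\mcal{K}^{\msf{s}_2}(\msf{v})$, hence the predecessor $\msf{u}_{k-1}$ on $P$ lies in $\mcal{P}^{\msf{s}_2}(\msf{u}_k)=\mcal{P}^{\msf{s}_2}(\msf{v})\subseteq\mcal{P}(\msf{v})$. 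Now every parent of $\msf{v}$ reaches $\msf{d}_1$ (it is a parent of every clone of $\msf{v}$, and some clone reaches $\msf{d}_1$ since $\mcal{K}(\msf{v})$ is a nontrivial cut). So $\msf{u}_{k-1}$ reaches $\msf{d}_1$, giving a path $\msf{s}_2\to\msf{u}_{k_1^*}\to\msf{u}_{k-1}\to\msf{d}_1$. The $(\msf{s}_1,\msf{s}_2;\msf{d}_1)$-cut property of $\mcal{K}(\msf{v}_1^*)$ then forces $\msf{u}_{k_1^*}\in\mcal{K}(\msf{v}_1^*)\subseteq\mcal{K}^{\msf{s}_2}(\msf{v}_1^*)$, contradicting the choice of $P$. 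In short: use the $(\msf{s}_2;\msf{d}_2)$-cut of $\msf{v}$ to pin the path at layer $k$, then use the $(\msf{s}_1,\msf{s}_2;\msf{d}_1)$-cut of $\msf{v}_1^*$ (not of $\msf{v}$) to derive the contradiction.
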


Once we define the cut value, we can define \emph{primary min-cut nodes} for any set of nodes $\mcal{U}$ with $\mincut\lp \msf{s}_1,\msf{s}_2; \mcal{U}\rp=1$, due to the following lemma. What these primary min-cut nodes receive determines what $\mcal{U}$ receive.
\begin{lemma}[Primary Min-Cut]\label{lem_PrimaryMinCut}
%$ $\par
By $\mcal{U}_l, 0\leq l < k,$ denote the set of nodes in layer $\mcal{L}_l$ that can reach some node in $\mcal{U}.$ Let $l^*$ be the minimum index such that $\mincut(\msf{s}_1,\msf{s}_2;\mcal{U}_{l^*}) =1$.  Then, $\mcal{U}_{l^*}\subseteq \mcal{K}(\msf{u})$ for any $\msf{u}\in\mcal{U}_{l^*},$ i.e. nodes in $\mcal{U}_{l^*}$ are all clones of each other.

We then define any of the nodes in $\mcal{K}(\msf{u})$ as the \emph{primary min-cut node} of $\mcal{U}$, denoted by $\pmc\lp \mcal{U}\rp$. It is unique up to clones.
\end{lemma}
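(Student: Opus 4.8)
The plan is to establish the two claims in turn: first the structural claim that all nodes in $\mcal{U}_{l^*}$ are clones of each other, and then the well-definedness of $\pmc(\mcal{U})$ up to clones. The key tool is the observation that the rank-mincut from $\{\msf{s}_1,\msf{s}_2\}$ to $\mcal{U}_l$ is nonincreasing in $l$ for $l$ ranging over layers that feed into $\mcal{U}$: any cut separating $\{\msf{s}_1,\msf{s}_2\}$ from $\mcal{U}_{l}$ can be "pushed forward" to a cut of no larger rank separating $\{\msf{s}_1,\msf{s}_2\}$ from $\mcal{U}_{l+1}$, since every node in $\mcal{U}_{l+1}$ that is reachable from the source side must receive its signal through $\mcal{U}_l$. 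Hence $\mincut(\msf{s}_1,\msf{s}_2;\mcal{U}_l) \geq 1$ for all $l < k$ (because each source reaches its own destination, so something propagates), and $l^*$ is genuinely the first layer where this value drops to exactly $1$; for $l < l^*$ it is $\geq 2$.

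First I would prove the clone claim. Fix $\msf{u}\in\mcal{U}_{l^*}$ and suppose for contradiction that some $\msf{w}\in\mcal{U}_{l^*}$ has $\mcal{P}(\msf{w})\neq\mcal{P}(\msf{u})$. Since $\mincut(\msf{s}_1,\msf{s}_2;\mcal{U}_{l^*})=1$, there is a partition $(\mcal{T},\mcal{T}^c)$ with $\msf{s}_1,\msf{s}_2\in\mcal{T}$, $\mcal{U}_{l^*}\subseteq\mcal{T}^c$, and $\rank G = 1$. Because $\mcal{U}_{l^*-1}$ (the previous layer feeding $\mcal{U}$, if $l^*\geq 1$) must also be separated — a rank-$1$ transfer matrix into layer $l^*$ forces all of $\mcal{U}_{l^*-1}$'s relevant nodes onto one side in a controlled way — I would argue that the rows of $G$ that actually reach $\mcal{U}_{l^*}$ span a $1$-dimensional space, so every node of $\mcal{U}_{l^*}$ reachable from the sources receives a scalar multiple of the same linear combination of source-side transmissions. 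By the assumption that every node is reachable from some source, no node of $\mcal{U}_{l^*}$ is "dead", so the incoming edge-patterns (the columns of $G$ restricted to $\mcal{U}_{l^*}$) must all be proportional; over $\mbb{F}_2$ proportional means equal, which forces $\mcal{P}(\msf{w})\cap\mcal{T}=\mcal{P}(\msf{u})\cap\mcal{T}$. One then needs the complementary fact that parents outside $\mcal{T}$ are irrelevant: because $l^*$ is minimal, no node in $\mcal{U}_{l^*-1}\setminus\mcal{T}$ (i.e. on the $\mcal{T}^c$ side) can actually be reachable from the sources while also feeding $\mcal{U}_{l^*}$ in a way that would create a second dimension — otherwise we could exhibit rank-$2$ behavior already at layer $l^*-1$, contradicting $\mincut(\msf{s}_1,\msf{s}_2;\mcal{U}_{l^*-1})\geq 2$. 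Combining, $\mcal{P}(\msf{w})=\mcal{P}(\msf{u})$, i.e. $\msf{w}\in\mcal{K}(\msf{u})$, and symmetrically $\mcal{U}_{l^*}\subseteq\mcal{K}(\msf{u})$.

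For the second claim, since $\mcal{U}_{l^*}\subseteq\mcal{K}(\msf{u})$ for one (hence every) $\msf{u}\in\mcal{U}_{l^*}$, the set $\mcal{K}(\msf{u})$ is the same regardless of which representative we pick, so defining $\pmc(\mcal{U})$ to be any element of $\mcal{K}(\msf{u})$ is well-defined up to clones; and the clone class $\mcal{K}(\pmc(\mcal{U}))$ itself is canonically attached to $\mcal{U}$. I expect the main obstacle to be the step forcing \emph{all} incoming edge-patterns into $\mcal{U}_{l^*}$ to coincide — specifically, ruling out the possibility that two nodes of $\mcal{U}_{l^*}$ receive linearly dependent-but-unequal-looking signals because of contributions routed through $\mcal{T}^c$ nodes in earlier layers. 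This is where the minimality of $l^*$ (together with the monotonicity of rank-mincut and Lemma~\ref{lem_critical}-style arguments) must be invoked carefully: one has to show that any apparent extra degree of freedom at layer $l^*$ could be pulled back to contradict $\mincut(\msf{s}_1,\msf{s}_2;\mcal{U}_l)\geq 2$ for all $l<l^*$, and making that pullback precise — tracking which source-side linear combinations survive — is the delicate part.
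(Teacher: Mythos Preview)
Your instincts are right but you are overcomplicating the argument, and the obstacle you flag at the end is exactly the place where the paper's proof is cleaner. The paper's key observation is that in the induced subgraph on nodes that can reach $\mcal{U}$, \emph{any} bipartition $(\mcal{T},\mcal{T}^c)$ whose transfer matrix has rank $1$ must be a vertical cut, i.e.\ $\mcal{T}=\bigcup_{l=0}^{t}\mcal{U}_l$ for some $t$. The reason is simple: if two nodes $\msf{u}_1\in\mcal{T}$ and $\msf{u}_2\in\mcal{T}^c$ lie in the same layer, then tracing $\msf{u}_1$ forward until it first crosses into $\mcal{T}^c$ and $\msf{u}_2$ backward until it first crosses from $\mcal{T}$ produces two crossing edges whose source nodes are in different layers and whose target nodes are in different layers; the corresponding $2\times 2$ submatrix of the transfer matrix is the identity, so the rank is at least $2$.

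Once the cut is known to be vertical at layer $t$, the transfer matrix is just the bipartite adjacency matrix from $\mcal{U}_t$ to $\mcal{U}_{t+1}$. Rank $1$ over $\mbb{F}_2$ forces all nonzero columns to be equal; since every node in $\mcal{U}_{t+1}$ is source-reachable its column is nonzero, and since every parent of a node in $\mcal{U}_{t+1}$ can reach $\mcal{U}$ and hence lies in $\mcal{U}_t$, equal columns means equal parent sets in the original graph. So all of $\mcal{U}_{t+1}$ are clones. Finally, this same vertical cut separates $\{\msf{s}_1,\msf{s}_2\}$ from $\mcal{U}_{t+1}$ with rank $1$, so $\mincut(\msf{s}_1,\msf{s}_2;\mcal{U}_{t+1})=1$ and minimality of $l^*$ gives $t+1=l^*$.

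Your attempt to work directly with a general cut and then argue away ``parents on the $\mcal{T}^c$ side'' via minimality of $l^*$ is trying to recover the vertical-cut conclusion by an indirect route; as you suspected, making that precise is awkward, whereas proving verticality up front dissolves the difficulty entirely.
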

\underline{Comment}:
Note that the reception of any node in $\mcal{U}$ is a function of the reception of $\pmc\lp \mcal{U}\rp$. 

For example, in Fig.~\ref{fig_examples}(b) when there is an edge from $\msf{s}_1$ to $\msf{u}_2$, $\pmc(\msf{u}_5) = \msf{u}_2$; when there is no edge from $\msf{s}_1$ to $\msf{u}_2$, $\pmc(\msf{u}_5) = \msf{s}_2$. We also see that the critical node $\msf{v}_i^* = \pmc(\msf{d}_i), i=1,2$.

Next, we define induced graph $\mcal{G}_{12}(\msf{w})$ for a node $\msf{w}\in\mcal{P}^{\msf{s}_2}(\msf{v}_1^*)$ as follows. The purpose of these induced graph is two-fold: 1) to capture the effect on the rest of the network caused by interference neutralization for $(1,1)$-achievability, and 2) to capture the Markov relations that are useful in the derivation of outer bounds.
\begin{definition}[Induced Graph $\mcal{G}_{12}$]
If $\mincut\lp \msf{s}_1,\msf{s}_2; \mcal{P}^{\msf{s}_2}(\msf{v}_1^*)\rp = 2$ then $\mcal{G}_{12}(\msf{w}):= \mcal{G}$. If $\mincut\lp \msf{s}_1,\msf{s}_2; \mcal{P}^{\msf{s}_2}(\msf{v}_1^*)\rp = 1$, then we define $\mcal{G}_{12}(\msf{w})$ as the graph obtained by modifying only the parents of nodes in $\mcal{L}_{k_1^*}$ as follows. For $\msf{u}\in\mcal{L}_{k_1^*},$
\begin{align*}
\mcal{P}_{\mcal{G}_{12}(\msf{w})}(\msf{u}) = \begin{cases}
  \mcal{P}(\msf{u}) & \text{if } \msf{w}\notin \mcal{P}(\msf{u}) \\
  \mcal{P}(\msf{u}) \Delta \mcal{P}^{\msf{s}_2}(\msf{v}_1^*) & \text{if } \msf{w}\in \mcal{P}(\msf{u}),
\end{cases}
\end{align*}
where $\Delta$ denotes symmetric set difference: $A\Delta B:=(A\setminus B)\cup(B\setminus A)$. We then drop nodes in $\mcal{G}_{12}(\msf{w})$ that cannot be reached by either of the two sources. In the rest of this paper, a graph theoretic object with a graph (say, $\mcal{G}_{12}$) in its subscript, like $\mcal{P}_{\mcal{G}_{12}(\msf{w})}(\msf{u})$ above, denote the graph theoretic object in the induced graph $\mcal{G}_{12}$. Define $\mcal{R}(\msf{w})$ as the set of nodes in $\mcal{P}^{\msf{s}_2}(\msf{v}_1^*)$ that can reach one of the two destinations in $\mcal{G}_{12}(\msf{w})$.
\end{definition}
Similarly we can define $\mcal{G}_{21}(\msf{w})$ with indices $1$ and $2$ swapped in the above definition.

\begin{figure}[htbp]
{\center
\subfigure[Zigzag Network]{\includegraphics[width=2.5in]{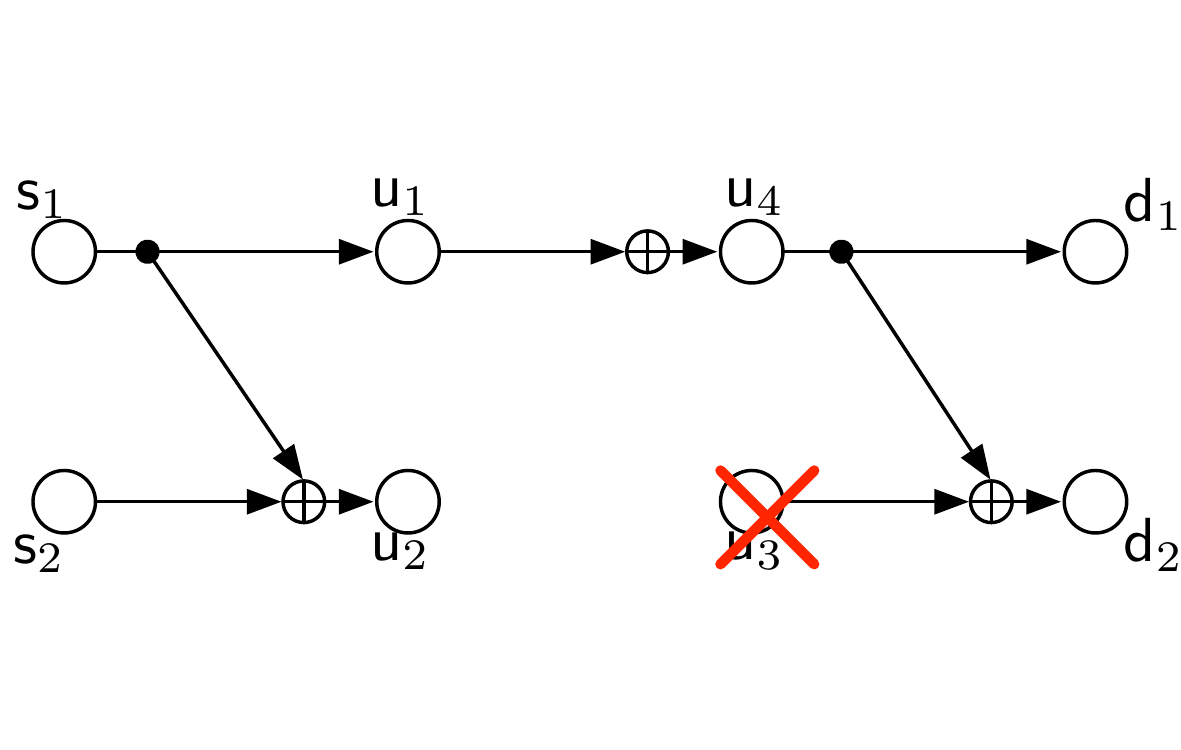}}
\subfigure[An Asymmetric Network]{\includegraphics[width=2.5in]{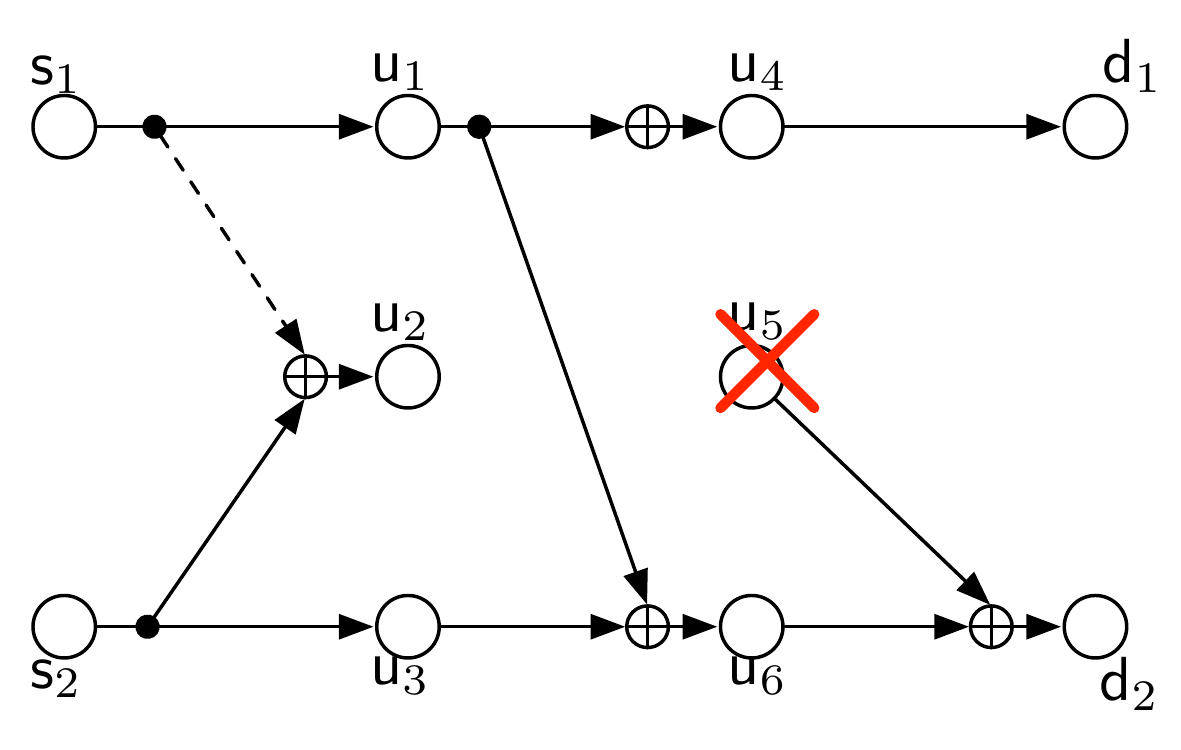}}
\caption{Induced Graph $\mcal{G}_{12}$ for Example Networks in Fig.~\ref{fig_examples}.}
%\caption{Critical Nodes in Different Layers, the first situation: $\mcal{P}^{\msf{S}_2}(\msf{U}_2)\setminus \mcal{P}^{\msf{S}_2}(\msf{V}_1)\ne \emptyset$. Red nodes in $\mcal{L}_{k_1-1}$ are $\msf{S}_2$ reachable.}
\label{fig_examples_induced}
}
\end{figure}

For example, induced graphs for the networks in Fig.~\ref{fig_examples} are depicted in Fig.~\ref{fig_examples_induced}. For $\mcal{G}_{12}$ in (a), $\msf{s}_2$ can no longer reach $\msf{d}_2$, as $\msf{u}_4$ is omniscient in the original network $\mcal{G}$. In (b), node $\msf{u}_6$ becomes omniscient in $\mcal{G}_{12}$ while there is no omniscient node in the original network $\mcal{G}$.

%We will use $\mcal{G}_{12}(\msf{w})$ when $k_1^*\le k_2^*$ and $\mcal{G}_{21}(\msf{w})$ when $k_2^* \le k_1^*.$ We will only use these graphs in relation to whether or not $\mcal{K}_{\mcal{G}_{12}(\msf{w})}^{\msf{s}_1}(\pmc_{\mcal{G}_{12}(\msf{w})}\lp \msf{d}_2 \rp)$ forms an $\lp \msf{s}_1; \msf{d}_1\rp$-vertex-cut in $\mcal{G}_{12}(\msf{w}).$ Lemma~\ref{all_G12s_are_essentially_the_same} below allows us to drop the $\msf{w}$ and refer to any of the $\mcal{G}_{12}(\msf{w})$ as $\mcal{G}_{12}$ and talk about whether or not $\mcal{K}_{\mcal{G}_{12}}^{\msf{s}_1}(\pmc_{\mcal{G}_{12}}\lp \msf{d}_2 \rp)$ forms an $\lp \msf{s}_1; \msf{d}_1\rp$-vertex-cut in $\mcal{G}_{12}.$ 
We will use $\mcal{G}_{12}(\msf{w})$ when $k_1^*\le k_2^*$ and $\mcal{G}_{21}(\msf{w})$ when $k_2^* \le k_1^*$. We will only use these graphs in relation to whether or not there is an omniscient node in $\mcal{G}_{12}(\msf{w})$. Lemma~\ref{all_G12s_are_essentially_the_same} below allows us to drop the $\msf{w}$ and refer to any of the $\mcal{G}_{12}(\msf{w})$ as $\mcal{G}_{12}$ and talk about whether or not there is an omniscient node in $\mcal{G}_{12}$.

%First, we describe one scenario where we get a result similar to the one in \cite{WangShroff_10} which helps us in establishing Lemma~\ref{all_G12s_are_essentially_the_same}.

%\begin{lemma}\label{k_1^*=0}
%Suppose in a network $\msf{s}_2$ cannot reach $\msf{d}_1,$ i.e. $k_1^*=0.$ Then, the capacity region of this network is the triangle $\mfrak{T}$ or the square $\mfrak{S}$ depending on whether there is an omniscient node in the network or not, i.e. depending on whether $\msf{v}_2^*$ is omniscient or not (using Lemma~\ref{lem_critical_omniscient}). Further, $(1,1)$ can be achieved with high probability with all nodes except nodes in $\mcal{P}(\msf{v}_2^*)$ performing random linear coding over a sufficiently large field.
%\end{lemma}

%\begin{lemma}\label{all_G12s_are_essentially_the_same}
%Suppose, in a network with no omniscient node, and with $k_1^*\le k_2^*,$ there exists a node $\msf{w}_0\in\mcal{P}^{\msf{s}_2}(\msf{v}_1^*)$ such that $\mcal{K}_{\mcal{G}_{12}(\msf{w}_0)}^{\msf{s}_1}(\pmc_{\mcal{G}_{12}(\msf{w}_0)}\lp \msf{d}_2 \rp)$ forms an $\lp \msf{s}_1; \msf{d}_1\rp$-vertex-cut in $\mcal{G}_{12}(\msf{w}_0)$. Then for any node $\msf{w}\in\mcal{P}^{\msf{s}_2}(\msf{v}_1^*)$, $\mcal{K}_{\mcal{G}_{12}(\msf{w})}^{\msf{s}_1}(\pmc_{\mcal{G}_{12}(\msf{w})}\lp \msf{d}_2 \rp)$ forms an $\lp \msf{s}_1; \msf{d}_1\rp$-vertex-cut in $\mcal{G}_{12}(\msf{w}).$
%\end{lemma}
\begin{lemma}\label{all_G12s_are_essentially_the_same}
Suppose, in a network with no omniscient node, and with $k_1^*\le k_2^*,$ there exists a node $\msf{w}_0\in\mcal{P}^{\msf{s}_2}(\msf{v}_1^*)$ such that there is an omniscient node in $\mcal{G}_{12}(\msf{w}_0)$. Then for any node $\msf{w}\in\mcal{P}^{\msf{s}_2}(\msf{v}_1^*)$, there is an omniscient node in $\mcal{G}_{12}(\msf{w}).$
\end{lemma}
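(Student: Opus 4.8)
The plan is to show that the key structural features that produce an omniscient node in some $\mcal{G}_{12}(\msf{w}_0)$ depend only on $\mcal{P}^{\msf{s}_2}(\msf{v}_1^*)$ and the layer $\mcal{L}_{k_1^*}$, and not on the particular choice of $\msf{w}\in\mcal{P}^{\msf{s}_2}(\msf{v}_1^*)$. First I would note that we are in the nontrivial case $\mincut(\msf{s}_1,\msf{s}_2;\mcal{P}^{\msf{s}_2}(\msf{v}_1^*))=1$ (otherwise $\mcal{G}_{12}(\msf{w})=\mcal{G}$ for every $\msf{w}$ and there is nothing to prove, since the hypothesis says $\mcal{G}$ has no omniscient node); in that case $\mcal{P}^{\msf{s}_2}(\msf{v}_1^*)$ has a primary min-cut node, so by Lemma~\ref{lem_PrimaryMinCut} all of the $\msf{s}_2$-reachable incoming signal to layer $\mcal{L}_{k_1^*}$ on the user-2 side is a function of a single node $\pmc(\mcal{P}^{\msf{s}_2}(\msf{v}_1^*))$. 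This is the point: the ``switching'' operation defining $\mcal{G}_{12}(\msf{w})$ (replacing $\mcal{P}(\msf{u})$ by $\mcal{P}(\msf{u})\Delta\mcal{P}^{\msf{s}_2}(\msf{v}_1^*)$ at the $\msf{u}\in\mcal{L}_{k_1^*}$ having $\msf{w}$ as a parent) removes, from those nodes' receptions, exactly the component coming through $\mcal{P}^{\msf{s}_2}(\msf{v}_1^*)$ — i.e., it zero-forces the user-2 signal entering layer $k_1^*$ — and it does so regardless of which $\msf{w}$ triggered it, because every $\msf{w}\in\mcal{P}^{\msf{s}_2}(\msf{v}_1^*)$ carries (a function of) the same primary-min-cut signal.

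The main step is then to compare, for two parents $\msf{w},\msf{w}'\in\mcal{P}^{\msf{s}_2}(\msf{v}_1^*)$, the reachability structure of $\mcal{G}_{12}(\msf{w})$ and $\mcal{G}_{12}(\msf{w}')$ below layer $\mcal{L}_{k_1^*}$. I would argue that the induced graphs agree on which nodes are $\msf{s}_1$-reachable, which are $\msf{s}_2$-reachable, and — crucially — which vertex sets of the form $\mcal{K}(\cdot)$, $\mcal{K}^{\msf{s}_1}(\cdot)$, $\mcal{K}^{\msf{s}_2}(\cdot)$ are $(\msf{s}_1,\msf{s}_2;\msf{d}_i)$- or $(\msf{s}_i;\msf{d}_i)$-vertex-cuts. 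The subtlety is that $\mcal{G}_{12}(\msf{w})$ and $\mcal{G}_{12}(\msf{w}')$ need not be literally the same graph — the edge sets into $\mcal{L}_{k_1^*}$ differ — but I would show that they differ only by a relabeling/addition of edges that does not change any of these cut conditions. Concretely: after the modification, a node $\msf{u}\in\mcal{L}_{k_1^*}$ in $\mcal{G}_{12}(\msf{w})$ either still sees the full user-2 signal (if $\msf{w}\notin\mcal{P}(\msf{u})$) or has it removed (if $\msf{w}\in\mcal{P}(\msf{u})$); since $\mincut(\msf{s}_1,\msf{s}_2;\mcal{P}^{\msf{s}_2}(\msf{v}_1^*))=1$, a node has $\msf{w}\in\mcal{P}(\msf{u})$ iff it has $\pmc(\mcal{P}^{\msf{s}_2}(\msf{v}_1^*))$ among its predecessors through that cut, which is the same criterion for all $\msf{w}$. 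So the partition of $\mcal{L}_{k_1^*}$ into ``user-2-signal retained'' versus ``zero-forced'' nodes is independent of $\msf{w}$, and consequently so are all the downstream reachability and clone relations that feed into the definition of ``omniscient.''

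Having established that the relevant graph-theoretic predicates (reachability, clone sets, and the vertex-cut conditions (A) and (B)) take the same truth values in $\mcal{G}_{12}(\msf{w})$ and $\mcal{G}_{12}(\msf{w}')$, it follows that $\mcal{G}_{12}(\msf{w}_0)$ has an omniscient node iff $\mcal{G}_{12}(\msf{w})$ does, for every $\msf{w}\in\mcal{P}^{\msf{s}_2}(\msf{v}_1^*)$, which is the claim. I would organize the write-up as: (i) reduce to the $\mincut=1$ case and invoke Lemma~\ref{lem_PrimaryMinCut}; (ii) prove the ``same partition of $\mcal{L}_{k_1^*}$'' lemma; (iii) deduce equality of all reachability and clone sets, hence of omniscience. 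The hard part will be step (ii)–(iii): one must be careful that the symmetric-difference operation, which can both delete and add edges into $\mcal{L}_{k_1^*}$, genuinely yields the same connectivity pattern from the sources down to $\msf{d}_1,\msf{d}_2$ for different $\msf{w}$ — in particular that no spurious $\msf{s}_2$-path to $\msf{d}_2$ is created in $\mcal{G}_{12}(\msf{w})$ but not in $\mcal{G}_{12}(\msf{w}')$ — and this is exactly where the single-primary-min-cut-node structure does the work, since it forces every affected node in $\mcal{L}_{k_1^*}$ to lose precisely the user-2 signal and nothing else.
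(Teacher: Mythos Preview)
Your overall strategy --- show that the clone and reachability structure of $\mcal{G}_{12}(\msf{w})$ in and beyond layer $\mcal{L}_{k_1^*}$ is the same for every $\msf{w}\in\mcal{P}^{\msf{s}_2}(\msf{v}_1^*)$ --- is reasonable, but the argument you give for the key step (ii) is incorrect. You claim that ``a node has $\msf{w}\in\mcal{P}(\msf{u})$ iff it has $\pmc(\mcal{P}^{\msf{s}_2}(\msf{v}_1^*))$ among its predecessors through that cut, which is the same criterion for all $\msf{w}$.'' This is false: $\mincut(\msf{s}_1,\msf{s}_2;\mcal{P}^{\msf{s}_2}(\msf{v}_1^*))=1$ only says that every node of $\mcal{P}^{\msf{s}_2}(\msf{v}_1^*)$ is a descendant of a single $\pmc$ node; it does \emph{not} say that every $\msf{u}\in\mcal{L}_{k_1^*}$ has either all or none of $\mcal{P}^{\msf{s}_2}(\msf{v}_1^*)$ as parents. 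In general $\mcal{P}(\msf{u})\cap\mcal{P}^{\msf{s}_2}(\msf{v}_1^*)$ can be any subset, so the set $\{\msf{u}\in\mcal{L}_{k_1^*}:\msf{w}\in\mcal{P}(\msf{u})\}$ genuinely depends on $\msf{w}$, and the two graphs $\mcal{G}_{12}(\msf{w})$ and $\mcal{G}_{12}(\msf{w}')$ have different edge sets into $\mcal{L}_{k_1^*}$. Your step (ii) as written therefore does not go through. A direct approach can be made to work, but it requires actually proving that the \emph{clone} relation on $\mcal{L}_{k_1^*}$ is $\msf{w}$-invariant: writing $A_{\msf{u}}=\mcal{P}(\msf{u})\cap\mcal{P}^{\msf{s}_2}(\msf{v}_1^*)$ and $B_{\msf{u}}=\mcal{P}(\msf{u})\setminus\mcal{P}^{\msf{s}_2}(\msf{v}_1^*)$, one checks by cases that $\msf{u},\msf{u}'$ are clones in $\mcal{G}_{12}(\msf{w})$ iff $B_{\msf{u}}=B_{\msf{u}'}$ and $A_{\msf{u}}\in\{A_{\msf{u}'},\,\mcal{P}^{\msf{s}_2}(\msf{v}_1^*)\setminus A_{\msf{u}'}\}$, a condition that is indeed independent of $\msf{w}$. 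You would then still need the analogous statement for $\msf{s}_i$-clones and for which nodes survive the ``drop unreachable nodes'' step, none of which you have addressed.

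The paper sidesteps all of this with a short and entirely different argument. It proves the contrapositive: if $\mcal{G}_{12}(\msf{w})$ has \emph{no} omniscient node, then since $\msf{s}_2$ cannot reach $\msf{d}_1$ in $\mcal{G}_{12}(\msf{w})$, Lemma~\ref{k_1^*=0} gives a $(1,1)$-achieving linear scheme there. That scheme is then transported to $\mcal{G}_{12}(\msf{w}')$ by keeping all coefficients the same except rescaling at $\msf{w}'$ so that $\alpha_{\msf{w}'}\beta_{\msf{w}',\msf{s}_2}=\alpha_{\msf{w}}\beta_{\msf{w},\msf{s}_2}$; because $\mincut=1$ forces all nodes in $\mcal{P}^{\msf{s}_2}(\msf{v}_1^*)$ to carry proportional receptions, this makes every node's reception in $\mcal{G}_{12}(\msf{w}')$ identical to what it was in $\mcal{G}_{12}(\msf{w})$, hence $(1,1)$ is achieved in $\mcal{G}_{12}(\msf{w}')$ as well. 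The omniscient outer bound then forces $\mcal{G}_{12}(\msf{w}')$ to have no omniscient node. So the paper's argument is operational (achievability plus outer bound) rather than graph-theoretic, and avoids the delicate clone-invariance analysis entirely.
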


%\begin{lemma}[Slope Two and Slope Half Outer Bounds]\label{lem_SlopeTwoOutBd}
%If the network $\mcal{G}$ satisfies the following conditions, then $2R_1+R_2\le 2$:
%{\rm
%\begin{itemize}
%\item [(a)] $k^*_1 \le k^*_2$.
%\item [(b)] $\mincut_{\mcal{G}}\lp\msf{s}_1,\msf{s}_2;\mcal{P}\lp \msf{v}^*_2\rp\rp = 1$. Let $\msf{u}_{21}$ denote $\pmc_{\mcal{G}}\lp \mcal{P}\lp \msf{v}^*_2\rp\rp$, and let $k^*_2'$ denote its layer number.
%\item [(c)] $\mcal{P}^{\msf{s}_2}\lp \msf{v}^*_1\rp$ cannot be reached by $\msf{s}_1$, that is, $\mincut\lp\msf{s}_1;\mcal{P}^{\msf{s}_2}\lp \msf{v}^*_1\rp\rp = 0$.
%\item [(d)] In $\mcal{G}$, if $k^*_2' > k^*_1$, the parents of $\msf{u}_{21}$ are either from the cloud $\cloud_1$ or reachable from $\msf{s}_2$ only; if $k^*_2' \le k^*_1$, the parents of $\msf{u}_{21}$ are either  reachable from $\msf{s}_1$ only or reachable from $\msf{s}_2$ only.
%\item [(e)] $\mcal{K}^{\msf{s}_1}\lp\msf{u}_{21}\rp$ forms an $\lp \msf{s}_1; \msf{d}_1\rp$-vertex-cut.
%\end{itemize}
%}
%We call the set of the above conditions $\rm{Condition}_{12}$. Symmetrically, if $\mcal{G}$ satisfies the above condition with indices $1$ and $2$ exchanged, then $R_1+2R_2\le 2$.
%\end{lemma}
%\input{Proof_SlopeTwoOutBd.tex}

%We have the analogous theorem to Theorem~\ref{thm_1}.
%: Theorem 2
\begin{theorem}[Characterization of $\mfrak{T}_{12}$ and $\mfrak{T}_{21}$]\label{thm_2}
Consider a network $\mcal{G}$ in which no node is omniscient.
{\flushleft \rm (a)} 
If the network $\mcal{G}$ satisfies the following conditions, then the capacity region is the trapezoid region $\mfrak{T}_{12}$:
{\flushleft \rm
\begin{itemize}
\item $\textrm{T}^{(12)}_1$:
$0<k^*_1 \le k^*_2$.

\item $\textrm{T}^{(12)}_2$:
$\mincut\lp\msf{s}_1,\msf{s}_2;\mcal{P}^{\msf{s}_2}\lp \msf{v}^*_1\rp\rp = 1$. Let $\msf{w}_{12}$ denote $\pmc\lp \mcal{P}^{\msf{s}_2}\lp \msf{v}^*_1\rp\rp$. 
%$\msf{w}_{12} = \msf{s}_2$.
%$\mcal{P}^{\msf{s}_2}\lp \msf{v}^*_1\rp$ cannot be reached by $\msf{s}_1$, that is, $\mincut\lp\msf{s}_1;\mcal{P}^{\msf{s}_2}\lp \msf{v}^*_1\rp\rp = 0$.

\item $\textrm{T}^{(12)}_3$:
%For some $\mcal{G}_{12}$, 
Let $\msf{u}_{21}:=\pmc_{\mcal{G}_{12}}\lp \msf{v}^*_2 \rp$. $\msf{u}_{21}$ is omniscient in $\mcal{G}_{12}$.
%$\mcal{K}_{\mcal{G}_{12}}^{\msf{s}_1}\lp\msf{u}_{21}\rp$ forms an $\lp \msf{s}_1; \msf{d}_1\rp$-vertex-cut, ie., $\msf{u}_{21}$ is omniscient in $\mcal{G}_{12}$.
%$\mincut_{\mcal{G}_{12}}\lp\msf{s}_1,\msf{s}_2;\mcal{P}_{\mcal{G}_{12}}\lp \msf{v}^*_2\rp\rp = 1$. Let $\msf{u}_{21}$ denote $\pmc_{\mcal{G}_{12}}\lp \mcal{P}_{\mcal{G}_{12}}\lp \msf{v}^*_2\rp\rp$.
%, and it is $\msf{s}_1\msf{s}_2$-reachable.

\item $\textrm{T}^{(12)}_4$:
$\msf{w}_{12} = \msf{s}_2$, i.e., $\mcal{P}^{\msf{s}_2}\lp \msf{v}^*_1\rp$ cannot be reached by $\msf{s}_1$.

%\item [(4)] In $\mcal{G}_{12}$, $\msf{u}_{21}$ is $\msf{s}_1\msf{s}_2$-reachable. If $k^*_2' > k^*_1$, the parents of $\msf{u}_{21}$ are either from the cloud $\cloud_1$ or reachable from $\msf{s}_2$ only; if $k^*_2' \le k^*_1$, the parents of $\msf{u}_{21}$ are either  reachable from $\msf{s}_1$ only or reachable from $\msf{s}_2$ only.
%\item [($\textrm{T}^{(12)}_5$)] 
%%$\msf{u}_{21}\ne \msf{s}_1$, $\msf{u}_{21}\ne \msf{s}_2$, and 
%$\mcal{K}^{\msf{s}_1}\lp\msf{u}_{21}\rp$ forms an $\lp \msf{s}_1; \msf{d}_1\rp$-vertex-cut in $\mcal{G}$
\end{itemize}
}
We call the conjunction of the above conditions $\textrm{T}^{(12)}$. Symmetrically, if $\mcal{G}$ satisfies the above condition with indices $1$ and $2$ (in the superscript) exchanged, then the capacity region is the trapezoid region $\mfrak{T}_{21}$.
%If the network satisfies $\rm{Condition}_{12}$ ($\rm{Condition}_{21}$), then the trapezoid region $\mscr{T}_{12}$ ($\mscr{T}_{21}$) is the capacity region. 
{\flushleft \rm (b)}
Conversely, if neither condition $\textrm{T}^{(12)}$ nor $\textrm{T}^{(21)}$ is satisfied, then the two trapezoid regions are strictly contained in the capacity region. Moreover, both $(1/2,1)$ and $(1,1/2)$ are achievable and hence the pentagon $\mfrak{P}$.
\end{theorem}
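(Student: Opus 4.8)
I would prove part~(a) as two separate directions — achievability of $\mfrak{T}_{12}$ and a matching converse — and then deduce part~(b) by negating the conditions one at a time. First a reduction that I would make throughout: we may assume $k_1^*,k_2^*\ge 1$. Indeed, if some $k_i^*=0$, then Lemma~\ref{k_1^*=0} together with the standing ``no omniscient node'' hypothesis forces the capacity region to be the square $\mfrak{S}$, which contains $\mfrak{T}_{12}\cup\mfrak{T}_{21}$ strictly (so part~(b) holds) and makes both $\textrm{T}^{(12)}$ and $\textrm{T}^{(21)}$ impossible since they require $k_i^*>0$ (so part~(a) is vacuous). Also note that, since $\mcal{K}(\msf{v}_i^*)$ is a $(\msf{s}_1,\msf{s}_2;\msf{d}_i)$-vertex-cut, a data-processing step shows that in \emph{any} reliable code the critical node $\msf{v}_i^*$ can decode $W_i$ and, more strongly, $Y_{\msf{d}_i}^n$ is a function of the scalar-per-use reception sequence $Y_{\msf{v}_i^*}^n$; this is the backbone of the converse.

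\textbf{Achievability in (a).} Since $\mfrak{T}_{12}=\conv\{(0,0),(1,0),(0,1),(1/2,1)\}$ and $(1,0),(0,1)$ are reachable by single-unicast routing, it suffices to achieve the corner $(1/2,1)$; time-sharing does the rest. I would run an $\mbb{F}_{2^r}$-linear scheme on a length-two symbol block in which $\msf{s}_2$ injects two fresh symbols (rate $1$) and $\msf{s}_1$ injects one (rate $1/2$), with every node performing random linear coding except the (at most four) parents of the two critical nodes. Conditions $\textrm{T}^{(12)}_1,\textrm{T}^{(12)}_2,\textrm{T}^{(12)}_4$ are precisely what makes over-the-air \emph{interference neutralization} of $\msf{s}_2$ at $\msf{v}_1^*$ feasible: by $\textrm{T}^{(12)}_2$ that interference has rank one, by $\textrm{T}^{(12)}_4$ ($\msf{w}_{12}=\msf{s}_2$) it arrives coherently straight from $\msf{s}_2$, and by $\textrm{T}^{(12)}_1$ ($k_1^*\le k_2^*$) $\msf{v}_1^*$ is the first bottleneck user~$1$ meets, so a single correction injected at $\mcal{P}(\msf{v}_1^*)$ cancels it and $\msf{d}_1$ reads user~$1$'s symbol through the scalar $\mincut$-$1$ cut $\mcal{K}(\msf{v}_1^*)$; dually, user~$1$'s residual signal is \emph{zero-forced} at $\mcal{P}(\msf{v}_2^*)$ so that $\msf{d}_2$ reads its two symbols through $\mcal{K}(\msf{v}_2^*)$. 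The net effect of the neutralization on the rest of the network is exactly the passage $\mcal{G}\to\mcal{G}_{12}$ (well-defined by Lemma~\ref{all_G12s_are_essentially_the_same}), and a single coefficient assignment over a large enough $\mbb{F}_{2^r}$ making all transfer matrices at $\msf{d}_1,\msf{d}_2$ simultaneously full rank is produced by a Schwartz--Zippel / generic-rank argument. This is essentially the linear scheme behind Theorem~\ref{thm_1}(b), here specialised by $\textrm{T}^{(12)}_1$ to neutralise at $\msf{v}_1^*$ so that the trapezoid it yields is $\mfrak{T}_{12}$ rather than $\mfrak{T}_{21}$.

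\textbf{Converse in (a), and the main obstacle.} The bound $R_2\le1$ is the rank-$\mincut\le1$ bound for the $\msf{d}_2$-cut $\mcal{P}(\msf{v}_2^*)$. For $2R_1+R_2\le2$ I would run an interference-channel-style single-letter outer bound in the spirit of El~Gamal--Costa (and of the weak-interference $2R_1+R_2$ bound): from Fano, $2nR_1+nR_2\le 2I(W_1;Y_{\msf{d}_1}^n)+I(W_2;Y_{\msf{d}_2}^n)+n\epsilon_n$; replace $Y_{\msf{d}_i}^n$ by $Y_{\msf{v}_i^*}^n$ via the cut property above; then feed the receivers a genie equal to the particular linear function of the real received signals that the induced graph $\mcal{G}_{12}$ prescribes, and invoke the fact that $\msf{u}_{21}=\pmc_{\mcal{G}_{12}}(\msf{v}_2^*)$ is \emph{omniscient in $\mcal{G}_{12}$} (condition $\textrm{T}^{(12)}_3$, with Lemma~\ref{lem_critical_omniscient}) to turn this into a code-independent entropy inequality; the terms then telescope to $2n+o(n)$. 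Pinning down exactly which linear genie $\mcal{G}_{12}$ singles out, and arguing that an omniscient-node hypothesis stated in the \emph{modified} graph yields an inequality valid in the \emph{original} graph — the neutralization being a coding choice, not a graph fact — is the step I expect to be the main obstacle of the whole theorem; by contrast the achievability scheme, though elaborate, is driven mechanically by $\textrm{T}^{(12)}_1$--$\textrm{T}^{(12)}_4$.

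\textbf{Part (b).} Assume no omniscient node and that neither $\textrm{T}^{(12)}$ nor $\textrm{T}^{(21)}$ holds. By the reduction we may take $k_1^*,k_2^*\ge1$, and by Theorem~\ref{thm_1}(b) one trapezoid lies in the capacity region; swapping indices if needed, say it is $\mfrak{T}_{12}$, so $(1/2,1)$ is achievable and $\textrm{T}^{(12)}_1$ holds. Since $\textrm{T}^{(12)}$ fails, one of $\textrm{T}^{(12)}_2,\textrm{T}^{(12)}_3,\textrm{T}^{(12)}_4$ fails, and in each case I would exhibit a scheme achieving $(1,1/2)$; then $\mfrak{P}=\conv\{(0,0),(1,0),(0,1),(1,1/2),(1/2,1)\}$ is achievable, and since $(1,1/2)\notin\mfrak{T}_{12}$ and $(1/2,1)\notin\mfrak{T}_{21}$ both trapezoids are \emph{strictly} contained. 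Concretely: if $\textrm{T}^{(12)}_2$ fails then $\mincut(\msf{s}_1,\msf{s}_2;\mcal{P}^{\msf{s}_2}(\msf{v}_1^*))=2$ and $\mcal{G}_{12}=\mcal{G}$, leaving enough alignment freedom at $\msf{v}_1^*$ to spare user~$1$'s budget; if $\textrm{T}^{(12)}_4$ fails then $\mcal{P}^{\msf{s}_2}(\msf{v}_1^*)$ is $\msf{s}_1$-reachable, so the neutralizing correction can be sourced without consuming user~$1$'s full dimension; if $\textrm{T}^{(12)}_3$ fails then $\mcal{G}_{12}$ has no omniscient node, so Theorem~\ref{thm_1}(b) applies to $\mcal{G}_{12}$ and lifting a $\mcal{G}_{12}$-scheme back through the neutralization step equips $\mcal{G}$ with the missing corner. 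The $\textrm{T}^{(21)}$ side is symmetric.
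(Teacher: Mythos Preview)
Your converse sketch is on the right track and essentially matches the paper: the ``genie'' you are looking for is $Z_{22}:=\sum_{\msf{w}\in\mcal{P}^{\msf{s}_2}(\msf{v}_1^*)}X_{\msf{w}}$, together with $Z_1:=Y_{\msf{v}_1^*}$ and $Z_{21}:=Y_{\msf{u}_{21}}$. Under $\textrm{T}^{(12)}_4$, $Z_{22}$ is a function of $X_{\msf{s}_2}$ alone and $H(Z_1\,|\,X_{\msf{s}_1})=H(Z_{22})$, while $\textrm{T}^{(12)}_3$ gives the Markov relation $X_{\msf{s}_2}\leftrightarrow(Z_{21},Z_{22})\leftrightarrow Y_{\msf{d}_2}$ and $X_{\msf{s}_1}\leftrightarrow(Z_{21},X_{\msf{s}_2})\leftrightarrow Y_{\msf{d}_1}$; the entropy terms then telescope to $H(Z_1)+H(Z_{21}\,|\,Z_{22})\le 2N$. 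So your identification of the main obstacle is accurate, and once the genie is named the inequality is routine.

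The real gaps are on the achievability side, in two places.

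\emph{The same-layer case $k_1^*=k_2^*$.} You never single this out, but the paper's proof treats it quite differently from $k_1^*<k_2^*$: here $\mcal{K}(\msf{v}_1^*)$ and $\mcal{K}(\msf{v}_2^*)$ partition $\mcal{L}_{k^*}$, the parents $\mcal{P}(\msf{v}_1^*),\mcal{P}(\msf{v}_2^*)$ overlap, and special coding at $\mcal{P}(\msf{v}_1^*)$ directly perturbs what $\msf{v}_2^*$ sees. The paper reduces $\textrm{T}^{(12)},\textrm{T}^{(21)},\textrm{P}^{(12)},\textrm{P}^{(21)}$ to four concrete parent-set patterns (Lemma~\ref{lem_Pattern}) and then runs a separate case analysis (Claims~\ref{claim_SameLay_1},~\ref{claim_SameLay_2} and Cases~$A$/$B$). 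Your single ``neutralize at $\msf{v}_1^*$, zero-force at $\msf{v}_2^*$'' template does not survive this interaction without that structure.

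\emph{Part~(b), the $\textrm{T}^{(12)}_4$-fails branch.} Your one-line claim that ``the neutralizing correction can be sourced without consuming user~$1$'s full dimension'' is not a scheme. When $\textrm{T}^{(12)}_{1,2,3}$ hold but $\textrm{T}^{(12)}_4$ fails, the paper further splits on whether $\mcal{K}^{\msf{s}_2}(\msf{w}_{12})$ is an $(\msf{s}_2;\msf{d}_2)$-vertex-cut. If it is (the $\textrm{P}^{(12)}$ scenario), $(1,1/2)$ requires a genuine two-time-slot construction: $\msf{w}_{12}$ zero-forces $b$ across the two slots to produce a mix of $a_1,a_2$, which is then used both to let $\msf{v}_1^*$ decode $a_2$ (using $a_1$ as side information) and to cancel $a_1$ at $\msf{v}_2^*$; showing the relevant $3\times3$ determinant is generically nonzero is the heart of that argument and is not a Schwartz--Zippel one-liner. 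If it is not, the paper instead shifts the neutralization point from $\msf{v}_1^*$ to $\msf{w}_{12}$ and builds a second induced graph $\mcal{G}'_{12}$, with a further lemma (Lemma~\ref{claim_Special}) to rule out a new omniscient node there. None of this is captured by your sentence.

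Two smaller corrections. In your $\textrm{T}^{(12)}_3$-fails branch you should invoke Lemma~\ref{k_1^*=0} on $\mcal{G}_{12}$ (where $\msf{s}_2$ cannot reach $\msf{d}_1$), not Theorem~\ref{thm_1}(b); the former yields $(1,1)$ directly, the latter only a trapezoid. And in your $\textrm{T}^{(12)}_2$-fails branch, ``alignment freedom'' does not finish the job: after special coding at $\mcal{P}^{\msf{s}_2}(\msf{v}_1^*)$ one must still argue that $\mcal{P}(\msf{v}_2^*)$ can recover both symbols, which the paper does via the Reception and Rank-Conservation Lemmas~\ref{lem_Reception} and~\ref{lem_Rank} rather than a blanket generic-rank claim. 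Organizationally, the paper does not case on which $\textrm{T}^{(12)}_i$ fails; it proves $(1,1)$ under $\neg\textrm{Q}^{(12)}$, $(1,1/2)$ under $\textrm{P}^{(12)}$, and $(1/2,1)$ unconditionally (given no omniscient node and $k_1^*\le k_2^*$), and then assembles Theorem~\ref{thm_2}(b) from these.
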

\underline{Remark}: 
Based on Lemma~\ref{k_1^*=0}, if $k_1^*=0$, then the capacity region of this network is the triangle $\mfrak{T}$ or the square $\mfrak{S}$ depending on whether there is an omniscient node in the network. This is why in $\textrm{T}^{(12)}_1$ we need to constrain $k_1^* > 0$.

%\newpage

%Define $\mcal{G}'_{12}$ as the network where the edges from $\mcal{P}^{\msf{s}_2}\lp\msf{v}^*_1\rp$ to $\mcal{K}^{\msf{s}_2}\lp\msf{v}^*_1\rp$ are deleted from the original network $\mcal{G}$. Similarly define $\mcal{G}'_{21}$ as the network where the edges from $\mcal{P}^{\msf{s}_1}\lp\msf{v}^*_2\rp$ to $\mcal{K}^{\msf{s}_1}\lp\msf{v}^*_2\rp$ are deleted from the original network $\mcal{G}$.

%: Theorem 3
Next we give the necessary and sufficient condition for the capacity region being the pentagon region $\mfrak{P}:=\{ (R_1,R_2): R_1,R_2\ge 0, R_1\le1,R_2\le1, R_1+R_2\le 3/2\}$.
\begin{theorem}[Characterization of $\mfrak{P}$ and $\mfrak{S}$]\label{thm_3}
Consider a network $\mcal{G}$ in which no node is omniscient and neither $\textrm{T}^{(12)}$ nor $\textrm{T}^{(21)}$ is satisfied.
{\flushleft \rm (a)} 
Denote the conjunction of the below conditions by $\textrm{P}^{(12)}$:
{\flushleft \rm
\begin{itemize}
\item $\textrm{P}^{(12)}_1 \equiv \textrm{T}^{(12)}_1$, $\textrm{P}^{(12)}_2 \equiv \textrm{T}^{(12)}_2$, $\textrm{P}^{(12)}_3 \equiv \textrm{T}^{(12)}_3$
%$k^*_1 \le k^*_2$.

%\item $\textrm{P}^{(12)}_2$: $\textrm{T}^{(12)}_2$.
%%$\mincut\lp\msf{s}_1,\msf{s}_2;\mcal{P}^{\msf{s}_2}\lp \msf{v}^*_1\rp\rp = 1$. Let $\msf{w}_{12}:=\pmc\lp \mcal{P}^{\msf{s}_2}\lp \msf{v}^*_1\rp\rp$. 
%%%$\msf{w}_{12} \ne \msf{s}_2$.
%%%$\mcal{P}^{\msf{s}_2}\lp \msf{v}^*_1\rp$ can be reached by $\msf{s}_1$, and $\mincut\lp\msf{s}_1,\msf{s}_2;\mcal{P}^{\msf{s}_2}\lp \msf{v}^*_1\rp\rp = 1$. Let $\msf{w}_{12}$ denote $\pmc\lp \mcal{P}^{\msf{s}_2}\lp \msf{v}^*_1\rp\rp$.

%\item $\textrm{P}^{(12)}_3$: $\textrm{T}^{(12)}_3$.
%%For some $\mcal{G}_{12}$, let $\msf{u}_{21}:=\pmc_{\mcal{G}_{12}}\lp \msf{v}^*_2 \rp$. $\mcal{K}_{\mcal{G}_{12}}^{\msf{s}_1}\lp\msf{u}_{21}\rp$ forms an $\lp \msf{s}_1; \msf{d}_1\rp$-vertex-cut in $\mcal{G}_{12}$.
%%%For some $\mcal{G}_{12}$, $\mincut_{\mcal{G}_{12}}\lp\msf{s}_1,\msf{s}_2;\mcal{P}_{\mcal{G}_{12}}\lp \msf{v}^*_2\rp\rp = 1$. Let $\msf{u}_{21}$ denote $\pmc_{\mcal{G}_{12}}\lp \mcal{P}_{\mcal{G}_{12}}\lp \msf{v}^*_2\rp\rp$.
%%%, and it is $\msf{s}_1\msf{s}_2$-reachable.

\item $\textrm{P}^{(12)}_4$:
$\msf{w}_{12} \ne \msf{s}_2$ and $\mcal{K}^{\msf{s}_2}\lp\msf{w}_{12}\rp$ forms an $\lp \msf{s}_2; \msf{d}_2\rp$-vertex-cut in $\mcal{G}$.

%\item [(4)] In $\mcal{G}_{12}$, if $k^*_2' > k^*_1$, the parents of $\msf{u}_{21}$ are either from the cloud $\cloud_1$ or reachable from $\msf{s}_2$ only; if $k^*_2' \le k^*_1$, the parents of $\msf{u}_{21}$ are either  reachable from $\msf{s}_1$ only or reachable from $\msf{s}_2$ only.
%\item [($\textrm{P}^{(12)}_5$)] 
%%$\msf{u}_{21}\ne \msf{s}_1$, $\msf{u}_{21}\ne \msf{s}_2$, and 
%$\mcal{K}^{\msf{s}_1}_{\mcal{G}_{12}}\lp\msf{u}_{21}\rp$ 
%%or $\mcal{K}^{\msf{s}_1}\lp\msf{u}_{21}\rp$ 
%forms an $\lp \msf{s}_1; \msf{d}_1\rp$-vertex-cut in the corresponding $\mcal{G}_{12}$.
%%, or $\mcal{K}^{\msf{s}_1}_{\mcal{G}}\lp\msf{u}_{21}\rp$ forms an $\lp \msf{s}_1; \msf{d}_1\rp$-vertex-cut in $\mcal{G}$

\end{itemize}
}
Similarly we define condition $\textrm{P}^{(21)}$ with indices $1$ and $2$ (in the superscript) exchanged. If the network $\mcal{G}$ satisfies condition $\textrm{P}^{(12)}$ or $\textrm{P}^{(21)}$, then the capacity region is $\mfrak{P}$. 
%If the network satisfies $\rm{Condition}_{12}$ ($\rm{Condition}_{21}$), then the trapezoid region $\mscr{T}_{12}$ ($\mscr{T}_{21}$) is the capacity region. 
{\flushleft \rm (b)}
Conversely, if neither condition $\textrm{P}^{(12)}$ nor $\textrm{P}^{(21)}$ is satisfied, then the pentagon region is strictly contained in the capacity region. Moreover, $(1,1)$ is achievable and hence the square $\mfrak{S}$.
\end{theorem}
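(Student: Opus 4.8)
The plan is to prove the two halves of Theorem~\ref{thm_3} separately, in each case combining an outer bound (converse) with a matching achievability scheme, and to lean heavily on the machinery already developed: critical nodes, primary min-cut nodes, the induced graphs $\mcal{G}_{12},\mcal{G}_{21}$, and Lemmas~\ref{lem_critical}--\ref{all_G12s_are_essentially_the_same}.

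\textbf{Part (a): conditions $\textrm{P}^{(12)}$ (or $\textrm{P}^{(21)}$) imply capacity $=\mfrak{P}$.} Since we are under the standing hypothesis that no node is omniscient and neither $\textrm{T}^{(12)}$ nor $\textrm{T}^{(21)}$ holds, Theorem~\ref{thm_2}(b) already gives achievability of $(1/2,1)$ and $(1,1/2)$, hence of the whole pentagon $\mfrak{P}$; so for Part (a) only the converse $R_1+R_2\le 3/2$ remains. I would establish this via a genie-aided, interference-channel-style argument (as the introduction hints, inspired by \cite{El-GamalCosta_82}). The key is to exploit the structure imposed by $\textrm{P}^{(12)}_2$ and $\textrm{P}^{(12)}_4$: $\mincut(\msf{s}_1,\msf{s}_2;\mcal{P}^{\msf{s}_2}(\msf{v}_1^*))=1$ with primary min-cut node $\msf{w}_{12}\ne\msf{s}_2$, and $\mcal{K}^{\msf{s}_2}(\msf{w}_{12})$ is an $(\msf{s}_2;\msf{d}_2)$-vertex-cut. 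The first fact means that the $\msf{s}_2$-component of the interference that $\msf{v}_1^*$ (hence $\msf{d}_1$) sees is a \emph{scalar} function passing through $\msf{w}_{12}$; the second means the same scalar quantity, carried by $\mcal{K}^{\msf{s}_2}(\msf{w}_{12})$, is exactly the bottleneck through which $\msf{s}_2$'s own message must flow to $\msf{d}_2$. Then I would give $\msf{d}_1$ the genie signal equal to what the clones $\mcal{K}^{\msf{s}_2}(\msf{w}_{12})$ carry (the ``$\msf{s}_2$-cloud'' at that cut); using Fano and the layered Markov structure captured by $\pmc$ and $\mcal{G}_{12}$, one shows $n(R_1+R_2)\le H(Y_{\msf{d}_1}^n)+H(\text{genie})-\text{(common terms)}\le nH(Y_{\msf{d}_1}^n\text{ per use})+\tfrac{n}{2}+n\epsilon_n$, i.e. $R_1\le 1/2$ when combined with $R_2\le1$, or more precisely the El Gamal--Costa-type chain yields $2R_1+2R_2\le 3$. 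The bound that $\msf{w}_{12}$ is $\ne\msf{s}_2$ is what makes this genie strictly weaker than handing over all of $W_2$, which is what distinguishes $\mfrak{P}$ from $\mfrak{T}_{12}$.

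\textbf{Part (b): if neither $\textrm{P}^{(12)}$ nor $\textrm{P}^{(21)}$ holds, then $(1,1)$ is achievable.} This is the harder half and is essentially the full $(1,1)$-achievability theorem. Since we are already outside $\textrm{T}^{(12)}$ and $\textrm{T}^{(21)}$, by the earlier analysis the only way $\textrm{P}^{(12)}$ can fail while $\textrm{P}^{(12)}_1$--$\textrm{P}^{(12)}_3$ hold is the failure of $\textrm{P}^{(12)}_4$: either $\msf{w}_{12}=\msf{s}_2$ (which is $\textrm{T}^{(12)}_4$, already excluded) or $\mcal{K}^{\msf{s}_2}(\msf{w}_{12})$ is \emph{not} an $(\msf{s}_2;\msf{d}_2)$-vertex-cut. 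So I would first do a case analysis reducing ``neither $\textrm{P}^{(12)}$ nor $\textrm{P}^{(21)}$'' to a short list of structural situations (using WLOG $k_1^*\le k_2^*$, and invoking Lemma~\ref{k_1^*=0} to dispose of $k_1^*=0$). In each surviving case I would construct the linear code over $\mbb{F}_{2^r}$: let the critical nodes $\msf{v}_1^*,\msf{v}_2^*$ and their parents carry out interference neutralization (choose the coefficients at the at-most-four special nodes — the parents of $\msf{v}_1^*,\msf{v}_2^*$ — so that the $\msf{s}_j$-component vanishes at $\msf{v}_i^*$), possibly with zero forcing inside a node, and let all other nodes do random linear coding. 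Correctness of decoding at $\msf{d}_i$ then reduces to showing two transfer coefficients are nonzero: (i) after neutralization, the residual $\msf{s}_i\to\msf{v}_i^*$ coefficient is a nonzero polynomial in the random coefficients, which follows from $\mincut(\msf{s}_1,\msf{s}_2;\mcal{P}(\msf{v}_i^*))=2$ (Lemma~\ref{lem_critical}, when $k_i^*\ge2$) together with the fact that neutralization removes only one dimension; and (ii) the neutralization is \emph{feasible}, i.e. the ``effect on the rest of the network'' — precisely what the induced graph $\mcal{G}_{12}(\msf{w})$ is designed to encode — does not create an omniscient node that would block $\msf{s}_2\to\msf{d}_2$. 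Here the negation of $\textrm{P}^{(12)}_4$ (or $\textrm{P}^{(12)}_3$) is exactly the statement that no omniscient node appears in $\mcal{G}_{12}$, by Lemma~\ref{all_G12s_are_essentially_the_same}; so in $\mcal{G}_{12}$ source $\msf{s}_2$ still reaches $\msf{d}_2$ with a genuine 2-dimensional-looking cut, and the Schwartz--Zippel lemma over a large enough extension field makes all the relevant determinants simultaneously nonzero with high probability.

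\textbf{Main obstacle.} The crux is Part (b), specifically proving that interference neutralization at the (at most four) special nodes is simultaneously \emph{feasible} and \emph{harmless}: one must show that forcing the cancellation equations at $\msf{v}_1^*$ and $\msf{v}_2^*$ does not over-constrain the system, and that the induced modification — passing to $\mcal{G}_{12}$, or $\mcal{G}_{21}$, or their ``composition'' when both neutralizations interact across layers — preserves a full-rank path structure from each source to its own destination. Getting the case analysis clean (in particular handling the boundary cases $k_1^*=k_2^*$ where the two neutralizations might share nodes, and verifying Lemma~\ref{all_G12s_are_essentially_the_same} really lets us ignore the choice of $\msf{w}$) is where the real work lies; the algebraic genericity argument via Schwartz--Zippel is then routine once the graph-theoretic feasibility is pinned down.
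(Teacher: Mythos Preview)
Your overall architecture matches the paper's: Part~(a) achievability from Theorem~\ref{thm_2}(b), outer bound $2R_1+2R_2\le 3$ via an El~Gamal--Costa-style genie argument, and Part~(b) by a case analysis on $k_1^*,k_2^*$ with interference neutralization at the parents of the critical nodes and RLC elsewhere, correctness certified by Schwartz--Zippel. On Part~(a) the paper's actual mechanics are slightly different from your single-shot sketch: it defines four scalar signals $Z_{11}=Y_{\msf{v}_1^*}$, $Z_{12}=Y_{\msf{w}_{12}}$, $Z_{21}=Y_{\msf{u}_{21}}$, $Z_{22}=\sum_{\msf{w}\in\mcal{P}^{\msf{s}_2}(\msf{v}_1^*)}X_{\msf{w}}$, bounds $2R_1+R_2$ and $R_2$ separately (Claim~\ref{claim_2R12R2}), and adds. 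Your ``$+n/2$'' term has no clear source; the extra $1$ in the bound $3$ comes from $H(Z_{12}^N)\le N$, not from any half-entropy.

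There is a genuine gap in Part~(b). You write that ``the negation of $\textrm{P}^{(12)}_4$ (or $\textrm{P}^{(12)}_3$) is exactly the statement that no omniscient node appears in $\mcal{G}_{12}$'' and that this suffices for feasibility. But $\neg\textrm{P}^{(12)}_3$ and $\neg\textrm{P}^{(12)}_4$ are \emph{different} obstructions, and the hard case is precisely when $\textrm{P}^{(12)}_3$ \emph{does} hold (there \emph{is} an omniscient node $\msf{u}_{21}$ in $\mcal{G}_{12}$) while $\textrm{P}^{(12)}_4$ fails ($\mcal{K}^{\msf{s}_2}(\msf{w}_{12})$ is not an $(\msf{s}_2;\msf{d}_2)$-vertex-cut). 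In that scenario your proposed scheme---neutralize at $\mcal{P}^{\msf{s}_2}(\msf{v}_1^*)$ and work in $\mcal{G}_{12}$---fails outright, because $\mcal{G}_{12}$ has an omniscient node and so $(1,1)$ is provably impossible there. The paper's fix (its Case~3 for $k_1^*<k_2^*$) is to \emph{relocate} the neutralization from $\mcal{P}^{\msf{s}_2}(\msf{v}_1^*)$ to $\mcal{P}^{\msf{s}_2}(\msf{w}_{12})$, so that $\msf{w}_{12}$ itself decodes $a$; this generates a \emph{second} induced graph $\mcal{G}'_{12}$, and one must then prove (Lemma~\ref{claim_Special}) that the new critical node for $\msf{d}_2$ in $\mcal{G}'_{12}$ is not omniscient. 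This ``push the neutralization upstream to $\msf{w}_{12}$'' idea is the missing ingredient in your plan. Separately, for $k_1^*=k_2^*$ the paper does not use $\mcal{G}_{12}$ directly but instead proves a structural classification (Lemma~\ref{lem_Pattern}) of which parent-set configurations force $\textrm{P}^{(12)}$ or $\textrm{P}^{(21)}$, and handles each remaining configuration with a tailored neutralization; your proposal acknowledges this boundary case but does not supply the mechanism.
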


We can easily see that $\textrm{T}^{(12)}_4 \vee \textrm{P}^{(12)}_4 = \{ \text{$\mcal{K}^{\msf{s}_2}\lp\msf{w}_{12}\rp$ forms an $\lp \msf{s}_2; \msf{d}_2\rp$-vertex-cut in $\mcal{G}$.} \}$ and hence $\textrm{Q}^{(12)}:=\textrm{T}^{(12)}\vee \textrm{P}^{(12)}$ is the conjunction of the following:
\begin{itemize}
\item $\textrm{Q}^{(12)}_1 \equiv \textrm{T}^{(12)}_1$, $\textrm{Q}^{(12)}_2 \equiv \textrm{T}^{(12)}_2$, $\textrm{Q}^{(12)}_3 \equiv \textrm{T}^{(12)}_3$
\item $\textrm{Q}^{(12)}_4$:
$\mcal{K}^{\msf{s}_2}\lp\msf{w}_{12}\rp$ forms an $\lp \msf{s}_2; \msf{d}_2\rp$-vertex-cut in $\mcal{G}$.
\end{itemize}
\begin{corollary}[Complete Characterization of Capacity]
As a corollary of Theorem~\ref{thm_1}, \ref{thm_2} and \ref{thm_3}, we completely characterize all possible capacity regions of two unicast flows over the linear deterministic networks as formulated in Section~\ref{sec_Def}, as follows: (Short-hand notations: $\textrm{O} := \{\exists \text{ an omniscient node}\}$, $\textrm{T} := \textrm{T}^{(12)}\vee\textrm{T}^{(21)}$, $\textrm{P} := \textrm{P}^{(12)}\vee\textrm{P}^{(21)}$, and $\textrm{Q} := \textrm{Q}^{(12)}\vee\textrm{Q}^{(21)}=\textrm{T}\vee\textrm{P}$. Also, in the context that no confusion will be caused, we use the same notation to denote the set of networks that satisfy the condition.)
\begin{align*}
&\textrm{O} &\iff& & &\mfrak{T}\\
&\textrm{T}^{(12)}\setminus \textrm{O} &\iff& & &\mfrak{T}_{12}\\
&\textrm{T}^{(21)}\setminus \textrm{O} &\iff& & &\mfrak{T}_{21}\\
&\textrm{P}\setminus \lp \textrm{T}\cup \textrm{O}\rp &\iff& & &\mfrak{P}\\
&\ol{\textrm{Q}}\setminus \textrm{O} &\iff& & &\mfrak{S}
%&\textrm{O}^{(12)}\cup\textrm{O}^{(21)} &\iff& & &\mfrak{T}\\
%&\ol{\textrm{O}^{(12)}\cup\textrm{O}^{(21)}}\cap\textrm{T}^{(12)} &\iff& & &\mfrak{T}_{12}\\
%&\ol{\textrm{O}^{(12)}\cup\textrm{O}^{(21)}}\cap\textrm{T}^{(21)} &\iff& & &\mfrak{T}_{21}\\
%&\ol{\textrm{O}^{(12)}\cup\textrm{O}^{(21)}}\cap\ol{\textrm{T}^{(12)}\cup\textrm{T}^{(21)}}\cap\lp\textrm{P}^{(12)}\cup\textrm{P}^{(21)}\rp &\iff& & &\mfrak{P}\\
%&\ol{\textrm{O}^{(12)}\cup\textrm{O}^{(21)}}\cap\ol{\textrm{Q}^{(12)}\cup\textrm{Q}^{(21)}} &\iff& & &\mfrak{S}
\end{align*}
Fig.~\ref{fig_CapRegions_2} give an illustration of all these regions.
\end{corollary}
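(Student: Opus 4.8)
The plan is to obtain the corollary by simply assembling Theorems~\ref{thm_1}, \ref{thm_2}, and~\ref{thm_3}; the only real work is to check that the five conditions $\textrm{O}$, $\textrm{T}^{(12)}\setminus\textrm{O}$, $\textrm{T}^{(21)}\setminus\textrm{O}$, $\textrm{P}\setminus(\textrm{T}\cup\textrm{O})$, $\ol{\textrm{Q}}\setminus\textrm{O}$ partition the family of admissible networks (those with $\msf{s}_i$ reaching $\msf{d}_i$), and then to read off the capacity region on each cell. First I would record the purely logical facts: since $\textrm{Q}=\textrm{T}\vee\textrm{P}$ we have $\ol{\textrm{Q}}=\ol{\textrm{T}}\wedge\ol{\textrm{P}}$, hence $\ol{\textrm{Q}}\setminus\textrm{O}=\ol{\textrm{O}}\wedge\ol{\textrm{T}}\wedge\ol{\textrm{P}}$ and $\textrm{P}\setminus(\textrm{T}\cup\textrm{O})=\textrm{P}\wedge\ol{\textrm{T}}\wedge\ol{\textrm{O}}$. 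Inside $\ol{\textrm{O}}$ the four non-$\textrm{O}$ cells are then $\textrm{T}^{(12)}$, $\textrm{T}^{(21)}$, $\textrm{P}\wedge\ol{\textrm{T}}$, and $\ol{\textrm{T}}\wedge\ol{\textrm{P}}$, whose union is $\textrm{T}\vee(\textrm{P}\wedge\ol{\textrm{T}})\vee\ol{\textrm{T}\vee\textrm{P}}$, i.e.\ everything; adding back $\textrm{O}$ shows the five cells are exhaustive.

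For disjointness, $\textrm{O}$ is excised from the other four; $\textrm{P}\wedge\ol{\textrm{T}}$ and $\ol{\textrm{T}}\wedge\ol{\textrm{P}}$ are disjoint from each other and from $\textrm{T}\supseteq\textrm{T}^{(12)}\cup\textrm{T}^{(21)}$; the only nontrivial point is that $\textrm{T}^{(12)}$ and $\textrm{T}^{(21)}$ are disjoint within $\ol{\textrm{O}}$. I would settle this indirectly: if a network with no omniscient node satisfied both, Theorem~\ref{thm_2}(a) would force its capacity region to be simultaneously $\mfrak{T}_{12}$ and $\mfrak{T}_{21}$, which is impossible since $\mfrak{T}_{12}\ne\mfrak{T}_{21}$. (One could also rule this out by a direct case analysis on $\textrm{T}^{(12)}_i$ versus $\textrm{T}^{(21)}_i$, but the indirect argument suffices for the corollary.) This is the one place that looks like it might demand an argument, and it is dispatched in a single line by the consistency of Theorem~\ref{thm_2}; everything else is bookkeeping.

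Next I would read off each cell. On $\textrm{O}$, Theorem~\ref{thm_1}(a) gives capacity $\mfrak{T}$. On $\textrm{T}^{(12)}\setminus\textrm{O}$ and $\textrm{T}^{(21)}\setminus\textrm{O}$, the hypothesis ``no omniscient node'' of Theorem~\ref{thm_2} holds, and part~(a) gives $\mfrak{T}_{12}$ and $\mfrak{T}_{21}$ respectively. On $\textrm{P}\setminus(\textrm{T}\cup\textrm{O})=\textrm{P}\wedge\ol{\textrm{T}}\wedge\ol{\textrm{O}}$, the hypotheses of Theorem~\ref{thm_3} (no omniscient node, and neither $\textrm{T}^{(12)}$ nor $\textrm{T}^{(21)}$) are met and we are in case~(a) since $\textrm{P}^{(12)}\vee\textrm{P}^{(21)}$ holds, giving $\mfrak{P}$. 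On $\ol{\textrm{Q}}\setminus\textrm{O}=\ol{\textrm{O}}\wedge\ol{\textrm{T}}\wedge\ol{\textrm{P}}$, the same hypotheses of Theorem~\ref{thm_3} hold and we are in case~(b), which yields $(1,1)$-achievability; combined with the elementary fact that $\mfrak{S}$ is always an outer bound here (each source has a unit min-cut to its destination, so rate pairs outside $\mfrak{S}$ are infeasible), this gives capacity $\mfrak{S}$. This establishes each ``$\Leftarrow$'' direction, and in particular that no capacity region other than these five can occur.

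Finally, the ``$\Rightarrow$'' directions follow from the partition together with the pairwise distinctness of $\mfrak{T}$, $\mfrak{T}_{12}$, $\mfrak{T}_{21}$, $\mfrak{P}$, $\mfrak{S}$: every admissible network lies in exactly one cell, and distinct cells carry distinct capacity regions, so the capacity region determines the cell. Hence, for instance, capacity region $=\mfrak{T}_{12}$ forces the network into $\textrm{T}^{(12)}\setminus\textrm{O}$, and symmetrically for the other four regions, which is precisely the chain of ``$\iff$'' statements in the corollary. I do not expect any obstacle beyond the disjointness remark already noted.
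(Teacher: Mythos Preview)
Your proposal is correct and follows essentially the same approach as the paper, which simply states the result as a corollary of Theorems~\ref{thm_1}--\ref{thm_3} without a separate proof; you have filled in the bookkeeping faithfully. The one substantive point---disjointness of $\textrm{T}^{(12)}$ and $\textrm{T}^{(21)}$ inside $\ol{\textrm{O}}$---is something the paper records directly as a structural consequence of Lemma~\ref{lem_Pattern} (together with the observation that $\textrm{T}^{(12)}_1\wedge\textrm{T}^{(21)}_1$ forces $k_1^*=k_2^*$), whereas you obtain it indirectly from the consistency of Theorem~\ref{thm_2}(a); both arguments are valid, and yours is arguably cleaner for the purposes of this corollary.
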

%%%%%%%%%%%%%%%%%%%%
%%%%%%%%%%%%%%%%%%%%
%%%%%%%%%%%%%%%%%%%%

%We outline the proof of the above theorem as follows. In section 

%\newpage

%\section{Proof of Theorem~\ref{thm_1}}\label{sec_Proof}
%We outline the ideas and lemmas for establishing the proof and leave the formal treatment in the appendix.

%\input{Proof_OmniOutBd.tex}

%\input{Proof_Achievability.tex}

%\newpage

%\section{Proof of Theorem~\ref{thm_2}}\label{sec_Proof2}
%\input{Proof_Thm2.tex}

%\section{Proof of Theorem~\ref{thm_3}}\label{sec_Proof3}
%\input{Proof_Thm3.tex}

%\section{Proof of Achievability: Outline}
%\input{Outline_Achieve}

%\newpage

\section{Motivating Examples}\label{sec_Examples}
Before going into proofs of our main result, let us visit some examples to illustrate several important elements in our scheme.

\subsection{Why Random Linear Coding Fails}
We first demonstrate, through a simple example, why random linear coding, while successful in achieving the capacity of single multicast over wired and linear deterministic networks \cite{AhlswedeCai_00} \cite{AvestimehrDiggavi_09}, cannot achieve capacity for multiple unicast. Also, by the example we will show that most of the nodes in the network can perform random linear coding and only up to four nodes are needed to do special linear coding. 

The example is depicted in Fig.~\ref{fig_Examples_1}(a). Random linear coding for achieving the $(1,1)$ point, in the context of this example, means that each node sends out a symbol in a large field of characteristic $2$ and each intermediate node scales its reception by a randomly uniformly chosen coefficient from the field, independent of others, and transmits it. How and why we lift the symbols from the base field $\mbb{F}_2$ to a larger field will be explained later. Random linear coding achieves the capacity of single multicast with high probability.

However, for two unicast if we perform random linear coding, in the network in Fig.~\ref{fig_Examples_1}(a), destinations $\msf{d}_1$ and $\msf{d}_2$ will receive linear combinations of the two symbols from sources, say $a$ from source 1 and $b$ from source 2, and their coefficients are non-zero with high probability. This is because both $\msf{d}_1$ and $\msf{d}_2$ can be reached by $\msf{s}_1$ and $\msf{s}_2$.

On the other hand, if nodes $\msf{u}_4,\msf{u}_5,\msf{u}_6$ choose their scaling coefficients more carefully, both $\msf{d}_1$ and $\msf{d}_2$ are able to receive a clean copy of their desired symbols. This is due to the fact that the reception of $\msf{u}_4$ (which is the same as that of $\msf{u}_6$) and the reception of $\msf{u}_5$ are linearly independent with high probability under random linear coding at all other nodes in previous layers, since $\mincut\lp\msf{s}_1,\msf{s}_2;\msf{u}_4,\msf{u}_5\rp = 2$. The scaling coefficients chosen at $\msf{u}_4$ is such that the $b$-component in the transmission is cancelled over-the-air. Because the reception of $\msf{u}_4$ and $\msf{u}_5$ are linearly independent, the $a$-coefficient remains non-zero. Similarly, $\msf{u}_6$ can choose its scaling coefficient so that $\msf{d}_2$ receives a non-zero scaled-copy of symbol $b$.

We observe that in this example only nodes $\msf{u}_4$ and $\msf{u}_6$ need to perform linear coding carefully. It turns out that for arbitrary layered networks, at most $4$ nodes need to perform special linear coding.

\begin{figure}[htbp]
{\center
%\subfigure[Wired Network]{\includegraphics[width=1.7in]{CapRegions_wired.pdf}}
%\subfigure[Linear Deterministic Network]{\includegraphics[width=1.7in]{CapRegions_wireless.pdf}}
%\subfigure[Wired Network]{\includegraphics[width=1.7in]{CapRegions_wired.pdf}}\\
\subfigure[Network]{\includegraphics[width=3in]{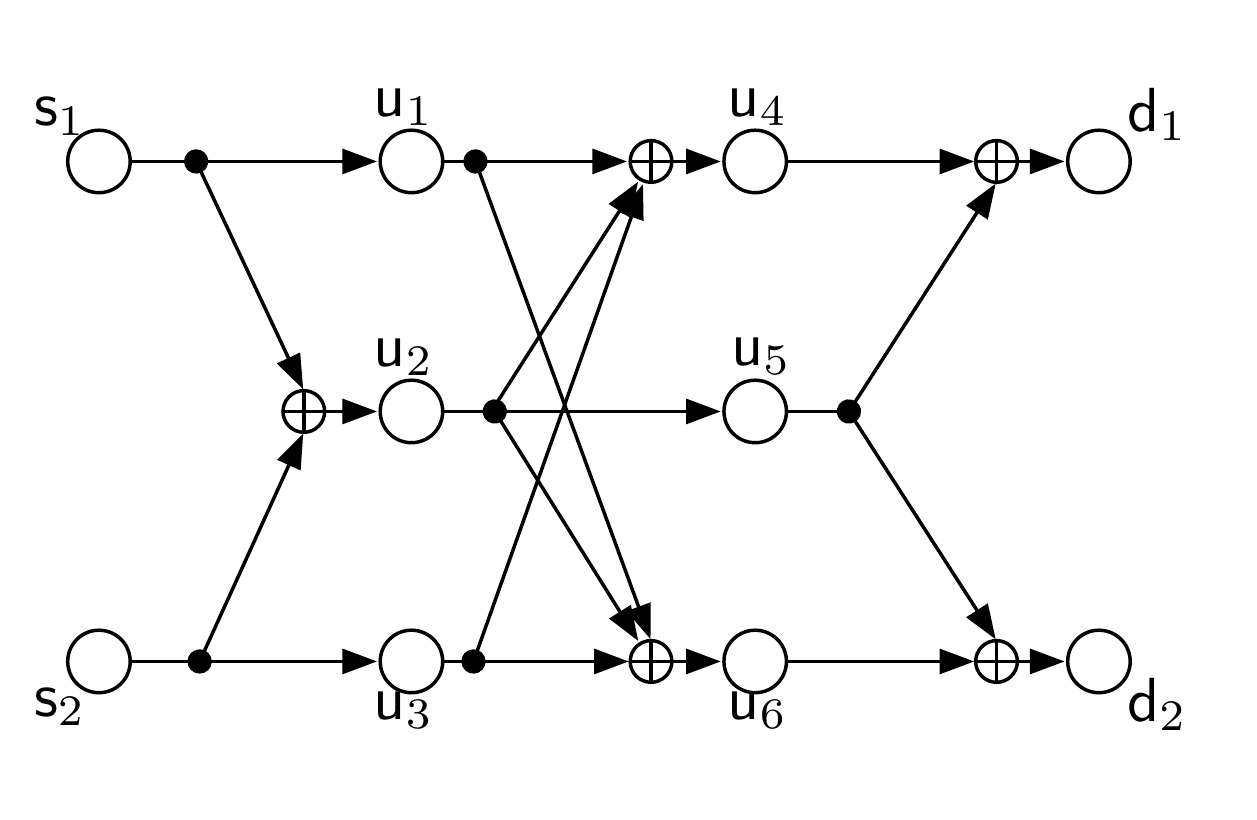}}
\subfigure[Linear Scheme over $\mbb{F}_4$ achieving $(1,1)$]{\includegraphics[width=3in]{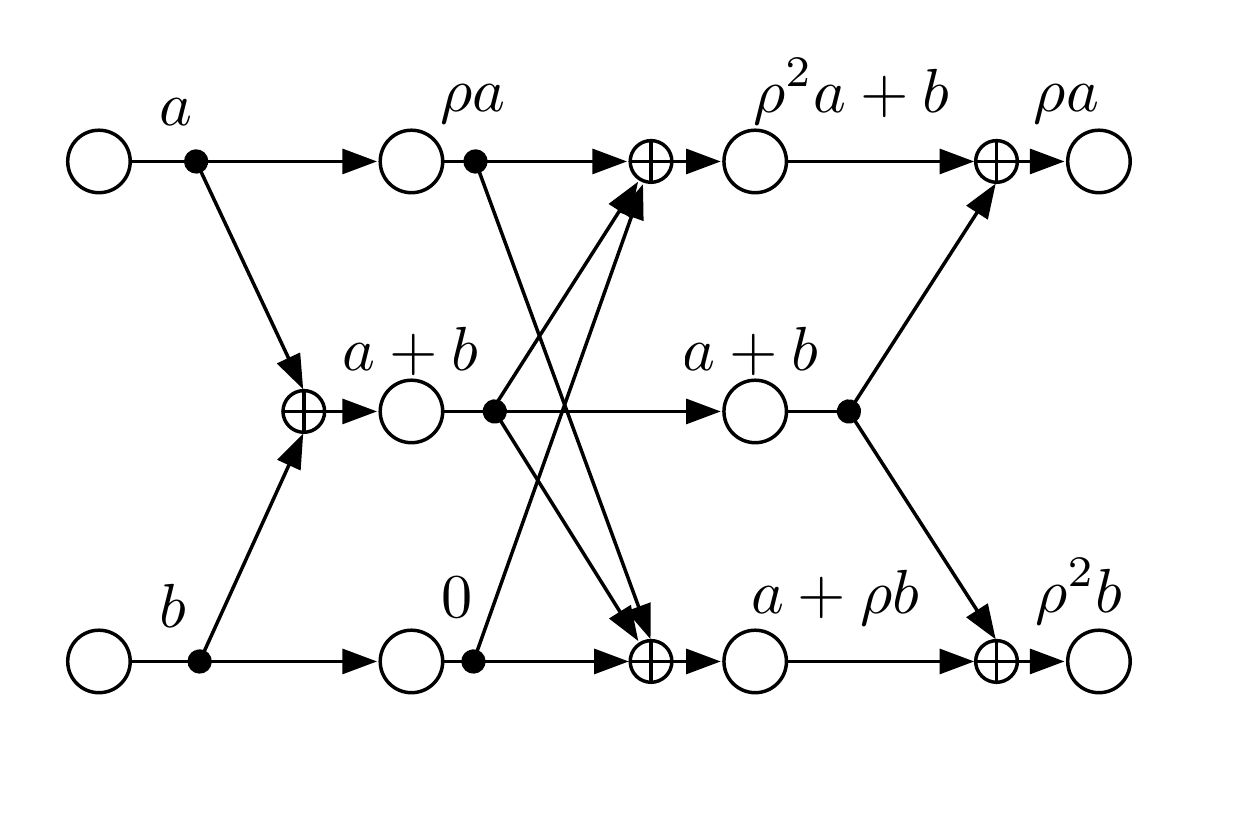}}
\caption{Examples}
\label{fig_Examples_1}
}
\end{figure}

\subsection{Why Field Extension is Necessary}
We give an example to illustrate the limitation if we do not use field extension and stick to vector linear scheme in $\mbb{F}_2$. The network is depicted in Fig.~\ref{fig_Examples_1}(a). Let the total number of channel uses be $T$, and source $\msf{s}_i$ would like to deliver $B_i$ bits to its own destination $\msf{d}_i$, $i=1,2$. We consider achieving beyond the triangular region $\mfrak{T}$, and hence assume $B_1+B_2 > T$. Therefore, at $\msf{u}_2$, at least $B_1+B_2-T$ bits from each source get corrupted, while $B_1 - (B_1+B_2-T) = T-B_2$ bits from $\msf{s}_1$ and $B_2 - (B_1+B_2-T) = T-B_1$ are clean. $\msf{u}_5$'s reception is just a function of what $\msf{u}_2$ receives, and hence it cannot obtain more information than what $\msf{u}_2$ possesses. In particular, $\msf{u}_5$ cannot obtain the two length-$(B_1+B_2-T)$ chunks of bits of user 1 and user 2 that get corrupted at $\msf{u}_2$. If $\msf{u}_5$ does not transmit this corrupted chunk, $\msf{u}_4$ needs to supply the clean chunk for user 1 to $\msf{d}_1$ and $\msf{u}_6$ needs to supply the clean chunk for user 2 to $\msf{d}_2$, respectively. But the reception of $\msf{u}_4$ and $\msf{u}_6$ is identical, and therefore both of them should be able to decode these two chunks. As their reception has at most $T$ bits, we have $2(B_1+B_2-T) \le T \implies 2B_1+2B_2 \le 3T$. If $\msf{u}_5$ transmits this corrupted chunk, still $\msf{u}_4$ needs to have the clean chunk for user 2 and $\msf{u}_6$ needs to have the clean chunk for user 1. This is due to the property of $\mbb{F}_2$. Hence we can again conclude $2B_1+2B_2 \le 3T$. Therefore, we see that this linear scheme over vector space of $\mbb{F}_2$ cannot achieve beyond the pentagon $\mfrak{P}$. 

On the other hand, if instead we use a linear scheme over finite field $\mbb{F}_4$, we are able to achieve $(1,1)$. Recall that from the standard construction of extension field, the field $\mbb{F}_4$ comprises $\{0, 1, \rho, \rho^2\}$ with the following addition and multiplication and one-to-one correspondence with $\lp\mbb{F}_2\rp^2$:
\begin{align*}
&\begin{array}{c|cccc}
+ & 0 & 1 & \rho & \rho^2\\
\hline
0 & 0 & 1 & \rho & \rho^2\\
1 & 1 & 0 & \rho^2 & \rho\\
\rho & \rho & \rho^2 & 0 & 1\\
\rho^2 & \rho^2 & \rho & 1 & 0
\end{array}&
&\begin{array}{c|cccc}
\times & 0 & 1 & \rho & \rho^2\\
\hline
0 & 0 & 0 & 0 & 0\\
1 & 0 & 1 & \rho & \rho^2\\
\rho & 0 & \rho & \rho^2 & 1\\
\rho^2 & 0 & \rho^2 & 1 & \rho
\end{array}&
&\begin{array}{c|c}
\mbb{F}_2\times\mbb{F}_2 & \mbb{F}_4\\
\hline
(0,0) & 0\\
(0,1) & 1\\
(1,0) & \rho\\
(1,1) & \rho^2
\end{array}
\end{align*}
Therefore, we can use two time slots to translate the following scalar coding scheme over $\mbb{F}_4$ (depicted in Fig.~\ref{fig_Examples_1}(b)) back to a \emph{nonlinear} coding scheme over $\lp\mbb{F}_2\rp^2$: $a,b\in\mbb{F}_4$,
\begin{center}
{\small
\begin{tabular}{c|c|c|c|c|c|c|c|c|c|c }
& $\msf{s}_1$ & $\msf{s}_2$ & $\msf{u}_1$ & $\msf{u}_2$ & $\msf{u}_3$ & $\msf{u}_4$ & $\msf{u}_5$ & $\msf{u}_6$ & $\msf{d}_1$ & $\msf{d}_2$ \\ \hline
Transmits & $a$ & $b$ & $\rho a$ & $a+b$ & $0$ & $\rho^2a+b$ & $a+b$ & $a+\rho b$ & & \\ \hline
Receives & & & $a$ & $a+b$ & $b$ & $\rho^2a+b$ & $a+b$ & $\rho^2a+b$ & $\rho a$ & $\rho^2b$ 
\end{tabular}
}
\end{center}
Note that since the network is layered, one can without loss of generality assume that there is no processing delay within a node.

From the above example we see the benefit of working in the extension field is that, at each node there are more choices of scaling coefficients. In vector space $\lp\mbb{F}_2\rp^r$, $r\ge 2$, the encoding matrix at each node has entries that are either $0$ or $1$, which limits the achievable rates of such a scheme.

\subsection{Example of Networks with Different Capacity Regions}
We provide examples of networks for each of the five possible capacity regions and use them to illustrate the important elements in our proposed scheme (including interference neutralization and zero forcing).

{\flushleft 1) Network with capacity region $\mfrak{T}$}\par
The example is depicted in Fig.~\ref{fig_examples}(a). For achievability we know that $\mfrak{T}$ can be achieved via time-sharing between rate pairs $(1,0)$ and $(0,1)$. For the outer bound, we notice that $\msf{u}_4$ is omniscient, and the reception of the destination $\msf{d}_1$ is a function of the reception of $\msf{u}_4$. This means $\msf{u}_4$ can decode the message of $\msf{s}_1$. The reception of each node in $\mcal{K}^{\msf{s_2}}(\msf{u}_4) = \{\msf{u}_4.\msf{u}_3\}$ is some function of the reception of node $\msf{u}_4$ and the transmission of $\msf{s}_1$. Since $\msf{u}_4$ can now recover the transmission of $\msf{s}_1$, and since $\mcal{K}^{\msf{s_2}}(\msf{u}_4)$ forms a $\lp \msf{s}_2; \msf{d}_2\rp$-vertex-cut, $\msf{u}_4$ can recover the reception of $\msf{d}_2,$ and thus, also the message of $\msf{s}_2$. Therefore, the sum rate cannot be greater than the maximum entropy of the reception of $\msf{u}_4$, which is $1$. 

{\flushleft 2) Network with capacity region $\mfrak{T}_{12}$}\par
The example is depicted in Fig.~\ref{fig_examples}(b) (without the dashed edge). We shall use this example to illustrate $(1/2,1)$-achievability. For achieving the rate pair $(1/2,1)$, we use the following scheme:
\begin{center}
{\small
\begin{tabular}{c|c|c|c|c|c|c|c|c|c|c }
& $\msf{s}_1$ & $\msf{s}_2$ & $\msf{u}_1$ & $\msf{u}_2$ & $\msf{u}_3$ & $\msf{u}_4$ & $\msf{u}_5$ & $\msf{u}_6$ & $\msf{d}_1$ & $\msf{d}_2$ \\ \hline
Time 1 Transmits & $a$ & $b_1$ & $a$ & $b_1$ & $b_1$ & $a+b_1$ & $b_1$ & $0$ & & \\ \hline
Time 1 Receives & & & $a$ & $b_1$ & $b_1$ & $a+b_1$ & $b_1$ & $a+b_1$ & $a+b_1$ & $b_1$ \\ \hline\hline
Time 2 Transmits & $a$ & $b_2$ & $a$ & $0$ & $b_2$ & $a$ & $0$ & $\boxed{b_2-b_1}$ & & \\ \hline
Time 2 Receives & & & $a$ & $b_2$ & $b_2$ & $a$ & $0$ & $a+b_2$ & $a$ & $\boxed{b_2-b_1}$ 
\end{tabular}
}
\end{center}
Note that in the first time slot, all nodes transmit what they receive except for $\msf{u}_6$. This is because the reception of $\msf{u}_6$ contains $a$ and hence it transmits $0$ instead so that $\msf{d}_2$ receives $b_1$. In the second time slot, $\msf{u}_2$ has to keep silent so that $\msf{u}_4$, the critical node for $\msf{d}_1$, is able to decode $a$. $\msf{u}_5$ hence receives $0$, and $b_2$ needs to be provided by $\msf{u}_6$. Still, it is necessary for $\msf{u}_6$ to transmit a linear combination that does not contain $a$. Therefore, it makes use of the two linear combinations it receives over the two time slots, $a+b_1$ and $a+b_2$, to \emph{zero-force} interference $a$ and sends out $b_2-b_1$.

To see that the capacity region is $\mfrak{T}_{12}$, we shall verify that the network satisfies $\textrm{T}^{(12)}$. Obviously $\textrm{T}^{(12)}_1$, $\textrm{T}^{(12)}_2$, and $\textrm{T}^{(12)}_4$ hold, as $\mcal{P}^{\msf{s}_2}(\msf{v}_1^*)=\{\msf{u}_2\}$. Induced graph $\mcal{G}_{12}$ is $\mcal{G}$ with edges $(\msf{u}_2,\msf{u}_4)$ and $(\msf{u}_2,\msf{u}_5)$ deleted. It can be seen that $\msf{u}_6$ becomes omniscient in $\mcal{G}_{12}$. Therefore $\textrm{T}^{(12)}_3$ also holds. 

{\flushleft 3) Network with capacity region $\mfrak{P}$}\par
The example is the one depicted in Fig.~\ref{fig_examples}(b) with an additional (dashed) edge $(\msf{s}_1,\msf{u}_2)$. To see that the capacity region is $\mfrak{P}$, we shall verify that both $(1/2,1)$ and $(1,1/2)$ are achievable and the network satisfies $\textrm{P}^{(12)}$. To achieve $(1/2,1)$, we use a similar scheme as above except that in the first time slot, $\msf{u}_5$ and $\msf{u}_6$ have to carry out interference neutralization to cancel $a$ over the air. To achieve $(1,1/2)$, we use the following scheme:
\begin{center}
{\small
\begin{tabular}{c|c|c|c|c|c|c|c|c|c|c }
& $\msf{s}_1$ & $\msf{s}_2$ & $\msf{u}_1$ & $\msf{u}_2$ & $\msf{u}_3$ & $\msf{u}_4$ & $\msf{u}_5$ & $\msf{u}_6$ & $\msf{d}_1$ & $\msf{d}_2$ \\ \hline
Time 1 Transmits & $a_1$ & $b$ & $a_1$ & $0$ & $b$ & $a_1$ & $0$ & $a_1+b$ & & \\ \hline
Time 1 Receives & & & $a_1$ & $a_1+b$ & $b$ & $a_1$ & $0$ & $a_1+b$ & $a_1$ & $a_1+b$ \\ \hline\hline
Time 2 Transmits & $a_2$ & $b$ & $0$ & $\boxed{a_2-a_1}$ & $b$ & $a_2-a_1$ & $0$ & $b$ & & \\ \hline
Time 2 Receives & & & $a_2$ & $a_2+b$ & $b$ & $\boxed{a_2-a_1}$ & $\boxed{a_2-a_1}$ & $b$ & $a_2-a_1$ & $b$ 
\end{tabular}
}
\end{center}
Note that in the first time slot, all nodes transmit what they receive except for $\msf{u}_2$. This is because the reception of $\msf{u}_6$ contains $b$ and hence it transmits $0$ instead so that $\msf{u}_4$, the critical node for $\msf{d}_1$, receives $a_1$. In the second time slot, $\msf{u}_2$ makes use of the two linear combinations it receives over the two time slots, $a_1+b$ and $a_2+b$, to zero-force interference $b$ and sends out $a_2-a_1$. Meanwhile, $\msf{u}_1$ keeps silent so that $\msf{u}_6$ is able to decode $b$.

For the outer bound, obviously $\textrm{P}^{(12)}_1$, $\textrm{P}^{(12)}_2$, and $\textrm{P}^{(12)}_4$ hold, as $\mcal{P}^{\msf{s}_2}(\msf{v}_1^*)=\{\msf{u}_2\}$ and $\msf{w}_{12}=\msf{u}_2$. Induced graph $\mcal{G}_{12}$ is $\mcal{G}$ with edges $(\msf{u}_2,\msf{u}_4)$ and $(\msf{u}_2,\msf{u}_5)$ deleted. It can be seen that $\msf{u}_6$ becomes omniscient in $\mcal{G}_{12}$. Therefore $\textrm{P}^{(12)}_3$ also holds.  

{\flushleft 4) Network with capacity region $\mfrak{S}$}\par
The example is depicted in Fig.~\ref{fig_Examples_1}(a). Interference neutralization happens right at $\msf{d}_1$ and $\msf{d}_2$, which is carried out by $\msf{u}_4,\msf{u}_5,\msf{u}_6$. As explained in the previous subsection, such interference neutralization is not possible without coding in the extension field.

\section{Proof of Achievability}\label{sec_Achieve}
In this section, we shall establish various achievability results \emph{beyond} the trivially-achievable triangular rate region $\mfrak{T}$. We assume that in the network no nodes are omniscient and describe a coding scheme that achieves $(1,1)$, $(1,1/2)$, or $(1/2,1).$
We will use a linear scheme over the finite field $\mbb{F}_{2^r}$, for some $r>0.$ We map the $r$-length binary sequences in $(\mbb{F}_2)^r$ to symbols in $\mbb{F}_{2^r}$ such that the bitwise modulo-two addition in $(\mbb{F}_2)^r$ translates to the addition operation in $\mbb{F}_{2^r}.$ Such a mapping is always possible by the standard construction of the extension field $\mbb{F}_{2^r}$.
% by choosing an irreducible polynomial in the ring $\mbb{F}_2[X]$ and mapping elements of $(\mbb{F}_2)^r$ to the polynomials of the ring with degree less than $r.$ Then, addition among two elements is bitwise addition of the coefficients in this representation whereas multiplication among two elements is carried out by multiplying out the corresponding polynomials and writing out the remainder when this product is divided by the chosen irreducible polynomial of degree $r.$ Note that this multiplication operation cannot be carried out by a linear transformation on the vector space $(\mbb{F}_2)^r$ over $\mbb{F}_2.$
Under such a mapping, we are able to abstract $r$ usages in the original network to a single channel use in a network with the same topology, but with inputs and outputs in the extension field $\mbb{F}_{2^r}$. A node is said to perform \emph{Random Linear Coding} (RLC) over $\mbb{F}_{2^r}$ if the coefficient(s) chosen by the node in the linear transformation is chosen uniformly at random in $\mbb{F}_{2^r}$ and independently of the coefficients chosen by all its predecessors.

We will focus on schemes achieving rate pairs $(1,1)$ and $(1/2,1)$ respectively. To achieve $(1,1),$ each source aims to convey a symbol in $\mbb{F}_{2^r}$ to its own destination over one symbol-time slot. The block length used by each node would be $r.$ To achieve $(1/2,1)$, $\msf{s}_1$ aims to deliver one symbol while $\msf{s}_2$ aims to deliver two symbols to their respective destinations. The block length here would be $2r.$ 
Note that the functions transforming an incoming $T$-block of bits ($T=r$ for the $(1,1)$-scheme and $T=2r$ for the $(1/2,1)$-scheme and) to an outgoing $T$-block of bits is not a linear transformation over the vector space $(\mbb{F}_2)^T$ and must necessarily be understood as operations over the extension field $\mbb{F}_{2^r}$ for our proofs to work. 
%Note that the functions transforming an incoming $r$-block of bits (respectively $2r$-block) to an outgoing $r$-block of bits (respectively $2r$) is not a linear transformation over the vector space $(\mbb{F}_2)^r$ (respectively $(\mbb{F}_2)^{2r}$) and must necessarily be understood as operations over the extension field $\mbb{F}_{2^r}$ for our proofs to work. 

%Formal description of the schemes are given in Appendix~\ref{app_SchemeFormal}. Below we give the necessary notations for the 

%\section{Formal Description of Linear Coding Schemes}\label{app_SchemeFormal}
%\subsection{$(1,1)$ Achieving Linear Coding Scheme}
A scalar linear coding scheme over $\mbb{F}_{2^r}$ is specified by the following collection of \emph{linear coding coefficients}:
$\{\alpha_{\msf{v}}\in\mbb{F}_{2^r}:\msf{v}\in\mcal{V}\setminus\{\msf{d}_1,\msf{d}_2\}\}$.
Define for each $\msf{v}\in\mcal{V}$ the \emph{global coefficients} $\beta_{\msf{v},\msf{s}_1}, \beta_{\msf{v},\msf{s}_2}\in\mbb{F}_{2^r}$ as follows.
\begin{itemize}
\item Initialize: $\beta_{\msf{s}_1,\msf{s}_1}:=1,\ \beta_{\msf{s}_1,\msf{s}_2}:=0,\ \beta_{\msf{s}_2,\msf{s}_1}:=0,\ \beta_{\msf{s}_2,\msf{s}_2}:=1.$
\item For $\msf{v}\in\mcal{V}\setminus\{\msf{s}_1,\msf{s}_2\},$ we define 
  \begin{align*}
  \beta_{\msf{v},\msf{s}_1}:=\sum_{\msf{u}\in\mcal{P}(\msf{v})}\alpha_{\msf{u}}\beta_{\msf{u},\msf{s}_1}, \quad
  \beta_{\msf{v},\msf{s}_2}:=\sum_{\msf{u}\in\mcal{P}(\msf{v})}\alpha_{\msf{u}}\beta_{\msf{u},\msf{s}_2}.
  \end{align*}
\end{itemize}

If the messages of source $\msf{s}_1$ and $\msf{s}_2$ are $a$ and $b$ respectively, then the reception of node $\msf{v}\in\mcal{V}\setminus\{\msf{s}_1,\msf{s}_2\}$ is given by $\beta_{\msf{v},\msf{s}_1}\cdot a+\beta_{\msf{v},\msf{s}_2}\cdot b.$

Recall Lemma~\ref{lem_critical} and \ref{lem_critical_omniscient}. These two lemmas explain why we define critical nodes. Lemma~\ref{lem_critical} shows that the rank-influence from the sources to destination $\msf{d}_i$ drops precisely at the critical node $\msf{v}_i^*$ and hence, the nodes in $\mcal{P}(\msf{v}_i^*)$ are natural candidates for special coding so as to cancel interference and arrange user $i$'s symbol(s) to be received at $\msf{v}_i^*$ even while other nodes may perform random linear coding. Note that this kind of special coding is a linear operation over the finite field $\mbb{F}_{2^r}$ making use of the superposition feature of the channel. Lemma~\ref{lem_critical_omniscient} shows that the critical nodes suffice to capture the property of existence of an omniscient node in the network.
%% and is done ``over the air".

%\begin{lemma}\label{lem_critical} 
%For $i=1,2$, $\mincut\lp\msf{s}_1,\msf{s}_2;\mcal{P}(\msf{v}_i^*)\rp=2$ if $k_i^*\geq 2$.
%\end{lemma}

%\begin{lemma}\label{lem_critical_omniscient}
%A network has an omniscient node if and only if one of the critical nodes $\msf{v}_1^*$ or $\msf{v}_2^*$ is omniscient.
%\end{lemma}

The reception of destination $\msf{d}_i$ is just a function of that of the critical node $\msf{v}_i^*.$ Hence we define the cloud $\cloud_i,$ for $i=1,2,$ to be the set of nodes that can be reached by some node in $\mcal{K}(\msf{v}_i^*)$ and that can reach $\msf{d}_i.$ All nodes in the cloud receive functions of the reception of the critical node. Our scheme will ensure that $\msf{v}_i^*$ can decode what $\msf{d}_i$ aims to decode, $i=1,2$.

Below we provide several useful lemmas. Proofs of these lemmas can be found in Appendix~\ref{app_PfLemmas}.

%First, we describe one scenario where we get a result similar to the one in \cite{WangShroff_10}. 
%\begin{lemma}
%Suppose in a network $\msf{s}_2$ cannot reach $\msf{d}_1,$ i.e. $k_1^*=0.$ Then, the capacity region of this network is the triangle $\mfrak{T}$ or the square $\mfrak{S}$ depending on whether there is an omniscient node in the network or not, i.e. depending on whether $\msf{v}_2^*$ is omniscient or not (using Lemma~\ref{lem_critical_omniscient}).
%\end{lemma}

\begin{lemma}\label{lem_reachability}
If $\msf{u}$ is $\msf{s}_i$-reachable, and all its predecessors do RLC with one symbol from each source, 
then $\msf{s}_i$'s symbol has a non-zero coefficient in the reception of $\msf{u}$ with high probability.
%then the reception of $\msf{u}$ has a non-zero coefficient of $\msf{s}_i$'s symbol with high probability.
\end{lemma}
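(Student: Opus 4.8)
The plan is to proceed by induction on the layer index of $\msf{u}$, exploiting the layered, acyclic structure so that the coefficient $\beta_{\msf{u},\msf{s}_i}$ of $\msf{s}_i$'s symbol in the reception of $\msf{u}$ is a polynomial in the random coding coefficients $\{\alpha_{\msf{v}}\}$ chosen by the predecessors of $\msf{u}$, and then to invoke the Schwartz--Zippel lemma to conclude that this polynomial, provided it is not identically zero, evaluates to a nonzero element with high probability when $r$ is large. Concretely, recall from the definitions that $\beta_{\msf{u},\msf{s}_i} = \sum_{\msf{v}\in\mcal{P}(\msf{u})}\alpha_{\msf{v}}\,\beta_{\msf{v},\msf{s}_i}$, and each $\beta_{\msf{v},\msf{s}_i}$ depends only on coding coefficients of nodes in strictly lower layers. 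Hence $\beta_{\msf{u},\msf{s}_i}$, viewed as a formal polynomial in the indeterminates $\{\alpha_{\msf{v}}:\msf{v}\in\mcal{V}\setminus\{\msf{d}_1,\msf{d}_2\},\ \msf{v}\text{ a predecessor of }\msf{u}\}$, has degree at most $L$, the number of layers. So the entire lemma reduces to the purely combinatorial claim: \emph{if $\msf{u}$ is $\msf{s}_i$-reachable, then the polynomial $\beta_{\msf{u},\msf{s}_i}$ is not the zero polynomial.}

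To establish that claim I would argue as follows. Since $\msf{u}$ is $\msf{s}_i$-reachable, fix a directed path $\msf{s}_i = \msf{w}_0, \msf{w}_1, \ldots, \msf{w}_m = \msf{u}$ from $\msf{s}_i$ to $\msf{u}$. Consider the monomial obtained by following this path, namely $\prod_{j=0}^{m-1}\alpha_{\msf{w}_j}$ (with the convention that $\alpha_{\msf{s}_i}$ is the coefficient $\msf{s}_i$ applies to its own symbol, which in the scheme is a free coefficient, or can be taken to be $1$). Expanding the recursion for $\beta_{\msf{u},\msf{s}_i}$, every term is a product of coding coefficients along some path from $\msf{s}_i$ to $\msf{u}$, and conversely every such path contributes exactly one term. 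Because the graph is layered and acyclic, distinct directed paths from $\msf{s}_i$ to $\msf{u}$ give rise to \emph{distinct} monomials in the $\alpha$'s: a path is recoverable from its monomial since at each layer the path visits exactly one node and the monomial records which one. Therefore there is no cancellation among these monomials, and the coefficient of the monomial associated to our fixed path is $1 \neq 0$; hence $\beta_{\msf{u},\msf{s}_i} \not\equiv 0$ as a formal polynomial.

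Finally I would combine the two parts: $\beta_{\msf{u},\msf{s}_i}$ is a nonzero polynomial of degree at most $L$ in at most $|\mcal{V}|$ variables with coefficients in $\mbb{F}_{2^r}$, and the $\alpha_{\msf{v}}$'s are drawn independently and uniformly from $\mbb{F}_{2^r}$; by Schwartz--Zippel the probability that $\beta_{\msf{u},\msf{s}_i} = 0$ is at most $L/2^{r}$, which tends to $0$ as $r\to\infty$. This gives the ``with high probability'' conclusion. The main obstacle is the no-cancellation step: one must be careful that the monomials indexed by different source-to-$\msf{u}$ paths really are all distinct (which uses layeredness crucially — in a non-layered DAG two different paths could traverse the same set of edges in different multiplicities and the bookkeeping is messier), and that the coefficient of each surviving monomial is a power of $2$ times $1$, i.e. does not itself vanish in characteristic $2$; here it is exactly $1$ because each path contributes with multiplicity one, so this is clean. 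Everything else is routine.
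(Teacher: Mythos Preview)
Your proof is correct but takes a different route from the paper's. The paper argues by a layer-by-layer probabilistic induction: for a node $\msf{u}$ in layer $l+1$, it conditions on the event that $\{\beta_{\msf{v},\msf{s}_i}:\msf{v}\in\mcal{P}^{\msf{s}_i}(\msf{u})\}$ are not all zero (which holds with high probability by the inductive hypothesis), and then observes that $\beta_{\msf{u},\msf{s}_i}=\sum_{\msf{v}}\alpha_{\msf{v}}\beta_{\msf{v},\msf{s}_i}$ is uniform over $\mbb{F}_{2^r}$ because the fresh $\alpha_{\msf{v}}$'s are independent of the conditioning. You instead give a global combinatorial argument: expand $\beta_{\msf{u},\msf{s}_i}$ as a sum over $\msf{s}_i\to\msf{u}$ paths, use layeredness to see that distinct paths yield distinct monomials (so no cancellation, even in characteristic $2$), conclude the polynomial is not identically zero, and finish with Schwartz--Zippel. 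Your approach gives an explicit bound $\Pr\{\beta_{\msf{u},\msf{s}_i}=0\}\le k/2^r$ in one shot and makes transparent exactly where the graph structure enters; the paper's inductive argument avoids naming Schwartz--Zippel but is slightly less sharp (the error probabilities compound across layers via a union bound). Both are standard and equally acceptable here.
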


\begin{lemma}\label{lem_two_source}
%$ $\par
%{\flushleft \rm (a)}
Consider $\mcal{U}\subseteq\mcal{L}_k$ with $\mincut(\msf{s}_1,\msf{s}_2;\mcal{U})=2$. Suppose each source transmits one symbol and all nodes in the network up to and including layer $\mcal{L}_{k-1}$ perform RLC.
%Consider the case we have for a set of nodes $\mcal{U}$ all in the same layer $\mcal{L}_k$, $\mincut(\msf{s}_1,\msf{s}_2;\mcal{U})=2$ and the sources transmitting one symbol each and all nodes in the network performing RLC up to and including layer $\mcal{L}_{k-1}$.
{\flushleft (a)}
Then the nodes in $\mcal{U}$ can collectively decode both of the transmitted symbols with high probability.
%with probability at least $1-O\lp\frac{1}{\mathbb{F}}\rp$.
{\flushleft (b)}
If a node $\msf{v}$ has $\mcal{U} \subseteq \mcal{P}(\msf{v})$, then with all nodes except nodes in $\mcal{U}$ performing arbitrary linear coding, nodes in $\mcal{U}$ can arrange their transmission so that $\msf{v}$ receives any desired linear combination of the source symbols with high probability.
{\flushleft (c)}
Let $\msf{u}\in\mcal{U}\subseteq\mcal{P}(\msf{v}).$ If nodes in $\mcal{P}(\msf{v})\setminus \mcal{U}$ stay silent and nodes in $\mcal{U}\setminus\{\msf{u}\}$ do RLC, then $\msf{u}$ is able to arrange its transmission so that $\msf{v}$ receives any linear combination linearly independent of the reception of $\msf{u}$ with high probability.
{\flushleft (d)}
As a corollary of (c), if the node $\msf{u}$ is $\msf{s}_1\msf{s}_2$-reachable, then $\msf{u}$ can adjust its transmission so that $\msf{v}$ can decode either $\msf{s}_1$ or $\msf{s}_2$'s symbol with high probability.
%If the above $\mcal{U} = \msf{u}\cup \mcal{U}'$ with $\msf{u}$ being $\msf{s}_1\msf{s}_2$-reachable, then with all nodes except $\msf{u}$ performing RLC, $\msf{u}$ can adjust their transmission so that $\msf{v}$ receives any desired linear combination of the source messages with high probability.
\end{lemma}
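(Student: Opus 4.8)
The plan is to prove the four parts in order, since each builds on the previous one. The common engine is the Schwartz–Zippel lemma: every quantity we care about (a coefficient in a reception, a $2\times 2$ determinant of global coefficients, etc.) is a polynomial in the random coding coefficients $\{\alpha_\msf{u}\}$ chosen by nodes in layers $\mcal{L}_0,\dots,\mcal{L}_{k-1}$, with coefficients in $\mbb{F}_{2^r}$; if that polynomial is not identically zero, then it is nonzero with probability at least $1-\deg/|\mbb{F}_{2^r}| \to 1$ as $r\to\infty$. So the real content of each part is to exhibit a single \emph{assignment} of coding coefficients making the relevant polynomial nonzero, i.e. to show the polynomial is not the zero polynomial.

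For part (a): consider the $|\mcal{U}|\times 2$ matrix $M$ whose rows are $(\beta_{\msf{u},\msf{s}_1},\beta_{\msf{u},\msf{s}_2})$ for $\msf{u}\in\mcal{U}$. Collectively decoding both symbols is exactly the statement $\rank M = 2$, i.e.\ some $2\times 2$ minor of $M$ is nonzero. Since $\mincut(\msf{s}_1,\msf{s}_2;\mcal{U})=2$, the rank-mincut result of \cite{AvestimehrDiggavi_09} (single-multicast over linear deterministic networks) guarantees that there is a choice of linear coefficients over a large enough field achieving $\rank M = 2$; hence the sum of squared $2\times2$ minors is a nonzero polynomial, and Schwartz–Zippel finishes. (Alternatively one invokes the standard algebraic network coding argument directly: the existence of two edge-disjoint-in-the-mincut-sense paths lets us pick the coefficients along them to pass through independent linear forms.)

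For part (b): by part (a), with high probability there exist $\msf{u}_1,\msf{u}_2\in\mcal{U}$ whose global-coefficient vectors $(\beta_{\msf{u}_1,\msf{s}_1},\beta_{\msf{u}_1,\msf{s}_2})$ and $(\beta_{\msf{u}_2,\msf{s}_1},\beta_{\msf{u}_2,\msf{s}_2})$ form a basis of $\mbb{F}_{2^r}^2$. Condition on this event. Node $\msf{v}$ receives $\sum_{\msf{u}\in\mcal{P}(\msf{v})}\alpha_\msf{u}(\beta_{\msf{u},\msf{s}_1}a+\beta_{\msf{u},\msf{s}_2}b)$; the contribution of $\mcal{U}\subseteq\mcal{P}(\msf{v})$ is a linear combination (with free coefficients $\{\alpha_\msf{u}:\msf{u}\in\mcal{U}\}$) of the vectors $(\beta_{\msf{u},\msf{s}_1},\beta_{\msf{u},\msf{s}_2})$, and since these span $\mbb{F}_{2^r}^2$ (two of them already do) the nodes in $\mcal{U}$ can hit any target vector $(\gamma_1,\gamma_2)$; choosing the $\alpha_\msf{u}$ for $\msf{u}\in\mathcal U$ to also cancel whatever fixed vector the nodes in $\mcal{P}(\msf{v})\setminus\mcal{U}$ contribute (these do arbitrary, but by assumption fixed/known, linear coding), $\msf{v}$ receives exactly $\gamma_1 a+\gamma_2 b$.

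For part (c): apply part (b) with the nodes of $\mcal{P}(\msf{v})\setminus\mcal{U}$ silenced (their contribution is $0$) and the nodes of $\mcal{U}\setminus\{\msf{u}\}$ doing RLC. The reception of $\msf{u}$ is the fixed vector $(\beta_{\msf{u},\msf{s}_1},\beta_{\msf{u},\msf{s}_2})$; since $\mathrm{span}$ of all the $\mcal{U}$-vectors is $2$-dimensional w.h.p.\ (part (a)), the vectors contributed by $\mcal{U}\setminus\{\msf{u}\}$ together with that of $\msf{u}$ still span $\mbb{F}_{2^r}^2$, so in particular $\msf{u}$ can choose its scaling $\alpha_\msf{u}$ so that $\msf{v}$'s received vector is any prescribed vector linearly independent of $\msf{u}$'s — the required linear independence is again a nonzero-determinant condition, handled by Schwartz–Zippel. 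Part (d) is the special case: "$\msf{u}$ is $\msf{s}_1\msf{s}_2$-reachable" means (Lemma~\ref{lem_reachability}) both $\beta_{\msf{u},\msf{s}_1}$ and $\beta_{\msf{u},\msf{s}_2}$ are nonzero w.h.p., so $\msf{u}$'s vector is proportional to neither $(1,0)$ nor $(0,1)$; by (c) $\msf{u}$ can steer $\msf{v}$'s reception to $(1,0)$ (decode $a$) or to $(0,1)$ (decode $b$), either of which is independent of $\msf{u}$'s own vector.

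The main obstacle is part (a) — more precisely, making rigorous the passage from "$\mincut=2$" to "some $2\times2$ minor polynomial is not identically zero." One must be careful that the rank-mincut of \cite{AvestimehrDiggavi_09} is stated for a single destination, whereas here $\mcal{U}$ is a \emph{set} of nodes decoding \emph{collectively}; the clean way is to add a virtual sink with edges from every node of $\mcal{U}$ and observe that the mincut to this virtual sink is still $2$, so the single-multicast achievability applies and yields the desired nonzero polynomial. Everything after that is bookkeeping with the Schwartz–Zippel union bound over the finitely many w.h.p.\ events, and noting that "with high probability" is uniform since all the polynomials in sight have degree bounded by a function of the (fixed) network size.
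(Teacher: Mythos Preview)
Your proposal is essentially correct and follows the paper's approach: the virtual sink reduction for (a) is exactly the paper's ``super-sink $\msf{d}'$,'' and (b) and (d) match the paper's arguments.

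Part (c) has a small gap in the presentation, though. Invoking part (b) is a red herring: in (b) \emph{all} nodes of $\mcal{U}$ choose their coefficients strategically, whereas in (c) only $\msf{u}$ does and $\mcal{U}\setminus\{\msf{u}\}$ perform RLC. With a single free scalar $\alpha_{\msf{u}}$, the reception at $\msf{v}$ is $S+\alpha_{\msf{u}}\,r_{\msf{u}}$, where $r_{\msf{u}}=(\beta_{\msf{u},\msf{s}_1},\beta_{\msf{u},\msf{s}_2})$ and $S$ is the random aggregate contributed by $\mcal{U}\setminus\{\msf{u}\}$; this is a one-parameter affine line, so $\msf{u}$ can only steer $\msf{v}$'s reception to a prescribed \emph{direction} (not a prescribed vector), and only if $S$ is not a scalar multiple of $r_{\msf{u}}$. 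The sentence ``the vectors contributed by $\mcal{U}\setminus\{\msf{u}\}$ together with that of $\msf{u}$ still span $\mbb{F}_{2^r}^2$'' is true but does not by itself give this; what you need is that $\det\begin{bmatrix}S\\ r_{\msf{u}}\end{bmatrix}$ is not the zero polynomial in the RLC coefficients of $\mcal{U}\setminus\{\msf{u}\}$. This follows from part (a): since the full span is two-dimensional, the span of the receptions of $\mcal{U}\setminus\{\msf{u}\}$ is either two-dimensional or one-dimensional but not along $r_{\msf{u}}$, and in either case some assignment makes $S\not\parallel r_{\msf{u}}$. The paper makes exactly this case split explicit; your Schwartz--Zippel framing collapses both cases at once, but you should state the non-identically-zero verification rather than leave it implicit.
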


\begin{lemma}\label{lem_mincut_pair}
If $\mcal{U}\subseteq \mcal{L}_k$ satisfies $\mincut(\msf{s}_1.\msf{s}_2;\mcal{U})=2$, then for any $\msf{u}\in\mcal{U}$, we can find some $\msf{w}\in\mcal{U}$ such that $\mincut(\msf{s}_1.\msf{s}_2;\msf{u},\msf{w})=2$.
\end{lemma}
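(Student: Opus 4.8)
The plan is to argue by contradiction using a submodularity/uncrossing argument on rank-mincuts. Suppose, for contradiction, that there is some $\msf{u}\in\mcal{U}$ such that $\mincut(\msf{s}_1,\msf{s}_2;\msf{u},\msf{w})=1$ for \emph{every} $\msf{w}\in\mcal{U}$. By the definition of rank-mincut, for each $\msf{w}\in\mcal{U}$ there is a cut, i.e.\ a partition $(\mcal{T}_{\msf{w}},\mcal{T}_{\msf{w}}^c)$ with $\msf{s}_1,\msf{s}_2\in\mcal{T}_{\msf{w}}$ and $\{\msf{u},\msf{w}\}\subseteq\mcal{T}_{\msf{w}}^c$, whose transfer matrix has rank $1$. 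Since the network is layered and the channel strengths are all $0/1$, a rank-$1$ transfer matrix across such a partition has a very rigid structure: restricted to the relevant layer, the ``active'' crossing edges all emanate from a single node on the $\mcal{T}$ side or all terminate at a single node on the $\mcal{T}^c$ side (this is essentially the primary-min-cut structure of Lemma~\ref{lem_PrimaryMinCut}). In particular, Lemma~\ref{lem_PrimaryMinCut} applied to the two-element set $\{\msf{u},\msf{w}\}$ tells us that $\pmc(\msf{u},\msf{w})$ is well-defined up to clones, so $\msf{u}$ and $\msf{w}$ have a common primary min-cut node, which forces $\mcal{P}^{\ast}$-type structure relating their reception spaces.

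First I would set up the following reduction. For the fixed $\msf{u}$, since $\mcal{U}\subseteq\mcal{L}_k$ and $\mincut(\msf{s}_1,\msf{s}_2;\mcal{U})=2$, the collection of receptions $\{Y_{\msf{w}}:\msf{w}\in\mcal{U}\}$ spans a $2$-dimensional space (as functions of the two source symbols under generic linear coding in previous layers, invoking Lemma~\ref{lem_two_source}(a)). The reception $Y_{\msf{u}}$ lies in this span; write its associated coefficient vector as $\mb{c}_{\msf{u}}\in\mbb{F}^2$. The assumption $\mincut(\msf{s}_1,\msf{s}_2;\msf{u},\msf{w})=1$ means that, in the generic coding, $Y_{\msf{w}}$ has coefficient vector $\mb{c}_{\msf{w}}$ proportional to $\mb{c}_{\msf{u}}$ — otherwise $\{\msf{u},\msf{w}\}$ would already achieve rank $2$, contradicting $\mincut=1$. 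But if \emph{every} $\mb{c}_{\msf{w}}$ is proportional to $\mb{c}_{\msf{u}}$, then the whole collection $\{\mb{c}_{\msf{w}}:\msf{w}\in\mcal{U}\}$ spans only a $1$-dimensional space, so $\mincut(\msf{s}_1,\msf{s}_2;\mcal{U})\le 1$, contradicting the hypothesis that it equals $2$. This contradiction completes the argument, so some $\msf{w}\in\mcal{U}$ with $\mincut(\msf{s}_1,\msf{s}_2;\msf{u},\msf{w})=2$ must exist.

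The main obstacle I anticipate is making rigorous the step that equates ``$\mincut(\msf{s}_1,\msf{s}_2;\msf{u},\msf{w})=1$'' with ``the coefficient vectors $\mb{c}_{\msf{u}},\mb{c}_{\msf{w}}$ are linearly dependent in the generic coding.'' The subtlety is that rank-mincut is defined combinatorially (minimum rank of a transfer matrix over all source-separating partitions), not in terms of a specific random linear code; one direction (mincut $\ge$ generic rank) is the standard max-flow/min-cut fact for linear deterministic networks from \cite{AvestimehrDiggavi_09}, while the other direction (generic rank achieves the mincut w.h.p.) is exactly the content invoked in Lemma~\ref{lem_two_source}(a). I would therefore phrase the whole proof in terms of a single generic linear code realizing all the relevant mincuts simultaneously (possible since there are finitely many subsets, each realized w.h.p.), and then the argument becomes pure linear algebra on the two-dimensional coefficient space as sketched above. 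An alternative, purely combinatorial route — uncrossing the cuts $\mcal{T}_{\msf{w}}$ directly via submodularity of the rank function across layered cuts — is also viable but more delicate because rank is not submodular in general; the generic-code formulation sidesteps this.
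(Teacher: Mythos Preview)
Your proposal is correct and takes essentially the same approach as the paper: fix a single generic random linear code (which, by Lemma~\ref{lem_two_source}(a) and the cut-set upper bound, realizes the mincut as the rank of the received coefficient vectors), and then argue by linear algebra that the two-dimensional span of $\{\mb{c}_{\msf{w}}:\msf{w}\in\mcal{U}\}$ forces some $\mb{c}_{\msf{w}}$ to be independent of $\mb{c}_{\msf{u}}$, whence $\mincut(\msf{s}_1,\msf{s}_2;\msf{u},\msf{w})=2$. The paper even remarks that although the statement is purely graph-theoretic, the random-coding route is the cleaner one---exactly the point you make in your final paragraph.
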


Next, we shall prove achievability in different cases. Formal proofs of lemmas and claims are left in Appendix~\ref{app_PfLemmas}.
Without loss of generality, we assume that $k_1^* \le k_2^*$.  If $k_1^*=0$, based on Lemma~\ref{k_1^*=0} we know that if there is no omniscient node, then $(1,1)$ is achievable. If $k_1^*=1$, then by the definition of critical node $\msf{v}_1^*$, both $\msf{s}_1$ and $\msf{s}_2$ are $\msf{v}_1^*$'s parents and hence it is omniscient.
Therefore we focus on $2\le k_1^*\le k_2^*$ below. We shall distinguish into two cases: $k_1^*=k_2^*$ and $k_1^* < k_2^*$. 

\subsection{$k_1^* = k_2^* = k^*$}
%We already have detailed discussion for this case in the 
\subsubsection{Special Patterns Implied by the Conditions}
When the critical nodes are in the same layer, it turns out that if the network $\mcal{G}$ satisfies the conditions given in Theorem~\ref{thm_2} or Theorem~\ref{thm_3}, it has a special pattern. The fact is summarized in the following lemma. Let
$\mcal{P}_1 := \mcal{P}\lp \msf{v}_1^*\rp \setminus \mcal{P}\lp \msf{v}_2^*\rp$, $\mcal{P}_2 := \mcal{P}\lp \msf{v}_2^*\rp \setminus \mcal{P}\lp \msf{v}_1^*\rp$,
$\mcal{P}_{12} := \mcal{P}\lp \msf{v}_1^*\rp \cap \mcal{P}\lp \msf{v}_2^*\rp$.

\begin{lemma}\label{lem_Pattern}
When $k_1^* = k_2^*=k^*$ and there is no omniscient node, we have the following equivalence relations.
\begin{align*}
%&\mcal{G} \text{ satisfies } 
&\textrm{T}^{(12)} \iff \\
&\quad\lbp\begin{array}{l}
\mcal{P}_1 \text{ is } \msf{s}_1\text{-only-reachable}\\ 
\mincut\lp\msf{s}_1,\msf{s}_2;\mcal{P}_2\rp = 1, \msf{u}_{21}:=\pmc\lp\mcal{P}_2\rp\ne\msf{s}_i, i=1,2\\
\mcal{P}_{12} \text{ is } \msf{s}_2\text{-only-reachable}\\
\mcal{K}^{\msf{s}_1}\lp \msf{u}_{21}\rp \text{ forms } \lp\msf{s}_1;\msf{d}_1\rp\text{-vertex-cut}.
\end{array}\right.\\
%&\mcal{G} \text{ satisfies } 
%&\textrm{T}^{(21)} \iff \lbp\begin{array}{l}
%\mcal{P}_2 \text{ is } \msf{s}_2\text{-only-reachable}\\ 
%\mincut\lp\msf{s}_1,\msf{s}_2;\mcal{P}_1\rp = 1, \msf{u}_{12}:=\pmc\lp\mcal{P}_1\rp\ne\msf{s}_i, i=1,2\\
%\mcal{P}_{12} \text{ is } \msf{s}_1\text{-only-reachable}\\
%\mcal{K}^{\msf{s}_2}\lp \msf{u}_{12}\rp \text{ forms } \lp\msf{s}_2;\msf{d}_2\rp\text{-vertex-cut}.
%\end{array}\right.\\
%&\mcal{G} \text{ satisfies } 
&\textrm{P}^{(12)}\setminus\textrm{T}^{(21)} \iff \\
&\quad\lbp\begin{array}{l}
\mcal{P}_1 \text{ is } \msf{s}_1\text{-only-reachable}\\ 
\mincut\lp\msf{s}_1,\msf{s}_2;\mcal{P}_2\rp = 1, \msf{u}_{21}:=\pmc\lp\mcal{P}_2\rp\ne\msf{s}_i, i=1,2\\
\mincut\lp\msf{s}_1,\msf{s}_2;\mcal{P}_{12}\rp = 1, \msf{w}_{12}:=\pmc\lp\mcal{P}_{12}\rp\ne\msf{s}_i, i=1,2\\
%\mcal{P}_{12} \text{ is } \msf{s}_2\text{-only-reachable}\\
\mcal{K}^{\msf{s}_2}\lp \msf{w}_{12}\rp \text{ forms } \lp\msf{s}_2;\msf{d}_2\rp\text{-vertex-cut}.\\
\mcal{K}^{\msf{s}_1}_{\mcal{G}_{12}}\lp \msf{u}_{21}\rp \text{ forms } \lp\msf{s}_1;\msf{d}_1\rp\text{-vertex-cut in } \mcal{G}_{12}.
\end{array}\right.
%&\mcal{G} \text{ satisfies } 
%&\textrm{P}^{(21)}\setminus\textrm{T}^{(12)} \iff \lbp\begin{array}{l}
%\mcal{P}_2 \text{ is } \msf{s}_2\text{-only-reachable}\\ 
%\mincut\lp\msf{s}_1,\msf{s}_2;\mcal{P}_1\rp = 1, \msf{u}_{12}:=\pmc\lp\mcal{P}_1\rp\ne\msf{s}_i, i=1,2\\
%\mincut\lp\msf{s}_1,\msf{s}_2;\mcal{P}_{12}\rp = 1, \msf{w}_{21}:=\pmc\lp\mcal{P}_{12}\rp\ne\msf{s}_i, i=1,2\\
%%\mcal{P}_{12} \text{ is } \msf{s}_2\text{-only-reachable}\\
%\mcal{K}^{\msf{s}_1}\lp \msf{w}_{21}\rp \text{ forms } \lp\msf{s}_1;\msf{d}_1\rp\text{-vertex-cut}.\\
%\mcal{K}^{\msf{s}_2}_{\mcal{G}_{21}}\lp \msf{u}_{12}\rp \text{ forms } \lp\msf{s}_2;\msf{d}_2\rp\text{-vertex-cut in } \mcal{G}_{21}.\end{array}\right.
\end{align*}
and the equivalence relation for $\textrm{T}^{(21)}$ ($\textrm{P}^{(21)}\setminus\textrm{T}^{(12)}$) is the one for $\textrm{T}^{(12)}$ ($\textrm{P}^{(12)}\setminus\textrm{T}^{(21)}$) with indices ``1" and ``2" exchanged.
\end{lemma}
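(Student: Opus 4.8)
The plan is to prove the stated equivalences by directly unwinding the definitions of the conditions $\textrm{T}^{(12)}, \textrm{P}^{(12)}$ under the standing hypothesis $k_1^* = k_2^* = k^*$ and no omniscient node, using the extra structure this hypothesis forces on the parent sets. The key observation to establish first is that, when $k_1^* = k_2^*$, the set $\mcal{P}^{\msf{s}_2}(\msf{v}_1^*)$ that appears in conditions $\textrm{T}^{(12)}_2, \textrm{T}^{(12)}_3, \textrm{T}^{(12)}_4$ (and which a priori lives in layer $\mcal{L}_{k_1^*-1}$) interacts cleanly with the decomposition $\mcal{P}(\msf{v}_1^*) = \mcal{P}_1 \sqcup \mcal{P}_{12}$ and $\mcal{P}(\msf{v}_2^*) = \mcal{P}_2 \sqcup \mcal{P}_{12}$. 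Concretely, I would first show that $\mincut(\msf{s}_1,\msf{s}_2;\mcal{P}^{\msf{s}_2}(\msf{v}_1^*)) = 1$ together with $\msf{s}_2$ reaching $\msf{d}_2$ forces $\mcal{P}_1$ to be $\msf{s}_1$-only-reachable and $\mcal{P}_{12}$ to be $\msf{s}_2$-only-reachable (if some node in $\mcal{P}_1$ were $\msf{s}_2$-reachable, it would be a member of $\mcal{P}^{\msf{s}_2}(\msf{v}_1^*)$ not shared with $\msf{v}_2^*$, contradicting either the mincut-$1$ condition or — via Lemma~\ref{lem_critical} applied to $\msf{v}_2^*$ — the fact that $\mincut(\msf{s}_1,\msf{s}_2;\mcal{P}(\msf{v}_2^*)) = 2$). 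This is where Lemma~\ref{lem_critical} and the uniqueness-up-to-clones part of the critical node definition do the heavy lifting.

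Next I would identify $\msf{w}_{12} := \pmc(\mcal{P}^{\msf{s}_2}(\msf{v}_1^*))$ with $\pmc(\mcal{P}_2)$ or $\pmc(\mcal{P}_{12})$ according to whether condition $\textrm{T}^{(12)}_4$ (i.e. $\msf{w}_{12} = \msf{s}_2$) holds or fails. The point is: under the same-layer hypothesis, $\mcal{P}^{\msf{s}_2}(\msf{v}_1^*)$ splits as the $\msf{s}_2$-reachable parts of $\mcal{P}_1$ and $\mcal{P}_{12}$; since $\mcal{P}_1$ is $\msf{s}_1$-only, $\mcal{P}^{\msf{s}_2}(\msf{v}_1^*) \subseteq \mcal{P}_{12}$, and in fact (since $\mcal{P}_{12}$ is $\msf{s}_2$-only) $\mcal{P}^{\msf{s}_2}(\msf{v}_1^*) = \mcal{P}_{12}$. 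So the first bulleted equivalence reduces to translating $\textrm{T}^{(12)}_2$ into "$\mincut(\msf{s}_1,\msf{s}_2;\mcal{P}_{12}) = 1$" — but then $\textrm{T}^{(12)}_4$ says $\pmc(\mcal{P}_{12}) = \msf{s}_2$, which (combined with $\mcal{P}_{12}$ being $\msf{s}_2$-only) I must reconcile with the right-hand side's requirement about $\mincut(\msf{s}_1,\msf{s}_2;\mcal{P}_2)$ and $\msf{u}_{21} := \pmc(\mcal{P}_2)$. Here I expect I will need to re-derive, from $\textrm{T}^{(12)}_3$ (that $\msf{u}_{21} := \pmc_{\mcal{G}_{12}}(\msf{v}_2^*)$ is omniscient in $\mcal{G}_{12}$), both that $\mincut(\msf{s}_1,\msf{s}_2;\mcal{P}_2) = 1$ and the interpretation of $\msf{u}_{21}$ as $\pmc(\mcal{P}_2)$ in the original graph — using that $\mcal{G}_{12}$ differs from $\mcal{G}$ only by removing $\msf{w}_{12}$'s contribution to layer-$\mcal{L}_{k_1^*}$ parents, which is precisely the $\mcal{P}^{\msf{s}_2}(\msf{v}_1^*) = \mcal{P}_{12}$ block, so in $\mcal{G}_{12}$ the critical node $\msf{v}_1^*$ sees only $\mcal{P}_1$ (now $\msf{s}_2$-unreachable) and $\msf{v}_2^*$ sees $\mcal{P}_2 \cup (\mcal{P}_{12}\setminus\{\text{contribution of }\msf{w}_{12}\})$.

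For the second equivalence ($\textrm{P}^{(12)}\setminus\textrm{T}^{(21)}$), the plan is the same but now $\textrm{P}^{(12)}_4$ asserts $\msf{w}_{12}\ne\msf{s}_2$ and $\mcal{K}^{\msf{s}_2}(\msf{w}_{12})$ is an $(\msf{s}_2;\msf{d}_2)$-vertex-cut; translating via $\msf{w}_{12} = \pmc(\mcal{P}_{12})$ gives the third and fourth lines of the right-hand side, and $\textrm{T}^{(12)}_3$ (which $\textrm{P}^{(12)}_3$ inherits) gives the first, second, and fifth lines after chasing the modification defining $\mcal{G}_{12}$. I would also need to check the subtraction of $\textrm{T}^{(21)}$ on both sides is consistent — i.e. that the "$\msf{u}_{21}\ne\msf{s}_i$, $\msf{w}_{12}\ne\msf{s}_i$" genericity clauses together with the vertex-cut clauses exactly carve out $\textrm{P}^{(12)}$ minus the symmetric trapezoid condition; this should follow by symmetry of the derivation with roles of $1,2$ swapped. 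The main obstacle I anticipate is the bookkeeping around the induced graph $\mcal{G}_{12}$: one must verify carefully that "$\msf{u}_{21}$ omniscient in $\mcal{G}_{12}$" unpacks (via Definition of omniscient, applied in $\mcal{G}_{12}$, plus Lemma~\ref{lem_PrimaryMinCut} and Lemma~\ref{all_G12s_are_essentially_the_same}) into exactly the mincut and vertex-cut statements on the right, and in particular that the $(\msf{s}_1;\msf{d}_1)$-vertex-cut is taken in $\mcal{G}_{12}$ for the pentagon case but collapses to a vertex-cut in $\mcal{G}$ for the trapezoid case because there $\msf{s}_2$ no longer reaches $\msf{d}_2$ in $\mcal{G}_{12}$ (so clause (A) of omniscience is vacuous in its second half). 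Everything else is definitional unwinding plus the same-layer structural lemma derived in the first step.
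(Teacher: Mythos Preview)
Your first structural step is where the plan breaks. You claim that $\textrm{T}^{(12)}_2$ alone (i.e.\ $\mincut(\msf{s}_1,\msf{s}_2;\mcal{P}^{\msf{s}_2}(\msf{v}_1^*))=1$, plus the standing hypotheses) forces $\mcal{P}_1$ to be $\msf{s}_1$-only-reachable and $\mcal{P}_{12}$ to be $\msf{s}_2$-only-reachable. This is false. Take $\mcal{P}_1=\{a\}$ with $a$ $\msf{s}_1\msf{s}_2$-reachable, $\mcal{P}_{12}=\{c\}$ with $c$ a clone of $a$, and $\mcal{P}_2=\{b\}$ arbitrary: then $\mcal{P}^{\msf{s}_2}(\msf{v}_1^*)=\{a,c\}$ has mincut $1$ from the sources, yet $\mcal{P}_1$ is not $\msf{s}_1$-only. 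Your parenthetical justification (``not shared with $\msf{v}_2^*$, contradicting either the mincut-$1$ condition or $\mincut(\msf{s}_1,\msf{s}_2;\mcal{P}(\msf{v}_2^*))=2$'') does not go through: an $\msf{s}_2$-reachable node in $\mcal{P}_1$ sits in $\mcal{P}^{\msf{s}_2}(\msf{v}_1^*)$ but is not a parent of $\msf{v}_2^*$, and neither mincut statement is violated. Because the rest of your argument (the identification $\mcal{P}^{\msf{s}_2}(\msf{v}_1^*)=\mcal{P}_{12}$, hence $\msf{w}_{12}=\pmc(\mcal{P}_{12})$, etc.) rests on this step, the whole chain collapses.

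The paper's proof avoids this by \emph{not} trying to read off the parent-set structure from $\textrm{T}^{(12)}_2$. Instead it pivots on $\textrm{T}^{(12)}_3$ via a case split on whether $\msf{u}_{21}:=\pmc_{\mcal{G}_{12}}(\msf{v}_2^*)$ equals $\msf{v}_2^*$ or lies strictly earlier. If $\msf{u}_{21}\ne\msf{v}_2^*$, then $\mincut_{\mcal{G}_{12}}(\msf{s}_1,\msf{s}_2;\mcal{P}_{\mcal{G}_{12}}(\msf{v}_2^*))=1$ while $\mincut(\msf{s}_1,\msf{s}_2;\mcal{P}(\msf{v}_2^*))=2$ in $\mcal{G}$; comparing the two forces $\mcal{P}^{\msf{s}_2}(\msf{v}_1^*)\subseteq\mcal{P}(\msf{v}_2^*)$, which is exactly the containment $\mcal{P}^{\msf{s}_2}(\msf{v}_1^*)\subseteq\mcal{P}_{12}$ you were trying to get, and the remaining RHS conditions then fall out. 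The other case $\msf{u}_{21}=\msf{v}_2^*$ is where the ``$\setminus\,\textrm{T}^{(21)}$'' actually does work: for $\textrm{T}^{(12)}$ it contradicts non-omniscience, and for $\textrm{P}^{(12)}$ it forces precisely the $\textrm{T}^{(21)}$ pattern, so it is discarded. You gesture at checking the $\textrm{T}^{(21)}$ subtraction ``by symmetry,'' but without this case split you have no mechanism that singles it out.
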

\begin{proof}Proof is detailed in the appendix.\end{proof}

One direct consequence of the above lemma is that, $\textrm{T}^{(12)}\cap \textrm{T}^{(21)} = \textrm{P}^{(12)}\cap \textrm{P}^{(21)} = \emptyset$.

\subsubsection{Proof of Achievability}
In this case, it is sufficient to show that $\msf{v}_1^*, \msf{v}_2^*$ can decode the symbols desired by destinations $\msf{d}_1,\msf{d}_2$ respectively. This is because the network past layer $\mcal{L}_{k^*}$ has no interference to $\msf{d}_i$ from any node in $\mcal{K}(\msf{v}_j^*)$ for $(i,j)=(1,2)$ or $(2,1).$

%If $k^*=0$, then there is no interference at destination $\msf{d}_1$ from source $\msf{s}_2$ and vice versa. In this case, clearly $(1,1)$ is achievable. Below, we consider the case $k^*\ge 2$ since $k^*\neq 1$. 

By definition, we can see that $\mcal{K}(\msf{v}_1^*) \cup \mcal{K}(\msf{v}_2^*) = \mcal{L}_{k^*}$. For suppose, there exists node $\msf{u}\in \mcal{L}_{k^*}\setminus \lp \mcal{K}(\msf{v}_1^*) \cup \mcal{K}(\msf{v}_2^*) \rp.$ As each node can reach at least one of the destinations, $\msf{u}$ can reach either $\msf{d}_1$ or $\msf{d}_2$ thus violating the definition of a critical node. Now, suppose $\mcal{K}(\msf{v}_1^*) \cap \mcal{K}(\msf{v}_2^*) \ne \emptyset$, then $\mcal{K}(\msf{v}_1^*) = \mcal{K}(\msf{v}_2^*)$ and so, both $\msf{v}_1^*$ and $\msf{v}_2^*$ are omniscient, 
violating the assumption. Hence $\mcal{K}(\msf{v}_1^*)$ and $\mcal{K}(\msf{v}_2^*)$ form a partition of $\mcal{L}_{k^*}$.

Since neither $\msf{v}_1^*$ nor $\msf{v}_2^*$ is omniscient, $\mcal{P}^{\msf{s}_1}(\msf{v}_1^*) \ne \mcal{P}^{\msf{s}_1}(\msf{v}_2^*)$ and $\mcal{P}^{\msf{s}_2}(\msf{v}_2^*) \ne \mcal{P}^{\msf{s}_2}(\msf{v}_1^*)$.
%\begin{align*}
%&\mcal{P}^{\msf{s}_1}(\msf{v}_1^*) \ne \mcal{P}^{\msf{s}_1}(\msf{v}_2^*) \textrm{ and }\\
%&\mcal{P}^{\msf{s}_2}(\msf{v}_2^*) \ne \mcal{P}^{\msf{s}_2}(\msf{v}_1^*)
%\end{align*}
It can be stated equivalently as
\begin{align*}
&\mcal{P}^{\msf{s}_1}(\msf{v}_1^*) \setminus \mcal{P}^{\msf{s}_1}(\msf{v}_2^*) \ne \emptyset \textrm{ or } \mcal{P}^{\msf{s}_1}(\msf{v}_1^*) \subsetneq \mcal{P}^{\msf{s}_1}(\msf{v}_2^*) \textrm{ and }\\
&\mcal{P}^{\msf{s}_2}(\msf{v}_2^*) \setminus \mcal{P}^{\msf{s}_2}(\msf{v}_1^*) \ne \emptyset \textrm{ or } \mcal{P}^{\msf{s}_2}(\msf{v}_2^*) \subsetneq \mcal{P}^{\msf{s}_2}(\msf{v}_1^*)
\end{align*}
For notational convenience, let us define $\mcal{P}_1^{\msf{s}_1} := \mcal{P}^{\msf{s}_1}\lp \msf{v}_1^*\rp \setminus \mcal{P}^{\msf{s}_1}\lp \msf{v}_2^*\rp$ and $\mcal{P}_2^{\msf{s}_2} := \mcal{P}^{\msf{s}_2}\lp \msf{v}_2^*\rp \setminus \mcal{P}^{\msf{s}_2}\lp \msf{v}_1^*\rp$.

%It is not hard to verify that in each case the capacity region is the one described in the corresponding theorem. It is also not hard to verify that other than these cases, $(1,1)$ is always achievable. 
Below we first show that $(1,1)$ is achievable when the network $\mcal{G}$ does not fall into any of the above four patterns described in Lemma~\ref{lem_Pattern}. Next we show that in the patterns corresponding to $\textrm{P}^{(12)}\setminus\textrm{T}^{(21)}$ and $\textrm{P}^{(21)}\setminus\textrm{T}^{(12)}$, both $(1,1/2)$ and $(1/2,1)$ can be achieved. Finally we show that in the pattern corresponding to $\textrm{T}^{(12)}$, $(1/2,1)$ can be achieved, and in the pattern corresponding to $\textrm{T}^{(21)}$, $(1,1/2)$ can be achieved.

As a first step, we show the following claim.
\begin{claim}\label{claim_SameLay_1}
$(1,1)$ is achievable if $\mcal{P}_1^{\msf{s}_1} = \emptyset$ or $\mcal{P}_2^{\msf{s}_2} = \emptyset$ or $\mcal{P}_{12}=\emptyset$, under the assumption that there is no omniscient node. 
\end{claim}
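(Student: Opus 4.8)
The plan is to treat the three degenerate cases separately and, in each, exhibit a scheme in which only one node (or no node) does non-random coding while everyone else does RLC, and then invoke the lemmas already established. Throughout, the goal is the same: arrange that $\msf{v}_1^*$ decodes $a$ and $\msf{v}_2^*$ decodes $b$, which by the remark at the start of Section~\ref{sec_Achieve} (no interference past layer $\mcal{L}_{k^*}$) suffices for $(1,1)$.

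First consider $\mcal{P}_{12} = \emptyset$, i.e.\ $\msf{v}_1^*$ and $\msf{v}_2^*$ have disjoint parent sets. Then the cut $\mincut(\msf{s}_1,\msf{s}_2;\mcal{P}(\msf{v}_1^*)\cup\mcal{P}(\msf{v}_2^*))$ is still $2$ — indeed it equals $2$ by Lemma~\ref{lem_critical} applied to each critical node together with the fact that the two sets lie in $\mcal{L}_{k^*-1}$ and are disjoint, so the union cannot have smaller rank-mincut than either piece without $\msf{v}_1^*$ or $\msf{v}_2^*$ being omniscient. Let all nodes up to layer $\mcal{L}_{k^*-1}$ perform RLC; by Lemma~\ref{lem_two_source}(a) the receptions across $\mcal{P}(\msf{v}_1^*)\cup\mcal{P}(\msf{v}_2^*)$ collectively carry both symbols. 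Since the parent sets are disjoint, the transmissions feeding $\msf{v}_1^*$ and those feeding $\msf{v}_2^*$ can be adjusted independently; apply Lemma~\ref{lem_two_source}(b) (or (c)/(d), handling separately $\msf{s}_1\msf{s}_2$-reachable parents versus $\msf{s}_i$-only-reachable ones) twice, once to make $\msf{v}_1^*$ receive a clean $a$ and once to make $\msf{v}_2^*$ receive a clean $b$. One subtlety: for Lemma~\ref{lem_two_source}(b) we need $\mincut(\msf{s}_1,\msf{s}_2;\mcal{P}(\msf{v}_i^*)) = 2$ for both $i$; this is exactly Lemma~\ref{lem_critical} since $k_i^* = k^* \ge 2$. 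The nodes inside $\mcal{P}(\msf{v}_1^*)$ and $\mcal{P}(\msf{v}_2^*)$ are the ones doing special coding, and they lie in a single layer.

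Next consider $\mcal{P}_1^{\msf{s}_1} = \emptyset$ (the case $\mcal{P}_2^{\msf{s}_2}=\emptyset$ is symmetric). This says $\mcal{P}^{\msf{s}_1}(\msf{v}_1^*) \subseteq \mcal{P}^{\msf{s}_1}(\msf{v}_2^*)$, so every $\msf{s}_1$-reachable parent of $\msf{v}_1^*$ is also an $\msf{s}_1$-reachable parent of $\msf{v}_2^*$; since $\msf{v}_1^*$ is not omniscient we in fact have $\mcal{P}^{\msf{s}_1}(\msf{v}_1^*)\subsetneq\mcal{P}^{\msf{s}_1}(\msf{v}_2^*)$, and there is a parent $\msf{w}\in\mcal{P}^{\msf{s}_1}(\msf{v}_2^*)\setminus\mcal{P}^{\msf{s}_1}(\msf{v}_1^*)$. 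The idea is: let everyone up to $\mcal{L}_{k^*-1}$ do RLC; have $\msf{v}_1^*$'s parents neutralize $b$ at $\msf{v}_1^*$ — possible by Lemma~\ref{lem_two_source}(b)/(c) with $\mincut(\msf{s}_1,\msf{s}_2;\mcal{P}(\msf{v}_1^*))=2$ — while leaving the $a$-coefficient nonzero; then check that with the same parent transmissions, $\msf{v}_2^*$ still receives a linear combination in which $b$ appears with nonzero coefficient, because the extra parent(s) $\msf{w}$ contribute an $\msf{s}_2$-tainted (in fact also $\msf{s}_1$-tainted, but the point is $b$-tainted via Lemma~\ref{lem_reachability}) term not cancelled at $\msf{v}_2^*$. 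Hence $\msf{v}_2^*$ can, using its own parents (Lemma~\ref{lem_two_source}(b)), still be driven to a clean $b$; but we must be careful that the neutralization already committed at $\msf{v}_1^*$ constrains the shared parents $\mcal{P}_{12}$. This is where the "$\mathrm{Pmc}$/clone" machinery enters: since $\mcal{P}_1^{\msf{s}_1}=\emptyset$, the combination forced into $\msf{v}_1^*$ is (up to scaling) a function of what the $\msf{s}_2$-reachable-only parents send, and one argues the residual degree of freedom at $\msf{v}_2^*$ suffices. The cleanest route is probably to reduce this case to Lemma~\ref{lem_two_source}(d) applied at $\msf{v}_2^*$ after fixing the $\msf{v}_1^*$-side coefficients.

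\textbf{Main obstacle.} The hard part is the second case, $\mcal{P}_1^{\msf{s}_1}=\emptyset$ (or $\mcal{P}_2^{\msf{s}_2}=\emptyset$): the parent sets of the two critical nodes overlap in $\mcal{P}_{12}$, so the special coding chosen at $\mcal{P}(\msf{v}_1^*)$ to neutralize $b$ is not free at $\mcal{P}(\msf{v}_2^*)$, and one has to verify that, generically, the coefficient of $b$ (resp.\ $a$) at $\msf{v}_2^*$ is still nonzero after that commitment — i.e.\ a genericity/Schwartz–Zippel argument over $\mbb{F}_{2^r}$ that the relevant determinant polynomial is not identically zero. The structural input making that polynomial nonzero is precisely $\mcal{P}_1^{\msf{s}_1}=\emptyset$ together with non-omniscience of $\msf{v}_2^*$, via Lemma~\ref{lem_reachability} and Lemma~\ref{lem_mincut_pair} (to pick a good pair of parents realizing rank $2$). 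I expect the write-up to split $\mcal{P}(\msf{v}_2^*)$ into $\mcal{P}_{12}$, $\mcal{P}_2$, and to track which of these are $\msf{s}_1\msf{s}_2$-reachable versus $\msf{s}_i$-only-reachable, applying the appropriate sub-part of Lemma~\ref{lem_two_source} to each group; the bookkeeping, not any deep new idea, is the real cost.
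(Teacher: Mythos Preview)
Your handling of $\mcal{P}_{12}=\emptyset$ is fine and matches the paper's ``quite obvious'' dismissal. The problem is your treatment of $\mcal{P}_1^{\msf{s}_1}=\emptyset$; there are two concrete gaps, and the paper's argument avoids both by a different decomposition.

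First, the extra parent $\msf{w}\in\mcal{P}^{\msf{s}_1}(\msf{v}_2^*)\setminus\mcal{P}^{\msf{s}_1}(\msf{v}_1^*)$ you exhibit is only guaranteed to be $\msf{s}_1$-reachable, not $\msf{s}_2$-reachable, so it need not carry any $b$-component; your sentence ``the extra parent(s) $\msf{w}$ contribute an $\msf{s}_2$-tainted \ldots term'' is unjustified. Second, and more seriously, your strategy ``neutralize $b$ at $\msf{v}_1^*$ first, then drive $\msf{v}_2^*$ to clean $b$ via Lemma~\ref{lem_two_source}(b)'' cannot work as stated: once the nodes in $\mcal{P}(\msf{v}_1^*)\supseteq\mcal{P}_{12}$ have committed their coefficients, the only remaining freedom at $\msf{v}_2^*$ comes from $\mcal{P}_2$, and invoking Lemma~\ref{lem_two_source}(b) with $\mcal{U}=\mcal{P}_2$ would require $\mincut(\msf{s}_1,\msf{s}_2;\mcal{P}_2)=2$, which is not given. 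You correctly flag this as the ``main obstacle'' but do not resolve it; a Schwartz--Zippel argument alone is not enough without first identifying a structural reason the relevant polynomial is not identically zero, and that structural reason depends on a finer case split than you have.

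The paper proceeds differently. It combines $\mcal{P}_1^{\msf{s}_1}=\emptyset$ (giving $\mcal{P}^{\msf{s}_1}(\msf{v}_1^*)\subsetneq\mcal{P}^{\msf{s}_1}(\msf{v}_2^*)$) with non-omniscience of $\msf{v}_1^*$ on the $\msf{s}_2$-side to split into sub-cases according to whether $\mcal{P}_2^{\msf{s}_2}\ne\emptyset$ or $\mcal{P}^{\msf{s}_2}(\msf{v}_2^*)\subsetneq\mcal{P}^{\msf{s}_2}(\msf{v}_1^*)$; together with the symmetric case this yields three distinct scenarios. In each, rather than neutralizing wholesale, the paper uses Lemma~\ref{lem_mincut_pair} to pick specific pairs $(\msf{u}_1,\msf{w}_1)$ and $(\msf{u}_2,\msf{w}_2)$ realizing mincut $2$, with $\msf{u}_2,\msf{w}_2\in\mcal{P}_2$ chosen so they do not affect $\msf{v}_1^*$ at all. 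When no such pair in $\mcal{P}_2$ has mincut $2$, the paper shows this forces additional structure ($\mcal{P}^{\msf{s}_2}(\msf{v}_2^*)\setminus\mcal{P}^{\msf{s}_2}(\msf{v}_1^*)=\mcal{P}^{\msf{s}_1}(\msf{v}_2^*)\setminus\mcal{P}^{\msf{s}_1}(\msf{v}_1^*)$, so $\msf{u}_2$ is $\msf{s}_1\msf{s}_2$-reachable), then locates a helper $\msf{w}_2'\in\mcal{P}_{12}$ and applies Lemma~\ref{lem_two_source}(d) to $\msf{u}_2$ alone. The key point is that the paper never commits the shared parents before ensuring the other side has enough residual freedom; it orchestrates the order node-by-node.
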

\begin{proof}
See appendix.
\end{proof}

In the following we focus on the case where $\mcal{P}_1^{\msf{s}_1}\ne\emptyset$, $\mcal{P}_2^{\msf{s}_2}\ne\emptyset$, and $\mcal{P}_{12}\ne\emptyset$.
We then show the following claim.
\begin{claim}\label{claim_SameLay_2}
Consider the conditions
\begin{description}
\item [$A1$] $\forall\ \msf{u}_1 \in \mcal{P}_1^{\msf{s}_1}$, $\msf{u}_1$ is $\msf{s}_1$-only-reachable.
\item [$A2$] $\forall\ \msf{u}_2 \in \mcal{P}_2^{\msf{s}_2}$, $\msf{u}_2$ is $\msf{s}_1\msf{s}_2$-reachable.
\item [$B1$] $\forall\ \msf{u}_1 \in \mcal{P}_1^{\msf{s}_1}$, $\msf{u}_1$ is $\msf{s}_1\msf{s}_2$-reachable.
\item [$B2$] $\forall\ \msf{u}_2 \in \mcal{P}_2^{\msf{s}_2}$, $\msf{u}_2$ is $\msf{s}_2$-only-reachable.
\end{description}
Let $A = A1\wedge A2$ and $B = B1\wedge B2$. Then the negation of $A\vee B$ implies that $(1,1)$ is achievable.
\end{claim}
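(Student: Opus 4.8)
The goal is to show that if $\neg(A\vee B)$ holds, i.e. $(\neg A1\vee\neg A2)\wedge(\neg B1\vee\neg B2)$, then $(1,1)$ is achievable by a linear scheme over $\mbb{F}_{2^r}$ in which only the parents of the two critical nodes perform special coding and all other nodes perform RLC. I would begin by recording the structural facts already established: $\mcal{K}(\msf{v}_1^*)$ and $\mcal{K}(\msf{v}_2^*)$ partition $\mcal{L}_{k^*}$, and $\mcal{P}_1^{\msf{s}_1}\ne\emptyset$, $\mcal{P}_2^{\msf{s}_2}\ne\emptyset$, $\mcal{P}_{12}\ne\emptyset$ (Claim~\ref{claim_SameLay_1} handles the degenerate cases). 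It suffices to design transmissions for the nodes in $\mcal{P}(\msf{v}_1^*)\cup\mcal{P}(\msf{v}_2^*)\subseteq\mcal{L}_{k^*-1}$ so that $\msf{v}_1^*$ receives (a non-zero multiple of) $a$ with no $b$-component, and $\msf{v}_2^*$ receives (a non-zero multiple of) $b$ with no $a$-component, while everything up to layer $\mcal{L}_{k^*-2}$ runs RLC. Since $\mincut(\msf{s}_1,\msf{s}_2;\mcal{P}(\msf{v}_i^*))=2$ by Lemma~\ref{lem_critical}, Lemma~\ref{lem_two_source}(b) tells us that the nodes in $\mcal{P}(\msf{v}_i^*)$ can \emph{collectively} realize any desired linear combination at $\msf{v}_i^*$; the subtlety is that $\mcal{P}_{12}$ is shared, so the choices at $\msf{v}_1^*$ and $\msf{v}_2^*$ are coupled through those common parents.

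The plan is a case analysis on which of the four conditions fail. By the symmetry $1\leftrightarrow2$ it is enough to treat, say, the disjunction $\neg A1$ (some $\msf{u}_1\in\mcal{P}_1^{\msf{s}_1}$ is $\msf{s}_1\msf{s}_2$-reachable) against $\neg B1$ (some $\msf{u}_1'\in\mcal{P}_1^{\msf{s}_1}$ is $\msf{s}_1\msf{s}_2$-reachable --- wait, $\neg B1$ says some $\msf{u}_1$ is \emph{not} $\msf{s}_1\msf{s}_2$-reachable, hence $\msf{s}_1$-only-reachable since it lies in $\mcal{P}^{\msf{s}_1}(\msf{v}_1^*)$), and similarly $\neg A2$ ($\exists\msf{u}_2\in\mcal{P}_2^{\msf{s}_2}$ that is $\msf{s}_2$-only-reachable) against $\neg B2$ ($\exists\msf{u}_2\in\mcal{P}_2^{\msf{s}_2}$ that is $\msf{s}_1\msf{s}_2$-reachable). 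So there are essentially four sub-cases: $(\neg A1\wedge\neg B1)$, $(\neg A1\wedge\neg B2)$, $(\neg A2\wedge\neg B1)$, $(\neg A2\wedge\neg B2)$, and the last pairs off by symmetry with the first family. In each sub-case I want to exhibit a node in $\mcal{P}_1^{\msf{s}_1}$ (respectively $\mcal{P}_2^{\msf{s}_2}$) with a prescribed reachability type, and use it as the ``free handle'': by Lemma~\ref{lem_two_source}(c)--(d), a single node $\msf{u}\in\mcal{P}(\msf{v})$ whose siblings either stay silent or do RLC can steer $\msf{v}$'s reception to any linear combination independent of $\msf{u}$'s own reception, and in particular an $\msf{s}_1\msf{s}_2$-reachable such node can deliver a clean copy of either source's symbol. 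The idea is then: set all of $\mcal{P}_{12}$ and $\mcal{P}_2$ to silent (or RLC) and use the handle in $\mcal{P}_1^{\msf{s}_1}$ to give $\msf{v}_1^*$ a clean $a$; simultaneously set $\mcal{P}_{12}$ and $\mcal{P}_1$ to silent (or RLC) and use a handle in $\mcal{P}_2^{\msf{s}_2}$ to give $\msf{v}_2^*$ a clean $b$. The conflict is over $\mcal{P}_{12}$: if it must be silent for one critical node it might be needed non-silent for the other. This is where the precise reachability structure provided by $\neg(A\vee B)$ is used --- it guarantees that at least one side has a handle of the strong ($\msf{s}_1\msf{s}_2$-reachable) type that can absorb whatever $\mcal{P}_{12}$ contributes via zero-forcing/neutralization, so $\mcal{P}_{12}$ can be left doing RLC and only the two handles do special coding.

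Concretely, for the sub-case $\neg B1\wedge\neg B2$ (there is an $\msf{s}_1$-only-reachable $\msf{u}_1\in\mcal{P}_1^{\msf{s}_1}$ and an $\msf{s}_1\msf{s}_2$-reachable $\msf{u}_2\in\mcal{P}_2^{\msf{s}_2}$): run RLC everywhere except at $\msf{u}_1$ and $\msf{u}_2$; by Lemma~\ref{lem_reachability} the aggregate reception at $\msf{v}_1^*$ from its RLC parents (which include $\mcal{P}_{12}$, all $\msf{s}_2$-reachable, plus the $\msf{s}_1$-only part excluding $\msf{u}_1$) has, w.h.p., the form $\gamma a+\delta b$ with $\delta\ne0$; since $\msf{u}_1$ is $\msf{s}_1$-only-reachable its contribution is $\epsilon a$ with $\epsilon$ a free (in fact $\msf{s}_1$-proportional) quantity, and because $\mincut(\msf{s}_1,\msf{s}_2;\mcal{P}(\msf{v}_1^*))=2$ the reception of $\msf{u}_1$ is linearly independent of the combined RLC reception w.h.p., so by Lemma~\ref{lem_two_source}(c) $\msf{u}_1$ cannot by itself clean out $b$ --- so instead I route the $b$-cancellation through a sibling; better, I use $\msf{u}_2\in\mcal{P}_2^{\msf{s}_2}$, which affects only $\msf{v}_2^*$, to fix $\msf{v}_2^*$, and for $\msf{v}_1^*$ I additionally recruit one $\msf{s}_1\msf{s}_2$-reachable node --- and here is the point where I would invoke $\neg A$ to guarantee one exists among the relevant parents, or fall back to Lemma~\ref{lem_mincut_pair} to pick a pair $\{\msf{u},\msf{w}\}\subseteq\mcal{P}(\msf{v}_1^*)$ with $\mincut(\msf{s}_1,\msf{s}_2;\msf{u},\msf{w})=2$ and apply part (b) of Lemma~\ref{lem_two_source} to that pair to realize a clean $a$. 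I expect the bookkeeping of ``which nodes are allowed to be non-silent at $\msf{v}_1^*$ without disturbing $\msf{v}_2^*$, and vice versa'' --- i.e. disentangling the coupling through $\mcal{P}_{12}$ in each of the four sub-cases --- to be the main obstacle; the rest is a routine appeal to the genericity (high-probability) statements of Lemmas~\ref{lem_reachability} and~\ref{lem_two_source} over a sufficiently large $\mbb{F}_{2^r}$, closing with a union bound over the finitely many events that need to hold.
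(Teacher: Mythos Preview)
Your overall framework---writing $\neg(A\vee B)=(\neg A1\vee\neg A2)\wedge(\neg B1\vee\neg B2)$ and distributing into the four sub-cases $(\neg A1\wedge\neg B1)$, $(\neg A1\wedge\neg B2)$, $(\neg A2\wedge\neg B1)$, $(\neg A2\wedge\neg B2)$---matches the paper exactly, and the toolkit you cite (Lemmas~\ref{lem_two_source}, \ref{lem_mincut_pair}) is the right one. But the execution has a real gap, and it shows up precisely where you flag ``the main obstacle.''

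First, your concrete sub-case is mislabeled: ``$\neg B1\wedge\neg B2$'' is not one of the four cases in your own decomposition. The description you give (an $\msf{s}_1$-only $\msf{u}_1\in\mcal{P}_1^{\msf{s}_1}$ and an $\msf{s}_1\msf{s}_2$-reachable $\msf{u}_2\in\mcal{P}_2^{\msf{s}_2}$) is literally $\neg B1\wedge\neg B2$, but that conjunction does not by itself give you $\neg A$, so you are not actually analyzing any one of the four cases. This is why the argument then drifts: you cannot close the $b$-cancellation at $\msf{v}_1^*$ with only an $\msf{s}_1$-only handle $\msf{u}_1$, and the appeal to ``invoke $\neg A$ to guarantee one exists'' is exactly the missing step you should have pinned down at the start of the sub-case.

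Second, and more substantively, you have not identified the structural observation that makes the two ``hard'' cases clean. In case $\neg A1\wedge\neg B1$, the point is that $\mcal{P}_1^{\msf{s}_1}$ contains \emph{both} an $\msf{s}_1\msf{s}_2$-reachable node (from $\neg A1$) and an $\msf{s}_1$-only node (from $\neg B1$); these are different reachability types, hence $\mincut(\msf{s}_1,\msf{s}_2;\mcal{P}_1^{\msf{s}_1})=2$. Since $\mcal{P}_1^{\msf{s}_1}\subseteq\mcal{P}_1$ does not touch $\msf{v}_2^*$, this completely decouples the problem: first use all of $\mcal{P}(\msf{v}_2^*)$ (Lemma~\ref{lem_two_source}(b)) to place a clean $b$ at $\msf{v}_2^*$; this fixes the coefficients on $\mcal{P}_{12}$, but now $\mcal{P}_1^{\msf{s}_1}$ alone, having mincut $2$, can synthesize \emph{any} linear combination and in particular cancel whatever $\mcal{P}_{12}$ dumps into $\msf{v}_1^*$ while leaving a nonzero $a$-coefficient. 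Case $\neg A2\wedge\neg B2$ is symmetric. Case $\neg A2\wedge\neg B1$ is trivial (an $\msf{s}_1$-only node in $\mcal{P}_1$ and an $\msf{s}_2$-only node in $\mcal{P}_2$). Case $\neg A1\wedge\neg B2$ is the one where both handles are $\msf{s}_1\msf{s}_2$-reachable, one in $\mcal{P}_1$ and one in $\mcal{P}_2$; here you pair each with a partner via Lemma~\ref{lem_mincut_pair} and apply Lemma~\ref{lem_two_source}(d) on each side (letting the partners do RLC), exactly as you sketched for your ``handle'' idea---but this only works cleanly because each handle lies in the private set $\mcal{P}_1$ or $\mcal{P}_2$, not in $\mcal{P}_{12}$. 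Once you organize the four cases this way, there is no residual ``bookkeeping of which nodes are allowed to be non-silent''; the coupling through $\mcal{P}_{12}$ is dissolved by the mincut-$2$ observation in the first and fourth cases and by the private placement of the handles in the second.
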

\underline{Remark}: Note that $A\vee B$ is implied by the disjunction of $\textrm{T}^{(12)}$, $\textrm{T}^{(21)}$, $\textrm{P}^{(12)}$, and $\textrm{P}^{(21)}$. Therefore this claim proves $(1,1)$-achievability for some cases.
\begin{proof}
See appendix.
\end{proof}

So far we have demonstrated $(1,1)$-achievability when condition $A\vee B$ is not satisfied. Since $A$ and $B$ are disjoint, we can separate into two different cases. Besides, discussion on one case will lead to similar arguments for the other case by symmetry.

{\flushleft \bf Case $A$}: $\forall\ \msf{u}_1 \in \mcal{P}_1^{\msf{s}_1}$, $\msf{u}_1$ is $\msf{s}_1$-only-reachable, and $\forall\ \msf{u}_2 \in \mcal{P}_2^{\msf{s}_2}$, $\msf{u}_2$ is $\msf{s}_1\msf{s}_2$-reachable.\par
For this case, if $\mcal{P}_1\setminus\mcal{P}_1^{\msf{s}_1}\ne \emptyset$, that is, there exists a node in $\mcal{P}_1$ and it is $\msf{s}_2$-only-reachable, then $\mincut\lp \msf{s}_1,\msf{s}_2; \mcal{P}_1\rp = 2$. We can achieve $(1,1)$, by first arranging the transmission of $\mcal{P}(\msf{v}_2^*)$ so that $\msf{v}_2^*$ can decode $b$ and then arranging the transmission of $\mcal{P}_1$ to form any linear combination of $a$ and $b$; in particular, the one that combined with the transmission from $\mcal{P}_{12}$ forms $a$ at $\msf{v}_1^*$. If $\mcal{P}_2\setminus\mcal{P}_2^{\msf{s}_2}\ne \emptyset$, that is, there exists a node in $\mcal{P}_2$ and it is $\msf{s}_1$-only-reachable, then $\mincut\lp \msf{s}_1,\msf{s}_2; \mcal{P}_2\rp = 2$. $(1,1)$ is then achievable by a similar argument as above.

We now narrow down to the case $\forall\ \msf{u}_1 \in \mcal{P}_1$, $\msf{u}_1$ is $\msf{s}_1$-only-reachable, and $\forall\ \msf{u}_2 \in \mcal{P}_2$, $\msf{u}_2$ is $\msf{s}_1\msf{s}_2$-reachable. If $\mincut\lp \msf{s}_1,\msf{s}_2; \mcal{P}_2\rp = 2$, obviously $(1,1)$ is achievable, as $\msf{v}_1^*$ can always get $a$ from $\mcal{P}_1$ (whose transmission does not affect $\msf{v}_2^*$) and one can arrange $\mcal{P}_2$'s transmission (which does not affect $\msf{v}_1^*$) to ensure $\msf{v}_2^*$ decode $b$. If $\mincut\lp \msf{s}_1,\msf{s}_2; \mcal{P}_{12}\rp = 2$, we can achieve $(1,1)$ by arranging the transmission of $\mcal{P}_{12}$ so that their aggregate is $a$. Hence $\msf{v}_1^*$ can decode $a$. Then nodes in $\mcal{P}_2$ just scale their received linear combinations so that $a$ gets neutralized at $\msf{v}_2^*$ and only $b$ is left. If $\mincut\lp \msf{s}_1,\msf{s}_2; \mcal{P}_{12}\rp = 1$, we identify $\msf{w}_{12} = \pmc\lp \mcal{P}_{12}\rp$. If $\mcal{K}^{\msf{s}_2}(\msf{w}_{12})$ does not form a $(\msf{s}_2;\msf{d}_2)$-vertex-cut, we can arrange its parents' transmission so that $\msf{w}_{12}$ can decode $a$, and at the same time $\mcal{P}_2$ can receive a linear combination with a non-zero $b$-coefficient. Hence nodes in $\mcal{P}_{12}$ can send out a scaled version of $a$ to neutralize $a$ at $\msf{v}_2^*$ if necessary, and $\msf{v}_1^*$ can always obtain $a$ from $\mcal{P}_1$.

So far we have shown that in Case $A$, if one of the following is violated, then $(1,1)$ is achievable:
\begin{itemize}
\item $\mcal{P}_1$ is $\msf{s}_1$-only-reachable
\item $\mcal{P}_2$ is $\msf{s}_1\msf{s}_2$-reachable, and $\mincut\lp\msf{s}_1,\msf{s}_2;\mcal{P}_2\rp = 1$
\item $\mincut\lp\msf{s}_1,\msf{s}_2;\mcal{P}_{12}\rp = 1$, $\mcal{K}^{\msf{s}_2}(\msf{w}_{12})$ forms a $(\msf{s}_2;\msf{d}_2)$-vertex-cut
\end{itemize} 

To complete the proof of $(1,1)$-achievability, we need to show that if $\msf{u}_{21}:=\pmc\lp\mcal{P}_2\rp$ is not ommiscient in $\mcal{G}_{12}$, then $(1,1)$ can be achieved. We can simply arrange the transmission of $\mcal{P}_{12}$ so that their aggregate becomes $0$ at $\msf{v}_1^*$. Effectively we are in $\mcal{G}_{12}$ with this special linear coding operation. 
Since in $\mcal{G}_{12}$, $\msf{d}_1$ is $\msf{s}_1$-only-reachable and $\msf{u}_{21}$ is the new critical node of $\msf{d}_2$, by Lemma~\ref{k_1^*=0} we know that $(1,1)$ can be achieved in $\mcal{G}_{12}$. We then translate the linear coding scheme in $\mcal{G}_{12}$ back to a linear coding scheme in $\mcal{G}$.
%Since $\mcal{K}^{\msf{s}_1}_{\mcal{G}_{12}}\lp \msf{u}_{21}\rp$ does not form a $\lp\msf{s}_1;\msf{d}_1\rp$-vertex-cut in $\mcal{G}_{12}$, we can arrange the transmission of $\msf{u}_{21}$'s parents so that it can decode $b$, and keep $a$ reaching $\mcal{P}_1$ at the same time. Hence $\msf{v}_1^*$ can get $a$ from $\mcal{P}_1$, and $\msf{v}_2^*$ can get $b$ from $\mcal{P}_2$.

The next thing to show for Case $A$: if a network is in $\textrm{P}^{(12)}\setminus\textrm{T}^{(21)}$, then $(1,1/2)$ can be achieved. To show it, we employ a two-time-slot coding scheme. We aim to deliver two symbols $a_1$ and $a_2$ for user 1 and one symbol $b$ for user 2 over two symbol time slots. Symbols are drawn from the extension field $\mbb{F}_{2^r}$.  In the first time slot, we do RLC with $\msf{s}_1$ transmitting $a_1$ and $\msf{s}_2$ transmitting $b$, up to layer $\mcal{L}_{k^*-1}$. Pick one node $\msf{w}\in\mcal{P}_{12}$ and one node $\msf{u}\in\mcal{P}_2$. Keep other nodes in $\mcal{P}_{12}$ and $\mcal{P}_2$ silent, while nodes in $\mcal{P}_1$ do RLC. We turn off the transmission of $\msf{w}$. $\msf{v}_1^*$ can then decode $a_1$. In the second time slot again we do RLC with $\msf{s}_1$ transmitting $a_2$ and $\msf{s}_2$ transmitting $b$, up to layer $\mcal{L}_{k^*-1}$. We use the two linear combinations $\msf{w}$ receives over the two time slots to zero-force (ZF) $b$ and produce a linear combination of $a_1$ and $a_2$: (the superscripts of $\beta$'s denote the time indices)
\begin{align}
&\beta_{\msf{w},\msf{s}_1}^{(1)}\cdot a_1 + \beta_{\msf{w},\msf{s}_2}^{(1)} \cdot b;\ \beta_{\msf{w},\msf{s}_1}^{(2)}\cdot a_2 + \beta_{\msf{w},\msf{s}_2}^{(2)} \cdot b \notag\\
&\overset{\rm{ZF}}{\Longrightarrow}\ \beta_{\msf{w},\msf{s}_1}^{(2)}\cdot a_2 + \frac{\beta_{\msf{w},\msf{s}_1}^{(1)}}{\beta_{\msf{w},\msf{s}_2}^{(1)}}\cdot \beta_{\msf{w},\msf{s}_2}^{(2)} \cdot a_1. \label{eq_LC1}
\end{align} 
$\msf{w}$ then scales this ZF output and sends it out. Hence $\msf{v}_1^*$ can use $a_1$ as side information to decode $a_2$.

As for user 2, in the first time slot $\msf{v}_2^*$ receives a linear equation from $\msf{u}$: $\beta_{\msf{u},\msf{s}_1}^{(1)}\cdot a_1 + \beta_{\msf{u},\msf{s}_2}^{(1)} \cdot b$. In the second time slot $\msf{u}$ receives $\beta_{\msf{u},\msf{s}_1}^{(2)}\cdot a_2 + \beta_{\msf{u},\msf{s}_2}^{(2)} \cdot b$. $\msf{u}$ makes use of of the two linear combinations to zero-force $b$ and generate a linear combination of $a_1$ and $a_2$:
\begin{align}
&\beta_{\msf{u},\msf{s}_1}^{(1)}\cdot a_1 + \beta_{\msf{u},\msf{s}_2}^{(1)} \cdot b;\ \beta_{\msf{u},\msf{s}_1}^{(2)}\cdot a_2 + \beta_{\msf{u},\msf{s}_2}^{(2)} \cdot b \notag\\
&\overset{\rm{ZF}}{\Longrightarrow}\ \beta_{\msf{u},\msf{s}_1}^{(2)}\cdot a_2 + \frac{\beta_{\msf{u},\msf{s}_1}^{(1)}}{\beta_{\msf{u},\msf{s}_2}^{(1)}}\cdot \beta_{\msf{u},\msf{s}_2}^{(2)} \cdot a_1. \label{eq_LC2}
\end{align} 
As long as the two linear combinations in \eqref{eq_LC1} and \eqref{eq_LC2} are not aligned, $\msf{u}$ can scale \eqref{eq_LC2} properly to form $a_1$ at $\msf{v}_2^*$ in the second time slot. Then with reception of the first time slot, $\msf{v}_2^*$ can decode $b$.

The two linear combinations in \eqref{eq_LC1} and \eqref{eq_LC2} are aligned if and only if the determinant
\begin{align*}
&\begin{vmatrix}
\beta_{\msf{w},\msf{s}_1}^{(2)}& \frac{\beta_{\msf{w},\msf{s}_1}^{(1)}}{\beta_{\msf{w},\msf{s}_2}^{(1)}}\cdot \beta_{\msf{w},\msf{s}_2}^{(2)} \\
\beta_{\msf{u},\msf{s}_1}^{(2)}& \frac{\beta_{\msf{u},\msf{s}_1}^{(1)}}{\beta_{\msf{u},\msf{s}_2}^{(1)}}\cdot \beta_{\msf{u},\msf{s}_2}^{(2)}
\end{vmatrix} = 0 \iff \\
&\beta_{\msf{u},\msf{s}_1}^{(1)}\beta_{\msf{w},\msf{s}_2}^{(1)}\beta_{\msf{w},\msf{s}_1}^{(2)}\beta_{\msf{u},\msf{s}_2}^{(2)} = \beta_{\msf{u},\msf{s}_2}^{(1)}\beta_{\msf{w},\msf{s}_1}^{(1)}\beta_{\msf{w},\msf{s}_2}^{(2)}\beta_{\msf{u},\msf{s}_1}^{(2)},
\end{align*}
which is of very low probability due to the same reason in Appendix~\ref{app_Pf_NonzeroDet}.

Therefore, we show that in Case $A$, if $\textrm{T}^{(12)}\vee \textrm{P}^{(12)}$ is violated, then $(1,1)$ can be achieved; if $\textrm{T}^{(12)}$ is violated, then $(1,1/2)$ can be achieved. It remains to show that if there is no omniscient node, then in Case $A$, $(1/2,1)$ is always achievable. 

We aim to deliver one symbol $a$ for user 1 and two symbols $b_1,b_2$ for user 2 over two time slots. 
%First, locate $\msf{w}_2$ so that $\mincut\lp\msf{s}_1,\msf{s}_2;\msf{u}_2,\msf{w}_2\rp=2$. If there exists such a $\msf{w}_2$ that is $\msf{s}_1$-only-reachable or not a parent of $\msf{v}_1^*$, then clearly $(1,1)$ is achievable. If not, 
Pick nodes $\msf{u}_1\in\mcal{P}_1,\msf{u}_2\in\mcal{P}_2,\msf{w}_2\in\mcal{P}_{12}$. Both $\msf{u}_2$ and $\msf{w}_2$ zero-force user 1's symbol $a$ and form a linear combination of $b_1,b_2$. These two linear equations are linearly independent with high probability, as shown in Appendix~\ref{app_Pf_NonzeroDet}. $\msf{w}_2$ transmits in the first time slot, while $\msf{u}_1$ and $\msf{u}_2$ transmit in the second time slot. Therefore $\msf{v}_1^*$ can obtain $a$, $\msf{v}_2^*$ can solve $b_1$ and $b_2$, and $(1/2,1)$ is achievable.

%Besides, from the result in Section.~\ref{subsec_SameLay} Case 1) (iii), we see that $(1/2,1)$ is always achievable in Case $A$.

{\flushleft \bf Case $B$}: $\forall\ \msf{u}_2 \in \mcal{P}_2^{\msf{s}_2}$, $\msf{u}_2$ is $\msf{s}_2$-only-reachable, and $\forall\ \msf{u}_1 \in \mcal{P}_1^{\msf{s}_1}$, $\msf{u}_1$ is $\msf{s}_1\msf{s}_2$-reachable.\par
Similar to Case $A$, we show that in Case $B$, if $\textrm{T}^{(21)}\vee \textrm{P}^{(21)}$ is violated, then $(1,1)$ can be achieved; if $\textrm{T}^{(21)}$ is violated, then $(1/2,1)$ can be achieved. Besides, if there is no omniscient node, then in Case $B$, $(1,1/2)$ is always achievable.

\subsection{$k_1^* < k_2^*$}
%\subsubsection{$(1,1)$-Achievability}
%{\flushleft \underline{$(1,1)$-Achievability}}:

Since $\msf{v}_1^*$ is not omniscient,
%does not satisfy the ON condition, 
$\mcal{K}^{\msf{s}_2}(\msf{v}_1^*)$ does not form a $\lp\msf{s}_2;\msf{d}_2\rp$-vertex-cut, which is equivalent to 
\begin{align*}
%\exists\ \msf{u}_2\in \mcal{L}_{k_1^*}\setminus \mcal{K}(\msf{v}_1^*) \textrm{ such that } 
%\lbp\begin{array}{l}\mcal{P}^{\msf{s}_2}(\msf{u}_2)\ne\emptyset,\textrm{ and}\\ \mcal{P}^{\msf{s}_2}(\msf{u}_2)\ne \mcal{P}^{\msf{s}_2}(\msf{v}_1^*)\end{array}\right.
\exists\ \msf{u}_2\in \mcal{L}_{k_1^*}\setminus \mcal{K}(\msf{v}_1^*) \textrm{ so that } \mcal{P}^{\msf{s}_2}(\msf{u}_2)\ne\emptyset,\mcal{P}^{\msf{s}_2}(\msf{u}_2)\ne \mcal{P}^{\msf{s}_2}(\msf{v}_1^*).
%\exists\ \msf{u}_2\in \mcal{L}_{k_1^*-1}\setminus \mcal{P}(\msf{v}_1^*) \textrm{ that is } \msf{s}_2\textrm{-reachable and can reach } \msf{v}_2^*.
\end{align*}

%\begin{figure}[htbp]
%{\center
%\subfigure[$\mcal{P}(\msf{v}_2^*)\cap \cloud_1 \ne \emptyset$]{\includegraphics[width=3in]{CriticalNodeDiffLayer_1_1.pdf}}
%\subfigure[$\mcal{P}(\msf{v}_2^*)\cap \cloud_1 = \emptyset$]{\includegraphics[width=3in]{CriticalNodeDiffLayer_1_2.pdf}}
%%\subfigure[Downlink Scenario]{\includegraphics[width=2.35in]{Gaussian_DL_n.pdf}}
%\caption{Critical Nodes in Different Layers. Red nodes in $\mcal{L}_{k_1^*-1}$ are $\msf{s}_2$ reachable.}
%%\caption{Critical Nodes in Different Layers, the first situation: $\mcal{P}^{\msf{s}_2}(\msf{u}_2)\setminus \mcal{P}^{\msf{s}_2}(\msf{v}_1^*)\ne \emptyset$. Red nodes in $\mcal{L}_{k_1^*-1}$ are $\msf{s}_2$ reachable.}
%\label{fig_DiffLayer_1}
%}
%\end{figure}

The following lemma makes sure that $\msf{u}_2$ can still receive a linear combination where user 2's symbol has a non-zero coefficient with high probability even if $\mcal{P}^{\msf{s}_2}(\msf{v}_1^*)$ do some special linear coding.
\begin{lemma}\label{lem_special}
Consider all nodes doing RLC for each source sending one symbol up to $\mcal{L}_{k_1^*-1}$ including $\mcal{L}_{k_1^*-1}$, except $\mcal{P}^{\msf{s}_2}(\msf{v}_1^*)$. If $\mcal{P}^{\msf{s}_2}(\msf{u}_2)\ne \mcal{P}^{\msf{s}_2}(\msf{v}_1^*)$, then it is possible with high probability that $\mcal{P}^{\msf{s}_2}(\msf{v}_1^*)$ can arrange their transmission so that $\msf{v}_1^*$ receives a linear combination solely of user 1's symbol and $\msf{u}_2$ receives a linear combination of at least user 2's symbol, that is, the coefficient of user 2's symbol is non-zero.
\end{lemma}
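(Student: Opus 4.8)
The plan is to set up the problem as one about ranks of submatrices of the global-coefficient matrix at layer $\mcal{L}_{k_1^*}$, and then use a Schwartz--Zippel / determinant-nonvanishing argument (the same one invoked in Appendix~\ref{app_Pf_NonzeroDet}) to conclude that the desired configuration occurs with high probability over the random choices. First I would recall that under RLC up to $\mcal{L}_{k_1^*-1}$ the reception of any node $\msf{v}$ in $\mcal{L}_{k_1^*}$ is $\beta_{\msf{v},\msf{s}_1} a + \beta_{\msf{v},\msf{s}_2} b$, where $\beta_{\msf{v},\msf{s}_i}$ is a polynomial in the random cocoefficients of the predecessors. The transmissions available to the nodes in $\mcal{P}^{\msf{s}_2}(\msf{v}_1^*)$ are linear combinations of what those nodes receive; since $\msf{v}_1^*$ has exactly the parents $\mcal{P}(\msf{v}_1^*) = \mcal{P}^{\msf{s}_1}(\msf{v}_1^*) \cup \mcal{P}^{\msf{s}_2}(\msf{v}_1^*)$, and the nodes in $\mcal{P}^{\msf{s}_1}(\msf{v}_1^*)\setminus\mcal{P}^{\msf{s}_2}(\msf{v}_1^*)$ only carry $a$, the ``degrees of freedom'' that $\mcal{P}^{\msf{s}_2}(\msf{v}_1^*)$ can inject into $\msf{v}_1^*$ amount to steering the aggregate $b$-coefficient seen at $\msf{v}_1^*$. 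I would first observe that one can always arrange for $\msf{v}_1^*$ to receive a pure multiple of $a$ (interference neutralization of $b$): this is feasible provided $\mcal{P}^{\msf{s}_2}(\msf{v}_1^*)$ is nonempty and carries a nonzero $b$-component, which follows from $\msf{v}_1^*$ being $\msf{s}_2$-reachable via Lemma~\ref{lem_reachability}; picking generic coefficients, the $a$-component simultaneously remains nonzero with high probability because the two sources' global coefficients at the parents are ``independent enough'' (Schwartz--Zippel).

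The second and main step is to show that this neutralizing choice at $\mcal{P}^{\msf{s}_2}(\msf{v}_1^*)$ does \emph{not} also kill user~2's symbol at $\msf{u}_2$. Here the hypothesis $\mcal{P}^{\msf{s}_2}(\msf{u}_2)\ne\mcal{P}^{\msf{s}_2}(\msf{v}_1^*)$ is exactly what is needed. I would argue as follows: the $b$-coefficient at $\msf{u}_2$ is a linear functional of the vector of transmitted $b$-carrying quantities, and so is the $b$-coefficient at $\msf{v}_1^*$; the set of coefficient choices of $\mcal{P}^{\msf{s}_2}(\msf{v}_1^*)$ that neutralize $b$ at $\msf{v}_1^*$ is a hyperplane (codimension one) in the parameter space. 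It suffices to show that the $b$-coefficient at $\msf{u}_2$ is \emph{not identically zero} on that hyperplane, for then it vanishes only on a lower-dimensional subvariety and Schwartz--Zippel gives the high-probability claim. Non-identical-vanishing should follow because $\msf{u}_2$ has at least one $\msf{s}_2$-reachable parent that is \emph{not} a parent of $\msf{v}_1^*$ (or $\msf{v}_1^*$ has an $\msf{s}_2$-reachable parent not shared with $\msf{u}_2$): such an exclusive parent can be perturbed to change $\msf{u}_2$'s $b$-coefficient while leaving $\msf{v}_1^*$'s reception on the neutralizing hyperplane. This is where $\mcal{P}^{\msf{s}_2}(\msf{v}_1^*)$ being used as the special-coding set (rather than all of $\mcal{L}_{k_1^*-1}$) matters: only those nodes are ``degraded,'' so the generic randomness at the remaining predecessors keeps the relevant polynomials honest.

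The technical obstacle I expect is the bookkeeping showing that the $b$-coefficient at $\msf{u}_2$, viewed as a polynomial in \emph{all} the random coefficients together with the free coding coefficients of $\mcal{P}^{\msf{s}_2}(\msf{v}_1^*)$, is not swallowed by the neutralization constraint --- i.e., that the two polynomials (``$b$-coefficient at $\msf{v}_1^*$'' and ``$b$-coefficient at $\msf{u}_2$'') are not scalar multiples of one another as polynomials. The cleanest way I would handle this is to exhibit a single explicit assignment of the random and free coefficients for which $\msf{v}_1^*$ receives only $a$ while $\msf{u}_2$'s $b$-coefficient is nonzero (using the exclusive $\msf{s}_2$-reachable parent and Lemma~\ref{lem_reachability} to route $b$ to $\msf{u}_2$ along a path avoiding the neutralization), which proves the product polynomial (neutralization-defining polynomial stays its value while $\msf{u}_2$'s $b$-coefficient is evaluated) is not identically zero; then one application of Schwartz--Zippel over a sufficiently large $\mbb{F}_{2^r}$ finishes. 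I would relegate the explicit assignment and the degree bound to the appendix, mirroring the structure of Lemma~\ref{lem_two_source} and Appendix~\ref{app_Pf_NonzeroDet}.
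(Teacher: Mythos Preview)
Your approach is correct in spirit but takes a different route from the paper. The paper does \emph{not} argue via a hyperplane / Schwartz--Zippel polynomial-identity argument. Instead it proceeds constructively by case-splitting on which of $\mcal{P}^{\msf{s}_2}(\msf{u}_2)\setminus\mcal{P}^{\msf{s}_2}(\msf{v}_1^*)$ and $\mcal{P}^{\msf{s}_2}(\msf{v}_1^*)\setminus\mcal{P}^{\msf{s}_2}(\msf{u}_2)$ is nonempty. In each case it designates a \emph{subset} of $\mcal{P}^{\msf{s}_2}(\msf{v}_1^*)$ --- namely those nodes that are also parents of $\msf{u}_2$ --- to continue doing RLC, so that \emph{all} $\msf{s}_2$-reachable parents of $\msf{u}_2$ do RLC and Lemma~\ref{lem_reachability} immediately gives the nonzero $b$-coefficient at $\msf{u}_2$. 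Then a single remaining node in $\mcal{P}^{\msf{s}_2}(\msf{v}_1^*)\setminus\mcal{P}^{\msf{s}_2}(\msf{u}_2)$ does the special coding, and Lemma~\ref{lem_two_source}(c) (with $\mcal{U}=\mcal{P}(\msf{v}_1^*)$ and $\mincut(\msf{s}_1,\msf{s}_2;\mcal{P}(\msf{v}_1^*))=2$ from Lemma~\ref{lem_critical}) guarantees $\msf{v}_1^*$ can be steered to a pure-$a$ reception. This avoids any fresh polynomial bookkeeping: both conclusions reduce to already-proved lemmas.

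Your hyperplane argument works, but note two points you gloss over. First, showing the $a$-coefficient at $\msf{v}_1^*$ remains nonzero on the neutralization hyperplane is not automatic: when $\mincut(\msf{s}_1,\msf{s}_2;\mcal{P}^{\msf{s}_2}(\msf{v}_1^*))=1$ the $a$- and $b$-functionals on $\mcal{P}^{\msf{s}_2}(\msf{v}_1^*)$ are proportional and you must fall back on the $\msf{s}_1$-only-reachable parents of $\msf{v}_1^*$ (which exist by Lemma~\ref{lem_critical} and do RLC). Second, the ``explicit assignment'' you propose to certify non-identical-vanishing --- route $b$ to $\msf{u}_2$ through parents not touched by the neutralization, or keep $\mcal{P}^{\msf{s}_2}(\msf{u}_2)\cap\mcal{P}^{\msf{s}_2}(\msf{v}_1^*)$ on RLC and neutralize with the leftover node --- is precisely the paper's construction. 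So your plan is valid, but the paper's direct reduction to Lemmas~\ref{lem_reachability} and~\ref{lem_two_source}(c) is shorter and sidesteps the case analysis on $\mincut(\msf{s}_1,\msf{s}_2;\mcal{P}^{\msf{s}_2}(\msf{v}_1^*))$ that your approach implicitly requires.
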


\subsubsection{$(1,1)$-Achievability}
For the $(1,1)$-achievability we need to prove the following claim
\begin{claim}
$\neg \textrm{Q}^{(12)} \implies (1,1)$ is achievable.
\end{claim}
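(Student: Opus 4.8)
The plan is to reduce the $k_1^* < k_2^*$ case to the already-handled situations by showing that, under $\neg \textrm{Q}^{(12)}$, we can always arrange the coding so that $\msf{v}_1^*$ decodes user~1's symbol cleanly while user~2's symbol survives into the network downstream of layer $\mcal{L}_{k_1^*}$, and then invoke the structure of the induced graph $\mcal{G}_{12}$ together with the $k_1^*=k_2^*$ analysis (or Lemma~\ref{k_1^*=0}) applied to $\mcal{G}_{12}$. The first step is to dispose of the case $\mincut\lp\msf{s}_1,\msf{s}_2;\mcal{P}^{\msf{s}_2}(\msf{v}_1^*)\rp = 2$: here $\mcal{G}_{12} = \mcal{G}$, and using Lemma~\ref{lem_two_source}(b) the parents $\mcal{P}(\msf{v}_1^*)$ can neutralize user~2's symbol at $\msf{v}_1^*$ (so $\msf{v}_1^*$ gets a clean copy of $a$) while, because the min-cut to $\mcal{P}^{\msf{s}_2}(\msf{v}_1^*)$ is $2$, some node $\msf{u}_2 \notin \mcal{K}(\msf{v}_1^*)$ in layer $\mcal{L}_{k_1^*}$ still receives a combination with a nonzero $b$-coefficient; random linear coding downstream then carries $b$ onward, and we are effectively in a network where $\msf{d}_1$ is $\msf{s}_1$-only-reachable past layer $k_1^*$, reducing to the $k_1^*=0$-type argument of Lemma~\ref{k_1^*=0} for $\msf{d}_2$.

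Next, for the case $\mincut\lp\msf{s}_1,\msf{s}_2;\mcal{P}^{\msf{s}_2}(\msf{v}_1^*)\rp = 1$ — so that $\msf{w}_{12} = \pmc\lp\mcal{P}^{\msf{s}_2}(\msf{v}_1^*)\rp$ is well defined and conditions $\textrm{Q}^{(12)}_2$ holds — I would split on which conjunct of $\textrm{Q}^{(12)}$ fails. (i) If $\textrm{Q}^{(12)}_1$ fails then $k_1^* = 0$ (since $k_1^* \le k_2^*$ and $k_1^* = 1$ forces an omniscient node, already excluded), and Lemma~\ref{k_1^*=0} gives $(1,1)$ directly. (ii) If $\textrm{Q}^{(12)}_4$ fails, i.e.\ $\mcal{K}^{\msf{s}_2}(\msf{w}_{12})$ is \emph{not} an $\lp\msf{s}_2;\msf{d}_2\rp$-vertex-cut, then there is an $\msf{s}_2$-path to $\msf{d}_2$ avoiding $\mcal{K}^{\msf{s}_2}(\msf{w}_{12})$; I would arrange the parents of $\msf{w}_{12}$ (using Lemma~\ref{lem_two_source}(b), after checking the relevant min-cut to $\mcal{P}(\msf{w}_{12})$ is $2$ via Lemma~\ref{lem_critical}-type reasoning) so that $\msf{w}_{12}$ decodes $a$; then nodes in $\mcal{P}^{\msf{s}_2}(\msf{v}_1^*)$ can neutralize $a$ at $\msf{v}_1^*$ using scaled copies of this $a$, giving $\msf{v}_1^*$ a clean $a$, while user~2's symbol still reaches $\msf{d}_2$ along the avoiding path — giving $(1,1)$ via Lemma~\ref{lem_reachability} and RLC downstream. (iii) The remaining case is that $\textrm{Q}^{(12)}_1,\textrm{Q}^{(12)}_2,\textrm{Q}^{(12)}_4$ all hold but $\textrm{Q}^{(12)}_3$ fails, i.e.\ $\msf{u}_{21} := \pmc_{\mcal{G}_{12}}(\msf{v}_2^*)$ is \emph{not} omniscient in $\mcal{G}_{12}$. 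Here I use Lemma~\ref{lem_special}: the parents $\mcal{P}^{\msf{s}_2}(\msf{v}_1^*)$ can be coded so that $\msf{v}_1^*$ receives a clean function of $a$ only and the modified reception pattern at layer $\mcal{L}_{k_1^*}$ is exactly the one defining $\mcal{G}_{12}$ (this is the whole point of the $\Delta\mcal{P}^{\msf{s}_2}(\msf{v}_1^*)$ modification in the definition of $\mcal{G}_{12}$). In $\mcal{G}_{12}$, $\msf{d}_1$ is $\msf{s}_1$-only-reachable, $\msf{u}_{21}$ is the critical node for $\msf{d}_2$, and by hypothesis it is not omniscient there, so Lemma~\ref{k_1^*=0} (applied inside $\mcal{G}_{12}$, using Lemma~\ref{all_G12s_are_essentially_the_same} to make $\mcal{G}_{12}$ well-defined independently of $\msf{w}$) yields a $(1,1)$-scheme in $\mcal{G}_{12}$; translating the coefficients back to $\mcal{G}$ — where the extra terms introduced by the undeleted edges $(\msf{w},\msf{u})$, $\msf{u}\in\mcal{L}_{k_1^*}$, are precisely the scaled copies of $a$ that the $\mcal{P}^{\msf{s}_2}(\msf{v}_1^*)$-coding cancels — completes the argument.

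In every branch one also has to verify that $\msf{v}_2^*$ (and hence $\msf{d}_2$) actually decodes $b$: this follows because past layer $\mcal{L}_{k_1^*}$ there is no interference from $\mcal{K}(\msf{v}_1^*)$ to $\msf{d}_2$, so $\msf{d}_2$'s reception is a function of $\msf{v}_2^*$'s, and $\msf{v}_2^*$ lies in (or can be made to lie in, within $\mcal{G}_{12}$) a sub-network where user~1's symbol has been cleared from its min-cut; Lemma~\ref{lem_two_source}(a) or the $k_1^*=k_2^*$ construction then applies. The genericity of all the nonvanishing determinants — the coefficient being nonzero, the two zero-forced combinations not aligning — is the standard Schwartz--Zippel argument over $\mbb{F}_{2^r}$ with $r$ large, exactly as in Appendix~\ref{app_Pf_NonzeroDet}.

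\textbf{Main obstacle.} The delicate point is case (iii): one must argue that the linear-coding operation performed by $\mcal{P}^{\msf{s}_2}(\msf{v}_1^*)$ in $\mcal{G}$ is \emph{faithfully} simulated by RLC (plus the special coding inherited from the $\mcal{G}_{12}$ scheme) in the induced graph — i.e.\ that the global coefficients $\beta_{\msf{v},\msf{s}_1},\beta_{\msf{v},\msf{s}_2}$ in $\mcal{G}$, restricted to the $\mcal{G}_{12}$-reachable part, reproduce those of the $\mcal{G}_{12}$-scheme up to an invertible correction supported on the cancelled $a$-component. This requires Lemma~\ref{lem_special} to hold \emph{simultaneously} for all the relevant downstream nodes $\msf{u}_2$, and that the ``leftover'' $a$-terms entering nodes $\msf{u}\in\mcal{L}_{k_1^*}$ with $\msf{w}\in\mcal{P}(\msf{u})$ are genuinely cancellable by a single scaling at each node of $\mcal{P}^{\msf{s}_2}(\msf{v}_1^*)$ — which is why the definition of $\mcal{G}_{12}$ modifies parent sets by the \emph{same} set $\mcal{P}^{\msf{s}_2}(\msf{v}_1^*)$ at every such $\msf{u}$, and also why the field must be large enough that no accidental alignment forces a leftover term to be unremovable. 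Getting the bookkeeping of these global coefficients exactly right, and checking that the failure of $\textrm{Q}^{(12)}_3$ is the \emph{only} obstruction once $\textrm{Q}^{(12)}_1,\textrm{Q}^{(12)}_2,\textrm{Q}^{(12)}_4$ hold, is the crux of the proof.
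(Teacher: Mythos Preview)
Your three-way split is essentially the paper's, and your case (iii) (i.e.\ $\textrm{Q}^{(12)}_3$ fails) is handled correctly and matches the paper's Case~2 almost verbatim: force $\sum_{\msf{w}\in\mcal{P}^{\msf{s}_2}(\msf{v}_1^*)}X_{\msf{w}}=0$, land in $\mcal{G}_{12}$, and apply Lemma~\ref{k_1^*=0}. Ironically, this is the \emph{easy} case, not the crux --- the paper dispatches it in a few sentences. The two genuinely hard cases are the ones you treat too lightly.

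\textbf{Gap in the $\mincut=2$ case.} You cannot ``reduce to the $k_1^*=0$-type argument of Lemma~\ref{k_1^*=0}'' after neutralizing at $\mcal{P}^{\msf{s}_2}(\msf{v}_1^*)$. Lemma~\ref{k_1^*=0} needs all predecessors of $\mcal{P}(\msf{v}_2^*)$ doing RLC, but the nodes in $\mcal{P}^{\msf{s}_2}(\msf{v}_1^*)$ are doing \emph{constrained} coding (their aggregate is forced to kill $b$), and this contaminates everything downstream. The reception at layer $\mcal{L}_{k_1^*}$ is no longer the RLC reception of any graph, so there is no ``effective network'' to which Lemma~\ref{k_1^*=0} applies. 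The paper handles this with two dedicated lemmas (Lemma~\ref{lem_Reception} and Lemma~\ref{lem_Rank}): the first shows that, despite the constraint, the receptions of the nodes in $\mcal{L}_{k_1^*}$ reaching $\msf{d}_2$ still span a $2$-dimensional space with each individual reception nonzero; the second is a rank-conservation argument showing this $2$-dimensionality propagates under RLC to $\mcal{P}(\msf{v}_2^*)$. Only then can one neutralize $a$ at $\msf{v}_2^*$. The paper also separately treats whether $\msf{v}_2^*$ has parents inside the cloud $\mcal{C}_1$, which your sketch omits.

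\textbf{Gap in your case (ii).} When $\textrm{Q}^{(12)}_4$ fails but $\textrm{Q}^{(12)}_3$ \emph{holds} (so $\msf{u}_{21}$ \emph{is} omniscient in $\mcal{G}_{12}$ and the $\mcal{G}_{12}$ reduction is blocked), you propose to make $\msf{w}_{12}$ decode $a$ and say ``user~2's symbol still reaches $\msf{d}_2$ along the avoiding path --- giving $(1,1)$ via Lemma~\ref{lem_reachability} and RLC downstream.'' But Lemma~\ref{lem_reachability} only gives a nonzero $b$-coefficient at $\msf{d}_2$, not a zero $a$-coefficient; you still have to neutralize $a$ at $\msf{v}_2^*$, and for that you need $\mincut=2$ into $\mcal{P}(\msf{v}_2^*)$ \emph{after} the special coding at $\mcal{P}(\msf{w}_{12})$. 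The paper splits here on $\mincut\lp\msf{s}_1,\msf{s}_2;\mcal{P}^{\msf{s}_2}(\msf{w}_{12})\rp$: if it is $2$, the rank-preservation lemmas above (applied at the layer of $\msf{w}_{12}$) do the job; if it is $1$, the paper constructs a \emph{second} induced graph $\mcal{G}'_{12}$ (the analog of $\mcal{G}_{12}$ but with neutralization shifted to $\msf{w}_{12}$) and must prove --- via a lengthy case analysis, Lemma~\ref{claim_Special} --- that the new critical node for $\msf{d}_2$ in $\mcal{G}'_{12}$ is not omniscient. This is the actual crux of the proof, and your proposal does not touch it.
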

\begin{proof}
Since $\textrm{Q}^{(12)}_1$ is satisfied, in a network that does not satisfy $\textrm{Q}^{(12)}$, at least one of the following holds:
\begin{itemize}
\item 
$\mincut\lp \msf{s}_1,\msf{s}_2; \mcal{P}^{\msf{s}_2}(\msf{v}_1^*)\rp = 2$

\item 
$\mincut\lp \msf{s}_1,\msf{s}_2; \mcal{P}^{\msf{s}_2}(\msf{v}_1^*)\rp = 1$ and $\mcal{K}_{\mcal{G}_{12}}^{\msf{s}_1}\lp\msf{u}_{21}\rp$ does not form an $\lp \msf{s}_1; \msf{d}_1\rp$-vertex-cut in $\mcal{G}_{12}$
%$\mincut\lp \msf{s}_1,\msf{s}_2; \mcal{P}^{\msf{s}_2}(\msf{v}_1^*)\rp = 1$ and $\mincut_{\mcal{G}_{12}}\lp \msf{s}_1,\msf{s}_2; \mcal{P}_{\mcal{G}_{12}}(\msf{v}_2^*)\rp = 2$

\item
$\mincut\lp \msf{s}_1,\msf{s}_2; \mcal{P}^{\msf{s}_2}(\msf{v}_1^*)\rp = 1$ and $\mcal{K}_{\mcal{G}_{12}}^{\msf{s}_1}\lp\msf{u}_{21}\rp$ forms an $\lp \msf{s}_1; \msf{d}_1\rp$-vertex-cut in $\mcal{G}_{12}$ and $\mcal{K}^{\msf{s}_2}(\msf{w}_{12})$ does not form a $(\msf{s}_2;\msf{d}_2)$-vertex-cut in $\mcal{G}$
%$\mincut\lp \msf{s}_1,\msf{s}_2; \mcal{P}^{\msf{s}_2}(\msf{v}_1^*)\rp = 1$ and $\mincut_{\mcal{G}_{12}}\lp \msf{s}_1,\msf{s}_2; \mcal{P}_{\mcal{G}_{12}}(\msf{v}_2^*)\rp = 1$ and $\mcal{K}^{\msf{s}_2}(\msf{w}_{12})$ does not form a $(\msf{s}_2;\msf{d}_2)$-vertex-cut in $\mcal{G}$

%\item
%$\mincut\lp \msf{s}_1,\msf{s}_2; \mcal{P}^{\msf{s}_2}(\msf{v}_1^*)\rp = 1$ and $\mincut_{\mcal{G}_{12}}\lp \msf{s}_1,\msf{s}_2; \mcal{P}_{\mcal{G}_{12}}(\msf{v}_2^*)\rp = 1$ and $\mcal{K}^{\msf{s}_2}(\msf{w}_{12})$ forms a $(\msf{s}_2;\msf{d}_2)$-vertex-cut in $\mcal{G}$ and $\mcal{K}^{\msf{s}_1}_{\mcal{G}_{12}}\lp\msf{u}_{21}\rp$ does not form a $\lp \msf{s}_1; \msf{d}_1\rp$-vertex-cut in $\mcal{G}_{12}$

\end{itemize}

%In the following discussion, we only consider the case where $\msf{v}_2^*$ has no parents from the cloud $\mcal{C}_1$, that is, $\mcal{P}(\msf{v}_2^*)\cap\mcal{C}_1 = \emptyset$, as in the other case where $\mcal{P}(\msf{v}_2^*)\cap\mcal{C}_1 \ne \emptyset$, we already show that $(1,1)$ is achievable in Section~\ref{subsec_DiffLay} Case 1).

%:Case1
{\flushleft \bf Case 1:} $\mincut\lp \msf{s}_1,\msf{s}_2; \mcal{P}^{\msf{s}_2}(\msf{v}_1^*)\rp = 2$\par 
In this case, the idea is to arrange the transmission of $\mcal{P}^{\msf{s}_2}(\msf{v}_1^*)$ so that their aggregate contains user 1's symbol $a$ only. Mathematically, we aim to have
\begin{align}
\label{eq_neutralize_1}
\sum_{\msf{w}\in\mcal{P}^{\msf{s}_2}(\msf{v}_1^*)} \alpha_{\msf{w}}\beta_{\msf{w},\msf{s}_1}
% = \beta_{\mcal{P}^{\msf{s}_2}(\msf{v}_1^*)}
 \ne 0,\  
% \label{eq_c2}
\sum_{\msf{w}\in\mcal{P}^{\msf{s}_2}(\msf{v}_1^*)} \alpha_{\msf{w}}\beta_{\msf{w},\msf{s}_2}
 = 0, 
\end{align}
This is doable since $\mincut\lp \msf{s}_1,\msf{s}_2; \mcal{P}^{\msf{s}_2}(\msf{v}_1^*)\rp = 2$. Other nodes in the same layer simply do RLC. Therefore $\msf{v}_1^*$ is able to decode $a$. Nodes in $\mcal{C}_1$ do RLC and $\msf{d}_1$ can decode $a$. 

As for user 2, we look at $\msf{v}_2^*$. If $\msf{v}_2^*$ has no parents in the cloud $\mcal{C}_1$, we only need to guarantee that the parents of $\msf{v}_2^*$ can collectively decode $b$.
%Note that if every node up to layer $k_2^*-2$ do RLC, the above point holds with high probability since $\mincut\lp\msf{s}_1,\msf{s}_2;\mcal{P}(\msf{v}_2^*)\rp=2$. With the special linear coding carried out by $\mcal{P}^{\msf{s}_2}(\msf{v}_1^*)$ described above, however, we cannot claim it with the existing random linear network coding argument.
If $\msf{v}_2^*$ has some parent(s) in the cloud $\mcal{C}_1$, the parent(s) will inject user 1's symbol $a$ to the reception of $\msf{v}_1^*$. As $\msf{v}_2^*$ is not omniscient, there must exist $\msf{u}_1\in\mcal{L}_{k_2^*}\cap\cloud_1$ such that $\mcal{P}(\msf{u}_1) \ne \mcal{P}^{\msf{s}_1}(\msf{v}_2^*)$. 
%See Fig.~\ref{fig_DiffLayer_1}(a) for an illustration. 
If $\mcal{P}(\msf{u}_1) \subsetneq \mcal{P}^{\msf{s}_1}(\msf{v}_2^*)$, we can arrange the nodes in $\mcal{P}(\msf{u}_1)$ to make sure that $\msf{u}_1$ can decode user 1's symbol, and then arrange the nodes in $\mcal{P}^{\msf{s}_1}(\msf{v}_2^*)\setminus\mcal{P}(\msf{u}_1)$ to neutralize user 1's symbol at $\msf{v}_2^*$, given that these $\msf{s}_1$-reachable nodes can still receive a linear combination with non-zero $a$-coefficient under the special coding carried out by $\mcal{P}^{\msf{s}_2}(\msf{v}_1^*)$. Then since some node in $\mcal{P}(\msf{v}_2^*)$ will receive a linear combination with non-zero $b$-coefficient (eg., a successor of $\msf{u}_2$ in Lemma~\ref{lem_special}), one can always ensure $\msf{v}_2^*$ to decode $b$. If $\mcal{P}(\msf{u}_1) \setminus \mcal{P}^{\msf{s}_1}(\msf{v}_2^*)\ne \emptyset$, we can arrange the nodes in $\mcal{P}(\msf{v}_2^*)$ to form a linear combination that only contains $b$ at $\msf{v}_2^*$ given that the reception of $\mcal{P}(\msf{v}_2^*)$ can collectively decode $b$. Then use nodes in $\mcal{P}(\msf{u}_1)\setminus \mcal{P}^{\msf{s}_1}(\msf{v}_2^*)$ to place user 1's symbol at $\msf{u}_1$ if necessary. 
%If $\msf{v}_2^*$ has some parents from the cloud $\mcal{C}_1$, since it is not omniscient and $\mcal{K}^{\msf{s}_1}(\msf{v}_2^*)$ does not form a $(\msf{s}_1;\msf{d}_1)$-vertex-cut, 

In summary, we want to guarantee that under the special linear coding carried out by $\mcal{P}^{\msf{s}_2}(\msf{v}_1^*)$ so that the neutralization criterion \eqref{eq_neutralize_1} is met, $\mcal{P}(\msf{v}_2^*)$ can still collectively decode $b$ and every node in $\mcal{P}^{\msf{s}_1}(\msf{v}_2^*)$ receives a linear combination with non-zero $a$-coefficient. The latter is quite obvious, as nodes affected by the special linear coding still receive linear combinations with non-zero $a$-coefficients. As for the former, note that if every node up to layer $k_2^*-2$ does RLC, it holds with high probability since $\mincut\lp\msf{s}_1,\msf{s}_2;\mcal{P}(\msf{v}_2^*)\rp=2$. With the special linear coding carried out by $\mcal{P}^{\msf{s}_2}(\msf{v}_1^*)$ described above, however, we cannot claim it with the existing random linear network coding argument.

We shall use the following two lemmas to overcome the difficulty, by breaking the network into two stages: one from the source layer to the layer $\mcal{L}_{k_1^*}$, and the other from layer $\mcal{L}_{k_1^*}$ to layer $\mcal{L}_{k_2^*-1}$. The first lemma claims that, under the special operation at $\mcal{P}^{\msf{s}_2}(\msf{v}_1^*)$ so that neutralization criterion \eqref{eq_neutralize_1} is satisfied, with high probability all nodes in layer $\mcal{L}_{k_1^*}$ that can reach $\msf{d}_2$ (call this set $\mcal{U}$) receive a non-zero linear combination of $a$ and $b$, and the subspace spanned by their reception has dimension two when all other nodes perform RLC. The second lemma claims that, once $\mcal{U}$'s reception satisfies the above property, then $\mcal{P}(\msf{v}_2^*)$ can collectively decode both $a$ and $b$ with high probability, when all nodes between $\mcal{L}_{k_1^*}$ and $\mcal{L}_{k_2^*-1}$ perform RLC. The lemmas are made concrete below.

\begin{lemma}[Reception of $\mcal{U}$]\label{lem_Reception}
Let us recall that $\mcal{U} := \lbp\msf{u}\in\mcal{L}_{k_1^*}: \msf{u} \text{ can reach } \msf{d}_1\rbp$. Consider RLC with $\msf{s}_1$ transmitting $a$ and $\msf{s}_2$ transmitting $b$. All nodes perform RLC up to layer $\mcal{L}_{k_1^*-1}$. In $\mcal{L}_{k_1^*-1}$, nodes except $\mcal{P}^{\msf{s}_2}(\msf{v}_1^*)$ also perform RLC. Then under special coding operation of $\mcal{P}^{\msf{s}_2}(\msf{v}_1^*)$ such that neutralization criterion \eqref{eq_neutralize_1} is satisfied, with high probability all nodes in $\mcal{U}$ receive a non-zero linear combination of $a$ and $b$, and the subspace spanned by their reception has dimension two. Further, if node $\msf{u}\in\mcal{U}$ is $\msf{s}_1$-reachable, then its reception has a non-zero coefficient of $\msf{s}_1$'s symbol $a$ with high probability.
\end{lemma}
%\input{Lemma_Reception.tex}
%\begin{lemma}[Reception of $\mcal{U}$]
%Recall that $\mcal{U} := \lbp\msf{u}\in\mcal{L}_{k_1^*}: \msf{u} \text{ can reach } \msf{d}_1\rbp$. Consider RLC with $\msf{s}_1$ transmitting $a$ and $\msf{s}_2$ transmitting $b$. All nodes perform RLC up to layer $\mcal{L}_{k_1^*-1}$. In $\mcal{L}_{k_1^*-1}$, nodes except $\mcal{P}^{\msf{s}_2}(\msf{v}_1^*)$ also perform RLC. Then under special coding operation of $\mcal{P}^{\msf{s}_2}(\msf{v}_1^*)$ such that neutralization criterion \eqref{eq_neutralize_1} is satisfied, with high probability all nodes in $\mcal{U}$ receive a non-zero linear combination of $a$ and $b$, and the subspace spanned by their reception has dimension two.
%\end{lemma}

\begin{lemma}[Rank Conservation]\label{lem_Rank}
Suppose $\mcal{U}$ and $\mcal{V}$ are the first and the last layers of a
linear deterministic network and each node $\msf{u}\in\mcal{U}$
possesses a linear combination of the symbols $a,b$ given by $\lambda_{\msf{u}}\cdot
a+\mu_{\msf{u}}\cdot b.$ Suppose
\begin{itemize}
\item each node in $\mcal{U}$ can reach some node in $\mcal{V},$
\item $\mincut\lp \mcal{U};\mcal{V}\rp\geq 2,$
\item for each $\msf{u}\in\mcal{U},$ we have $\lambda_{\msf{u}},\mu_{\msf{u}}$ not
  both 0,
\item the $|\mcal{U}|\times 2$ matrix with rows given by
  $\begin{bmatrix}\lambda_{\msf{u}} & \mu_{\msf{u}}\end{bmatrix}$ for each $\msf{u}\in\mcal{U}$
  has full rank (i.e. rank 2).
\end{itemize}
If all nodes in the network perform RLC, then nodes in
$\mcal{V}$ can collectively decode both the symbols $a$ and $b$ with
high probability.
\end{lemma}
{\flushleft \bf Case 2:} $\mincut\lp \msf{s}_1,\msf{s}_2; \mcal{P}^{\msf{s}_2}(\msf{v}_1^*)\rp = 1$ and $\mcal{K}_{\mcal{G}_{12}}^{\msf{s}_1}\lp\msf{u}_{21}\rp$ does not form an $\lp \msf{s}_1; \msf{d}_1\rp$-vertex-cut in $\mcal{G}_{12}$\par
%$\mincut\lp \msf{s}_1,\msf{s}_2; \mcal{P}^{\msf{s}_2}(\msf{v}_1^*)\rp = 1$ and $\mincut_{\mcal{G}_{12}}\lp \msf{s}_1,\msf{s}_2; \mcal{P}_{\mcal{G}_{12}}(\msf{v}_2^*)\rp = 2$\par 
In this case, since $\mincut\lp \msf{s}_1,\msf{s}_2; \mcal{P}^{\msf{s}_2}(\msf{v}_1^*)\rp = 1$, effectively they receive only one linear equation of $a$ and $b$. 
%Let all nodes perform RLC up to layer $\mcal{L}_{k_1^*-2}$, including $\mcal{L}_{k_1^*-2}$.
Since $\mcal{P}^{\msf{s}_2}(\msf{v}_1^*)$ can be reached by $\msf{s}_2$, the coefficient of $b$ in this linear equation is non-zero in general.
%with high probability. 
Hence we need to arrange their transmission so that $\sum_{\msf{w}\in\mcal{P}^{\msf{s}_2}(\msf{v}_1^*)} X_{\msf{w}} = 0$, that is,
\begin{align}\label{eq_neutralize_2}
\sum_{\msf{w}\in\mcal{P}^{\msf{s}_2}(\msf{v}_1^*)} \alpha_{\msf{w}}\beta_{\msf{w},\msf{s}_1}
 =
\sum_{\msf{w}\in\mcal{P}^{\msf{s}_2}(\msf{v}_1^*)} \alpha_{\msf{w}}\beta_{\msf{w},\msf{s}_2}
 = 0, 
\end{align}
Since $\mincut\lp \msf{s}_1,\msf{s}_2; \mcal{P}(\msf{v}_1^*)\rp=2$, $\msf{v}_1^*$ must have some $\msf{s}_1$-only-reachable parents. Therefore $\msf{v}_1^*$ can decode $a$.

With such special operation in $\mcal{P}^{\msf{s}_2}(\msf{v}_1^*)$, effectively we are in the induced graph $\mcal{G}_{12}$. In other words, any linear coding scheme in the induced graph $\mcal{G}_{12}$ can be translated to a linear coding scheme in $\mcal{G}$ satisfying the neutralization criterion \eqref{eq_neutralize_2}, in the sense that the reception of $\msf{d}_i$ remains the same in both schemes, for $i=1,2$. In $\mcal{G}_{12}$, note that $\msf{d}_1$ can only be reached by $\msf{s}_1$ but not $\msf{s}_2$. Hence by Lemma~\ref{k_1^*=0}, as long as the critical node for destination $\msf{d}_2$ in $\mcal{G}_{12}$, $\msf{u}_{21}$, is not omniscient, $(1,1)$ is achievable. $\msf{u}_{21}$ is not omniscient in $\mcal{G}_{12}$ by the assumption of this case.

%Since the mincut from $\{\msf{s}_1,\msf{s}_2\}$ to the parents of $\msf{v}_2^*$ in $\mcal{G}_{12}$ is $2$, we can arrange their transmission so that $\msf{v}_2^*$ can decode $b$. Hence $(1,1)$ is achievable in $\mcal{G}_{12}$, and so is it in $\mcal{G}$.

%It seems that with such special operation in $\mcal{P}^{\msf{s}_2}(\msf{v}_1^*)$, for $\mcal{P}(\msf{v}_2^*)$ effectively we are in the induced graph $\mcal{G}_{12}$. However this is not true, as the special operation in $\mcal{P}^{\msf{s}_2}(\msf{v}_1^*)$ will lay some constraints on others' reception. To capture these constraints, again we use the auxiliary graph idea. 

%Construct auxiliary graph $\ol{\mcal{G}}_{12}$ with the similar steps as constructing $\ol{\mcal{G}}$, except that here we do not construct the path $\msf{s}_1 \ra \msf{v}_1 \ra \ldots \ra \msf{v}_{k_1^*-1}$. Then again we argue that any linear scheme in the auxiliary graph $\ol{\mcal{G}}_{12}$ can be translated to a linear scheme in the original graph $\mcal{G}$ satisfying $\sum_{\msf{w}\in\mcal{P}^{\msf{s}_2}(\msf{v}_1^*)} X_{\msf{w}} = 0$ and vice versa. We then argue that $\mincut_{\ol{\mcal{G}}_{12}}\lp \msf{s}_1,\msf{s}_2; \mcal{P}_{\ol{\mcal{G}}_{12}}(\msf{v}_2^*)\rp = \mincut_{\mcal{G}_{12}}\lp \msf{s}_1,\msf{s}_2; \mcal{P}_{\mcal{G}_{12}}(\msf{v}_2^*)\rp = 2$ again by contradiction.\\
%{\bf Note:} in fact this is not correct, and the condition has to be modified.

%:Case3
{\flushleft \bf Case 3:} 
$\mincut\lp \msf{s}_1,\msf{s}_2; \mcal{P}^{\msf{s}_2}(\msf{v}_1^*)\rp = 1$ and $\mcal{K}_{\mcal{G}_{12}}^{\msf{s}_1}\lp\msf{u}_{21}\rp$ forms an $\lp \msf{s}_1; \msf{d}_1\rp$-vertex-cut in $\mcal{G}_{12}$ and $\mcal{K}^{\msf{s}_2}(\msf{w}_{12})$ does not form an $(\msf{s}_2;\msf{d}_2)$-vertex-cut in $\mcal{G}$\par
%$\mincut\lp \msf{s}_1,\msf{s}_2; \mcal{P}^{\msf{s}_2}(\msf{v}_1^*)\rp = 1$ and $\mincut_{\mcal{G}_{12}}\lp \msf{s}_1,\msf{s}_2; \mcal{P}_{\mcal{G}_{12}}(\msf{v}_2^*)\rp = 1$ and $\mcal{K}^{\msf{s}_2}(\msf{w}_{12})$ does not form a $(\msf{s}_2;\msf{d}_2)$-vertex-cut in $\mcal{G}$\par

In this case the idea is to enable $\msf{w}_{12}$ to decode user 1's symbol $a$ while keeping user 2's flow to $\msf{v}_2^*$, making use of the fact that $\mcal{K}^{\msf{s}_2}(\msf{w}_{12})$ does not form a $(\msf{s}_2;\msf{d}_2)$-vertex-cut in $\mcal{G}$. Effectively we impose the neutralization criterion on $\msf{w}_{12}$ instead of $\msf{v}_1^*$, and carry out the special coding operation at $\mcal{P}^{\msf{s}_2}(\msf{w}_{12})$ instead of $\mcal{P}^{\msf{s}_2}(\msf{v}_1^*)$.

As for user 1, obviously $\msf{w}_{12}\ne\msf{s}_2$, and hence it can be reached by $\msf{s}_1$ due to the definition of critical nodes. Since $\mincut\lp \msf{s}_1,\msf{s}_2; \mcal{P}(\msf{w}_{12})\rp = 2$, we can enable $\msf{w}_{12}$ to decode $a$ by satisfying the following neutralization condition:
\begin{align}\label{eq_neutralize_3}
\sum_{\msf{w}\in\mcal{P}(\msf{w}_{12})} \alpha_{\msf{w}}\beta_{\msf{w},\msf{s}_1}
 = \beta_{\mcal{P}(\msf{w}_{12}),\msf{s}_1},\ 
\sum_{\msf{w}\in\mcal{P}(\msf{w}_{12})} \alpha_{\msf{w}}\beta_{\msf{w},\msf{s}_2}
 = 0.
\end{align}
Once $\msf{w}_{12}$ decodes $a$, it simply sends out a scaled copy of $a$. With all other nodes performing RLC up to layer $\mcal{L}_{k_1^*-1}$ (including $\mcal{P}^{\msf{s}_2}(\msf{v}_1^*)$), critical node $\msf{v}_1^*$ can decode $a$.
 
As for user 2, note that depending on the value of $\mincut\lp\msf{s}_1,\msf{s}_2;\mcal{P}^{\msf{s}_2}(\msf{w}_{12})\rp$ being $2$ or $1$, $\beta_{\mcal{P}(\msf{w}_{12}),\msf{s}_1}$ is either non-zero or zero. If $\mincut\lp\msf{s}_1,\msf{s}_2;\mcal{P}^{\msf{s}_2}(\msf{w}_{12})\rp=2$, we use the two lemmas, Lemma~\ref{lem_Reception} and \ref{lem_Rank}, in the first case to show that the parents of $\msf{v}_2^*$ can recover both user's symbols with high probability under the special operation at $\mcal{P}^{\msf{s}_2}(\msf{w}_{12})$. 
If $\mincut\lp\msf{s}_1,\msf{s}_2;\mcal{P}^{\msf{s}_2}(\msf{w}_{12})\rp=1$, we construct \emph{another} induced graph $\mcal{G}'_{12}$ to capture the constraints that such special coding lays on the reception of other nodes in the same layer as $\msf{w}_{12}$, which is similar to $\mcal{G}_{12}$ in the second case. In $\mcal{G}'_{12}$, the critical node for user 2 may no longer be $\msf{v}_2^*$, as the min-cut value from the sources $\{\msf{s}_1,\msf{s}_2\}$ to the parents of $\msf{v}_1^*$ may drop to $1$. Note that as in $\mcal{G}_{12}$, destination $\msf{d}_1$ is now $\msf{s}_1$-only-reachable in $\mcal{G}'_{12}$. Hence we only need the new critical node for destination $\msf{d}_2$ is not omniscient in $\mcal{G}'_{12}$. 
%If it is still $\msf{v}_2^*$, as $\msf{v}_2^*$ has no parents from the cloud, ...
%We need $\mincut_{\mcal{G}'_{12}}\lp\msf{s}_1,\msf{s}_2; \mcal{P}_{\mcal{G}'_{12}}(\msf{v}_2^*)\rp=2$ to make the above scheme to work. 

The following lemma guarantees it in this case.
\begin{lemma}\label{claim_Special}
In this case (Case 3) when $\mincut\lp\msf{s}_1,\msf{s}_2;\mcal{P}^{\msf{s}_2}(\msf{w}_{12})\rp=1$, for all possible $\mcal{G}'_{12}$, the $\msf{s}_1$-clones of $\pmc_{\mcal{G}'_{12}}\lp \msf{d}_2\rp$ do not form a $(\msf{s}_1;\msf{d}_1)$-vertex-cut.
%either
%\begin{enumerate}
%\item
%$\mincut_{\mcal{G}'_{12}}\lp\msf{s}_1,\msf{s}_2; \mcal{P}_{\mcal{G}'_{12}}(\msf{v}_2^*)\rp=2$, or
%\item 
%$\mincut_{\mcal{G}'_{12}}\lp\msf{s}_1,\msf{s}_2; \mcal{P}_{\mcal{G}'_{12}}(\msf{v}_2^*)\rp=1$, and in $\mcal{G}'_{12}$, the $\msf{s}_1$-clones of $\pmc_{\mcal{G}'_{12}}\lp \msf{d}_1\rp$ does not form a $(\msf{s}_1;\msf{d}_1)$-vertex-cut.
%\end{enumerate}
\end{lemma}

Combining the above three cases, we complete the proof for the claim and the $(1,1)$-achievability.
\end{proof}

%\subsubsection{$(1,1/2)$-Achievability}
%{\flushleft \underline{$(1,1/2)$-Achievability}}:

\subsubsection{$(1,1/2)$-Achievability}
For the $(1,1/2)$-achievability we need to prove the following claim
\begin{claim}
$\textrm{P}^{(12)} \implies (1,1/2)$ is achievable.
%$\ol{\textrm{T}^{(12)}} \Rightarrow (1,1/2)$ is achievable.
\end{claim}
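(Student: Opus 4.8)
We are in the regime $2\le k_1^*<k_2^*$ with no omniscient node, and we assume $\textrm{P}^{(12)}$. The plan is to reuse the two-time-slot linear scheme over $\mbb{F}_{2^r}$ from Case~$A$: over two symbol-time slots $\msf{s}_1$ transmits $a_1$ then $a_2$ while $\msf{s}_2$ transmits the same symbol $b$ in both slots, and it suffices to arrange that $\msf{v}_1^*$ recovers $(a_1,a_2)$ and $\msf{v}_2^*$ recovers $b$, since past layer $\mcal{L}_{k_2^*}$ destination $\msf{d}_i$ sees no interference from $\mcal{K}(\msf{v}_j^*)$.

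First I would serve user~1. By $\textrm{P}^{(12)}_2$, $\mincut(\msf{s}_1,\msf{s}_2;\mcal{P}^{\msf{s}_2}(\msf{v}_1^*))=1$ with primary min-cut node $\msf{w}_{12}$; by $\textrm{P}^{(12)}_4$, $\msf{w}_{12}\ne\msf{s}_2$, so, exactly as in Case~3 of the preceding $(1,1)$ proof, $\msf{w}_{12}$ is $\msf{s}_1\msf{s}_2$-reachable and $\mincut(\msf{s}_1,\msf{s}_2;\mcal{P}(\msf{w}_{12}))=2$. Since $\mincut(\msf{s}_1,\msf{s}_2;\mcal{P}(\msf{v}_1^*))=2$ (Lemma~\ref{lem_critical}) is strictly larger than $\mincut(\msf{s}_1,\msf{s}_2;\mcal{P}^{\msf{s}_2}(\msf{v}_1^*))=1$, the set of $\msf{s}_1$-only parents $\mcal{P}(\msf{v}_1^*)\setminus\mcal{P}^{\msf{s}_2}(\msf{v}_1^*)$ is non-empty. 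In slot~1 all nodes up to $\mcal{L}_{k_1^*-1}$ do RLC except that $\msf{w}_{12}$ and its clones transmit $0$; then every node of $\mcal{P}^{\msf{s}_2}(\msf{v}_1^*)$ (which lies downstream of those clones) receives $0$, so $\msf{v}_1^*$ hears only its $\msf{s}_1$-only parents and gets a non-zero multiple of $a_1$ (Lemma~\ref{lem_reachability}); meanwhile $\msf{w}_{12}$ records $\beta^{(1)}_{\msf{w}_{12},\msf{s}_1}a_1+\beta^{(1)}_{\msf{w}_{12},\msf{s}_2}b$, whose $b$-coefficient is non-zero with high probability. In slot~2 we again RLC up to $\mcal{L}_{k_1^*-1}$, but now $\msf{w}_{12}$ (and its clones) zero-force $b$ from their two recorded combinations, as in \eqref{eq_LC1}, and forward a scaled $\gamma_1a_1+\gamma_2a_2$; propagating through $\mcal{P}^{\msf{s}_2}(\msf{v}_1^*)$ and combined with a fresh $\propto a_2$ contribution from the $\msf{s}_1$-only parents, $\msf{v}_1^*$ receives a non-degenerate combination of $a_1,a_2$, and with its slot-1 reception it decodes $a_2$ as well.

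Next I would serve user~2. Because $\msf{w}_{12}$ never forwards a $b$-component in either slot, the network behaves like the induced graph $\mcal{G}_{12}$, in which $\msf{d}_1$ is $\msf{s}_1$-only-reachable and, by $\textrm{P}^{(12)}_3$ and Lemma~\ref{all_G12s_are_essentially_the_same}, $\msf{u}_{21}:=\pmc_{\mcal{G}_{12}}(\msf{v}_2^*)$ is omniscient in $\mcal{G}_{12}$. As in Case~3 of the $(1,1)$ proof I would split on $\mincut(\msf{s}_1,\msf{s}_2;\mcal{P}^{\msf{s}_2}(\msf{w}_{12}))$: if it is $2$, Lemmas~\ref{lem_Reception} and \ref{lem_Rank} (read inside $\mcal{G}_{12}$) show that the layer-$k_1^*$ nodes reaching $\msf{d}_2$ carry a rank-$2$ pair of combinations of $a,b$; if it is $1$, one passes to the further induced graph $\mcal{G}'_{12}$ and applies Lemma~\ref{claim_Special}. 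In every sub-case the pertinent critical node for $\msf{d}_2$ (namely $\msf{u}_{21}$ or its $\mcal{G}'_{12}$-analogue) is omniscient while $\msf{d}_1$ is $\msf{s}_1$-only-reachable, which forces $\mincut(\msf{s}_1,\msf{s}_2;\mcal{P}_{\mcal{G}_{12}}(\msf{u}_{21}))=2$, and $\textrm{P}^{(12)}_4$ guarantees that user~2's flow is pinched at $\mcal{K}^{\msf{s}_2}(\msf{w}_{12})$ and so is undisturbed by the silencing/zero-forcing of $\msf{w}_{12}$. Hence the parents of $\msf{u}_{21}$ (all doing RLC, apart from the designated special nodes) can neutralize the $a$-components over the air in the slot where $\msf{w}_{12}$ zero-forces, so that $\msf{u}_{21}$---and therefore $\msf{v}_2^*$ and $\msf{d}_2$---receives a clean scaled copy of $b$; thus $(1,1/2)$ is achieved. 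Every step above fails only on a proper subvariety of the RLC coefficients---non-zero reachability coefficients, a rank-$2$ reception pair, and non-alignment of two zero-forced combinations as after \eqref{eq_LC2} are all non-identically-zero polynomial conditions---so the Schwartz--Zippel argument of Appendix~\ref{app_Pf_NonzeroDet} furnishes a good assignment of coefficients for $r$ large enough.

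I expect the main obstacle to be the interaction of the several special-coding layers---the silencing, then zero-forcing, at $\mcal{K}(\msf{w}_{12})$, and the interference neutralization feeding $\msf{u}_{21}$ for user~2, with the extra detour through $\mcal{G}'_{12}$ when $\mincut(\msf{s}_1,\msf{s}_2;\mcal{P}^{\msf{s}_2}(\msf{w}_{12}))=1$. One must check that once the $\msf{w}_{12}$ operations are fixed the residual RLC still produces generically full-rank receptions for user~2 at $\mcal{P}_{\mcal{G}_{12}}(\msf{u}_{21})$, and that the user-2 neutralization does not annihilate $a_2$ at $\msf{v}_1^*$ in slot~2. This is exactly why Lemmas~\ref{lem_Reception}, \ref{lem_Rank} and \ref{claim_Special} have to be invoked inside $\mcal{G}_{12}$ rather than $\mcal{G}$, and verifying their hypotheses there---the persistence of $\mincut\ge 2$ on the relevant cuts and of the $\msf{s}_1$-only structure of $\msf{d}_1$ in $\mcal{G}_{12}$---is the bulk of the work.
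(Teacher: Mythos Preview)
Your user-1 part is essentially the paper's scheme and is fine. The user-2 part, however, has a genuine gap. In slot~2 the zero-forced output of $\msf{w}_{12}$ is a nontrivial combination $\gamma_1a_1+\gamma_2a_2$, so three symbols $a_1,a_2,b$ are propagating; you cannot ``neutralize the $a$-components'' at $\mcal{P}_{\mcal{G}_{12}}(\msf{u}_{21})$ with a rank-$2$ reception, and in any case the network in slot~2 does \emph{not} behave like $\mcal{G}_{12}$ (that identification requires the aggregate of $\mcal{P}^{\msf{s}_2}(\msf{v}_1^*)$ to be zero, which is true in your slot~1 but not slot~2). More fundamentally, your plan of extracting a clean $b$ at $\msf{u}_{21}$ in a single slot while keeping user~1 intact is exactly what $\textrm{P}^{(12)}_3$ forbids: $\msf{u}_{21}$ is omniscient in $\mcal{G}_{12}$, so any single-slot neutralization that cleans $b$ at $\msf{u}_{21}$ simultaneously kills $\msf{s}_1$'s flow to $\msf{d}_1$. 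You also invoke Lemma~\ref{claim_Special}, but its standing hypothesis is that $\mcal{K}^{\msf{s}_2}(\msf{w}_{12})$ is \emph{not} an $(\msf{s}_2;\msf{d}_2)$-vertex-cut, which is the negation of $\textrm{P}^{(12)}_4$; that lemma is simply unavailable here. Finally, your reading of $\textrm{P}^{(12)}_4$ is inverted: it says user~2's flow \emph{must} pass through $\mcal{K}^{\msf{s}_2}(\msf{w}_{12})$, so it is very much affected by operations there, not ``undisturbed''.

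The paper's fix is to decode $b$ at $\msf{v}_2^*$ \emph{jointly over both slots}, splitting on whether $\mcal{P}(\msf{v}_2^*)\cap\cloud_1$ is empty. When it is nonempty, one observes $\mcal{P}_{\mcal{G}_{12}}(\msf{v}_2^*)\subsetneq\mcal{P}(\msf{v}_2^*)$; in slot~1 (working in $\mcal{G}_{12}$) $\msf{v}_2^*$ receives $\alpha a_1+\beta b$ with $\alpha,\beta\ne0$, and in slot~2 the parents in $\mcal{P}(\msf{v}_2^*)\setminus\mcal{P}_{\mcal{G}_{12}}(\msf{v}_2^*)$ carry a scaled copy of the ZF output while cloud parents carry $a_2$, so $\msf{v}_2^*$ can first recover $a_1$ and then $b$. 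When it is empty, slot~1 yields one equation $\beta^{(1)}_{\msf{u}_{21},\msf{s}_1}a_1+\beta^{(1)}_{\msf{u}_{21},\msf{s}_2}b$ at $\msf{v}_2^*$, and slot~2 yields two independent equations at a pair $\msf{u},\msf{w}\in\mcal{P}(\msf{v}_2^*)$ with $\mincut(\msf{s}_1,\msf{s}_2;\msf{u},\msf{w})=2$; the heart of the proof is showing the resulting $3\times3$ determinant in $(a_1,a_2,b)$ is not identically zero, which the paper does by decomposing it as $\beta^{(1)}_{\msf{u}_{21},\msf{s}_2}\frac{\beta^{(1)}_{\msf{w}_{12},\msf{s}_1}}{\beta^{(1)}_{\msf{w}_{12},\msf{s}_2}}D^{(2)}_Z+\beta^{(1)}_{\msf{u}_{21},\msf{s}_1}D^{(2)}_R$ and applying Lemma~\ref{spreads_out}. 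This cross-slot coupling, created precisely by the $a_1$-term in the ZF output, is the missing idea in your proposal.
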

%\begin{proof}
%Under $\ol{\textrm{T}^{(12)}}$, the only case that we cannot show $(1,1)$ is achievable by the previous lemma is the case below. Hence we only need to prove that $(1,1/2)$ is achievable in the following case.

%%%%%
%We investigate the following case.
%%:Case4
%{\flushleft \bf Case 4:} $\mincut\lp \msf{s}_1,\msf{s}_2; \mcal{P}^{\msf{s}_2}(\msf{v}_1^*)\rp = 1$ and $\mcal{K}_{\mcal{G}_{12}}^{\msf{s}_1}\lp\msf{u}_{21}\rp$ forms an $\lp \msf{s}_1; \msf{d}_1\rp$-vertex-cut in $\mcal{G}_{12}$ and $\mcal{K}^{\msf{s}_2}(\msf{w}_{12})$ forms a $(\msf{s}_2;\msf{d}_2)$-vertex-cut in $\mcal{G}$ and $\msf{w}_{12}\ne\msf{s}_2$\par
%%%%%

%$\mincut\lp \msf{s}_1,\msf{s}_2; \mcal{P}^{\msf{s}_2}(\msf{v}_1^*)\rp = 1$ and $\mincut_{\mcal{G}_{12}}\lp \msf{s}_1,\msf{s}_2; \mcal{P}_{\mcal{G}_{12}}(\msf{v}_2^*)\rp = 1$ and $\mcal{K}^{\msf{s}_2}(\msf{w}_{12})$ forms a $(\msf{s}_2;\msf{d}_2)$-vertex-cut in $\mcal{G}$ and $\msf{w}_{12}\ne\msf{s}_2$ and $\mcal{K}^{\msf{s}_1}_{\mcal{G}_{12}}\lp\msf{u}_{21}\rp$ forms a $\lp \msf{s}_1; \msf{d}_1\rp$-vertex-cut in $\mcal{G}_{12}$\par
\begin{proof}
Consider two cases, distinguishing whether $\msf{v}_2^*$ has parents from the cloud or not.
{\flushleft 1)} $\mcal{P}(\msf{v}_2^*)\cap \cloud_1 \ne \emptyset$:
%If $\textrm{Q}^{(12)}$ is violated, $(1,1)$ can be achieved and so can $(1/2,1)$. Hence we focus on the case in which $\textrm{Q}^{(12)}$ is satisfied. 
Under the condition that $\mcal{P}(\msf{v}_2^*)\cap \cloud_1 \ne \emptyset$, we know that in $\mcal{G}_{12}$ the critical node for $\msf{d}_2$ is still $\msf{v}_2^*$, as $\mincut_{\mcal{G}_{12}}\lp\msf{s}_1,\msf{s}_2; \mcal{P}_{\mcal{G}_{12}}(\msf{v}_2^*)\rp=2$. This is because nodes in the cloud $\mcal{C}_1$ become $\msf{s}_1$-only-reachable in $\mcal{G}_{12}$ while some nodes in $\mcal{P}_{\mcal{G}_{12}}(\msf{v}_2^*)$ are $\msf{s}_2$-reachable in $\mcal{G}_{12}$. $\textrm{P}^{(12)}$ implies that the $\msf{s}_1$-clones of $\msf{v}_2^*$ in $\mcal{G}_{12}$ becomes a $(\msf{s}_1;\msf{d}_1)$-vertex-cut. Therefore, some nodes in $\mcal{P}(\msf{v}_2^*)$ must be dropped in generating $\mcal{G}_{12}$ (as they cannot be reached by either one of the sources), and $\mcal{P}_{\mcal{G}_{12}}(\msf{v}_2^*)\ne\mcal{P}(\msf{v}_2^*)$. 

We aim to deliver two symbols $a_1,a_2$ for user 1 and one symbol $b$ for user 2 over two symbol time slots. Symbols are drawn from the extension field $\mbb{F}_{2^r}$. In the first time slot we do RLC with $\msf{s}_1$ transmitting $a_1$ and $\msf{s}_2$ transmitting $b$, up to layer $\mcal{L}_{k_1^*-1}$. Then we arrange the transmission of $\mcal{P}^{\msf{s}_2}(\msf{v}_1^*)$ so that their aggregate becomes zero, as in Case 2. $\msf{v}_1^*$ can hence decode $a_1$, and transmit a scaled version of it. The rest of the nodes keep performing RLC. It is as if the communication is over the induced graph $\mcal{G}_{12}$, and effectively nodes in $\mcal{P}(\msf{v}_2^*)\setminus\mcal{P}_{\mcal{G}_{12}}(\msf{v}_2^*)$ will receive nothing. As the $\msf{s}_1$-clones of $\msf{v}_2^*$ form a $(\msf{s}_1;\msf{d}_1)$-vertex-cut in $\mcal{G}_{12}$, $\msf{v}_2^*$ will receive a linear equation of $a_1$ and $b$, and both symbols have non-zero coefficients. Therefore, $\msf{d}_1$ can decode $a_1$ in the first time slot.

In the second time slot, we do RLC with $\msf{s}_1$ transmitting $a_2$ and $\msf{s}_2$ transmitting $b$, up to the layer right before $\msf{w}_{12}$. For those nodes in $\mcal{K}(\msf{w}_{12})$ that can reach $\mcal{P}^{\msf{s}_2}(\msf{v}_1^*)$, instead of scaling their reception and transmitting it, they replace their reception by a linear combination of $a_1$ and $a_2$. This linear combination is obtained by zero-forcing $b$ using the reception of the first and the second time slot:
\begin{align*}
&\beta_{\msf{w}_{12},\msf{s}_1}^{(1)}\cdot a_1 + \beta_{\msf{w}_{12},\msf{s}_2}^{(1)} \cdot b;\ \beta_{\msf{w}_{12},\msf{s}_1}^{(2)}\cdot a_2 + \beta_{\msf{w}_{12},\msf{s}_2}^{(2)} \cdot b\\
&\overset{\rm{ZF}}{\Longrightarrow}\ \beta_{\msf{w}_{12},\msf{s}_1}^{(2)}\cdot a_2 + \frac{\beta_{\msf{w}_{12},\msf{s}_1}^{(1)}}{\beta_{\msf{w}_{12},\msf{s}_2}^{(1)}}\cdot \beta_{\msf{w}_{12},\msf{s}_2}^{(2)} \cdot a_1.
\end{align*}
The rest of the nodes perform RLC up to layer $\mcal{L}_{k_1^*}$. Since $\msf{v}_1^*$ already obtains $a_1$ in the first time slot and it receives a linear combination of $a_1,a_2$ with non-zero $a_2$-coefficient in the second time slot, it can decode $a_2$. Onwards it transmits a scaled copy of $a_2$, while other nodes perform RLC. The nodes in $\mcal{P}(\msf{v}_2^*)\setminus\mcal{P}_{\mcal{G}_{12}}(\msf{v}_2^*)$, unlike in the first time slot, will receive a linear combination of $a_1,a_2$, which is a scaled version of that transmitted by $\msf{w}_{12}$. Hence, we can arrnage the transmission of $\mcal{P}(\msf{v}_2^*)\setminus\mcal{P}_{\mcal{G}_{12}}(\msf{v}_2^*)$ and $\mcal{P}(\msf{v}_2^*)\cap \cloud_1$ so that $\msf{v}_2^*$ can decode $a_1$. Therefore, using the reception from the first time slot, $\msf{v}_2^*$ can decode $b$.

{\flushleft 2)} $\mcal{P}(\msf{v}_2^*)\cap \cloud_1 = \emptyset$:
We aim to deliver two symbols $a_1$ and $a_2$ for user 1 and one symbol $b$ for user 2 over two symbol time slots. Again the symbols are drawn from the extension field $\mbb{F}_{2^r}$. In the first time slot, we do RLC with $\msf{s}_1$ transmitting $a_1$ and $\msf{s}_2$ transmitting $b$, up to layer $\mcal{L}_{k_1^*-1}$. Then we arrange the transmission of $\mcal{P}^{\msf{s}_2}(\msf{v}_1^*)$ so that their aggregate becomes zero, as in Case 2. It is as if the communication is over the induced graph $\mcal{G}_{12}$. Since $\msf{u}_{21}$ is the critical node for the parents of $\msf{v}_2^*$ in $\mcal{G}_{12}$, in the first time slot the they effectively receive only one equation
\begin{align*}
\beta_{\msf{u}_{21},\msf{s}_1}^{(1)}\cdot a_1 + \beta_{\msf{u}_{21},\msf{s}_2}^{(1)} \cdot b,
\end{align*}
where $\beta_{\msf{u}_{21},\msf{s}_1}^{(1)}$ is non-zero with high probability since $\mcal{K}^{\msf{s}_1}_{\mcal{G}_{12}}\lp\msf{u}_{21}\rp$ forms a $\lp \msf{s}_1; \msf{d}_1\rp$-vertex-cut in $\mcal{G}_{12}$ and hence $\msf{u}_{21}$ must be reachable by $\msf{s}_1$. $\beta_{\msf{u}_{21},\msf{s}_2}^{(1)}$ is non-zero with high probability since $\msf{v}_1^*$ is not omniscient and hence $\msf{s}_2$ must be able to reach $\msf{v}_2^*$ in $\mcal{G}_{12}$.

In the second time slot, we do RLC with $\msf{s}_1$ transmitting $a_2$ and $\msf{s}_2$ transmitting $b$, up to the layer right before $\msf{w}_{12}$. For those nodes in $\mcal{K}(\msf{w}_{12})$ that can reach $\mcal{P}^{\msf{s}_2}(\msf{v}_1^*)$, instead of scaling their reception and transmitting it, they replace their reception by a linear combination of $a_1$ and $a_2$. This linear combination is obtained by zero-forcing $b$ using the reception of the first and the second time slot:
\begin{align*}
&\beta_{\msf{w}_{12},\msf{s}_1}^{(1)}\cdot a_1 + \beta_{\msf{w}_{12},\msf{s}_2}^{(1)} \cdot b;\ \beta_{\msf{w}_{12},\msf{s}_1}^{(2)}\cdot a_2 + \beta_{\msf{w}_{12},\msf{s}_2}^{(2)} \cdot b\\
&\overset{\rm{ZF}}{\Longrightarrow}\ \beta_{\msf{w}_{12},\msf{s}_1}^{(2)}\cdot a_2 + \frac{\beta_{\msf{w}_{12},\msf{s}_1}^{(1)}}{\beta_{\msf{w}_{12},\msf{s}_2}^{(1)}}\cdot \beta_{\msf{w}_{12},\msf{s}_2}^{(2)} \cdot a_1.
\end{align*}
The rest of the nodes remain doing RLC, up to the layer right before $\msf{v}_2^*$. In the second time slot, $\msf{v}_2^*$'s parents receive at least two linear equations in $\{a_1, a_2, b\}$. Pick two nodes $\msf{u},\msf{w} \in \mcal{P}(\msf{v}_2^*)$ such that $\mincut\lp\msf{s}_1,\msf{s}_2;\msf{u},\msf{w}\rp=2$. Let their reception be
\begin{align*}
&\beta_{\msf{u},a_1}\cdot a_1 + \beta_{\msf{u},a_2}\cdot a_2 + \beta_{\msf{u},b}\cdot b,\\
&\beta_{\msf{w},a_1}\cdot a_1 + \beta_{\msf{w},a_2}\cdot a_2 + \beta_{\msf{w},b}\cdot b,
\end{align*}
respectively.
We shall show that the following determinant
\begin{align}\label{eq_det}
&\left | \begin{array}{ccc}
\beta_{\msf{u}_{21},\msf{s}_1}^{(1)} & 0 & \beta_{\msf{u}_{21},\msf{s}_2}^{(1)}\\
\beta_{\msf{u},a_1} & \beta_{\msf{u},a_2} & \beta_{\msf{u},b}\\
\beta_{\msf{w},a_1} & \beta_{\msf{w},a_2} & \beta_{\msf{w},b}
\end{array}\right | \\
&= 
\beta_{\msf{u}_{21},\msf{s}_2}^{(1)}\begin{vmatrix}
\beta_{\msf{u},a_1} & \beta_{\msf{u},a_2}\\
\beta_{\msf{w},a_1} & \beta_{\msf{w},a_2}
\end{vmatrix} + 
\beta_{\msf{u}_{21},\msf{s}_1}^{(1)}\begin{vmatrix}
\beta_{\msf{u},a_2} & \beta_{\msf{u},b}\\
\beta_{\msf{w},a_2} & \beta_{\msf{w},b}
\end{vmatrix} \notag
\end{align}
is non-zero with high probability.

Note that in the second time slot, we choose the scaling coefficients $\alpha$'s for all nodes up to the layer right before $\msf{v}_2^*$ in the same way as RLC. The only difference from RLC is that at the nodes in $\mcal{K}(\msf{w}_{12})$ that can reach $\mcal{P}^{\msf{s}_2}(\msf{v}_1^*)$, the term scaled and transmitted is replaced by the zero-forced output $\beta_{\msf{w}_{12},\msf{s}_1}^{(2)}\cdot a_2 + \frac{\beta_{\msf{w}_{12},\msf{s}_1}^{(1)}}{\beta_{\msf{w}_{12},\msf{s}_2}^{(1)}}\cdot \beta_{\msf{w}_{12},\msf{s}_2}^{(2)} \cdot a_1$. Suppose we do RLC, then $\msf{u}$ and $\msf{w}$ will receive
\begin{align*}
&\beta^{(2)}_{\msf{u},\msf{s}_1}\cdot a_2 + \beta^{(2)}_{\msf{u},\msf{s}_2}\cdot b,\text{ and }
\beta^{(2)}_{\msf{w},\msf{s}_1}\cdot a_2 + \beta^{(2)}_{\msf{w},\msf{s}_2}\cdot b
\end{align*}
respectively, where 
$D^{(2)} := \begin{vmatrix}
\beta^{(2)}_{\msf{u},\msf{s}_1} & \beta^{(2)}_{\msf{u},\msf{s}_2}\\
\beta^{(2)}_{\msf{w},\msf{s}_1} & \beta^{(2)}_{\msf{w},\msf{s}_2}
\end{vmatrix} \ne 0
$
with high probability since $\mincut\lp\msf{s}_1,\msf{s}_2;\msf{u},\msf{w}\rp=2$. As pointed out above, from the connection of the scheme to RLC, we see
\begin{align*}
\begin{vmatrix}
\beta_{\msf{u},a_1} & \beta_{\msf{u},a_2}\\
\beta_{\msf{w},a_1} & \beta_{\msf{w},a_2}
\end{vmatrix}
= \frac{\beta_{\msf{w}_{12},\msf{s}_1}^{(1)}}{\beta_{\msf{w}_{12},\msf{s}_2}^{(1)}}\cdot D^{(2)}_Z, \quad
\begin{vmatrix}
\beta_{\msf{u},a_2} & \beta_{\msf{u},b}\\
\beta_{\msf{w},a_2} & \beta_{\msf{w},b}
\end{vmatrix}
= D^{(2)}_R,
\end{align*}
where 
$$
D^{(2)}_Z := \begin{vmatrix}
\beta^{(2)}_{\msf{u},\msf{s}_1} & \lb\beta^{(2)}_{\msf{u},\msf{s}_2}\rb_Z\\
\beta^{(2)}_{\msf{w},\msf{s}_1} & \lb\beta^{(2)}_{\msf{w},\msf{s}_2}\rb_Z
\end{vmatrix},
$$ and $$
D^{(2)}_R := \begin{vmatrix}
\beta^{(2)}_{\msf{u},\msf{s}_1} & \lb\beta^{(2)}_{\msf{u},\msf{s}_2}\rb_R\\
\beta^{(2)}_{\msf{w},\msf{s}_1} & \lb\beta^{(2)}_{\msf{w},\msf{s}_2}\rb_R
\end{vmatrix}.
$$
Here $\lb\beta^{(2)}_{\msf{u},\msf{s}_2}\rb_R$ denotes the coefficient of $b$ that $\msf{u}$ receives under a \emph{virtual} RLC with the same coding operation as regular RLC except that nodes in $\mcal{K}(\msf{w}_{12})$ that can reach $\mcal{P}^{\msf{s}_2}(\msf{v}_1^*)$ (call this set $\mcal{Z}$) are transmitting scaled copies of $a$ (with the same scaling coefficient as in regular RLC) instead of a linear combination of $a,b$. $\lb\beta^{(2)}_{\msf{u},\msf{s}_2}\rb_Z$ denotes the coefficient of $b$ that $\msf{u}$ receives under a \emph{virtual} RLC with the same coding operation as regular RLC except that the $\msf{s}_2$-reachable predecessors of $\msf{u}$ in the same layer as $\msf{w}_{12}$ other than $\mcal{Z}$ are transmitting scaled copies of the $a$-components in their reception (with the same scaling coefficient). Note that if there is no $\msf{s}_2$-reachable predecessor of $\msf{u}$ in $\mcal{Z}$, then $\lb\beta^{(2)}_{\msf{u},\msf{s}_2}\rb_Z = 0$.
We have $D^{(2)}_Z+D^{(2)}_R = D^{(2)}$.
The determinant in \eqref{eq_det} equals to zero if and only if
\begin{align*}
\beta_{\msf{u}_{21},\msf{s}_2}^{(1)}\beta_{\msf{w}_{12},\msf{s}_1}^{(1)}D^{(2)}_Z + \beta_{\msf{u}_{21},\msf{s}_1}^{(1)}\beta_{\msf{w}_{12},\msf{s}_2}^{(1)}D^{(2)}_R = 0.
\end{align*}

Suppose $\begin{vmatrix} \beta_{\msf{w}_{12},\msf{s}_1}^{(1)} & \beta_{\msf{w}_{12},\msf{s}_2}^{(1)} \\ \beta_{\msf{u}_{21},\msf{s}_1}^{(1)} & \beta_{\msf{u}_{21},\msf{s}_2}^{(1)}\end{vmatrix}$ is a zero polynomial, then we are done since $D^{(2)}\ne 0$ with high probability.

Note that $D^{(2)}_Z$ and $D^{(2)}_R$ cannot simultaneously be zero with high probability, as their sum is non-zero with high probability. First assume that $D^{(2)}_Z\ne 0$. The determinant in \eqref{eq_det} equals to zero if and only if
\begin{align*}
\frac{D^{(2)}_R}{D^{(2)}_Z} = \frac{\beta_{\msf{u}_{21},\msf{s}_2}^{(1)}\beta_{\msf{w}_{12},\msf{s}_1}^{(1)}}{\beta_{\msf{u}_{21},\msf{s}_1}^{(1)}\beta_{\msf{w}_{12},\msf{s}_2}^{(1)}}.
\end{align*}
RHS and LHS are independent. We only need to consider the case $\begin{vmatrix} \beta_{\msf{w}_{12},\msf{s}_1}^{(1)} & \beta_{\msf{w}_{12},\msf{s}_2}^{(1)} \\ \beta_{\msf{u}_{21},\msf{s}_1}^{(1)} & \beta_{\msf{u}_{21},\msf{s}_2}^{(1)}\end{vmatrix}$ is not a zero polynomial. 
The probability distribution of RHS is ``almost" uniform and there is \emph{no} particular value at which it has a non-vanishing probability (see Lemma~\ref{spreads_out} in Appendix~\ref{app_Pf_NonzeroDet}). Hence the above equality happens with vanishing probability.
%The probability distribution of RHS is going to spread out over the spectrum (see Lemma~\ref{spreads_out} in Appendix~\ref{app_Pf_NonzeroDet}). Hence the above equality happens with vanishing probability.

Similar conclusion can be drawn in the case $D^{(2)}_R\ne 0$.
%, where
%\begin{align}
%&\beta^{(2)}_{\msf{u},\msf{s}_1} = \beta_{\msf{u},a_2},\ \beta^{(2)}_{\msf{w},\msf{s}_1} = \beta_{\msf{w},a_2}\\
%&\beta^{(2)}_{\msf{u},b} = \beta_{\msf{u},b} + \frac{\beta_{\msf{w}_{12},b}^{(1)}}{\beta_{\msf{w}_{12},\msf{s}_1}^{(1)}}\cdot \beta_{\msf{u},a_1},\\
%&\beta^{(2)}_{\msf{w},b} = \beta_{\msf{w},b} + \frac{\beta_{\msf{w}_{12},b}^{(1)}}{\beta_{\msf{w}_{12},\msf{s}_1}^{(1)}}\cdot \beta_{\msf{w},a_1}
%\end{align}
\end{proof}

\subsubsection{$(1/2,1)$-Achievability}
For the $(1/2,1)$-achievability, we argue that if there is no omniscient node, then $(1/2,1)$ is achievable.
We shall use nodes reachable from $\msf{u}_2$ in $\mcal{P}(\msf{v}_2^*)$ to provide user 2's symbols. Define the collection of these nodes by $\mcal{S}_{k_2^*-1}(\msf{u}_2)$. Consider the following two cases.

%For the first situation where $\mcal{P}^{\msf{s}_2}(\msf{u}_2)\setminus \mcal{P}^{\msf{s}_2}(\msf{v}_1^*)\ne \emptyset$, we shall use this node $\msf{u}_2$'s reachable ones in $\mcal{P}(\msf{v}_2^*)$ to provide user %2's symbols. Define the collection of these nodes by $\mcal{S}_{k_2^*-1}(\msf{u}_2)$. Consider the following cases.
%\vspace{-2pt}
%\begin{enumerate}
{\flushleft 1)} $\mcal{P}(\msf{v}_2^*)\cap \cloud_1 \ne \emptyset$:
If $\textrm{Q}^{(12)}$ is violated, $(1,1)$ can be achieved and so can $(1/2,1)$. Hence we focus on the case in which $\textrm{Q}^{(12)}$ is satisfied. Under the condition that $\mcal{P}(\msf{v}_2^*)\cap \cloud_1 \ne \emptyset$, from the analysis of the previous case we know that some nodes in $\mcal{P}(\msf{v}_2^*)$ must be dropped in generating $\mcal{G}_{12}$, and $\mcal{P}_{\mcal{G}_{12}}(\msf{v}_2^*)\ne\mcal{P}(\msf{v}_2^*)$.

%we know that in $\mcal{G}_{12}$ the critical node for $\msf{d}_2$ is still $\msf{v}_2^*$, as $\mincut_{\mcal{G}_{12}}\lp\msf{s}_1,\msf{s}_2; \mcal{P}_{\mcal{G}_{12}}(\msf{v}_2^*)\rp=2$. This is because the nodes in the cloud $\mcal{C}_1$ become $\msf{s}_1$-only-reachable in $\mcal{G}_{12}$ while some nodes in $\mcal{P}_{\mcal{G}_{12}}(\msf{v}_2^*)$ are $\msf{s}_2$-reachable in $\mcal{G}_{12}$. $\textrm{Q}^{(12)}$ implies that the $\msf{s}_1$-clones of $\msf{v}_2^*$ in $\mcal{G}_{12}$ becomes a $(\msf{s}_1;\msf{d}_1)$-vertex-cut. Therefore, some nodes in $\mcal{P}(\msf{v}_2^*)$ must be dropped in generating $\mcal{G}_{12}$, and $\mcal{P}_{\mcal{G}_{12}}(\msf{v}_2^*)\ne\mcal{P}(\msf{v}_2^*)$. 

We aim to deliver one symbol $a$ for user 1 and two symbols $b_1, b_2$ for user 2 over two time slots. In the first time slot, all nodes up to layer $k_2^*-1$ perform RLC with $\msf{s}_1$ transmitting a scaled copy of $a,$ and $\msf{s}_2$ transmitting a scaled copy of $b_1$, except that nodes in $\mcal{P}^{\msf{s}_2}(\msf{v}_1^*)$ perform special linear coding to make sure their aggregate transmission is zero. Hence effectively we are in $\mcal{G}_{12}$, and the nodes in $\mcal{P}_{\mcal{G}_{12}}^{\msf{s}_1}(\msf{v}_2^*)$, which form a $(\msf{s}_1;\msf{d}_1)$-vertex-cut in $\mcal{G}_{12}$ and therefore lie in the cloud $\mcal{C}_1$, can decode $a$. Since $\mincut_{\mcal{G}_{12}}\lp\msf{s}_1,\msf{s}_2; \mcal{P}_{\mcal{G}_{12}}(\msf{v}_2^*)\rp=2$, we can arrange the transmission of $\mcal{P}_{\mcal{G}_{12}}(\msf{v}_2^*)$ so that $\msf{v}_2^*$ can decode $b_1$ and so can $\msf{d}_2$. But $\msf{d}_1$ will receive nothing, as $\mcal{K}_{\mcal{G}_{12}}^{\msf{s}_1}(\msf{v}_2^*)$ is a $(\msf{s}_1;\msf{d}_1)$-vertex-cut in $\mcal{G}_{12}$. In the second time slot, all nodes up to layer $k_2^*-1$ perform RLC with $\msf{s}_1$ transmitting a scaled copy of $a,$ and $\msf{s}_2$ transmitting a scaled copy of $b_2$. This time the nodes in $\mcal{P}(\msf{v}_2^*)\setminus\mcal{P}_{\mcal{G}_{12}}(\msf{v}_2^*)$ will receive a non-triavial linear combination of $a$ and $b_2$ with a non-zero $a$-coefficient. Then we let nodes in $\mcal{P}_{\mcal{G}_{12}}(\msf{v}_2^*)$ transmit a scaled copy of $a$ by choosing their scaling coefficients uniformly and independently, while using nodes in $\mcal{P}(\msf{v}_2^*)\setminus\mcal{P}_{\mcal{G}_{12}}(\msf{v}_2^*)$ to neutralize the symbol $a$ in the reception of $\msf{v}_2^*$ and obtain a clean copy of $\msf{b}_2$. Hence $\msf{d}_1$ can decode $a$, and $\msf{d}_2$ can decode $b_2$ in the second time slot.

{\flushleft 2)} $\mcal{P}(\msf{v}_2^*)\cap \cloud_1 = \emptyset$:
We aim to prove that $(1/2,1)$ is achievable in this case. User 1 has one symbol $a$ and user 2 has two symbols $b_1,b_2$ to be delivered over two time slots.
%a common symbol $b_1$ and a private symbol $b_2.$

In the first time slot, all nodes up to layer $k_2^*-1$ perform RLC with $\msf{s}_1$ transmitting a scaled copy of $a,$ and $\msf{s}_2$ transmitting a scaled copy of $b_1.$ Note that because $\mincut\lp \msf{s}_1,\msf{s}_2; \mcal{P}(\msf{v}_2^*)\rp = 2,$ we have that $\mcal{P}(\msf{v}_2^*)$ can collectively decode both $a$ and $b_1$ due to Lemma~\ref{lem_two_source}(a). In the second time slot, all nodes up to layer $k_1^*-2$ perform RLC with $\msf{s}_1$ sending $a$ and $\msf{s}_2$ sending $b_2$. 
Due to Lemma~\ref{lem_two_source}(a), we can arrange the transmission of $\mcal{P}(\msf{v}_1^*)$ so that $\msf{v}_1^*$ receives only $a$ in the second time slot, since $\mincut\lp \msf{s}_1,\msf{s}_2; \mcal{P}(\msf{v}_1^*)\rp = 2$.
%Nodes in $\mcal{P}(\msf{v}_1^*)$ perform a random linear combination of their received symbols in both time slots with the constraint that the reception of $\msf{v}_1^*$ is free of $b_2,$ the private symbol of $\msf{u}_2.$ 
As $\emptyset\subsetneq \mcal{P}^{\msf{s}_2}(\msf{u}_2)\ne \mcal{P}^{\msf{s}_2}(\msf{v}_1^*),$ $\msf{u}_2$ receives a linear combination with a non-zero coefficient of user 2's symbol due to Lemma~\ref{lem_special}. Further, all nodes perform RLC up to layer $k_2^*-1.$ As $\msf{u}_2$ has a path to $\mcal{P}(\msf{v}_2^*),$ some node in $\mcal{P}(\msf{v}_2^*)$ receives a linear combination of the three symbols with a non-zero coefficient for $b_2.$ Thus, $\mcal{P}(\msf{v}_2^*)$ can collectively decode all three symbols $a, b_1,b_2.$ Since this decoding is a linear operation, these nodes can arrange their transmissions so as to form $b_1$ and $b_2$ at $\msf{v}_2^*$'s reception in first and second time slots respectively. All nodes in layer $k_2^*$ onwards perform RLC with no mixing across time slots. Thus, $\msf{d}_2$ can recover both $b_1$ and $b_2.$ As nodes in $\mcal{L}_{k_2^*-1}\cap \cloud_1$ perform RLC with no mixing across time slots, destination $\msf{d}_1$ can recover both the symbols $a$ and $b_1.$

%\newpage

\section{Proof of Outer Bounds}\label{sec_OutBd}
\subsection{Outer Bound on $R_1+R_2$: the Omniscient Bound}
We show that if a node $\msf{v}$ is omniscient, then it can decode both user's messages and hence, the achievable sum rate is upper bounded by 1. This explains the motivation for the name. Let $\msf{v}$ be omniscient and satisfy condition (A) in the definition of omniscient nodes: $\mcal{K}(\msf{v})$ is a $\lp\msf{s}_1,\msf{s}_2; \msf{d}_1\rp$-vertex-cut and $\mcal{K}^{\msf{s}_2}(\msf{v})$ is a $\lp \msf{s}_2; \msf{d}_2\rp$-vertex-cut. Since $\mcal{K}(\msf{v})$ is a $\lp\msf{s}_1,\msf{s}_2; \msf{d}_1\rp$-vertex-cut, the reception of the destination $\msf{d}_1$ is a function of the reception of $\msf{v}.$ This means $\msf{v}$ can decode the message of $\msf{s}_1.$ The reception of each node in $\mcal{K}^{\msf{s_2}}(\msf{v})$ is some function of the reception of node $\msf{v}$ and the transmission of $\msf{s}_1.$ Since $\msf{v}$ can now recover the transmission of $\msf{s}_1,$ and since $\mcal{K}^{\msf{s_2}}(\msf{v})$ forms a $\lp \msf{s}_2; \msf{d}_2\rp$-vertex-cut, $\msf{v}$ can recover the reception of $\msf{d}_2,$ and thus, also the message of $\msf{s}_2.$ 
We leave the formal proof of this outer bound in the appendix.

\subsection{Outer Bounds on $2R_1+R_2$ and $R_1+2R_2$}
We want to show that if the condition $\textrm{T}^{(12)}$ is satisfied, then $2R_1+R_2 \le 2$ for any achievable $(R_1,R_2)$. We first show the following claim.
\begin{claim}\label{claim_2R1R2}
If there exists random variables $\{Z_1, Z_{21}, Z_{22}\}$ in the network satisfying
{\rm
\begin{itemize}
\item [1)] $H\lp Z_1\rp\le1, H\lp Z_{21}\rp\le 1, H\lp Z_{22}\rp\le1$.
\item [2)] $X_{\msf{s}_1}^N \lra Z_1^N \lra Y_{\msf{d}_1}^N$ and $X_{\msf{s}_2}^N \lra (Z_{21}^N,Z_{22}^N) \lra Y_{\msf{d}_2}^N$
\item [3)] $X_{\msf{s}_1}^N\lra (Z_{21}^N, X_{\msf{s}_2}^N) \lra Y_{\msf{d}_1}^N$
\item [4)] $H\lp Z_1^N| X_{\msf{s}_1}^N\rp \ge H\lp Z_{22}^N\rp$
\item [5)] $Z_{22}^N$ is a function of $X_{\msf{s}_2}^N$
%, that is, $H\lp Z_{22}^N| X_{\msf{s}_2}^N\rp=0$ 
%$H\lp Z_1^N| X_{\msf{s}_2}^N, Z_{21}^N\rp = 0$
\end{itemize}
}
then $2R_1+R_2 \le 2$ for any achievable $(R_1,R_2)$.
\end{claim}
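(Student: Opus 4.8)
The plan is a genie-aided Fano converse in the spirit of the El~Gamal--Costa weak-interference outer bound, arranged so that the auxiliary quantity $H(Z_{22}^N)$ is produced twice, with opposite signs, and cancels. Fix a length-$N$ code: let $W_1,W_2$ be the independent uniform messages with $NR_i=H(W_i)$ and $X_{\msf{s}_i}^N$ the source transmissions, which (as usual for a converse) we take to be deterministic functions of $W_i$. Since the network is layered and deterministic, every reception in it -- in particular $Y_{\msf{d}_1}^N$, $Y_{\msf{d}_2}^N$ and each of $Z_1^N,Z_{21}^N,Z_{22}^N$ -- is a deterministic function of $(X_{\msf{s}_1}^N,X_{\msf{s}_2}^N)$, so $H(Z\mid X_{\msf{s}_1}^N,X_{\msf{s}_2}^N)=0$ for every such $Z$; Fano gives $H(W_i\mid Y_{\msf{d}_i}^N)\le N\epsilon_N$ with $\epsilon_N\to0$; and $X_{\msf{s}_1}^N\perp(W_2,X_{\msf{s}_2}^N)$ and symmetrically.

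First I would bound one copy of $NR_1$ through $Z_1$: starting from $NR_1\le I(W_1;Y_{\msf{d}_1}^N)+N\epsilon_N\le I(X_{\msf{s}_1}^N;Y_{\msf{d}_1}^N)+N\epsilon_N$ (data processing along $W_1\to X_{\msf{s}_1}^N\to Y_{\msf{d}_1}^N$), the chain $X_{\msf{s}_1}^N\lra Z_1^N\lra Y_{\msf{d}_1}^N$ of hypothesis 2) gives $I(X_{\msf{s}_1}^N;Y_{\msf{d}_1}^N)\le I(X_{\msf{s}_1}^N;Z_1^N)=H(Z_1^N)-H(Z_1^N\mid X_{\msf{s}_1}^N)$, whence hypotheses 1) and 4) give
\begin{align*}
NR_1\ \le\ N-H(Z_{22}^N)+N\epsilon_N .
\end{align*}

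Next I would bound a second copy of $NR_1$ using the genie $X_{\msf{s}_2}^N$ at $\msf{d}_1$: since $W_1\perp X_{\msf{s}_2}^N$,
\begin{align*}
NR_1 &\le I(W_1;Y_{\msf{d}_1}^N,X_{\msf{s}_2}^N)+N\epsilon_N = I(W_1;Y_{\msf{d}_1}^N\mid X_{\msf{s}_2}^N)+N\epsilon_N \\
&\le I(X_{\msf{s}_1}^N;Y_{\msf{d}_1}^N\mid X_{\msf{s}_2}^N)+N\epsilon_N \le I(X_{\msf{s}_1}^N;Z_{21}^N\mid X_{\msf{s}_2}^N)+N\epsilon_N = H(Z_{21}^N\mid X_{\msf{s}_2}^N)+N\epsilon_N ,
\end{align*}
where the third step is the chain $X_{\msf{s}_1}^N\lra(Z_{21}^N,X_{\msf{s}_2}^N)\lra Y_{\msf{d}_1}^N$ of hypothesis 3) and the last equality uses $H(Z_{21}^N\mid X_{\msf{s}_1}^N,X_{\msf{s}_2}^N)=0$. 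Then I would bound $NR_2$: from $NR_2\le I(W_2;Y_{\msf{d}_2}^N)+N\epsilon_N\le I(X_{\msf{s}_2}^N;Y_{\msf{d}_2}^N)+N\epsilon_N$, the chain $X_{\msf{s}_2}^N\lra(Z_{21}^N,Z_{22}^N)\lra Y_{\msf{d}_2}^N$ of hypothesis 2) gives $I(X_{\msf{s}_2}^N;Y_{\msf{d}_2}^N)\le I(X_{\msf{s}_2}^N;Z_{21}^N,Z_{22}^N)=H(Z_{21}^N,Z_{22}^N)-H(Z_{21}^N\mid X_{\msf{s}_2}^N)$, where hypothesis 5) ($Z_{22}^N$ a function of $X_{\msf{s}_2}^N$) simplifies the conditional term; bounding $H(Z_{21}^N,Z_{22}^N)\le H(Z_{21}^N)+H(Z_{22}^N)\le N+H(Z_{22}^N)$ by hypothesis 1) gives
\begin{align*}
NR_2\ \le\ N+H(Z_{22}^N)-H(Z_{21}^N\mid X_{\msf{s}_2}^N)+N\epsilon_N .
\end{align*}
Adding the last two displays cancels $H(Z_{21}^N\mid X_{\msf{s}_2}^N)$, yielding $N(R_1+R_2)\le N+H(Z_{22}^N)+2N\epsilon_N$; adding this to the single-$R_1$ display of the preceding paragraph cancels $H(Z_{22}^N)$, yielding $N(2R_1+R_2)\le 2N+3N\epsilon_N$. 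Dividing by $N$ and letting $N\to\infty$ gives $2R_1+R_2\le2$.

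I do not expect a genuine obstacle inside this chain -- it is a short sequence of chain-rule and data-processing steps -- but two places need care. One is the bookkeeping of the ``everything downstream is a deterministic function of $(X_{\msf{s}_1}^N,X_{\msf{s}_2}^N)$'' facts, since that is exactly what makes the residual conditional-entropy terms (such as $H(Z_{21}^N\mid X_{\msf{s}_1}^N,X_{\msf{s}_2}^N)$) vanish in the data-processing steps; the other is to confirm that each of hypotheses 1)--5) is invoked precisely where indicated. The substantive work lies outside this claim: exhibiting, for a network satisfying $\textrm{T}^{(12)}$, concrete random variables $Z_1,Z_{21},Z_{22}$ meeting 1)--5), which is where the graph-theoretic conditions $\textrm{T}^{(12)}_1$--$\textrm{T}^{(12)}_4$ enter.
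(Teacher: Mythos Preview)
Your proof is correct and essentially identical to the paper's. The paper writes the argument as a single chain of inequalities starting from $2I(X_{\msf{s}_1}^N;Y_{\msf{d}_1}^N)+I(X_{\msf{s}_2}^N;Y_{\msf{d}_2}^N)$ rather than as three separate bounds added at the end, and it retains $H(Z_{21}^N\mid Z_{22}^N)$ rather than splitting $H(Z_{21}^N,Z_{22}^N)\le H(Z_{21}^N)+H(Z_{22}^N)$, but the same data-processing steps are applied to the same three mutual-information terms using the same hypotheses in the same places.
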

Proof is detailed in the appendix.

We shall use the above claim to complete the proof of the outer bound $2R_1+R_2\le 2$. We set
$
Z_1 := Y_{\msf{v}^*_1}$, $Z_{21} := Y_{\msf{u}_{21}}$, $Z_{22} := \sum_{\msf{w}\in \mcal{P}^{\msf{s}_2}\lp \msf{v}^*_1\rp} X_{\msf{w}}$.
%\end{align*}
\begin{itemize}
\item
Hence by the definition of the channels, condition 1) of the claim is satisfied. 
\item
By the definition of $\msf{v}^*_1$, we see that the first Markov chain in condition 2) is satisfied. By condition $\textrm{T}^{(12)}_3$ and the definition of the induced graph $\mcal{G}_{12}$ we see that the second Markov chain is also satisfied. Hence, condition 2) is satisfied. 
\item
By conditions $\textrm{T}^{(12)}_3$ and $\textrm{T}^{(12)}_4$, we see that the Markov chain in condition 3) is satisfied.
\item
Condition 4) is satisfied with equality due to the definition of $Z_{22}$ and condition $\textrm{T}^{(12)}_4$.
\item
Condition 5) is satisfied due to the definition of $Z_{22}$ and conditions $\textrm{T}^{(12)}_2$ and $\textrm{T}^{(12)}_4$.
%\item
%By conditions $\textrm{T}^{(12)}_3$ and $\textrm{T}^{(12)}_5$ and the definition of $Z_{22}$, we see that condition 4) and 5) are satisfied. Moreover, 4) is satisfied with equality.
\end{itemize}
%By condition (T3) and (T4), we see that given $X_{\msf{s}_2}^N$ and $Y_{\msf{u}_{21}}^N$, one can figure out what $\msf{v}^*_1$ receives if $k_2' \le k_1$. If $k_2' > k_1$, by condition (T4), given $X_{\msf{s}_2}^N$ and $Y_{\msf{u}_{21}}^N$, one can figure out what destination $\msf{d}_1$ receives. By the decoding criterion, one can then reproduce $X_{\msf{s}_1}^N$, and by condition (T3) one can reproduce $Y_{\msf{v}^*_1}^N$. Hence, condition (5) is satisfied.

%Note that condition $\textrm{T}^{(12)}_1$ and $\textrm{T}^{(12)}_2$ are used to guarantee that $\msf{u}_{21}$ is well-defined.
%\end{proof}

Similarly, if the condition $\textrm{T}^{(21)}$ is satisfied, then $R_1+2R_2 \le 2$ by symmetry.

\subsection{Outer Bound on $2R_1+2R_2$}
We want to show that if the condition $\textrm{P}^{(12)}$ is satisfied, then $2R_1+2R_2 \le 3$ for any achievable $(R_1,R_2)$. We first show the following claim.
\begin{claim}\label{claim_2R12R2}
If there exists random variables $\{Z_{11}, Z_{12},Z_{21},Z_{22}\}$ in the network satisfying
\begin{itemize}
\item[1)] $H(Z_{11})\le 1$, $H(Z_{12})\le 1$, $H(Z_{21})\le 1$, $H(Z_{22})\le 1$
\item[2)] $X_{\msf{s}_1}^N \lra Z_{11}^N \lra Y_{\msf{d}_1}^N$ and $X_{\msf{s}_2}^N \lra (Z_{21}^N,Z_{22}^N) \lra Y_{\msf{d}_2}^N$
\item[3)]
$X_{\msf{s}_1}^N \lra (Z_{21}^N, Z_{22}^N, X_{\msf{s}_2}^N) \lra Y_{\msf{d}_1}^N$ and \\$X_{\msf{s}_2}^N \lra (Z_{12}^N,X_{\msf{s}_1}^N) \lra Y_{\msf{d}_2}^N$
\item[4)] $H\lp Z_{11}^N|X_{\msf{s}_1}^N\rp \ge H\lp Z_{22}^N|X_{\msf{s}_1}^N\rp$
\item[5)] $Z_{22}^N$ is a function of $Z_{12}^N$
%$X_{\msf{s}_1}^N\lra Z_{12}^N\lra Z_{22}^N$
\end{itemize}
then $2R_1+2R_2 \le 3$ for any achievable $(R_1,R_2)$.
\end{claim}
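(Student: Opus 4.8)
We sketch the argument; it is the Fano-plus-genie converse used for deterministic interference channels, and it parallels the proof of Claim~\ref{claim_2R1R2}. The plan is to derive two genie-aided upper bounds on $NR_1$ and two on $NR_2$, add all four, and observe that one conditional-entropy term cancels while the remainder telescopes into three terms of the form $H(Z^N)\le N$. Fix a length-$N$ code, let $\epsilon_N\to 0$ be its Fano term so that $H(W_i\mid Y_{\msf{d}_i}^N)\le N\epsilon_N$, and recall that $X_{\msf{s}_1}^N, X_{\msf{s}_2}^N$ are functions of the independent messages $W_1, W_2$ (hence $X_{\msf{s}_1}^N\perp X_{\msf{s}_2}^N$) and that in a linear deterministic network every $Z^N$ below is a deterministic function of $(X_{\msf{s}_1}^N, X_{\msf{s}_2}^N)$.

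First I would bound $NR_1$ in two ways. Directly, $NR_1\le I(X_{\msf{s}_1}^N;Y_{\msf{d}_1}^N)+N\epsilon_N\le I(X_{\msf{s}_1}^N;Z_{11}^N)+N\epsilon_N = H(Z_{11}^N)-H(Z_{11}^N\mid X_{\msf{s}_1}^N)+N\epsilon_N$, using the chain $X_{\msf{s}_1}^N\lra Z_{11}^N\lra Y_{\msf{d}_1}^N$ of condition~2 and the data-processing inequality. Handing the genie $X_{\msf{s}_2}^N$ to $\msf{d}_1$ and using message independence, $NR_1\le I(X_{\msf{s}_1}^N;Y_{\msf{d}_1}^N\mid X_{\msf{s}_2}^N)+N\epsilon_N\le I(X_{\msf{s}_1}^N;Z_{21}^N,Z_{22}^N\mid X_{\msf{s}_2}^N)+N\epsilon_N = H(Z_{21}^N,Z_{22}^N\mid X_{\msf{s}_2}^N)+N\epsilon_N$, where the middle step is the conditional data-processing inequality applied to the chain $X_{\msf{s}_1}^N\lra (Z_{21}^N,Z_{22}^N,X_{\msf{s}_2}^N)\lra Y_{\msf{d}_1}^N$ of condition~3, and the last uses that $(Z_{21}^N,Z_{22}^N)$ is a function of $(X_{\msf{s}_1}^N,X_{\msf{s}_2}^N)$. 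Symmetrically for $R_2$: directly, $NR_2\le I(X_{\msf{s}_2}^N;Z_{21}^N,Z_{22}^N)+N\epsilon_N = H(Z_{21}^N,Z_{22}^N)-H(Z_{21}^N,Z_{22}^N\mid X_{\msf{s}_2}^N)+N\epsilon_N$ by the second chain of condition~2; and handing $X_{\msf{s}_1}^N$ to $\msf{d}_2$, $NR_2\le I(X_{\msf{s}_2}^N;Z_{12}^N\mid X_{\msf{s}_1}^N)+N\epsilon_N = H(Z_{12}^N\mid X_{\msf{s}_1}^N)+N\epsilon_N$ by the second chain of condition~3.

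Adding the four bounds, the two copies of $H(Z_{21}^N,Z_{22}^N\mid X_{\msf{s}_2}^N)$ cancel, and I would then invoke conditions~4, 5 and 1 in that order:
\begin{align*}
2NR_1+2NR_2 &\le H(Z_{11}^N)-H(Z_{11}^N\mid X_{\msf{s}_1}^N)+H(Z_{21}^N,Z_{22}^N)+H(Z_{12}^N\mid X_{\msf{s}_1}^N)+4N\epsilon_N\\
&\le H(Z_{11}^N)-H(Z_{22}^N\mid X_{\msf{s}_1}^N)+H(Z_{21}^N)+H(Z_{22}^N)+H(Z_{12}^N\mid X_{\msf{s}_1}^N)+4N\epsilon_N\\
&= H(Z_{11}^N)+H(Z_{21}^N)+I(X_{\msf{s}_1}^N;Z_{22}^N)+H(Z_{12}^N\mid X_{\msf{s}_1}^N)+4N\epsilon_N\\
&\le H(Z_{11}^N)+H(Z_{21}^N)+I(X_{\msf{s}_1}^N;Z_{12}^N)+H(Z_{12}^N\mid X_{\msf{s}_1}^N)+4N\epsilon_N\\
&= H(Z_{11}^N)+H(Z_{21}^N)+H(Z_{12}^N)+4N\epsilon_N\ \le\ 3N+4N\epsilon_N,
\end{align*}
where the second line uses condition~4 ($-H(Z_{11}^N\mid X_{\msf{s}_1}^N)\le -H(Z_{22}^N\mid X_{\msf{s}_1}^N)$) together with subadditivity of entropy, the fourth uses condition~5 (so $X_{\msf{s}_1}^N\lra Z_{12}^N\lra Z_{22}^N$, whence $I(X_{\msf{s}_1}^N;Z_{22}^N)\le I(X_{\msf{s}_1}^N;Z_{12}^N)$), and the last uses that each symbol of $Z_{ij}$ carries at most one bit, i.e.\ condition~1. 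Dividing by $N$ and letting $N\to\infty$ yields $2R_1+2R_2\le 3$.

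I expect the only subtle point to be the final regrouping: because everything downstream is conditioned on $X_{\msf{s}_1}^N$, it is tempting to apply condition~5 as $H(Z_{22}^N\mid X_{\msf{s}_1}^N)\le H(Z_{12}^N\mid X_{\msf{s}_1}^N)$, but that inequality points the wrong way; one must instead pass through the mutual-information monotonicity $I(X_{\msf{s}_1}^N;Z_{22}^N)\le I(X_{\msf{s}_1}^N;Z_{12}^N)$, so that the leftover term $H(Z_{12}^N\mid X_{\msf{s}_1}^N)$ is exactly the slack that converts $I(X_{\msf{s}_1}^N;Z_{22}^N)$ into the clean bound $H(Z_{12}^N)\le N$. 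A secondary but routine check is that each Markov chain in conditions~2--3 survives the conditioning required by the conditional data-processing steps. The genuinely substantive part of the outer bound --- exhibiting the four random variables in $\mcal{G}$ and verifying hypotheses 1--5 under $\textrm{P}^{(12)}$ --- lies outside this claim and would be carried out separately, exactly as for Claim~\ref{claim_2R1R2}.
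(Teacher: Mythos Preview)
Your proof is correct and essentially identical to the paper's. The paper organizes the same four mutual-information bounds as a $(2R_1+R_2)+R_2$ split rather than listing two bounds on $R_1$ and two on $R_2$, and it uses the chain rule $H(Z_{21}^N,Z_{22}^N)=H(Z_{22}^N)+H(Z_{21}^N\mid Z_{22}^N)$ where you use subadditivity, but the terms, cancellations, and uses of conditions~1--5 (including the key step $I(X_{\msf{s}_1}^N;Z_{22}^N)\le I(X_{\msf{s}_1}^N;Z_{12}^N)$ from condition~5) are the same.
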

Proof is detailed in the appendix.

We shall use the above claim to complete the proof of the outer bound $2R_1+2R_2\le 3$. We set
%\begin{align*}
$Z_{11} := Y_{\msf{v}_1}$, $Z_{12} := Y_{\msf{w}_{12}}$, $Z_{21} := Y_{\msf{u}_{21}}$, $Z_{22} := \sum_{\msf{w}\in \mcal{P}^{\msf{s}_2}\lp \msf{v}^*_1\rp} X_{\msf{w}}$.
%\end{align*}

\begin{itemize}
\item
By the definition of the channels, condition 1) of the claim is satisfied.
\item
By the definition of $\msf{v}^*_1$, we see that the first Markov chain in condition 2) is satisfied. By condition $\textrm{P}^{(12)}_3$ and the definition of the induced graph $\mcal{G}_{12}$ we see that the second Markov chain is also satisfied. Hence, condition 2) is satisfied. 
\item
The first Markov chain in condition 3) is due to condition $\textrm{P}^{(12)}_3$ and the definition of the induced graph $\mcal{G}_{12}$. The second Markov chain is due to condition $\textrm{P}^{(12)}_4$. Hence condition 3) is satisfied.
\item
Condition 4) is satisfied with equality due to the definition of $Z_{22}$. 
\item
Condition 5) is satisfied due to the definition of $Z_{12}, Z_{22}$ and conditions $\textrm{P}^{(12)}_2$ and $\textrm{P}^{(12)}_4$.
\end{itemize}

Similarly, if the condition $\textrm{P}^{(21)}$ is satisfied, then $2R_1+2R_2 \le 3$ by symmetry.

\section{Concluding Remarks}\label{sec_Conclude}

In this paper, we completely characterize the capacity region of two unicast information flows over a layered linear deterministic network with base field $\mbb{F}_2$ under the unit-channel strength assumption. It turns out that when each source can reach its own destination, the capacity region is one of the five: the triangle $\mfrak{T}$, the trapezoids $\mfrak{T}_{12},\mfrak{T}_{21}$, the pentagon $\mfrak{P}$, and the square $\mfrak{S}$. The necessary and sufficient condition for the capacity region to be one of them elucidates when and how the connectivity of the network limits the amount of information deliverable to the destination under the presence of the other interfering information flow.

Our result extends to a more general linear deterministic channel setting where a general matrix in $\mbb{F}_2$ (not necessarily a shift matrix) is associated to each edge in the network. Such generalization is made possible by looking at entries of the receive/transmit vectors, called ``bubbles", and redefining omniscience, clone sets, parents, cuts, etc., for bubbles. This result will be detailed in a later version of this paper.

%If the base field is not $\mbb{F}_2$, however, it seems that there is 

\bibliographystyle{ieeetr}
%\bibliography{Twoflow_Bib}

%\newpage
\appendices

\section{Proof of Lemmas and Claims}\label{app_PfLemmas}
By the phrase ``with high probability'', we mean a probability that goes to 1 as the size of the field $\mbb{F}_{2^r}$ goes to infinity.

\subsection{Proof of Lemma~\ref {k_1^*=0}}
If $k^*_1=0$ and $k^*_2=1$, then both $\msf{s}_1$ and $\msf{s}_2$ are $\msf{v}^*_2$'s parents, and obviously $\msf{v}^*_2$ is omniscient, violating the assumption. Hence, $k^*_2=0$ or $k^*_2\ge 2$. If $k^*_2=0$, then there is no interference at destination $\msf{d}_1$ from source $\msf{s}_2$ and vice versa. In this case, clearly $(1,1)$ is achievable.

If $k^*_2\ge 2$, we shall show that $(1,1)$ can be achieved provided that there is no omniscient node. Nodes do RLC with $\msf{s}_1$ transmitting a scaled copy of symbol $a\in\mbb{F}_{2^r}$ and $\msf{s}_2$ transmitting a scaled copy of symbol $b\in\mbb{F}_{2^r}$, until layer $\mcal{L}_{k^*_2-1}$. By definitions of $\msf{v}^*_2$ and $\cloud_1$, layer $\mcal{L}_{k^*_2}$ is partitioned by $\mcal{K}(\msf{v}^*_2)$ and $\cloud_1\cap\mcal{L}_{k^*_2}$. Since $\msf{v}^*_2$ is not omniscient, 
\begin{align*}
\exists\ \msf{u}_1\in \cloud_1\cap\mcal{L}_{k^*_2} \textrm{ such that } \mcal{P}^{\msf{s}_1}(\msf{u}_1)( = \mcal{P}(\msf{u}_1)) \ne \mcal{P}^{\msf{s}_1}(\msf{v}^*_2).
\end{align*}
Note that since $\msf{u}_1\in\cloud_1$, $\mcal{P}^{\msf{s}_1}(\msf{u}_1) = \mcal{P}(\msf{u}_1)$. Also note that all nodes in $\cloud_1$ are $\msf{s}_1$-only-reachable. Consider the following two cases:
\begin{enumerate}
%\item $\mcal{P}^{\msf{s}_1}(\msf{v}^*_2) \setminus \mcal{P}(\msf{U_1}) \ne\emptyset$. In this case, we simply use nodes in $\mcal{P}^{\msf{s}_1}(\msf{v}^*_2) \setminus \mcal{P}(\msf{U_1})$ to neutralize the effect of $\mcal{P}(\msf{u}_1)$ at $\msf{v}^*_2$ if there is any. Now, if one of these $\msf{s}_1$-reachable nodes is also $\msf{s}_2$-reachable, we can use it to provide user 2's symbol. If all of them are $\msf{s}_1$-only-reachable, there must be some other node in $\mcal{P}(\msf{v}^*_2)\setminus\cloud_1$ that is $\msf{s}_2$-only-reachable, and we can use it to provide user 2's symbol. Hence $(1,1)$ can be achieved.
\item $\mcal{P}(\msf{u}_1) \setminus \mcal{P}^{\msf{s}_1}(\msf{v}_2^*) \ne\emptyset$. In this case we arrange nodes in  $\mcal{P}(\msf{v}_2^*)$ so that user 2's symbol $b$ can be decoded at $\msf{v}^*_2$. Then use nodes in $\mcal{P}(\msf{u}_1) \setminus \mcal{P}^{\msf{s}_1}(\msf{v}_2^*)$ to provide user 1's symbol $a$ at $\msf{u}_1$ if necessary. $\msf{u}_1$ and $\msf{v}^*_2$ and their successors do RLC.

\item $\mcal{P}(\msf{u}_1) \subsetneq \mcal{P}^{\msf{s}_1}(\msf{v}_2^*)$. In this case we first let nodes in $\mcal{P}(\msf{u}_1)$ do RLC and place user 1's symbol $a$ at $\msf{u}_1$. Then, use nodes in $\mcal{P}^{\msf{s}_1}(\msf{v}^*_2) \setminus \mcal{P}(\msf{u}_1)$ and nodes in $\mcal{P}(\msf{v}^*_2)$ to neutralize user 1's symbol $a$ and place user 2's symbol $b$ at $\msf{v}^*_2$. $\msf{u}_1$ and $\msf{v}^*_2$ and their successors do RLC.
\end{enumerate}
Hence, $(1,1)$ is achievable when $k_1^*=0$.

\subsection{Proof of Lemma~\ref{lem_critical}}
Suppose $i=1.$ Suppose $\mincut\lp \msf{s}_1,\msf{s}_2;\mcal{P}(\msf{v}_1^*)\rp \neq 2.$ It cannot be larger than 2 by definition and it cannot be 0 because $\{\msf{s}_1,\msf{s}_2\}$ has paths to $\mcal{P}(\msf{v}_1^*).$ So, suppose $\mincut\lp \msf{s}_1,\msf{s}_2;\mcal{P}(\msf{v}_1^*)\rp =1.$

Let $\mcal{A}\subseteq\mcal{V}$ be the set of nodes in the graph that can be reached by $\{\msf{s}_1,\msf{s}_2\}$ and can reach $\mcal{P}(\msf{v}_1^*).$ Let $\mcal{G}^\prime$ be the graph induced by nodes in $\mcal{A}$ and for $\mcal{U}\subseteq\mcal{A},$ let $\mincut_{\mcal{G}^\prime}\lp \msf{s}_1,\msf{s}_2;\mcal{U}\rp$ denote the mincut from $\{\msf{s}_1,\msf{s}_2\}$ to $\mcal{U}$ in the graph $\mcal{G}^\prime.$ Then, obviously $\mincut_{\mcal{G}^\prime}\lp \msf{s}_1,\msf{s}_2;\mcal{P}(\msf{v}_1^*)\rp = 1.$

Note that for any partition of the vertices of $\mcal{A}$ into $(B,\mcal{A}\setminus B)$ with $\{\msf{s}_1,\msf{s}_2\}\subseteq B, \mcal{P}(\msf{v}_1^*)\subseteq \mcal{A}\setminus B,$ if there exist nodes in the same layer $\msf{u}_1,\msf{u}_2\in\mcal{A}$ such that $\msf{u}_1\in B$ and $\msf{u}_2\in\mcal{A}\setminus B,$ then the rank of the transfer matrix across the cut $(B,\mcal{A}\setminus B)$ is at least 2. Thus, if there exists a cut $(B,\mcal{A}\setminus B)$ of value $1,$ then the cut must be of the form $B=\lp \cup_{l=0}^t\mcal{L}_l \rp \cap \mcal{A},$ for some $t\geq 0.$ This tells us that if $\msf{u}\in\mcal{A}\cap\mcal{L}_{t+1},$ then $\mcal{K}(\msf{u})\supseteq \mcal{A}\cap\mcal{L}_{t+1},$ so that $\mcal{K}(\msf{u})$ is a $\lp \msf{s}_1,\msf{s}_2;\msf{d}_1 \rp$-vertex-cut violating the definition of critical node $\msf{v}_1^*.$ Hence we complete the proof by contradiction.

\subsection{Proof of Lemma~\ref{lem_critical_omniscient}}
Suppose node $\msf{v}$ is omniscient, say $\mcal{K}(\msf{v})$ is a $\lp\msf{s}_1,\msf{s}_2; \msf{d}_1\rp$-vertex-cut and $\mcal{K}^{\msf{s}_2}(\msf{v})$ is a $\lp \msf{s}_2; \msf{d}_2\rp$-vertex-cut. Suppose $\msf{v}_1^*$ is not omniscient. As $\mcal{K}(\msf{v})$ is a $\lp\msf{s}_1,\msf{s}_2; \msf{d}_1\rp$-vertex-cut, we have that either $\msf{v}\in\mcal{K}(\msf{v}_1^*)$ or that $\msf{v}$ lies in a layer $\mcal{L}_k$ with $k>k_1^*.$ This follows from the definition of the critical node $\msf{v}_1^*.$ In the first case, we automatically have that $\msf{v}_1^*$ is omniscient. So, suppose $\msf{v}$ lies in layer $\mcal{L}_k$ with $k>k_1^*.$ Then, since $\msf{v}_1^*$ is not omniscient, there exists a path from $\msf{s}_2$ to $\msf{d}_2$ with a node $\msf{u}_{k_1^*}$ in layer $\mcal{L}_{k_1^*}$ and a node $\msf{u}_k$ in layer $\mcal{L}_k$ such that $\mcal{P}^{\msf{s}_2}(\msf{v}_1^*)\neq \mcal{P}^{\msf{s}_2}(u_{k_1^*}).$ Since node $\msf{v}$ is omniscient, we must have that $\mcal{P}^{\msf{s}_2}(\msf{v})= \mcal{P}^{\msf{s}_2}(u_k).$ But this is impossible since $\msf{u}_k$ has an $\msf{s}_2$-reachable parent from the path that does not lie in the cloud $\mcal{C}_1$ which contains all the parents of $\msf{v}.$ This contradiction establishes that $\msf{v}_1^*$ must have been omniscient.

\subsection{Proof of Lemma~\ref{lem_PrimaryMinCut}}
First note that if we restrict attention to the induced subgraph $\mcal{G}^\prime$ obtained by deleting all nodes which can either not reach the set of nodes $\mcal{U}$ or cannot be reached by at least one of $\msf{s}_1$ and $\msf{s}_2,$ then the mincut value $\mincut\lp \msf{s}_1,\msf{s}_2; \mcal{U}\rp$ is preserved. Since each node can be reached by at least one of $\msf{s}_1$ or $\msf{s}_2,$ we only have to delete nodes that cannot reach some node in $\mcal{U}.$

Now, we are looking at a graph where the set of nodes in layer $l$ is $\mcal{U}_l$ for $0\leq l<k$ and $\mcal{U}$ for layer $k.$

Consider, for this graph, the set of all vertex bipartitions between $\{\msf{s}_1,\msf{s}_2\}$ and $\mcal{U}$ which yield a transfer matrix of rank 1. All such bipartition cuts must be `vertical', i.e. they are partitions of the form $(\mcal{A},\mcal{A}^c)$ where $\mcal{A}=\cup_{l=0}^r\mcal{U}_l$ for some $l, 0\leq l<k.$ This is because any non-`vertical' cut yields a transfer matrix of rank at least 2. This establishes that the parents sets of all nodes in $\mcal{U}_{l^*}$ are identical in this graph and so, also in the original graph because every node in the new graph $\mcal{G}^\prime$ has the same parent set as in the original graph. This concludes the proof of the lemma.

\subsection{Proof of Lemma~\ref {all_G12s_are_essentially_the_same}}

We are in the scenario where $\mincut \lp \msf{s}_1,\msf{s}_2;\mcal{P}^{\msf{s}_2}(\msf{v}_1^*)\rp=1.$ Consider $\mcal{G}_{12}(\msf{w})$ for some node $\msf{w}\in\mcal{P}^{\msf{s}_2}(\msf{v}_1^*)$ and suppose that there is no omniscient node in $\mcal{G}_{12}(\msf{w}).$ As $\mcal{G}_{12}(\msf{w})$ has no paths from $\msf{s}_2$ to $\msf{d}_1,$ by Lemma~\ref{k_1^*=0}, we can achieve $(1,1)$ in $\mcal{G}_{12}(\msf{w}),$ with high probability, by all nodes except nodes in $\mcal{P}(\msf{v}_2^*)$ performing RLC. Then, with high probability all nodes in $\mcal{P}^{\msf{s}_2}(\msf{v}_1^*)$ receive a non-trivial linear combination and each has a non-zero coefficient of symbol $b$ sent by $\msf{s}_2.$ Take any such $(1,1)$ achieving scheme and any other node $\msf{w}^\prime \in\mcal{P}^{\msf{s}_2}(\msf{v}_1^*).$ Consider $\mcal{G}_{12}(\msf{w}^\prime).$ Note $\msf{w}$ has no outgoing edges in $\mcal{G}_{12}(\msf{w})$ and $\msf{w}^\prime$ has no outgoing edges in $\mcal{G}_{12}(\msf{w}^\prime).$ Let the reception of node $\msf{w}$ and $\msf{w}^\prime$ in $\mcal{G}_{12}(\msf{w})$ be $\beta_{\msf{w},\msf{s}_1}\cdot a+\beta_{\msf{w},\msf{s}_2}\cdot b$ and $\beta_{\msf{w}^\prime,\msf{s}_1}\cdot a+\beta_{\msf{w}^\prime,\msf{s}_2}\cdot b$ respectively, where $\beta_{\msf{w},\msf{s}_2}, \beta_{\msf{w}^\prime,\msf{s}_2}\neq 0.$ Just make all nodes choose the same coefficients in $\mcal{G}_{12}(\msf{w}^\prime)$ as in $\mcal{G}_{12}(\msf{w})$ except for node $\msf{w}^\prime$ which chooses  $\alpha_{\msf{w}^\prime}|_{\mcal{G}_{12}(\msf{w}^\prime)}=\alpha_{\msf{w}}\cdot\frac{\beta_{\msf{w},\msf{s}_2}}{\beta_{\msf{w}^\prime,\msf{s}_2}}.$ Then, the receptions of all nodes will be identical to those in $\mcal{G}_{12}(\msf{w}).$ This achieves $(1,1)$ in $\mcal{G}_{12}(\msf{w}^\prime)$ and hence, there cannot be any omniscient node in this network either by the Omniscient node outer bound.

\subsection{Proof of Lemma~\ref{lem_reachability}}
Without loss of generality let $i=1$. We shall prove this by induction on the layer index where $\msf{u}$ lies. Say $\msf{u}\in\mcal{L}_k$. The node $\msf{u}$ receives $\beta_{\msf{u},\msf{s}_1}\cdot a + \beta_{\msf{u},\msf{s}_2}\cdot b$.

For $k=1$, $\beta_{\msf{u},\msf{s}_1} = \alpha_{\msf{s}_1}\beta_{\msf{s}_1,\msf{s}_1} = \alpha_{\msf{s}_1}$. Since all predecessors of $\msf{u}$ are doing RLC, so does $\msf{s}_1$ and hence $\alpha_{\msf{s}_1}$ is chosen uniformly and randomly over $\mbb{F}_{2^r}$. Therefore, $\Pr\{\beta_{\msf{u},\msf{s}_1} = 0\} = \Pr\{ \alpha_{\msf{s}_1} = 0\} \ra 0$ as $r\ra \infty$.

Suppose for all nodes in $\mcal{L}_{l}$, $l\ge 1$, that are reachable from $\msf{s}_1$ the coefficient of user 1's symbol $a$ is non-zero with high probability. Consider an $\msf{s}_1$-reachable node in $\mcal{L}_{l+1}.$ We have
\begin{align*}
\beta_{\msf{u},\msf{s}_1} = \sum_{\msf{v}\in\mcal{P}(\msf{u})}\alpha_{\msf{v}}\beta_{\msf{v},\msf{s}_1} = \sum_{\msf{v}\in\mcal{P}^{\msf{s}_1}(\msf{u})}\alpha_{\msf{v}}\beta_{\msf{v},\msf{s}_1},
\end{align*}
since for nodes that cannot be reached by $\msf{s}_1$ the coefficient of $a$ is always $0$. Conditioned on a realization of $\{ \beta_{\msf{v},\msf{s}_1}:\ \msf{v}\in\mcal{P}^{\msf{s}_1}(\msf{u})\}$ where they are not all zero, $\beta_{\msf{u},\msf{s}_1}$ is uniformly distributed over $\mbb{F}_{2^r}$ since $\{\alpha_{\msf{v}}|\ \msf{v}\in\mcal{P}^{\msf{s}_1}(\msf{u})\}$ are chosen independently of one another and $\{ \beta_{\msf{v},\msf{s}_1}:\ \msf{v}\in\mcal{P}^{\msf{s}_1}(\msf{u})\}$, and uniformly over $\mbb{F}_{2^r}$. Consequently,
\begin{align*}
\Pr\lbp \beta_{\msf{u},\msf{s}_1}=0 | \{\beta_{\msf{v},\msf{s}_1}:\ \msf{v}\in\mcal{P}^{\msf{s}_1}(\msf{u})\}\rbp \ra 0
\end{align*}
as $r\ra\infty$, if $\{ \beta_{\msf{v},\msf{s}_1}:\ \msf{v}\in\mcal{P}^{\msf{s}_1}(\msf{u})\}$ are not all zeros. By the induction assumption, the probability that they are all zeros also goes to zero as $r\ra\infty$, and so we have $\Pr\lbp \beta_{\msf{u},\msf{s}_1}=0\rbp \ra 0$ as $r\ra\infty$. This completes the proof by induction.

\subsection{Proof of Lemma~\ref{lem_two_source}}
%{\flushleft \underine{Proof of Part (a)}\par
\subsubsection{Proof of Part (a)}
Consider a super-sink $\msf{d}'$ with full access to the reception of all nodes in $\mcal{U}$. Since $\mincut(\msf{s}_1,\msf{s}_2;\mcal{U})=2$, we have $\mincut(\msf{s}_1,\msf{s}_2;\msf{d}')=2$. Moreover, we can easily argue that both $\msf{s}_1$ and $\msf{s}_2$ can reach $\msf{d}'$ by contradiction, and hence $\mincut(\msf{s}_i;\msf{d}')=1$, for $i=1,2$. Consider a multiple access flow problem with two sources $\msf{s}_1,\msf{s}_2$ and a single destination $\msf{d}'$. The capacity region is the square region $\{(R_1,R_2): R_1,R_2\ge 0, R_1\le \mincut(\msf{s}_1;\msf{d}'), R_2\le \mincut(\msf{s}_2;\msf{d}'), R_1+R_2\le \mincut(\msf{s}_1,\msf{s}_2;\msf{d}')\}$ and can be achieved via scalar random linear coding if the extension field size $2^r$ is sufficiently large \cite{KimMedard_10}. Hence $(1,1)$ can be achieved, and $\msf{d}'$ can decode both user's symbols and so can $\mcal{U}$.

%{\flushleft \bf Proof of Part (b)}\par
\subsubsection{Proof of Part (b)}
Fix the transmission from $\mcal{P}(\msf{v})\setminus\mcal{U}$. We write the reception of $\msf{v}$ as
\begin{align*}
&\underbrace{\sum_{\msf{u}\in\mcal{U}} \lp \alpha_{\msf{u}}\beta_{\msf{u},\msf{s}_1}\cdot a + \alpha_{\msf{u}}\beta_{\msf{u},\msf{s}_2}\cdot b \rp}_\text{To be determined}\\
+\ &\underbrace{\sum_{\msf{u}\in\mcal{P}(\msf{v})\setminus\mcal{U}} \lp \alpha_{\msf{u}}\beta_{\msf{u},\msf{s}_1}\cdot a + \alpha_{\msf{u}}\beta_{\msf{u},\msf{s}_2}\cdot b \rp}_\text{Given}
\end{align*}

From part (a) we know that $\mcal{U}$ can collectively solve $a$ and $b$ with high probability, and hence it can construct any linear combination of $a$ and $b$. Therefore, they can arrange their transmission by choosing the scaling coefficients $\alpha$'s carefully so that combined with the given part in $\msf{v}$, the aggregate reception at $\msf{v}$ is the desired linear combination.

%{\flushleft \bf Proof of Part (c)}
\subsubsection{Proof of Part (c)}
From part (a) we know the the subspace spanned by the received linear combinations of $\mcal{U}$ has dimension $2$ with high probability. The received linear combination of $\msf{u}$ spans an one-dimensional space with high probability. Note that $\mcal{U}\setminus \{\msf{u}\}\ne\emptyset$.

Consider the subspace spanned by the received linear combination(s) of $\mcal{U}\setminus \{\msf{u}\}$. This subspace is either has dimension $2$ or has dimension $1$ but not aligned with the reception of $\msf{u}$. In the first case, after the nodes in $\mcal{U}\setminus \{\msf{u}\}$ chose the scaling coefficients randomly, uniformly, and independently over $\mbb{F}_{2^r}$, the resulting effective linear combination at $\msf{v}$ contributed by this part is uniformly distributed over the whole two-dimensional space. Hence it is not aligned with the reception of $\msf{u}$ with high probability. $\msf{u}$ can then choose its scaling coefficient properly so that any desired linear combination except those aligned with the reception of $\msf{u}$ can be formed at $\msf{v}$. In the second case, it can be guaranteed that the resulting effective linear combination at $\msf{v}$ contributed by $\mcal{U}\setminus \{\msf{u}\}$ is not aligned with the reception of $\msf{u}$. Hence we arrive at the same conclusion as above.

%{\flushleft \bf Proof of Part (d)}
\subsubsection{Proof of Part (d)}
This is a simple corollary of part (c). Since $\msf{u}$ is $\msf{s}_1\msf{s}_2$-reachable, with all its predecessors doing RLC it will receive a linear combination of $a$ and $b$ with non-zero coefficients for both symbols with high probability.  Hence linear combinations consisting of purely $a$ or $b$ with high probability at $\msf{v}$ can be formed at $\msf{v}$ due to the conclusion in part (c).

\subsection{Proof of Lemma~\ref{lem_mincut_pair}}
$\mincut(\msf{s}_1,\msf{s}_2;\mcal{U})= 2.$ Note that all nodes can be reached by at least one of the source nodes. Fix node $\msf{u}\in\mcal{U}.$

For sufficiently large block length $N,$ if all nodes perform RLC with one symbol from each source, then by Lemma~\ref{lem_reachability} and Lemma~\ref{lem_two_source}(a), we have the following with high probability:
\begin{itemize}
\item the subspace spanned by the received linear combination at $\msf{u}$ has dimension $1$, and
\item the subspace spanned by the received linear combinations at $\mcal{U}$ has dimension $2$.
\end{itemize}

Fix any choice of the coefficients so that the above hold. Pick any other node $\msf{w}\in\mcal{U}$ such that the subspace spanned by the received linear combinations at $\msf{u}$ and $\msf{w}$ has dimension 2. Then, we must have  $\mincut(\msf{s}_1,\msf{s}_2;\msf{u},\msf{w})= 2,$ or else $\msf{u}, \msf{w}$ could not have received linearly independent linear combinations.

(Note: Lemma~\ref{lem_mincut_pair} is a purely graph-theoretic lemma. It is easier to prove it however using the random coding arguments in Lemma~\ref{lem_reachability} and Lemma~\ref{lem_two_source}(a).)

\subsection{Proof of Lemma~\ref{lem_Pattern}}
%\begin{proof}
For all four cases, the direction ``$\Leftarrow$" is quite obvious. Also note that when $k_1^*=k_2^*=k^*$ and there is no omniscient node in, the two clone-sets, $\mcal{K}(\msf{v}_1^*)$ and $\mcal{K}(\msf{v}_2^*)$, partition the whole layer $\mcal{L}_{k^*}$. Therefore, it is sufficient to look at $v_1^*$ and $v_2^*$ only. 

For the other direction ``$\Rightarrow$", we shall prove the first and the third case, in which the superscript of the conditions is ``$(12)$". To satisfy $\textrm{T}^{(12)}_2$ and $\textrm{T}^{(12)}_3$ (equivalently $\textrm{P}^{(12)}_2$ and  $\textrm{P}^{(12)}_3$), we require $\mincut\lp\msf{s}_1,\msf{s}_2; \mcal{P}^{\msf{s}_2}(\msf{v}_1^*)\rp = 1$ as well as $\mcal{K}_{\mcal{G}_{12}}^{\msf{s}_1}\lp\msf{u}_{21}\rp$ forms an $\lp \msf{s}_1; \msf{d}_1\rp$-vertex-cut in $\mcal{G}_{12}$. In generating $\mcal{G}_{12}$, since there is only one node $\msf{v}_2^*$ (up to clones) in the same layer as $\msf{v}_1^*$, the reorganization step will not involve any change in edges, as $M=1$. 
%Define the following sets:
%\begin{align}
%\mcal{U}_1 = \mcal{P}(\msf{v}_1^*)\setminus\mcal{P}^{\msf{s}_2}(\msf{v}_1^*), 
%\mcal{W} = \mcal{P}^{\msf{s}_2}(\msf{v}_1^*),
%\mcal{U}_2 = \mcal{P}(\msf{v}_2^*)\setminus\mcal{P}^{\msf{s}_2}(\msf{v}_1^*),
%\end{align}
There are two possible cases where $\mcal{K}_{\mcal{G}_{12}}^{\msf{s}_1}\lp\msf{u}_{21}\rp$ forms an $\lp \msf{s}_1; \msf{d}_1\rp$-vertex-cut in $\mcal{G}_{12}$: $\msf{u}_{21} \ne \msf{v}_2^*$, or $\msf{u}_{21} = \msf{v}_2^*$. 

In the first case where $\msf{u}_{21} \ne \msf{v}_2^*$, we have $\mincut_{\mcal{G}_{12}}\lp\msf{s}_1,\msf{s}_2; \mcal{P}_{\mcal{G}_{12}}(\msf{v}_2^*)\rp = 1$. Hence all nodes in $\mcal{P}^{\msf{s}_2}(\msf{v}_1^*)$ are parents of $\msf{v}_2^*$, and $\mincut\lp\msf{s}_1,\msf{s}_2; \mcal{P}(\msf{v}_2^*)\setminus\mcal{P}^{\msf{s}_2}(\msf{v}_1^*)\rp=1$. 
%\item
%\end{itemize}
Due to the fact that $\msf{v}_1^*$ is not omniscient, $\msf{s}_2$ must be able to reach $\mcal{P}(\msf{v}_2^*)\setminus\mcal{P}^{\msf{s}_2}(\msf{v}_1^*)$. On the other hand, nodes in $\mcal{P}(\msf{v}_1^*)\setminus\mcal{P}^{\msf{s}_2}(\msf{v}_1^*)$ are all $\msf{s}_1$-only-reachable and hence cannot belong to $\mcal{P}(\msf{v}_2^*)\setminus\mcal{P}^{\msf{s}_2}(\msf{v}_1^*)$. Similarly nodes in $\mcal{P}(\msf{v}_2^*)\setminus\mcal{P}^{\msf{s}_2}(\msf{v}_1^*)$ cannot be in $\mcal{P}(\msf{v}_1^*)\setminus\mcal{P}^{\msf{s}_2}(\msf{v}_1^*)$. Therefore, we conclude that $\mcal{U}_1$ is $\msf{s}_1$-only-reachable, $\mincut\lp \msf{s}_1,\msf{s}_2;\mcal{W}\rp = \mincut\lp \msf{s}_1,\msf{s}_2;\mcal{U}_2\rp = 1$. Then condition 3 and 4 in $\textrm{T}^{(12)}$ ($\textrm{P}^{(12)}$) imply the rest of the conditions in the right-hand-side of the case $\textrm{T}^{(12)}$ ($\textrm{P}^{(12)}$). It is quite easy to see that in this case $\textrm{P}^{(12)}\cap \textrm{T}^{(21)} = \emptyset$.

In the second case where $\msf{u}_{21} = \msf{v}_2^*$, $\mcal{P}(\msf{v}_1^*)\setminus \mcal{P}^{\msf{s}_2}(\msf{v}_1^*)$ should be equal to the set of $\msf{s}_1$-reachable parents of $\msf{v}_2^*$ in $\mcal{G}_{12}$. If $\textrm{T}^{(12)}_3$ is satisfied, then $\msf{v}_1^*$ and $\msf{v}_2^*$ will share the same $\msf{s}_1$-reachable parents, contradicting the assumption that there is no omniscient node. If $\textrm{P}^{(12)}_3$ is satisfied, then it must be the case that $\msf{v}_2^*$ has no parents in $\mcal{P}^{\msf{s}_2}(\msf{v}_1^*)$. Therefore, $\mcal{P}^{\msf{s}_2}(\msf{v}_1^*) = \mcal{U}_1$, all nodes in $\mcal{W}$ are $\msf{s}_1$-only-reachable, and all nodes in $\mcal{U}_1$ are $\msf{s}_2$-only-reachable. Then condition $\textrm{P}^{(12)}_4$ implies that $\mcal{K}^{\msf{s}_2}\lp\msf{w}_{12}\rp$ forms an $\lp \msf{s}_2; \msf{d}_2\rp$-vertex-cut. Then it is easy to verify that $\textrm{T}^{(21)}$ is satisfied. So considering $\textrm{P}^{(12)}\setminus\textrm{T}^{(21)}$, this pattern will not be included. Proof complete.
%\end{proof}

\subsection{Proof of Claim~\ref{claim_SameLay_1}}
\begin{figure}[htbp]
{\center
\subfigure[Illustration of Case 1)(i). $\msf{u}_1$ and $\msf{w}_2$ are $\msf{s}_1$-reachable and $\msf{u}_2$ is $\msf{s}_2$-reachable.]{\includegraphics[width=2.5in]{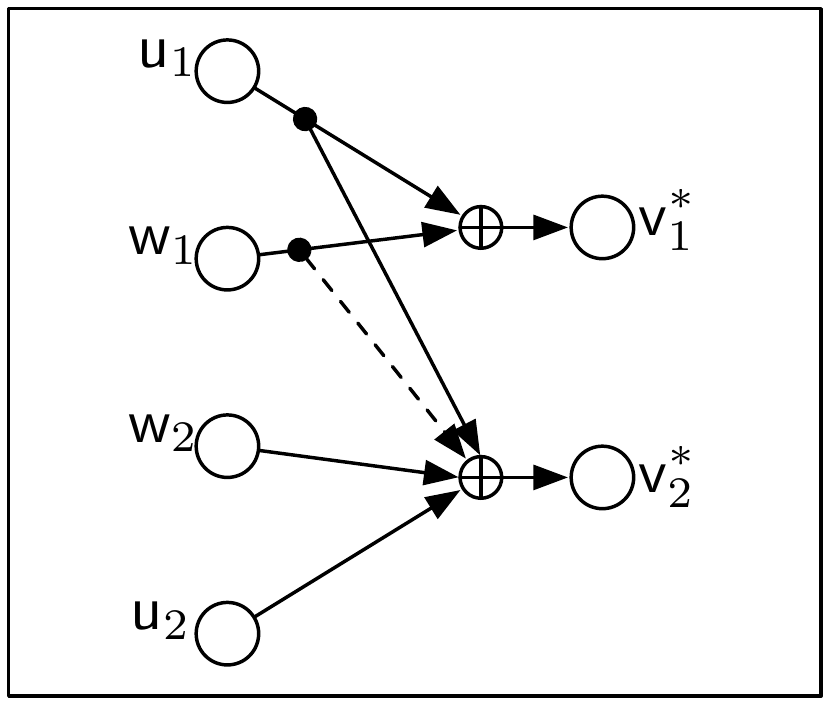}}
\subfigure[Illustration of Case 1)(ii)(1). $\msf{u}_1$ is $\msf{s}_1$-reachable, $\msf{u}_2$ is $\msf{s}_1\msf{s}_2$-reachable, and $\msf{w}_1$ is $\msf{s}_1$-only-reachable.]{\includegraphics[width=2.5in]{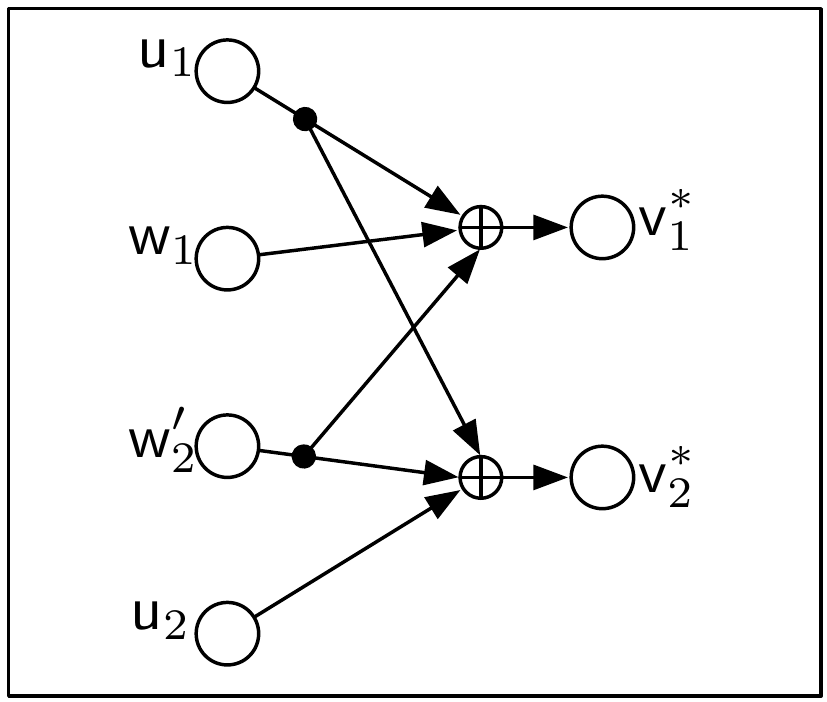}}
\subfigure[Illustration of Case 1)(ii)(2). $\msf{u}_1$ is $\msf{s}_1$-reachable and $\msf{u}_2$ is $\msf{s}_1\msf{s}_2$-reachable.]{\includegraphics[width=2.5in]{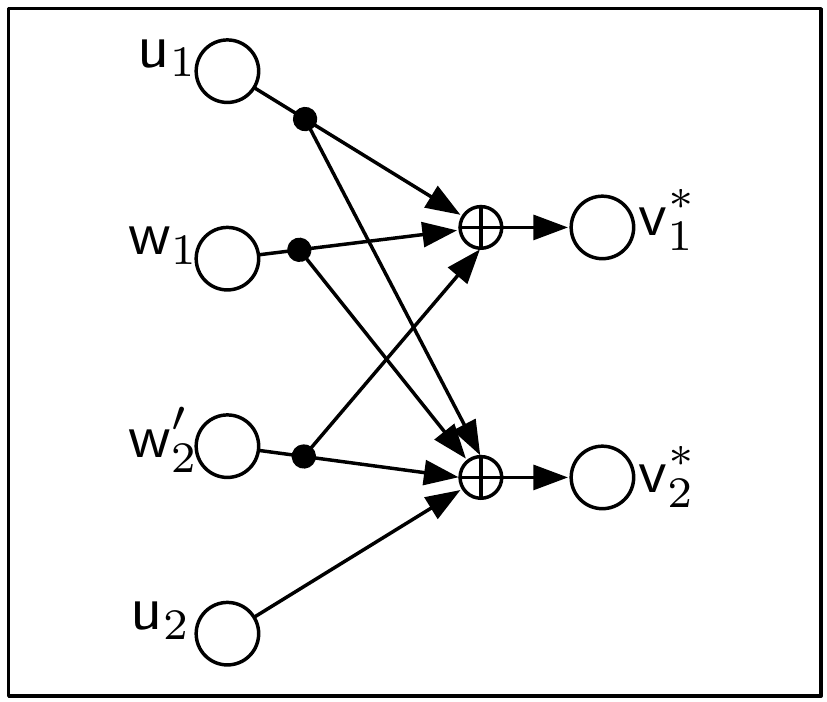}}
\subfigure[Illustration of Case $\ol{A1}\cap \ol{B2}$. $\msf{u}_1$ is $\msf{s}_1$-reachable and $\msf{u}_2$ is $\msf{s}_2$-reachable.]{\includegraphics[width=2.5in]{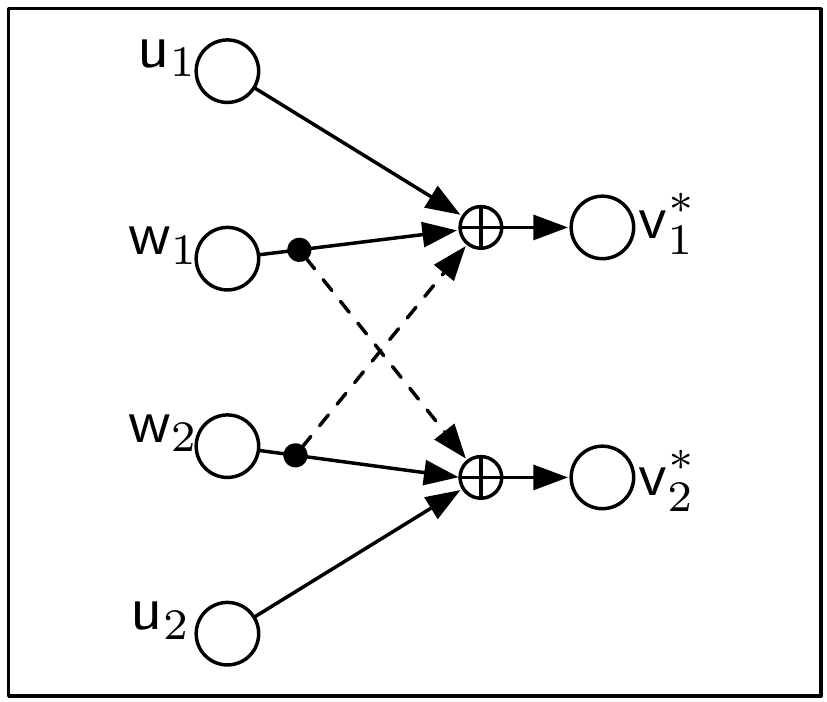}}
%\subfigure[Downlink Scenario]{\includegraphics[width=2.35in]{Gaussian_DL_n.pdf}}
\caption{Critical Nodes in the Same Layer}
\label{fig_SameLayer}
}
\end{figure}

%\begin{proof}
It is quite obvious that $(1,1)$ is achievable when $\mcal{W} = \emptyset$. The assumption that there is no omniscient node combined with $\mcal{U}_1^{\msf{s}_1} = \emptyset$ or $\mcal{U}_2^{\msf{s}_2} = \emptyset$, implies the following three cases:

%that: 1) $\mcal{P}^{\msf{s}_1}(\msf{v}_1^*) \subsetneq \mcal{P}^{\msf{s}_1}(\msf{v}_2^*)$ and $\mcal{P}^{\msf{s}_2}(\msf{v}_2^*) \setminus \mcal{P}^{\msf{s}_2}(\msf{v}_1^*) \ne \emptyset$, 2) $\mcal{P}^{\msf{s}_2}(\msf{v}_2^*) \subsetneq \mcal{P}^{\msf{s}_2}(\msf{v}_1^*)$ and $\mcal{P}^{\msf{s}_1}(\msf{v}_1^*) \setminus \mcal{P}^{\msf{s}_1}(\msf{v}_2^*) \ne \emptyset$, or 3) $\mcal{P}^{\msf{s}_1}(\msf{v}_1^*) \subsetneq \mcal{P}^{\msf{s}_1}(\msf{v}_2^*)$ and $\mcal{P}^{\msf{s}_2}(\msf{v}_2^*) \subsetneq \mcal{P}^{\msf{s}_2}(\msf{v}_1^*)$.

{\flushleft 1)} $\mcal{P}^{\msf{s}_1}(\msf{v}_1^*) \subsetneq \mcal{P}^{\msf{s}_1}(\msf{v}_2^*)$ and $\mcal{P}^{\msf{s}_2}(\msf{v}_2^*) \setminus \mcal{P}^{\msf{s}_2}(\msf{v}_1^*) \ne \emptyset$:
Pick $\msf{u}_1\in\mcal{P}^{\msf{s}_1}(\msf{v}_1^*)$ and then find a node $\msf{w}_1\in\mcal{P}(\msf{v}_1^*)$ such that $\mincut(\msf{s}_1,\msf{s}_2; \msf{u}_1, \msf{w}_{1})=2$. Such a node exists by Lemma~\ref{lem_mincut_pair}. Pick nodes $\msf{u}_2\in \mcal{P}^{\msf{s}_2}(\msf{v}_2^*) \setminus \mcal{P}^{\msf{s}_2}(\msf{v}_1^*)$ and $\msf{w}_2\in \mcal{P}^{\msf{s}_1}(\msf{v}_2^*) \setminus \mcal{P}^{\msf{s}_1}(\msf{v}_1^*)$. Note that $\msf{u}_2,\msf{w}_2$ may be the same node.
\begin{itemize}  
\item [(i)] {Suppose there exist $\msf{u}_2$ and $\msf{w}_2$ described as above such that $\mincut(\msf{s}_1,\msf{s}_2;\msf{u}_2,\msf{w}_2)=2$:} See Fig.~\ref{fig_SameLayer}(a) for an illustration. We first arrange the transmission of $\msf{u}_1$ and $\msf{w}_1$ so that only user 1's symbol appears at $\msf{v}_1^*$. This can be done due to Lemma~\ref{lem_two_source}(a). Next we arrange the transmission of $\msf{u}_2$ and $\msf{w}_2$ so that the effect of user 1's symbol in the transmission of $\msf{u}_1$ (and possibly $\msf{w}_2$) at $\msf{v}_2^*$ can be neutralized, and user 2's symbol can appear cleanly. This can be done due to Lemma~\ref{lem_two_source}(b).

%{\bf Operation of $\msf{u}_1$ and $\msf{w}_1$}: Arrange their transmission so that only user 1's symbol appears at $\msf{v}_1^*$. This can be done due to Lemma~\ref{lem_two_source}(a).

%{\bf Operation of $\msf{u}_2$ and $\msf{w}_2$}: Arrange their transmissions so that the effect of user 1's symbol in the transmission of $\msf{u}_1$ (and possibly $\msf{w}_2$) at $\msf{v}_2^*$ can be neutralized, and user 2's symbol can appear cleanly. This can be done due to Lemma~\ref{lem_two_source}(b).

\item [(ii)] {Suppose $\mincut(\msf{s}_1,\msf{s}_2;\msf{u}_2,\msf{w}_2)=1$ for all $\msf{u}_2$ and $\msf{w}_2$ described as above:}
Then, we must have  $\mcal{P}^{\msf{s}_2}(\msf{v}_2^*) \setminus \mcal{P}^{\msf{s}_2}(\msf{v}_1^*)=\mcal{P}^{\msf{s}_1}(\msf{v}_2^*) \setminus \mcal{P}^{\msf{s}_1}(\msf{v}_1^*),$ for if not, we can always find nodes $\msf{u}_2\in \mcal{P}^{\msf{s}_2}(\msf{v}_2^*) \setminus \mcal{P}^{\msf{s}_2}(\msf{v}_1^*)$ and $\msf{w}_2\in \mcal{P}^{\msf{s}_1}(\msf{v}_2^*) \setminus \mcal{P}^{\msf{s}_1}(\msf{v}_1^*)$ such that $\mincut(\msf{s}_1,\msf{s}_2;\msf{u}_2,\msf{w}_2)=2.$ Thus, there must be a node $\msf{w}'_2\in\mcal{P}(\msf{v}_1^*)\cap\mcal{P}(\msf{v}_2^*)$ such that $\mincut(\msf{s}_1,\msf{s}_2;\msf{u}_2,\msf{w}'_2)=2$, by the definition of $\msf{v}_2^*$. Note that $\msf{w}'_2$ may be the same node as $\msf{u}_1$, $\msf{w}_1$, or a clone of either one. Also note that now $\msf{u}_2$ must be $\msf{s}_1\msf{s}_2$-reachable. See Fig.~\ref{fig_SameLayer}(b)(c) for an illustration. We further distinguish into two cases based on whether $\msf{w}_1$ is a parent of $\msf{v}_2^*$ or not:
%\begin{itemize}
%\item
{\flushleft (1)}
If $\msf{w}_1$ is not a parent of $\msf{v}_2^*$, then it is $\msf{s}_2$-only-reachable since $\mcal{P}^{\msf{s}_1}(\msf{v}_1^*) \subsetneq \mcal{P}^{\msf{s}_1}(\msf{v}_2^*)$. We let $\msf{u}_1$ and $\msf{w}'_2$ do RLC. Since $\msf{u}_2$ is $\msf{s}_1\msf{s}_2$-reachable, by Lemma~\ref{lem_two_source}(d), it can arrange its transmission so that $\msf{v}_2^*$ can decode $\msf{s}_2$'s symbol. We can then use $\msf{w}_1$ to neutralize user 2's symbol in $\msf{v}_1^*$'s reception if necessary. Since $\msf{u}_1$ is $\msf{s}_1$-reachable, $\msf{v}_1^*$ can obtain user 1's symbol cleanly after neutralization.
% due to Lemma~\ref{lem_reachability}.

%\item
{\flushleft (2)}
If $\msf{w}_1$ is a parent of $\msf{v}_2^*$, then we first arrange the transmission of $\{\msf{u}_1, \msf{w}_1, \msf{w}'_2\}$ so that $\msf{v}_1^*$ can decode user 1's symbol. This can be done due to Lemma~\ref{lem_two_source}(a). Next, since their aggregate at $\msf{v}_2^*$ has only user 1's symbol and $\msf{u}_2$ is $\msf{s}_1\msf{s}_2$-reachable, we can arrange the transmission of $\msf{u}_2$ so that user 1's symbol is neutralized and only user 2's symbol is left.

%\end{itemize}
%{\bf Operation of $\msf{u}_1$, $\msf{w}_1$, and $\msf{w}'_2$}: Arrange their transmission so that only user 1's symbol appears at $\msf{v}_1^*$. This can be done due to Lemma~\ref{lem_two_source}(a).

%{\bf Operation of $\msf{u}_2$ and $\msf{w}_2$}: Note that now $\msf{u}_2$ must be $\msf{s}_1\msf{s}_2$-reachable otherwise violating the assumption of this case. 

%Since $\mincut(\msf{s}_1,\msf{s}_2; \msf{u}_1, \msf{w}'_2, \msf{u}_2)=2$, we can adjust the transmission of $\msf{u}_2$ such that user 1's symbol is neutralized and user 2's symbol can appear cleanly at $\msf{v}_2^*$. This can be done due to Lemma~\ref{lem_two_source}(c) and since $\msf{u}_2$ is $\msf{s}_1\msf{s}_2$-reachable.
\end{itemize}

%\vspace{-4pt}
{\flushleft 2)} $\mcal{P}^{\msf{s}_2}(\msf{v}_2^*) \subsetneq \mcal{P}^{\msf{s}_2}(\msf{v}_1^*)$ and $\mcal{P}^{\msf{s}_1}(\msf{v}_1^*) \setminus \mcal{P}^{\msf{s}_1}(\msf{v}_2^*) \ne \emptyset$: Similar to the previous case.
%\vspace{-2pt}
{\flushleft 3)} $\mcal{P}^{\msf{s}_1}(\msf{v}_1^*) \subsetneq \mcal{P}^{\msf{s}_1}(\msf{v}_2^*)$ and $\mcal{P}^{\msf{s}_2}(\msf{v}_2^*) \subsetneq \mcal{P}^{\msf{s}_2}(\msf{v}_1^*)$:
Pick a node $\msf{u}_1 \in \mcal{P}^{\msf{s}_2}(\msf{v}_1^*) \setminus \mcal{P}^{\msf{s}_2}(\msf{v}_2^*)$ and a node $\msf{u}_2 \in \mcal{P}^{\msf{s}_1}(\msf{v}_2^*) \setminus \mcal{P}^{\msf{s}_1}(\msf{v}_1^*)$. By the definition of $\msf{v}_1^*$ and $\msf{v}_2^*$, we shall be able to find $\msf{w}_{1}\in\mcal{P}(\msf{v}_1^*)$ and $\msf{w}_{2}\in\mcal{P}(\msf{v}_2^*)$ such that $\mincut(\msf{s}_1,\msf{s}_2; \msf{u}_1, \msf{w}_{1})=\mincut(\msf{s}_1,\msf{s}_2; \msf{u}_2, \msf{w}_{2}) = 2$. Note that $\msf{w}_{1}, \msf{w}_{2}$ may be the same node but $\msf{u}_1, \msf{u}_2$ are different nodes, although they may be clones.
We shall show that $(1,1)$ is achievable. First let $\{\msf{w}_1,\msf{w}_2$\} do RLC. Then we can arrange the transmission of $\msf{u}_1$ and $\msf{u}_2$ such that $\msf{v}_1^*$ and $\msf{v}_2^*$ can obtain their desired symbols due to Lemma~\ref{lem_two_source}(c).
%We shall show that $(1,1)$ is achievable. First, arrange $\msf{u}_1$ and $\msf{w}_1$'s transmission assuming all other nodes are silent such that user 1's symbol can appear cleanly at $\msf{v}_1^*$, and vice versa arrange $\msf{u}_2$ and $\msf{w}_2$'s transmission. Note that $\msf{w}_1$ may connect to $\msf{v}_2^*$, and $\msf{w}_2$ may connect to $\msf{v}_1^*$. Next we make use of $\msf{u}_1$ and $\msf{u}_2$ to neutralize the effect of user 2's and user 1's symbols at $\msf{v}_1^*$ and $\msf{v}_2^*$ respectively. As long as the field size is large enough, this augmentation should not null out the desired symbols.
%\end{enumerate}
%\end{proof}

\subsection{Proof of Claim~\ref{claim_SameLay_2}}
%\begin{proof}
Note that 
\begin{align*}
&\neg\lp A\vee B\rp = \neg\lp A1\wedge A2\rp  \wedge \neg \lp B1\wedge B2\rp = \lp \neg A1 \vee \neg A2 \rp \wedge \lp \neg B1 \vee \neg B2 \rp\\
&= \lp \neg A1 \wedge \neg B1 \rp \vee \lp \neg A1 \wedge \neg B2\rp \vee \lp \neg A2 \wedge \neg B1\rp \vee \lp \neg A2 \wedge \neg B2\rp.
%&\ol{A\cup B} = \ol{A1\cap A2} \cap \ol{B1\cap B2} = \lp \ol{A1} \cup \ol{A2}\rp \cap \lp \ol{B1} \cup \ol{B2}\rp\\
%&= \lp \ol{A1} \cap \ol{B1}\rp \cup \lp \ol{A1} \cap \ol{B2}\rp \cup \lp \ol{A2} \cap \ol{B1}\rp \cup \lp \ol{A2} \cap \ol{B2}\rp.
\end{align*}

We distinguish into 4 cases.
\begin{enumerate}
\item $\neg A1 \wedge \neg B1$:\par
In this case, there is a node in $\mcal{P}_1^{\msf{s}_1}$ that is $\msf{s}_1\msf{s}_2$-reachable and there is another node in $\mcal{P}_1^{\msf{s}_1}$ that is $\msf{s}_1$-only-reachable. Hence $\mincut\lp \msf{s}_1,\msf{s}_2; \mcal{P}_1^{\msf{s}_1}\rp = 2$. We can first arrange the transmission of $\mcal{P}(\msf{v}_2^*)$ so that $\msf{v}_2^*$ can decode $b$. Since $\mincut\lp \msf{s}_1,\msf{s}_2; \mcal{P}_1^{\msf{s}_1}\rp = 2$, we can arrange their transmission to form any linear combination of $a$ and $b$; in particular, the one that combined with the transmission from $\mcal{W}$ forms $a$ at $\msf{v}_1^*$. Hence $(1,1)$ is achievable.

\item $\neg A1 \wedge \neg B2$:\par
In this case there is a node $\msf{u}_1\in\mcal{P}_1^{\msf{s}_1}$ that is $\msf{s}_1\msf{s}_2$-reachable and there is a node $\msf{u}_2\in\mcal{P}_2^{\msf{s}_2}$ that is $\msf{s}_1\msf{s}_2$-reachable.  Locate nodes $\msf{w}_1\in\mcal{P}(\msf{v}_1^*)$ and $\msf{w}_2\in\mcal{P}(\msf{v}_2^*)$ such that $\mincut\lp\msf{s}_1,\msf{s}_2;\msf{u}_1,\msf{w}_1\rp=\mincut\lp\msf{s}_1,\msf{s}_2;\msf{u}_2,\msf{w}_2\rp=2.$ Note that $\msf{w}_1, \msf{w}_2$ may be the same node but $\msf{u}_1, \msf{u}_2$ will be different nodes although they may be clones. Then, let $\msf{w}_1,\msf{w}_2$ perform RLC while $\msf{u}_1,\msf{u}_2$ arrange their transmissions so that $\msf{v}_1^*,\msf{v}_2^*$ can decode their desired symbols. 
This can be done with high probability due to Lemma~\ref{lem_two_source}(d).
%%This can be arranged from the mincut condition, and is proved rigorously in the appendix. 
See Fig.~\ref{fig_SameLayer}(d) for an illustration.
%Therefore we use the result in Section.~\ref{subsec_SameLay} Case 1) (ii) to show that $(1,1)$ is achievable.

\item $\neg A2 \wedge \neg B1$:\par
In this case there is a node in $\mcal{P}_1^{\msf{s}_1}$ that is $\msf{s}_1$-only-reachable and there is a node in $\mcal{P}_2^{\msf{s}_2}$ that is $\msf{s}_2$-only-reachable. Obviously $(1,1)$ is achievable.

\item $\neg A2 \wedge \neg B2$:\par
In this case, there is a node in $\mcal{P}_2^{\msf{s}_2}$ that is $\msf{s}_2$-only-reachable and there is another node in $\mcal{P}_2^{\msf{s}_2}$ that is $\msf{s}_1\msf{s}_2$-reachable. Similar to the first case, $(1,1)$ is achievable.
\end{enumerate}
Proof complete.
%\end{proof}

\subsection{Proof of Lemma~\ref{lem_special}}
We shall distinguish the condition $\mcal{P}^{\msf{s}_2}(\msf{u}_2)\ne \mcal{P}^{\msf{s}_2}(\msf{v}_1^*)$ into two cases, (1) $\mcal{P}^{\msf{s}_2}(\msf{u}_2)\setminus \mcal{P}^{\msf{s}_2}(\msf{v}_1^*)\ne \emptyset$, and (2) $\mcal{P}^{\msf{s}_2}(\msf{u}_2)\subsetneq \mcal{P}^{\msf{s}_2}(\msf{v}_1^*)$.

\subsubsection{$\mcal{P}^{\msf{s}_2}(\msf{u}_2)\setminus \mcal{P}^{\msf{s}_2}(\msf{v}_1^*)\ne \emptyset$}

In this case, if $\mcal{P}^{\msf{s}_2}(\msf{u}_2)\cap \mcal{P}^{\msf{s}_2}(\msf{v}_1^*)= \emptyset$, then the special linear coding operation in $\mcal{P}^{\msf{s}_2}(\msf{u}_2)$ will not affect the coefficient of user 2's symbol $b$ in the reception of $\msf{u}$. Therefore the goal in the claim of this lemma can be met from $\mincut(\msf{s}_1,\msf{s}_2;\mcal{P}(\msf{v}_1^*)) = 2$ and Lemma~\ref{lem_two_source}(b). Below we consider the case where $\mcal{P}^{\msf{s}_2}(\msf{u}_2)\cap \mcal{P}^{\msf{s}_2}(\msf{v}_1^*)\ne \emptyset$.

If $\mcal{P}^{\msf{s}_2}(\msf{v}_1^*)\setminus \mcal{P}^{\msf{s}_2}(\msf{u}_2)\ne \emptyset$, then we shall let the nodes in $\mcal{P}^{\msf{s}_2}(\msf{u}_2)\cap \mcal{P}^{\msf{s}_2}(\msf{v}_1^*)$ do RLC. Hence the parents of $\msf{u}_2$ are all doing RLC. Since $\msf{s}_2$ can reach $\msf{u}_2$, the coefficient of $b$ in the reception of $\msf{u}_2$ is non-zero with high probability. Now we turn to $\msf{v}_1^*$. As $\mincut(\msf{s}_1,\msf{s}_2;\mcal{P}(\msf{v}_1^*)) = 2$ and all nodes other than $\mcal{P}^{\msf{s}_2}(\msf{v}_1^*)\setminus \mcal{P}^{\msf{s}_2}(\msf{u}_2)$ are doing RLC, by Lemma~\ref{lem_two_source}(c) they can arrange their transmission so that $\msf{v}_1^*$ receives a linear combination consisting of $a$ solely.

\subsubsection{$\mcal{P}^{\msf{s}_2}(\msf{u}_2)\subsetneq \mcal{P}^{\msf{s}_2}(\msf{v}_1^*)$}
In this case we let the nodes in $\mcal{P}^{\msf{s}_2}(\msf{u}_2)$ do RLC. Hence the coefficient of $b$ in the reception of $\msf{u}_2$ is non-zero with high probability since all its predecessor are doing RLC. Then as $\mincut(\msf{s}_1,\msf{s}_2;\mcal{P}(\msf{v}_1^*)) = 2$ and all nodes other than $\mcal{P}^{\msf{s}_2}(\msf{v}_1^*)\setminus \mcal{P}^{\msf{s}_2}(\msf{u}_2)$ are doing RLC, by Lemma~\ref{lem_two_source}(c) they can arrange their transmission so that $\msf{v}_1^*$ receives a linear combination consisting of $a$ solely.

\subsection{Proof of Lemma~\ref{lem_Reception}}
%\begin{lemma}[Reception of $\mcal{U}$]\label{lem_Reception}
%Recall that $\mcal{U} := \lbp\msf{u}\in\mcal{L}_{k_1^*}: \msf{u} \text{ can reach } \msf{d}_1\rbp$. Consider RLC with $\msf{s}_1$ transmitting $a$ and $\msf{s}_2$ transmitting $b$. All nodes perform RLC up to layer $\mcal{L}_{k_1^*-1}$. In $\mcal{L}_{k_1^*-1}$, nodes except $\mcal{P}^{\msf{s}_2}(\msf{v}_1^*)$ also perform RLC. Then under special coding operation of $\mcal{P}^{\msf{s}_2}(\msf{v}_1^*)$ such that neutralization criterion \eqref{eq_neutralize_1} is satisfied, with high probability all nodes in $\mcal{U}$ receive a non-zero linear combination of $a$ and $b$, and the subspace spanned by their reception has dimension two.
%\end{lemma}

%\begin{lemma}
%If $\mincut(\msf{s}_1,\msf{s}_2;\mcal{P}^{\msf{s}_2}(\msf{v}_2^*))=2,$ and $\msf{v}_1^*$ is not omniscient, then it is possible for nodes in $\mcal{P}^{\msf{s}_2}(\msf{v}_1^*)$ to perform special coding such that $\msf{v}_1^*$ receives an aggregate that is uncorrupted by the symbol $b$ and 
%\begin{itemize}
%\item for each $\msf{u}\in\mcal{U},$ we have $\beta_{\msf{u},\msf{s}_1},\beta_{\msf{u},\msf{s}_2}$ not
%  both 0, AND
%\item the $|\mcal{U}|\times 2$ matrix with rows given by
%  $\begin{bmatrix}\beta_{\msf{u},\msf{s}_1} & \beta_{\msf{u},\msf{s}_2}\end{bmatrix}$ for each $\msf{u}\in\mcal{U}$
%  has full rank (i.e. rank 2).
%\end{itemize}

%\end{lemma}

%\begin{proof}
The special coding operation performed by nodes in $\mcal{P}^{\msf{s}_2}(\msf{v}_1^*)$ is as follows: Nodes choose their coefficients independently and uniformly over the set of coefficients satisfying $\sum_{\msf{u}\in\mcal{U}} \alpha_{\msf{u}}\beta_{\msf{u},\msf{s}_2}=0.$ Under this special coding, it is easy to show the first part of the assertion, namely, that each node receives a non-trivial linear combination. Because the reception of $\mcal{P}^{\msf{s}_2}(\msf{v}_1^*)$ has full rank, the linear constraint leaves the sum $\sum_{\msf{u}\in\mcal{U}} \alpha_{\msf{u}}\beta_{\msf{u},\msf{s}_1}$ non-zero with high probability. This allows us to argue that any $\msf{s}_1$-reachable node receives a non-zero coefficient for the symbol $a$ transmitted by source $\msf{s}_1$ inspite of the special coding.

Find node $\msf{u}\in\mcal{U}$ such that $\msf{u}\notin \mcal{K}^{\msf{s}_2}(\msf{v}_1^*).$ and $\msf{u}$ is $\msf{s}_2$-reachable. Such a node exists because $\msf{v}_1^*$ is not omniscient. Then, find node $\msf{w}\in\mcal{U}$ such that $\mincut(\msf{s}_1,\msf{s}_2;\msf{u},\msf{w})=2.$ If all nodes in $\mcal{P}^{\msf{s}_2}(\msf{v}_1^*)$ performed random linear coding, then $\msf{u},\msf{w}$ jointly can decode both symbols $a$ and $b$ with high probability.

Let $\mcal{P}_1:=\mcal{P}(\msf{u})\setminus\mcal{P}(\msf{w}), \mcal{P}_{12}:=\mcal{P}(\msf{u})\cap\mcal{P}(\msf{w}), \mcal{P}_2:=\mcal{P}(\msf{w})\setminus\mcal{P}(\msf{u}).$
\begin{itemize}
\item As $\mcal{P}(\msf{u})\neq \emptyset,$ we have $\mcal{P}_1\cup \mcal{P}_{12}\neq \emptyset.$
\item As $\mcal{P}(\msf{w})\neq \emptyset,$ we have $\mcal{P}_{12}\cup \mcal{P}_2\neq \emptyset.$
\item As $\mcal{P}(\msf{u})\neq \mcal{P}(\msf{w})$ (since $\mincut(\msf{s}_1,\msf{s}_2;\msf{u},\msf{w})=2$), we have $\mcal{P}_1\cup \mcal{P}_2\neq \emptyset.$
\end{itemize}

Note that the conditions on the sets $\mcal{P}_1,\mcal{P}_{12},\mcal{P}_2$ are symmetric.

Reception of node $\msf{u}$: \par
$\quad\lp \sum_{\msf{x}\in \mcal{P}_1\cup \mcal{P}_{12}} \alpha_\msf{x}\beta_{\msf{x},\msf{s}_1}\rp\cdot a + \lp \sum_{\msf{x}\in \mcal{P}_1\cup \mcal{P}_{12}} \alpha_\msf{x}\beta_{\msf{x},\msf{s}_2}\rp\cdot b$

Reception of node $\msf{w}$: \par
$\quad\lp \sum_{\msf{x}\in \mcal{P}_{12}\cup \mcal{P}_2} \alpha_\msf{x}\beta_{\msf{x},\msf{s}_1}\rp\cdot a + \lp \sum_{\msf{x}\in \mcal{P}_{12}\cup \mcal{P}_2} \alpha_\msf{x}\beta_{\msf{x},\msf{s}_2}\rp\cdot b$

Let $D:=\begin{vmatrix} \sum_{\msf{x}\in \mcal{P}_1\cup \mcal{P}_{12}} \alpha_\msf{x}\beta_{\msf{x},\msf{s}_1} &  \sum_{\msf{x}\in \mcal{P}_1\cup \mcal{P}_{12}} \alpha_\msf{x}\beta_{\msf{x},\msf{s}_2} \\
  \sum_{\msf{x}\in \mcal{P}_{12}\cup \mcal{P}_2} \alpha_\msf{x}\beta_{\msf{x},\msf{s}_1} & \sum_{\msf{x}\in \mcal{P}_{12}\cup \mcal{P}_2} \alpha_\msf{x}\beta_{\msf{x},\msf{s}_2}\end{vmatrix}.$ $D$ is non-zero if and only if $\msf{u},\msf{w}$ can jointly decode both symbols $a$ and $b.$

For two nodes $\msf{x},\msf{y},$ denote the determinant $\begin{vmatrix} \beta_{\msf{x},\msf{s}_1} & \beta_{\msf{x},\msf{s}_2} \\ \beta_{\msf{y},\msf{s}_1} & \beta_{\msf{y},\msf{s}_2} \end{vmatrix}$ by $\beta(\msf{x},\msf{y}).$ Some algebra allows the determinant $D$ to be expressed as:

\begin{align}\notag
D&=\sum_{\msf{x}\in \mcal{P}_1} \sum_{\msf{y}\in \mcal{P}_{12}} \alpha_{\msf{x}}\alpha_{\msf{y}} \beta(\msf{x},\msf{y})+\sum_{\msf{x}\in \mcal{P}_{12}} \sum_{\msf{y}\in \mcal{P}_2} \alpha_{\msf{x}}\alpha_{\msf{y}} \beta(\msf{x},\msf{y})\\&\quad+\sum_{\msf{x}\in \mcal{P}_2} \sum_{\msf{y}\in \mcal{P}_1} \alpha_{\msf{x}}\alpha_{\msf{y}} \beta(\msf{x},\msf{y}).\label{eq:delta}
\end{align}

Note that $D$ is also symmetric in the sets $\mcal{P}_1,\mcal{P}_{12},\mcal{P}_2.$ Let the special coding set $\mcal{P}^{\msf{s}_2}(\msf{v}_1^*)$ be denoted by $\mcal{P}.$ We are given that $\mincut\lp \msf{s}_1,\msf{s}_2;\mcal{P}\rp=2.$ The constraint placed on the coding coefficients of nodes in $\mcal{P}$ is $\sum_{\msf{x}\in \mcal{P}} \alpha_{\msf{x}}\beta_{\msf{x},\msf{s}_2}=0.$

\begin{itemize}
\item Suppose $\mcal{P}\setminus(\mcal{P}_1\cup \mcal{P}_{12}\cup \mcal{P}_2)\neq \emptyset.$ 

Because we have with high probability, $\beta_{\msf{x},\msf{s}_2}\neq 0 \forall x\in \mcal{P},$ we can view the special coding as all nodes in $\mcal{P}_1\cup \mcal{P}_{12}\cup \mcal{P}_2$ performing random linear coding while nodes in $\mcal{P}\setminus(\mcal{P}_1\cup \mcal{P}_{12}\cup \mcal{P}_2)$ performing restricted coding. In this case, parent nodes of $\msf{u}$ and $\msf{w}$ perform RLC and so, the claim is obviously true.

\item Suppose $\mcal{P}\subseteq \mcal{P}_1\cup \mcal{P}_{12}\cup \mcal{P}_2$ and suppose there are two non-empty sets among $\mcal{P}\cap \mcal{P}_1, \mcal{P}\cap \mcal{P}_{12}, \mcal{P}\cap \mcal{P}_2.$ 

Without loss of generality, assume $\mcal{P}\cap \mcal{P}_1\neq\emptyset.$ Fix $\msf{x}_0\in \mcal{P}\cap \mcal{P}_1.$ Find $\msf{x}_1\in \mcal{P}$ such that $\mincut\lp \msf{s}_1,\msf{s}_2;\msf{x}_0,\msf{x}_1\rp=2.$ If $\msf{x}_1\in \mcal{P}_{12}$ or $\msf{x}_1\in \mcal{P}_2,$ then we have $x_0\in \mcal{P}\cap \mcal{P}_1,$ and $x_1\in \mcal{P}\cap \mcal{P}_{12}$ or $x_1\in \mcal{P}\cap \mcal{P}_2$ such that $\mincut\lp \msf{s}_1,\msf{s}_2;\msf{x}_0,\msf{x}_1\rp=2.$

If $\msf{x}_1\in \mcal{P}_1,$ then pick any node $\msf{x}_2$ in the non-empty set $\mcal{P}\cap (\mcal{P}_{12}\cup \mcal{P}_2).$ By submodularity, we have 
$\mincut \lp \msf{s}_1,\msf{s_2};\msf{x}_0,\msf{x}_1,\msf{x}_2\rp+\mincut \lp \msf{s}_1,\msf{s_2};\msf{x}_2\rp\leq \mincut \lp \msf{s}_1,\msf{s_2};\msf{x}_0,\msf{x}_2\rp+\mincut \lp \msf{s}_1,\msf{s_2};\msf{x}_1,\msf{x}_2\rp.$
Since the two terms on the left are 2 and 1 respectively, at least one term on the right must be greater than 1 and thus, 2.

Thus, we can always find nodes $\msf{x}_0\in \mcal{P}\cap E, \msf{x}_1\in \mcal{P}\cap F,$ where $(E,F)=(\mcal{P}_1,\mcal{P}_{12}), (\mcal{P}_{12},\mcal{P}_2)$ or $(\mcal{P}_2,\mcal{P}_1).$ such that $\mincut \lp \msf{s}_1,\msf{s_2};\msf{x}_0,\msf{x}_1\rp=2.$

Suppose, without loss of generality, we have $\msf{x}_1\in \mcal{P}\cap \mcal{P}_1, \msf{x}_2\in \mcal{P}\cap \mcal{P}_{12}$ so that $\mincut \lp \msf{s}_1,\msf{s}_2;\msf{x}_1,\msf{x}_2\rp=2.$ 
We set $\alpha_{\msf{x}_1}=\beta_{\msf{x}_1,\msf{s}_2}^{-1}\lp \sum_{\msf{x}\in \mcal{P}\setminus\{\msf{x}_1\}} \alpha_{\msf{x}} \beta_{\msf{x},\msf{s}_2} \rp.$

Then, evaluating Equation~\eqref{eq:delta} with this substitution for $\alpha_{\msf{x}_1}$ gives us a polynomial in $(\alpha_{\msf{x}}:\msf{x}\in \mcal{P}_1\cup \mcal{P}_{12}\cup \mcal{P}_2\setminus \{\msf{x}_1\})$ with coefficients being rational functions in $(\beta_{\msf{x},\msf{s}_1}, \beta_{\msf{x},\msf{s}_2}:\msf{x}\in \mcal{P}_1\cup \mcal{P}_{12}\cup \mcal{P}_2\setminus \{\msf{x}_1\})$ which are themselves polynomials in the coding coefficients from the past stages. This polynomial has a coefficient for $\alpha_{\msf{x}_2}^2$ only in the sum

\begin{align*}
& \sum_{\msf{x}\in \mcal{P}} \sum_{\msf{y}\in \mcal{P}_{12}\cup \mcal{P}_2} \alpha_{\msf{x}}\alpha_{\msf{y}}\beta(\msf{x},\msf{y}) \\
& = \sum_{\msf{x}\in \mcal{P}\setminus\{\msf{x}_1\}} \sum_{\msf{y}\in \mcal{P}_{12}\cup \mcal{P}_2} \alpha_{\msf{y}}\left[\alpha_{\msf{x}}\beta(\msf{x},\msf{y}) +\alpha_{\msf{x}}\beta_{\msf{x}_1,\msf{s}_2}^{-1}\beta_{\msf{x},\msf{s}_2}\beta(\msf{x}_1,\msf{y})\right]\\
& = \sum_{\msf{x}\in \mcal{P}\setminus\{\msf{x}_1\}} \sum_{\msf{y}\in \mcal{P}_{12}\cup \mcal{P}_2} \alpha_{\msf{x}}\alpha_{\msf{y}} \frac{\beta_{\msf{y},\msf{s}_2}}{\beta_{\msf{x}_1,\msf{s}_2}}\beta(\msf{x}_1,\msf{x})
\end{align*}

where the last equality follows from the identity $\beta_{\msf{x},\msf{s}_2}\beta(\msf{x}_1,\msf{y}) + \beta_{\msf{x}_1,\msf{s}_2}\beta(\msf{x},\msf{y}) + \beta_{\msf{y},\msf{s}_2}\beta(\msf{x}_1,\msf{x})=0.$

Putting $\msf{x}=\msf{y}=\msf{x}_2$ gives the coefficient of $\alpha_{\msf{x}_2}^2$ to be $\frac{\beta_{\msf{x}_2,\msf{s}_2}}{\beta_{\msf{x}_1,\msf{s}_2}}\beta(\msf{x}_1,\msf{x}_2)$ which is not identically zero since each of $\beta_{\msf{x}_2,\msf{s}_2}, \beta_{\msf{x}_1,\msf{s}_2}, \beta(\msf{x}_1,\msf{x}_2)$ are not identically zero, the first two because $\msf{x}_1,\msf{x}_2$ lie in $\mcal{P}$ and so are $\msf{s}_2$-reachable and the third because $\mincut \lp \msf{s}_1,\msf{s_2};\msf{x}_1,\msf{x}_2\rp=2.$

Thus, $D$ is not identically zero and hence, evaluates to a non-zero value with high probability.

\item Finally, suppose $\mcal{P}\subseteq \mcal{P}_1$ or $\mcal{P}\subseteq \mcal{P}_{12}$ or $\mcal{P}\subseteq \mcal{P}_2.$

\begin{itemize}
\item First, suppose $\mcal{P}\subseteq \mcal{P}_1.$ Fix $\msf{x}_1\in \mcal{P}.$ There exists $\msf{x}_2\in \mcal{P}$ such that $\mincut \lp \msf{s}_1,\msf{s}_2;\msf{x}_1,\msf{x}_2\rp=2.$ Force $\alpha_{\msf{x}_1}=\beta_{\msf{x}_1,\msf{s}_2}^{-1}\lp \sum_{\msf{x}\in \mcal{P}\setminus\{\msf{x}_1\}} \alpha_{\msf{x}} \beta_{\msf{x},\msf{s}_2} \rp.$
{\flushleft
\begin{align*}
  & D\\
  & = \sum_{\msf{x}\in \mcal{P}_1}\sum_{\msf{y}\in \mcal{P}_{12}\cup \mcal{P}_2} \alpha_{\msf{x}}\alpha_{\msf{y}}\beta(\msf{x},\msf{y}) +\sum_{\msf{x}\in \mcal{P}_{12}}\sum_{\msf{y}\in \mcal{P}_2} \alpha_{\msf{x}}\alpha_{\msf{y}}\beta(\msf{x},\msf{y})\\
  & = \sum_{\msf{x}\in \mcal{P}}\sum_{\msf{y}\in \mcal{P}_{12}\cup \mcal{P}_2} \alpha_{\msf{x}}\alpha_{\msf{y}}\beta(\msf{x},\msf{y})\\&\quad + \sum_{\msf{x}\in \mcal{P}_1\setminus \mcal{P}}\sum_{\msf{y}\in \mcal{P}_{12}\cup \mcal{P}_2} \alpha_{\msf{x}}\alpha_{\msf{y}}\beta(\msf{x},\msf{y}) + \sum_{\msf{x}\in \mcal{P}_{12}}\sum_{\msf{y}\in \mcal{P}_2} \alpha_{\msf{x}}\alpha_{\msf{y}}\beta(\msf{x},\msf{y}) \\
  & = \sum_{\msf{x}\in \mcal{P}\setminus\{\msf{x}_1\}}\sum_{\msf{y}\in \mcal{P}_{12}\cup \mcal{P}_2} \alpha_{\msf{x}}\alpha_{\msf{y}}\frac{\beta_{\msf{y},\msf{s}_2}}{\beta_{\msf{x}_1,\msf{s}_2}}\beta(\msf{x}_1,\msf{x})\\&\quad + \sum_{\msf{x}\in \mcal{P}_1\setminus \mcal{P}}\sum_{\msf{y}\in \mcal{P}_{12}\cup \mcal{P}_2} \alpha_{\msf{x}}\alpha_{\msf{y}}\beta(\msf{x},\msf{y}) + \sum_{\msf{x}\in \mcal{P}_{12}}\sum_{\msf{y}\in \mcal{P}_2} \alpha_{\msf{x}}\alpha_{\msf{y}}\beta(\msf{x},\msf{y})
\end{align*}
}
As $\msf{u}\notin\mcal{K}^{\msf{s}_2}(\msf{v}_1^*),$ we have that some node $\msf{y}_0\in \mcal{P}_{12}$ is $\msf{s}_2$-reachable. Then, $\beta_{\msf{y}_0,\msf{s}_2}$ is not identically zero and the coefficient of $\alpha_{\msf{x}_2}\alpha_{\msf{y}_0}$ is not identically zero.

\item Suppose $\mcal{P}\subseteq \mcal{P}_{12}.$ Then, fix $\msf{x}_1\in \mcal{P}.$ There exists $\msf{x}_2\in \mcal{P}$ such that $\mincut \lp \msf{s}_1,\msf{s}_2;\msf{x}_1,\msf{x}_2\rp=2.$ Force $\alpha_{\msf{x}_1}=\beta_{\msf{x}_1,\msf{s}_2}^{-1}\lp \sum_{\msf{x}\in \mcal{P}\setminus\{\msf{x}_1\}} \alpha_{\msf{x}} \beta_{\msf{x},\msf{s}_2} \rp.$
\begin{align*}
D &= \sum_{\msf{x}\in \mcal{P}\setminus\{\msf{x}_1\}}\sum_{\msf{y}\in \mcal{P}_1\cup \mcal{P}_2} \alpha_{\msf{x}}\alpha_{\msf{y}}\frac{\beta_{\msf{y},\msf{s}_2}}{\beta_{\msf{x}_1,\msf{s}_2}}\beta(\msf{x}_1,\msf{x}) + \sum_{\msf{x}\in \mcal{P}_{12}\setminus \mcal{P}}\sum_{\msf{y}\in \mcal{P}_1\cup \mcal{P}_2} \alpha_{\msf{x}}\alpha_{\msf{y}}\beta(\msf{x},\msf{y}) \\&\quad+ \sum_{\msf{x}\in \mcal{P}_1}\sum_{\msf{y}\in \mcal{P}_2} \alpha_{\msf{x}}\alpha_{\msf{y}}\beta(\msf{x},\msf{y}).
\end{align*}
Again, since $\msf{u}\notin\mcal{K}^{\msf{s}_2}(\msf{v}_1^*),$ we have that some node $\msf{y}_0\in \mcal{P}_1$ is $\msf{s}_2$-reachable. Then, $\beta_{\msf{y}_0,\msf{s}_2}$ is not identically zero and the coefficient of $\alpha_{\msf{x}_2}\alpha_{\msf{y}_0}$ is not identically zero.

\item Now, suppose $\mcal{P}\subseteq \mcal{P}_2.$ Then, fix $\msf{x}_1\in \mcal{P}.$ There exists $\msf{x}_2\in \mcal{P}$ such that $\mincut \lp \msf{s}_1,\msf{s}_2;\msf{x}_1,\msf{x}_2\rp=2.$ Force $\alpha_{\msf{x}_1}=\beta_{\msf{x}_1,\msf{s}_2}^{-1}\lp \sum_{\msf{x}\in \mcal{P}\setminus\{\msf{x}_1\}} \alpha_{\msf{x}} \beta_{\msf{x},\msf{s}_2} \rp.$
\begin{align*}
D &= \sum_{\msf{x}\in \mcal{P}\setminus\{\msf{x}_1\}}\sum_{\msf{y}\in \mcal{P}_1\cup \mcal{P}_{12}} \alpha_{\msf{x}}\alpha_{\msf{y}}\frac{\beta_{\msf{y},\msf{s}_2}}{\beta_{\msf{x}_1,\msf{s}_2}}\beta(\msf{x}_1,\msf{x}) + \sum_{\msf{x}\in \mcal{P}_2\setminus \mcal{P}}\sum_{\msf{y}\in \mcal{P}_1\cup \mcal{P}_{12}} \alpha_{\msf{x}}\alpha_{\msf{y}}\beta(\msf{x},\msf{y})\\&\quad + \sum_{\msf{x}\in \mcal{P}_1}\sum_{\msf{y}\in \mcal{P}_{12}} \alpha_{\msf{x}}\alpha_{\msf{y}}\beta(\msf{x},\msf{y}).
\end{align*}
Again, as $\msf{u}$ is $\msf{s}_2$-reachable, we have that some node $\msf{y}_0\in \mcal{P}_1\cup \mcal{P}_{12}$ is $\msf{s}_2$-reachable. Then, $\beta_{\msf{y}_0,\msf{s}_2}$ is not identically zero and the coefficient of $\alpha_{\msf{x}_2}\alpha_{\msf{y}_0}$ is not identically zero.

\end{itemize}
\end{itemize}
%\end{proof}

\subsection{Proof of Lemma~\ref{lem_Rank}}
%\begin{proof}
For $A\subseteq \mcal{U},$ define $f(A)$ as the rank of the $|A|\times
2$ matrix with rows given by $\begin{bmatrix}\lambda_{\msf{u}} &
  \mu_{\msf{u}}\end{bmatrix}$ for $\msf{u}\in A,$ and define
$g(A)=\mincut\lp A ; \mcal{V}\rp.$ Then, $f(\cdot), g(\cdot)$ are rank
functions of two matroids on the same ground set $\mcal{U}.$ The given
conditions tell us that both these matroids have rank at least two and
every singleton subset has rank 1 in both matroids. We will first show
that there exist a two-element subset of $\mcal{U}$ that has rank 2 in
both matroids.

Find two elements $\msf{x},\msf{y}\in\mcal{U},$ such that $f(\{\msf{x},\msf{y}\})=2.$ If
$g(\{\msf{x},\msf{y}\})=2,$ we have found the desired two-element subset. Else, we
must have $g(\{\msf{x},\msf{y}\})=1.$ Then, there exists an element $\msf{z}\in\mcal{U}$
such that $g(\{\msf{x},\msf{z}\})=2.$ If $f(\{\msf{x},\msf{z}\})=2,$ we have the required
2-element subset. Else if we have $f(\{\msf{x},\msf{z}\})=1,$ then by
submodularity, we must have 

$f(\{\msf{z}\})+f(\{\msf{x},\msf{y},\msf{z}\})\leq f(\{\msf{x},\msf{z}\})+f(\{\msf{y},\msf{z}\})$

$g(\{\msf{y}\})+g(\{\msf{x},\msf{y},\msf{z}\})\leq g(\{\msf{x},\msf{y}\})+g(\{\msf{y},\msf{z}\})$

These give $f(\{\msf{y},\msf{z}\}),g(\{\msf{y},\msf{z}\})\geq 2,$ and thus, $\{\msf{y},\msf{z}\}$ is the
required subset of $\mcal{U}$ that has rank 2 in both matroids.

Thus, we have two nodes $\msf{x},\msf{y}\in\mcal{U}$ such that
$\begin{vmatrix} \lambda_{\msf{x}} & \mu_{\msf{x}} \\
  \lambda_{\msf{y}} & \mu_{\msf{y}} \end{vmatrix}\neq 0$ and
$\mincut\lp \msf{x},\msf{y};\mcal{V}\rp =2.$

Again, for $A\subseteq\mcal{V},$ the function defined by
$h(A)=\mincut\lp \msf{x},\msf{y};A\rp$ is the rank function of a matroid over
ground set $\mcal{V}$ that has rank two. Thus, there exist
$\msf{u},\msf{w}\in\mcal{V}$ such that $\mincut \lp
\msf{x},\msf{y};\msf{u},\msf{w}\rp=2.$

Thus, when all nodes perform RLC except nodes in
$\mcal{U}\setminus\{\msf{x},\msf{y}\}$ remain silent, we have that
$\msf{u},\msf{v}$ can jointly recover both symbols $a$ and $b.$

Now, if all nodes perform RLC, the $a$ and $b$ coefficients of the receptions of nodes
$\msf{u},\msf{v}$ would be polynomials in the random coding
coefficients with a determinant that is a polynomial that is not
identically zero. QED.
%\end{proof}

\subsection{Proof of Lemma~\ref{claim_Special}}
%\begin{proof}
If $\msf{v}_2^*$ has a parent from user 1's cloud $\mcal{C}_1$, then in either $\mcal{G}_{12}$ or $\mcal{G}'_{12}$ since this node in the cloud becomes $\msf{s}_1$-only-reachable while $\msf{v}_2^*$ can be reached by $\msf{s}_2$, $\msf{v}_2^*$ remains to be the critical node for user 2, ie., $\pmc_{\mcal{G}_{12}}(\msf{d}_2)=\pmc_{\mcal{G}'_{12}}(\msf{d}_2)=\msf{v}_2^*$. Since the $\msf{s}_1$-clones of $\msf{v}_2^*$ form an $\lp \msf{s}_1; \msf{d}_1\rp$-vertex-cut in $\mcal{G}_{12}$ but not in $\mcal{G}$, the only possibility is that some parents of $\msf{v}_2^*$ are not in the cloud $\mcal{C}_1$ and are dropped in $\mcal{G}_{12}$. These nodes are descendants of $\mcal{P}^{\msf{s}_2}(\msf{v}_1^*)$, which becomes $\msf{s}_1$-only-reachable in $\mcal{G}'_{12}$. Therefore in $\mcal{G}'_{12}$, $\msf{v}_2^*$ has some $\msf{s}_1$-reachable parents that is not in the cloud $\mcal{C}_1$, and hence the $\msf{s}_1$-clones of $\msf{v}_2^*$ do not form an $\lp \msf{s}_1; \msf{d}_1\rp$-vertex-cut in $\mcal{G}'_{12}$.

In the rest of the proof we deal with the case where $\msf{v}_2^*$ has no parents from user 1's cloud $\mcal{C}_1$. Hence ``$\msf{s}_1$-clones of $\msf{v}_2^*$ form an $\lp \msf{s}_1; \msf{d}_1\rp$-vertex-cut in $\mcal{G}_{12}$" implies that $\mincut_{\mcal{G}_{12}}(\msf{s}_1,\msf{s}_2; \mcal{P}_{\mcal{G}_{12}}(\msf{v}_2^*)) = 1$. We shall show that, for all possible $\mcal{G}'_{12}$, either $\mincut_{\mcal{G}'_{12}}(\msf{s}_1,\msf{s}_2; \mcal{P}_{\mcal{G}'_{12}}(\msf{v}_2^*)) = 2$, which implies that $\pmc_{\mcal{G}'_{12}}(\msf{d}_2)=\msf{v}_2^*$ and $\msf{s}_1$-clones of $\msf{v}_2^*$ do not form an $\lp \msf{s}_1; \msf{d}_1\rp$-vertex-cut in $\mcal{G}'_{12}$, or directly prove the statement.

Below a few notations are given before we proceed. $\mcal{U} := \lbp\msf{u}\in\mcal{L}_{k_1^*}: \msf{u} \text{ can reach } \msf{d}_2\rbp$. $\mcal{U}|_{\mcal{G}'_{12}}$ and $\mcal{U}|_{\mcal{G}_{12}}$ denote the nodes in the same layer as $\msf{v}_1^*$ that can reach $\msf{d}_2$ in $\mcal{G}'_{12}$ and $\mcal{G}_{12}$ respectively. Recall that $\mcal{R}$ is the set of nodes in $\mcal{P}^{\msf{s}_2}(\msf{v}_1^*)$ that can reach one of the two destinations in $\mcal{G}_{12}$. Define the following subsets of $\mcal{U}$: (use short-hand notations $\mcal{P}$ for $\mcal{P}^{\msf{s}_2}(\msf{v}_2^*)$ and $\mcal{S} := \mcal{P}\setminus \mcal{R}$)
\begin{align*}
\mcal{U}_{\mcal{P}} &:= \lbp \msf{u}: \mcal{P}(\msf{u})\supseteq \mcal{P} \rbp, \
\mcal{U}_{\mcal{Q}} := \lbp \msf{u}: \mcal{P}(\msf{u}) \cap \mcal{P} = \emptyset \rbp\\
\mcal{U}_{\mcal{R}} &:= \lbp \msf{u}: \mcal{P}(\msf{u}) \cap \mcal{P} \ne \emptyset, \mcal{P}(\msf{u}) \cap \mcal{P} \subseteq \mcal{R}\rbp \\
\mcal{U}_{\mcal{S}} &:= \lbp \msf{u}: \mcal{S}\subseteq \mcal{P}(\msf{u}) \cap \mcal{P} \subsetneq \mcal{P}\rbp
\end{align*}
Note that these four sets form a partition of $\mcal{U}$, and $\mcal{K}^{\msf{s}_2}(\msf{v}_1^*)\cap\mcal{U} \subseteq \mcal{U}_{\mcal{P}}$. 
%When $\mcal{K}^{\msf{s}_2}(\msf{v}_1^*) \cap \mcal{U} = \emptyset$, any node in $\mcal{U}_{\mcal{P}}$ has some $\msf{s}_2$-reachable parent other than those in $\mcal{P}$. 

%To prove the claim, we show that if for some $\mcal{G}'_{12}$, $\mincut_{\mcal{G}'_{12}}\lp \msf{s}_1,\msf{s}_2; \mcal{P}_{\mcal{G}'_{12}}(\msf{v}_2^*)\rp=1$, then for all possible $\mcal{G}_{12}$, the min-cut $\mincut_{\mcal{G}_{12}}\lp \msf{s}_1,\msf{s}_2; \mcal{P}_{\mcal{G}_{12}}(\msf{v}_2^*)\rp=2$, and hence contradiction.

Let us consider the following two cases: 1) $\mcal{R}\ne\emptyset$, and 2) $\mcal{R}=\emptyset$.
Note that when generating induced graphs $\mcal{G}_{12}$ and $\mcal{G}'_{12}$, some nodes may be dropped as they are no longer reachable from the sources. Consequently $\mcal{U}|_{\mcal{G}_{12}}$ and $\mcal{U}|_{\mcal{G}'_{12}}$ may be strictly contained in $\mcal{U}$. In the following discussion, we shall further distinguish into these cases.

{\flushleft 1) $\mcal{R}\ne\emptyset$}: \par
We shall show that in this case, $\mincut_{\mcal{G}'_{12}}(\msf{s}_1,\msf{s}_2; \mcal{P}_{\mcal{G}'_{12}}(\msf{v}_2^*)) = 2$.
\begin{itemize}
\item [(A)] $\mcal{U}|_{\mcal{G}'_{12}} = \mcal{U}$: 
Since no nodes are dropped in $\mcal{U}$ when generating $\mcal{G}'_{12}$, no nodes will be dropped in the later layers and $\mincut\lp\mcal{U}; \mcal{P}(\msf{v}_2^*)\rp$ remains the same in $\mcal{G}$ and $\mcal{G}'_{12}$. As $\mincut\lp\mcal{U}; \mcal{P}(\msf{v}_2^*)\rp \ge 2$ and all non-vertical cuts have cut-values at least $2$, we only need to show that $\mincut_{\mcal{G}'_{12}}\lp\msf{s}_1,\msf{s}_2; \mcal{U}\rp = 2$.
{\flushleft (i) $\mcal{U}|_{\mcal{G}_{12}} = \mcal{U}$}:\par
In this case, since $\mcal{U}|_{\mcal{G}_{12}} = \mcal{U}$, we have $\mincut_{\mcal{G}_{12}}\lp\mcal{U}|_{\mcal{G}_{12}};\mcal{P}_{\mcal{G}_{12}}(\msf{v}_2^*)\rp=\mincut\lp\mcal{U};\mcal{P}(\msf{v}_2^*)\rp \ge 2$. Therefore, $\mincut_{\mcal{G}_{12}}(\msf{s}_1,\msf{s}_2; \mcal{P}_{\mcal{G}_{12}}(\msf{v}_2^*)) = 1$ implies that $\mincut_{\mcal{G}_{12}}(\msf{s}_1,\msf{s}_2; \mcal{U}|_{\mcal{G}_{12}}) = 1$.

Suppose that $\mincut\lp\msf{s}_1,\msf{s}_2; \mcal{U}_{\mcal{R}}\rp=2$. Since nodes in $\mcal{U}_{\mcal{R}}$ will not be affected in generating $\mcal{G}_{12}$, $\mincut_{\mcal{G}_{12}}\lp\msf{s}_1,\msf{s}_2; \mcal{U}_{\mcal{R}}\rp=2$. Hence $\mincut_{\mcal{G}_{12}}\lp\msf{s}_1,\msf{s}_2; \mcal{U}|_{\mcal{G}_{12}}\rp=2$, contradicting the above fact. Besides, $\mcal{U}_{\mcal{R}}\ne \emptyset$. Therefore, $\mincut\lp\msf{s}_1,\msf{s}_2; \mcal{U}_{\mcal{R}}\rp=1$. 

Find a node $\msf{u}\in\mcal{U}$ such that $\mincut\lp\msf{s}_1,\msf{s}_2; \mcal{U}_{\mcal{R}}\cup\{\msf{u}\}\rp=2$. Below we show that this node $\msf{u}\in\mcal{U}_{\mcal{S}}$ by contradiction. Suppose $\msf{u}\in\mcal{U}_{\mcal{Q}}$. As the nodes in $\mcal{U}_{\mcal{Q}}$ will not be affected in generating $\mcal{G}_{12}$, we have $\mincut_{\mcal{G}_{12}}\lp\msf{s}_1,\msf{s}_2; \mcal{U}_{\mcal{R}}\cup\{\msf{u}\}\rp = \mincut\lp\msf{s}_1,\msf{s}_2; \mcal{U}_{\mcal{R}}\cup\{\msf{u}\}\rp=2$, contradicting the above fact that $\mincut_{\mcal{G}_{12}}(\msf{s}_1,\msf{s}_2; \mcal{U}|_{\mcal{G}_{12}}) = 1$. Next, suppose $\msf{u}\in\mcal{U}_{\mcal{P}}$. Let us first consider the min-cut value from $\{\msf{s}_1,\msf{s}_2\}$ to the collection of parents of $\mcal{U}_{\mcal{R}}\cup\{\msf{u}\}$, denoted by $\mcal{P}\lp\mcal{U}_{\mcal{R}}\cup\{\msf{u}\}\rp$. It is $2$ in $\mcal{G}$. In $\mcal{G}_{12}$, nodes in $\mcal{S}$ are dropped, but nodes in $\mcal{R}$ and nodes in $\mcal{P}\lp\mcal{U}_{\mcal{R}}\cup\{\msf{u}\}\rp \setminus \mcal{P}$ are not. Therefore the min-cut value is again $2$ since nodes in $\mcal{R}$ receive the same linear combination as those in $\mcal{P}$ under any RLC scheme. Second, it is clear that in $\mcal{G}_{12}$, $\mcal{U}_{\mcal{R}}\cup\{\msf{u}\}$ are not clones as $\msf{u}$ has no parents in $\mcal{R}$. Hence, $\mincut_{\mcal{G}_{12}}\lp\msf{s}_1,\msf{s}_2; \mcal{U}_{\mcal{R}}\cup\{\msf{u}\}\rp=2$, again contradicting the above fact that $\mincut_{\mcal{G}_{12}}(\msf{s}_1,\msf{s}_2; \mcal{U}|_{\mcal{G}_{12}}) = 1$.

Hence, $\msf{u}\in\mcal{U}_{\mcal{S}}$ for all such $\msf{u}$. We use the same argument as above to show that the min-cut value from $\{\msf{s}_1,\msf{s}_2\}$ to $\mcal{P}\lp\mcal{U}_{\mcal{R}}\cup\{\msf{u}\}\rp$ is again $2$ in $\mcal{G}_{12}$. Then $\mincut_{\mcal{G}_{12}}(\msf{s}_1,\msf{s}_2; \mcal{U}|_{\mcal{G}_{12}}) = 1$ implies that $\mcal{U}_{\mcal{R}}\cup\{\msf{u}\}$ become clones in $\mcal{G}_{12}$. Next, we turn to look at $\mcal{G}'_{12}$. First, obviously $\mcal{U}_{\mcal{R}}\cup\{\msf{u}\}$ are not clones in $\mcal{G}'_{12}$, as $\msf{u}$ has some parents in $\mcal{S}$ which are not dropped in $\mcal{G}'_{12}$. Second, $\mincut_{\mcal{G}'_{12}}\lp\msf{s}_1,\msf{s}_2; \mcal{P}_{\mcal{G}'_{12}}(\mcal{U})\rp = 2$ as $\mcal{R}$ becomes $\msf{s}_1$-only-reachable in $\mcal{G}'_{12}$ while $\msf{s}_2$ can reach some other node in $\mcal{P}_{\mcal{G}'_{12}}(\mcal{U})$. Combining the above two, we have shown that $\mincut_{\mcal{G}'_{12}}\lp\msf{s}_1,\msf{s}_2; \mcal{U}\rp = 2$.

{\flushleft (ii) $\mcal{U}|_{\mcal{G}_{12}} \ne \mcal{U}$}:\par
Some nodes in $\mcal{U}$ are dropped in generating $\mcal{G}_{12}$ and hence $\mcal{U}\cap\mcal{K}^{\msf{s}_2}(\msf{v}_1^*) \ne \emptyset$. The nodes in this intersection will be come $\msf{s}_1$-only-reachable in $\mcal{G}'_{12}$. Since some nodes in $\mcal{U}|_{\mcal{G}'_{12}} = \mcal{U}$ can be reached by $\msf{s}_2$ in $\mcal{G}'_{12}$, we conclude that $\mincut_{\mcal{G}'_{12}}\lp\msf{s}_1,\msf{s}_2; \mcal{U}\rp = 2$.

\item [(B)] $\mcal{U}|_{\mcal{G}'_{12}} \ne \mcal{U}$: 
Some nodes in $\mcal{U}$ are dropped in generating $\mcal{G}'_{12}$, and the collection of these nodes is $\mcal{U} \setminus \mcal{U}|_{\mcal{G}'_{12}}$. In the same layer as $\mcal{P}^{\msf{s}_2}(\msf{w}_{12})$, consider the collection of predecessors of nodes in $\mcal{U} \setminus \mcal{U}|_{\mcal{G}'_{12}}$. It must be equal to $\mcal{P}^{\msf{s}_2}(\msf{w}_{12})$, otherwise nodes in $\mcal{U} \setminus \mcal{U}|_{\mcal{G}'_{12}}$ would not be dropped in generating $\mcal{G}'_{12}$. Hence, $\mcal{U} \setminus \mcal{U}|_{\mcal{G}'_{12}}$ cannot be reached by $\mcal{K}(\msf{w}_{12})$, and has no parents in $\mcal{P}$. Therefore $\mcal{U} \setminus \mcal{U}|_{\mcal{G}'_{12}} \subseteq \mcal{U}_{\mcal{Q}}$. Nodes in $\mcal{U} \setminus \mcal{U}|_{\mcal{G}'_{12}}$ hence will not be dropped in $\mcal{G}_{12}$ and $\mincut_{\mcal{G}_{12}}\lp\msf{s}_1,\msf{s}_2; \mcal{U}_{\mcal{R}}\cup\mcal{U}_{\mcal{Q}}\rp=2$, as nodes in $\mcal{U} \setminus \mcal{U}|_{\mcal{G}'_{12}}$ can only be reached by $\mcal{P}^{\msf{s}_2}(\msf{w}_{12})$ while nodes in $\mcal{U}_{\mcal{R}}$ can be reached by $\msf{w}_{12}$ and its $\msf{s}_1$-only-reachable parents. Hence, the only possibility such that $\mincut_{\mcal{G}_{12}}(\msf{s}_1,\msf{s}_2; \mcal{P}_{\mcal{G}_{12}}(\msf{v}_2^*)) = 1$ is that $\mcal{U}|_{\mcal{G}_{12}} \ne \mcal{U}$ and $\mincut_{\mcal{G}_{12}}(\mcal{U}|_{\mcal{G}_{12}}; \mcal{P}_{\mcal{G}_{12}}(\msf{v}_2^*)) = 1$. 

Note that $\mincut_{\mcal{G}'_{12}}\lp\mcal{U}|_{\mcal{G}'_{12}}; \mcal{P}_{\mcal{G}'_{12}}(\msf{v}_2^*)\rp = \mincut\lp\mcal{U}|_{\mcal{G}'_{12}}; \mcal{P}(\msf{v}_2^*)\rp$ and $\mincut_{\mcal{G}_{12}}\lp\mcal{U}|_{\mcal{G}_{12}}; \mcal{P}_{\mcal{G}_{12}}(\msf{v}_2^*)\rp = \mincut\lp\mcal{U}|_{\mcal{G}_{12}}; \mcal{P}(\msf{v}_2^*)\rp$. Also note that $\mcal{U}\setminus \mcal{U}|_{\mcal{G}_{12}}$ will not be dropped in $\mcal{G}'_{12}$, and $\mcal{U}\setminus \mcal{U}|_{\mcal{G}'_{12}}$ will not be dropped in $\mcal{G}_{12}$. Hence $\mcal{U}$ is partitioned by $\mcal{U}\setminus \mcal{U}|_{\mcal{G}'_{12}}$, $\mcal{U}\setminus \mcal{U}|_{\mcal{G}_{12}}$, and $\mcal{U}|_{\mcal{G}_{12}}\cap \mcal{U}|_{\mcal{G}'_{12}}$. Furthermore, $\mcal{U}_{\mcal{R}}\subseteq \mcal{U}|_{\mcal{G}_{12}}\cap \mcal{U}|_{\mcal{G}'_{12}}$.

We first show that $\mincut\lp\mcal{U}|_{\mcal{G}'_{12}}; \mcal{P}(\msf{v}_2^*)\rp \ge 2$. Define a function of the subsets of $\mcal{U}$ by 
\begin{align*}
f(\mcal{A}) := \mincut\lp \mcal{A}; \mcal{P}(\msf{v}_2^*)\rp,\ \mcal{A}\subseteq\mcal{U}.
\end{align*}
Since $f$ is submodular, we have
\begin{align*}
3 &\overset{\aaaa}{\le} f\lp \mcal{U}|_{\mcal{G}_{12}}\cap \mcal{U}|_{\mcal{G}'_{12}}\rp + f\lp\mcal{U}\rp \le f\lp\mcal{U}|_{\mcal{G}_{12}}\rp + f\lp\mcal{U}|_{\mcal{G}'_{12}}\rp\\& \overset{\bbbb}{=} 1+ f\lp\mcal{U}|_{\mcal{G}'_{12}}\rp 
\implies  f\lp\mcal{U}|_{\mcal{G}'_{12}}\rp \ge 2.
\end{align*}
(a) is due to $f\lp\mcal{U}\rp\ge 2$ and $f\lp \mcal{U}|_{\mcal{G}_{12}}\cap \mcal{U}|_{\mcal{G}'_{12}}\rp\ge 1$ since $\mcal{U}_{\mcal{R}}\subseteq \mcal{U}|_{\mcal{G}_{12}}\cap \mcal{U}|_{\mcal{G}'_{12}}$. (b) is due to $f\lp\mcal{U}|_{\mcal{G}_{12}}\rp = 1$. 

Next we show that $\mincut_{\mcal{G}'_{12}}\lp\msf{s}_1,\msf{s}_2;\mcal{U}|_{\mcal{G}'_{12}}\rp = 2$. This is easy to see, since $\mcal{U}\setminus \mcal{U}|_{\mcal{G}_{12}}$ will become $\msf{s}_1$-only-reachable in $\mcal{G}'_{12}$ and some other nodes in $\mcal{U}|_{\mcal{G}'_{12}}$ can be reached by $\msf{s}_2$.

Combining the above arguments, we conclude that $\mincut_{\mcal{G}'_{12}}\lp\msf{s}_1,\msf{s}_2; \mcal{P}_{\mcal{G}'_{12}}(\msf{v}_2^*)\rp = 2$.

\end{itemize}

{\flushleft 2) $\mcal{R}=\emptyset$}:\par
In this case, $\mcal{U} = \mcal{U}_{\mcal{P}}\cap\mcal{U}_{\mcal{Q}}$. For notational convenience, we denote $\mcal{P}(\mcal{U}) \setminus \mcal{P}$ by $\mcal{Q}$. Since in $\mcal{G}_{12}$ the nodes in $\mcal{P}$ no longer connects to $\mcal{U}_{\mcal{P}}$, $\mcal{P}(\mcal{U}|_{\mcal{G}_{12}}) = \mcal{Q}$. Note that if $\mcal{U}_{\mcal{P}}\ne\emptyset$, then $\mincut_{\mcal{G}'_{12}}\lp\msf{s}_1,\msf{s}_2; \mcal{P}(\mcal{U}|_{\mcal{G}'_{12}})\rp = 2$ since $\mcal{P}\subseteq\mcal{P}(\mcal{U})$, and nodes in $\mcal{P}$ become $\msf{s}_1$-only-reachable in $\mcal{G}'_{12}$ while some other nodes in $\mcal{P}(\mcal{U}|_{\mcal{G}'_{12}})$ can be reached by $\msf{s}_2$.

$\mincut_{\mcal{G}_{12}}\lp\msf{s}_1,\msf{s}_2; \mcal{P}_{\mcal{G}_{12}}(\msf{v}_2^*)\rp = 1$ implies that: (A) $\mincut_{\mcal{G}_{12}}\lp\msf{s}_1,\msf{s}_2; \mcal{Q}\rp = 1$, (B) $\mincut_{\mcal{G}_{12}}\lp\mcal{Q}; \mcal{U}|_{\mcal{G}_{12}}\rp = 1$, or (C) $\mincut_{\mcal{G}_{12}}\lp\mcal{U}|_{\mcal{G}_{12}}; \mcal{P}_{\mcal{G}_{12}}(\msf{v}_2^*)\rp = 1$. Below we discuss the three cases respectively.

\begin{itemize}
\item [(A)] $\mincut_{\mcal{G}_{12}}\lp\msf{s}_1,\msf{s}_2; \mcal{Q}\rp = 1$:
Suppose $\mcal{U}_{\mcal{P}} = \emptyset$, then $\mcal{P}(\mcal{U}) = \mcal{Q}$, and $\mincut_{\mcal{G}_{12}}\lp\msf{s}_1,\msf{s}_2; \mcal{Q}\rp = 1$ implies $\mincut\lp\msf{s}_1,\msf{s}_2; \mcal{Q}\rp = 1$ contradicting the definition of $\msf{v}_2^*$. Hence $\mcal{U}_{\mcal{P}} \ne \emptyset$, implying that $\mincut_{\mcal{G}'_{12}}\lp\msf{s}_1,\msf{s}_2; \mcal{P}(\mcal{U}|_{\mcal{G}'_{12}})\rp = 2$. 

If $\mcal{U}|_{\mcal{G}'_{12}}$ are not clones in $\mcal{G}'_{12}$ and $\mincut_{\mcal{G}'_{12}}\lp\mcal{U}|_{\mcal{G}'_{12}};\mcal{P}_{\mcal{G}'_{12}}(\msf{v}_2^*)\rp \ge 2$, then $\mincut_{\mcal{G}'_{12}}\lp\msf{s}_1,\msf{s}_2; \mcal{P}_{\mcal{G}'_{12}}(\msf{v}_2^*)\rp = 2$. 

If $\mcal{U}|_{\mcal{G}'_{12}}$ become clones in $\mcal{G}'_{12}$, then $\msf{u}_{21}':=\pmc_{\mcal{G}'_{12}}(\msf{d}_1)$ must belong to this new clone set. Its parent set is $\mcal{P} \cup \mcal{Q}|_{\mcal{G}'_{12}}$, as some nodes in $\mcal{Q}$ may be dropped in $\mcal{G}'_{12}$. $\mcal{P}$ becomes $\msf{s}_1$-only-reachable in $\mcal{G}'_{12}$, while 
%some node in $\mcal{Q}|_{\mcal{G}'_{12}}$ can be reached by $\msf{s}_2$.
$\msf{v}_1^*$ has some $\msf{s}_1$-only-reachable parents not in $\mcal{P}$. Hence, $\mcal{K}_{\mcal{G}'_{12}}\msf{v}_1^*\cap\mcal{K}^{\msf{s}_1}_{\mcal{G}'_{12}}(\msf{u}_{21}') = \emptyset$, and $\mcal{K}^{\msf{s}_1}_{\mcal{G}'_{12}}(\msf{u}_{21}')$ does not form a $(\msf{s}_1;\msf{d}_1)$-vertex-cut in $\mcal{G}'_{12}$.

If $\mincut_{\mcal{G}'_{12}}\lp\mcal{U}|_{\mcal{G}'_{12}};\mcal{P}_{\mcal{G}'_{12}}(\msf{v}_2^*)\rp = 1$, then $\mcal{U}|_{\mcal{G}'_{12}}\ne\mcal{U}$, that is, some nodes in $\mcal{U}_{\mcal{Q}}$ are dropped in $\mcal{G}'_{12}$. But no nodes in $\mcal{U}_{\mcal{P}}$ will be dropped. A node in $\mcal{U}_{\mcal{P}}$ is $\msf{s}_1$-reachable in $\mcal{G}'_{12}$, and is an predecessor of $\msf{u}_{21}'$. This node cannot lie in $\mcal{K}(\msf{v}_1^*)$, otherwise $\mcal{Q}$ contains some $\msf{s}_1$-only-reachable nodes implying that all nodes in $\mcal{Q}$ are $\msf{s}_1$-only-reachable, contradicting the fact that in $\mcal{G}'_{12}$ some nodes in $\mcal{Q}|_{\mcal{G}'_{12}}$ can be reached by $\msf{s}_2$. Hence this node is not an predecessor of any node in the cloud $\mcal{C}_1$. In $\mcal{G}'_{12}$, $\msf{u}_{21}'$ has a $\msf{s}_1$-reachable parent whose predecessors include this node in $\mcal{U}_{\mcal{P}}$, and this parent is not in the cloud $\mcal{C}_1$. Therefore, $\mcal{K}^{\msf{s}_1}_{\mcal{G}'_{12}}(\msf{u}_{21}')$ does not form a $(\msf{s}_1;\msf{d}_1)$-vertex-cut in $\mcal{G}'_{12}$.

\item [(B)] $\mincut_{\mcal{G}_{12}}\lp\mcal{Q}; \mcal{U}|_{\mcal{G}_{12}}\rp = 1$: 
In this case, $\mcal{U}|_{\mcal{G}_{12}}$ become clones in $\mcal{G}_{12}$. Suppose $\mcal{U}_{\mcal{P}} = \emptyset$. Then $\mcal{U}|_{\mcal{G}_{12}} = \mcal{U}$, and $\mcal{U}$ are clones in $\mcal{G}$, contradicting the definition of $\msf{v}_2^*$. Hence $\mcal{U}_{\mcal{P}} \ne \emptyset$, implying that $\mincut_{\mcal{G}'_{12}}\lp\msf{s}_1,\msf{s}_2; \mcal{P}(\mcal{U}|_{\mcal{G}'_{12}})\rp = 2$. Moreover, we see that $\mcal{U}_{\mcal{Q}}$ are clones in $\mcal{G}$.

Suppose $\mcal{U}|_{\mcal{G}'_{12}}\ne\mcal{U}$. We know that $\mcal{U}\setminus\mcal{U}|_{\mcal{G}'_{12}} \subseteq\mcal{U}_{\mcal{Q}}$. Since $\mcal{U}_{\mcal{Q}}$ are clones in $\mcal{G}$, we conclude that $\mcal{U}\setminus\mcal{U}|_{\mcal{G}'_{12}} = \mcal{U}_{\mcal{Q}}$, implying that all nodes in $\mcal{U}_{\mcal{Q}}$ and $\mcal{Q}$ will be dropped in $\mcal{G}'_{12}$. This contradicts the fact that some nodes in $\mcal{U}|_{\mcal{G}'_{12}}$ can be reached by $\msf{s}_2$. Therefore, $\mcal{U}|_{\mcal{G}'_{12}}=\mcal{U}$.

In $\mcal{G}'_{12}$, nodes in $\mcal{U}_{\mcal{P}}$ have parents in $\mcal{P}$. Therefore obviously $\mcal{U}|_{\mcal{G}'_{12}}=\mcal{U}$ are not clones in $\mcal{G}'_{12}$. Combining the above discussions, we conclude that $\mincut_{\mcal{G}'_{12}}\lp\msf{s}_1,\msf{s}_2; \mcal{P}_{\mcal{G}'_{12}}(\msf{v}_2^*)\rp = 2$.

\item [(C)] $\mincut_{\mcal{G}_{12}}\lp\mcal{U}|_{\mcal{G}_{12}}; \mcal{P}_{\mcal{G}_{12}}(\msf{v}_2^*)\rp = 1$:
In this case, we must have $\mcal{U}_{\mcal{G}_{12}} \ne \mcal{U}$. If $\mcal{U}|_{\mcal{G}'_{12}} = \mcal{U}$, we use the same argument in Case 1)(A)(ii) to show that $\mincut_{\mcal{G}'_{12}}\lp\msf{s}_1,\msf{s}_2; \mcal{P}_{\mcal{G}'_{12}}(\msf{v}_2^*)\rp = 2$. If If $\mcal{U}|_{\mcal{G}'_{12}} \ne \mcal{U}$, we use the same argument in Case 1)(B)(ii) to show that $\mincut_{\mcal{G}'_{12}}\lp\msf{s}_1,\msf{s}_2; \mcal{P}_{\mcal{G}'_{12}}(\msf{v}_2^*)\rp = 2$.

\end{itemize}

Proof of the claim is now complete.
%\end{proof}

\section{$(1/2,1)$-Achievability in Case $A$ when $k_1^*=k_2^*=k^*$}\label{app_Pf_NonzeroDet}

We first state a useful lemma.
\begin{lemma}\label{spreads_out}
Let $p(\alpha_1,\alpha_2,\ldots,\alpha_n), q(\alpha_1,\alpha_2,\ldots,\alpha_n)\in\mathbb{F}_2[\alpha_1,\alpha_2,\ldots,\alpha_n]$ such that $p,q$ are not identically equal to zero or to each other. If $\alpha_1,\alpha_2,\ldots, \alpha_n$ are chosen independently and uniformly over $\mathbb{F}_{2^k},$ then 
\begin{itemize}
\item $q(\alpha_1,\alpha_2,\ldots,\alpha_n)\neq 0$ with probability at least $1-O(\frac{1}{2^k}),$ so the rational function $\frac{p}{q}$ is well-defined with high probability,
\item and $P(\frac{p}{q}=\gamma)=O(\frac{1}{2^k})$ for all $\gamma\in\mathbb{F}_{2^k}.$
\end{itemize}
\end{lemma}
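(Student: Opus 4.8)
The plan is to reduce both claims to a single black-box application of the Schwartz--Zippel lemma over the evaluation set $S=\mbb{F}_{2^k}$. For the first bullet: since $q$ is not the zero polynomial, say of total degree $d_q$, Schwartz--Zippel gives $\Pr\{q(\alpha_1,\ldots,\alpha_n)=0\}\le d_q/2^k=O(1/2^k)$, so the rational function $p/q$ is well-defined with probability at least $1-O(1/2^k)$.

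For the second bullet, fix $\gamma\in\mbb{F}_{2^k}$ and set $r_\gamma:=p-\gamma q\in\mbb{F}_{2^k}[\alpha_1,\ldots,\alpha_n]$. Interpreting the event $\{p/q=\gamma\}$ as $\{q\neq 0\}\cap\{p=\gamma q\}$, we have the inclusion $\{p/q=\gamma\}\subseteq\{r_\gamma(\alpha_1,\ldots,\alpha_n)=0\}$, so it suffices to bound the probability that $r_\gamma$ evaluates to $0$.

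The key step --- and the only nontrivial one --- is to show that $r_\gamma$ is \emph{not} the zero polynomial, for \emph{every} $\gamma\in\mbb{F}_{2^k}$. Suppose $r_\gamma\equiv 0$, i.e.\ $p\equiv\gamma q$ as polynomials over $\mbb{F}_{2^k}$. Since $q\not\equiv 0$, some monomial occurs in $q$ with coefficient $1\in\mbb{F}_2$; equating the coefficient of that monomial on the two sides of $p=\gamma q$ forces $\gamma$ to equal a coefficient of $p$, hence $\gamma\in\mbb{F}_2=\{0,1\}$. But $\gamma=0$ would give $p\equiv 0$ and $\gamma=1$ would give $p\equiv q$, each contradicting the hypotheses. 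Hence $r_\gamma\not\equiv 0$, and its total degree is at most $d:=\max(\deg p,\deg q)$, a quantity independent of $\gamma$. Applying Schwartz--Zippel to $r_\gamma$ then yields $\Pr\{r_\gamma(\alpha_1,\ldots,\alpha_n)=0\}\le d/2^k$ uniformly in $\gamma$, which gives $\Pr\{p/q=\gamma\}=O(1/2^k)$ with the implied constant uniform over $\gamma\in\mbb{F}_{2^k}$.

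I expect the non-vanishing of $r_\gamma$ to be the one place requiring care, and it is exactly here that the hypothesis that $p$ and $q$ live in the \emph{base} ring $\mbb{F}_2[\alpha_1,\ldots,\alpha_n]$ rather than in $\mbb{F}_{2^k}[\alpha_1,\ldots,\alpha_n]$ is used: over the extension field a polynomial may be rescaled by an arbitrary field element, so $p-\gamma q$ could in principle vanish for some $\gamma\notin\mbb{F}_2$; restricting coefficients to $\mbb{F}_2$ confines any such bad $\gamma$ to $\{0,1\}$, where the two excluded degeneracies $p\equiv 0$ and $p\equiv q$ are the only obstructions.
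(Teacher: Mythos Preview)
Your proof is correct and follows essentially the same approach as the paper: both reduce to showing $p-\gamma q\not\equiv 0$ for every $\gamma\in\mbb{F}_{2^k}$ via the observation that $p,q$ have $\mbb{F}_2$-coefficients (so any proportionality constant must lie in $\{0,1\}$), then apply Schwartz--Zippel. Your write-up is slightly cleaner in treating all $\gamma$ uniformly and making the degree bound (and hence the uniformity of the $O(1/2^k)$ constant) explicit, but the ideas are identical.
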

\begin{proof}
%Proof of Lemma~\ref{spreads_out}:
We use a standard result from finite fields which states that if a multivariate polynomial $g$ in $n$ variables over finite field $\mathbb{F}$ with degree in each variable at most $d,$ is evaluated at an argument chosen uniformly over the set of possible arguments, then it yields zero with probability at most $\frac{nd}{|\mathbb{F}|},$ provided of course that the polynomial is not identically zero.

This proves the first item in the lemma with $g=q$ and the second item in the lemma for the case $\gamma=0$ using $g=p.$

For $\gamma=1,$ we use the fact that $p-q$ is not identically zero to get $P(\frac{p}{q}=1)=O(\frac{1}{2^k}).$

For any other $\gamma\in\mathbb{F}_{2^k},$ we notice that $p-\gamma q$ cannot possibly be identically zero unless both $p$ and $q$ are identically zero. This is because $p,q$ have coefficients from $\mathbb{F}_2$ while $\gamma\neq 0,1.$ This establishes that $p-\gamma q$ evaluates to zero with probability atmost $O(\frac{1}{2^k}).$
\end{proof}

We start the proof of $(1/2,1)$-achievability below.

Here, we have
\begin{itemize}
\item $\mcal{P}^{\msf{s}_1}(\msf{v}_1^*)\setminus \mcal{P}^{\msf{s}_1}(\msf{v}_2^*)\ne \emptyset, \mcal{P}^{\msf{s}_2}(v_2^*)\setminus \mcal{P}^{\msf{s}_2}(\msf{v}_1^*)\ne \emptyset,$ 
\item $\msf{u}_1\in \mcal{P}^{\msf{s}_1}(\msf{v}_1^*)\setminus \mcal{P}^{\msf{s}_1}(\msf{v}_2^*)\ne \emptyset$ and $\msf{u}_2\in \mcal{P}^{\msf{s}_2}(\msf{v}_2^*)\setminus \mcal{P}^{\msf{s}_2}(\msf{v}_1^*)\ne \emptyset,$
\item $\msf{u}_1$ is $\msf{s}_1$-only-reachable and $\msf{u}_2$ is $\msf{s}_1\msf{s}_2$-reachable,
\item $\msf{w}_2\in\mcal{P}(\msf{v}_2^*)$ such that $\mincut \lp \msf{s}_1,\msf{s}_2;\msf{u}_2,\msf{w}_2\rp=2,$ and $\msf{w}_2$ is a parent of $\msf{v}_1^*,$ and $\msf{w}_2$ is $\msf{s}_2$-reachable. 
\end{itemize}

We will use RLC for the transmission of all nodes in layers $0$ through $k^*-2.$ The RLC is performed without mixing across the time steps. In the first time step, $\msf{s}_1$ transmits the symbol $a$ while $\msf{s}_2$ transmits the symbol $b_1.$ In the second time step, $\msf{s}_1$ transmits symbol $a$ while $\msf{s}_2$ transmits the symbol $b_2.$ 

Suppose now that $\msf{w}_2$ is $\msf{s}_1\msf{s}_2$-reachable. 

Consider the scheme where $\msf{w}_2$ and $\msf{u}_2$ both zero-force user 1's symbol $a.$ $\msf{u}_1$ and $\msf{u}_2$ transmit in the first time slot, thus causing no interference at $\msf{v}_1^*$ and $\msf{v}_2^*.$ $\msf{w}_2$ transmits in the second time slot.

We have $\beta^{(1)}_{\msf{u}_1,s_2}=\beta^{(1)}_{\msf{u}_1,s_2}=0,$ while $\beta^{(1)}_{\msf{u}_1,\msf{s}_1}\ne 0$ with high probability from Lemma~\ref{lem_reachability}. Thus $\msf{u}_1$ can decode $\msf{s}_1$'s symbol $a$ with high probability.

Now, the receptions of $\msf{u}_2$ in the two time slots are $\beta_{\msf{u}_2,\msf{s}_1}^{(1)}\cdot a+ \beta_{\msf{u}_2,s_2}^{(1)}\cdot b_1$ and $\beta_{\msf{u}_2,\msf{s}_1}^{(2)}\cdot a+ \beta_{\msf{u}_2,s_2}^{(2)}\cdot b_2$ respectively. Similarly, the receptions of $\msf{w}_2$ are $\beta_{\msf{w}_2,\msf{s}_1}^{(1)}\cdot a+ \beta_{\msf{w}_2,s_2}^{(1)}\cdot b_1$ and $\beta_{\msf{w}_2,\msf{s}_1}^{(2)}\cdot a+ \beta_{\msf{w}_2,s_2}^{(2)}\cdot b_2.$ Note that the coefficients of these symbols are all non-zero with high probability from Lemma~\ref{lem_reachability}.

The zero-forcing yields:
\begin{itemize}
\item Transmission of $\msf{u}_2:$ $\beta_{\msf{u}_2,s_2}^{(1)}\beta_{\msf{u}_2,\msf{s}_1}^{(2)}\cdot b_1 - \beta_{\msf{u}_2,s_2}^{(2)}\beta_{\msf{u}_2,\msf{s}_1}^{(1)}\cdot b_2$
\item Transmission of $\msf{w}_2:$ $\beta_{\msf{w}_2,s_2}^{(1)}\beta_{\msf{w}_2,\msf{s}_1}^{(2)}\cdot b_1 - \beta_{\msf{w}_2,s_2}^{(2)}\beta_{\msf{w}_2,\msf{s}_1}^{(1)}\cdot b_2$
\end{itemize}

To show that $\msf{v}_2^*$ can decode, we only need to show that the determinant $\begin{vmatrix} \beta_{\msf{u}_2,s_2}^{(1)}\beta_{\msf{u}_2,\msf{s}_1}^{(2)} &  - \beta_{\msf{u}_2,s_2}^{(2)} \beta_{\msf{u}_2,\msf{s}_1}^{(1)} \\ \beta_{\msf{w}_2,s_2}^{(1)}\beta_{\msf{w}_2,\msf{s}_1}^{(2)} &  - \beta_{\msf{w}_2,s_2}^{(2)} \beta_{\msf{w}_2,\msf{s}_1}^{(1)} \end{vmatrix} $ is non-zero, ie that $\beta_{\msf{u}_2,s_2}^{(1)}\beta_{\msf{u}_2,\msf{s}_1}^{(2)} \beta_{\msf{w}_2,s_2}^{(2)} \beta_{\msf{w}_2,\msf{s}_1}^{(1)}\ne \beta_{\msf{u}_2,s_2}^{(2)} \beta_{\msf{u}_2,\msf{s}_1}^{(1)} \beta_{\msf{w}_2,s_2}^{(1)}\beta_{\msf{w}_2,\msf{s}_1}^{(2)}$ or 
\[
\frac{\beta_{\msf{u}_2,s_2}^{(1)}\beta_{\msf{w}_2,\msf{s}_1}^{(1)}}{\beta_{\msf{u}_2,\msf{s}_1}^{(1)}\beta_{\msf{w}_2,s_2}^{(1)}}\ne \frac{\beta_{\msf{u}_2,s_2}^{(2)}\beta_{\msf{w}_2,\msf{s}_1}^{(2)}}{\beta_{\msf{u}_2,\msf{s}_1}^{(2)} \beta_{\msf{w}_2,s_2}^{(2)}}
\]

Note that the coefficients with 1 superscript are independent of the coefficients with 2 superscript. So, LHS and RHS are two independent and identically distributed random variables taking values in $\mbb{F}_{2^r}.$

By Lemma~\ref{lem_two_source}, we have that the determinant $\begin{vmatrix} \beta_{\msf{u}_2,\msf{s}_1}^{(1)} & \beta_{\msf{u}_2,s_2}^{(1)} \\ \beta_{\msf{w}_2,\msf{s}_1}^{(1)} & \beta_{\msf{w}_2,s_2}^{(1)} \end{vmatrix}\ne 0$ with high probability. So, the above random variable is not equal to 1 with high probability.

Now, we note that the random variable is a ratio of two polynomials with coefficients from $\mbb{F}_2,$ a ratio that is not identically 1. The equality stating that the ratio equals $\gamma\in\mathbb{F}_{2^r}, \gamma\ne 0,1$ is an equality stating that a polynomial not identically zero evaluates to 0. If all coefficients are chosen indpendently and uniformly at random, this polynomial evaluates to 0 with probability $O\lp \frac{1}{|\mbb{F}_{2^r}|} \rp.$ Thus, the random variable does not concentrate on any given value $\gamma\in\mathbb{F}_{2^r}$ and so, two independent and identically distributed copies of the random variable are unequal with high probability. 

Suppose that $\msf{w}_2$ is $\msf{s}_2$-only-reachable. Then, $\msf{u}_1, \msf{u}_2$ transmit in the first time slot with $\msf{u}_2$ zero-forcing user 1's symbol $a.$ In the second time slot, $\msf{w}_2$ which can recover both $\msf{b}_1$ and $\msf{b}_2$ with high probability, provides a linearly independent signal to $\msf{u}_2$'s transmission.

\section{Formal Proofs of Outer Bounds}
\subsection{Proof of the Omniscient Bound}
%%To prove part (a) of the theorem, w
%We show that if a node $\msf{v}$ is omniscient, then it can decode both user's messages and hence, the achievable sum rate is upper bounded by 1. This explains the motivation for the name. Let $\msf{v}$ be omniscient and satisfy condition (A). Since $\mcal{K}(\msf{v})$ is a $\lp\msf{s}_1,\msf{s}_2; \msf{d}_1\rp$-vertex-cut, the reception of the destination $\msf{d}_1$ is a function of the reception of $\msf{v}.$ This means $\msf{v}$ can decode the message of $\msf{s}_1.$ The reception of each node in $\mcal{K}^{\msf{s_2}}(\msf{v})$ is some function of the reception of node $\msf{v}$ and the transmission of $\msf{s}_1.$ Since $\msf{v}$ can now recover the transmission of $\msf{s}_1,$ and since $\mcal{K}^{\msf{s_2}}(\msf{v})$ forms a $\lp \msf{s}_2; \msf{d}_2\rp$-vertex-cut, $\msf{v}$ can recover the reception of $\msf{d}_2,$ and thus, also the message of $\msf{s}_2.$ 
%%We leave the formal proof of this outer bound in the appendix.

%Formally speaking, s
Since $\mcal{K}(\msf{v})$ is a $\lp\msf{s}_1,\msf{s}_2; \msf{d}_1\rp$-vertex-cut, the received signal at $\msf{d}_1$, $Y_{\msf{d}_1}$ is a function of $Y_{\msf{v}}$. On the other hand, since $\mcal{K}^{\msf{s}_2}(\msf{v})$ is a $\lp \msf{s}_2; \msf{d}_2\rp$-vertex-cut, we have that $Y_{\msf{d}_2}^N$ is a function of $X_{\msf{s}_1}^N$ and $Y_{\msf{v}}^N.$ Hence we have the Markov chains
\begin{align}
&X_{\msf{s}_1}^N \lra Y_{\msf{v}}^N \lra Y_{\msf{d}_1}^N \label{markov_1} \\
&X_{\msf{s}_2}^N \lra \lp Y_{\msf{v}}^N, X_{\msf{s}_1}^N\rp \lra Y_{\msf{d}_1}^N \label{markov_2}
\end{align}
By Fano's inequality and the data processing inequality, we have for any scheme of block length $N,$
\begin{align*}
&N\lp R_1+R_2-\epsilon_N\rp\\ 
&\le I\lp X_{\msf{s}_1}^N; Y_{\msf{d}_1}^N\rp + I\lp X_{\msf{s}_2}^N; Y_{\msf{d}_2}^N\rp\\
%&\le I\lp X_{\msf{s}_1}^N; Y_{\msf{d}_1}^N\rp + I\lp X_{\msf{s}_2}^N; Y_{\msf{d}_2}^N,X_{\msf{s}_1}^N\rp\\
&\le I\lp X_{\msf{s}_1}^N; Y_{\msf{v}}^N\rp + I\lp X_{\msf{s}_2}^N; Y_{\msf{v}}^N, X_{\msf{s}_1}^N\rp {\hspace{10pt} \mbox{(from \eqref{markov_1} and \eqref{markov_2})}}\\
&\le I\lp X_{\msf{s}_1}^N; Y_{\msf{v}}^N\rp + I\lp X_{\msf{s}_2}^N; Y_{\msf{v}}^N| X_{\msf{s}_1}^N\rp\\
&= H\lp Y_{\msf{v}}^N\rp - H\lp Y_{\msf{v}}^N | X_{\msf{s}_1}^N\rp + H\lp Y_{\msf{v}}^N | X_{\msf{s}_1}^N\rp\\
&= H\lp Y_{\msf{v}}^N\rp \le N,
\end{align*}
where $\epsilon_N\ra 0$ as $N\ra\infty$.
Hence $R_1+R_2\le 1$.
%Since $\mcal{K}^{\msf{s}_2}(\msf{v})$ is a $\lp \msf{s}_2; \msf{d}_2\rp$-vertex-cut, we have that $Y_{\msf{d}_2}^N$ is a function of $X_{\msf{s}_1}^N$ and $Y_{\msf{v}}^N.$ Hence, $H\lp Y_{\msf{d}_2}^N | Y_{\msf{v}}^N , X_{\msf{s}_1}^N\rp=0,$ and 
%\begin{align}
%N\lp R_1+R_2-\epsilon_N\rp \le H\lp Y_{\msf{v}}^N\rp \le N.
%\end{align}

\subsection{Proof of Claim~\ref{claim_2R1R2}}
%\begin{proof}
\begin{proof}
If $(R_1,R_2)$ is achievable, from data processing inequality and Fano's inequality, we have
\begin{align*}
&N\lp 2R_1+R_2-\epsilon_N\rp\\ 
&\le I\lp X_{\msf{s}_1}^N; Y_{\msf{d}_1}^N\rp + I\lp X_{\msf{s}_1}^N; Y_{\msf{d}_1}^N\rp + I\lp X_{\msf{s}_2}^N; Y_{\msf{d}_2}^N\rp\\
&\overset{\aaaa}{\le} I\lp X_{\msf{s}_1}^N; Z_{21}^N, X_{\msf{s}_2}^N\rp + I\lp X_{\msf{s}_1}^N; Z_1^N\rp + I\lp X_{\msf{s}_2}^N; Z_{21}^N, Z_{22}^N\rp \\
&\overset{\bbbb}{=}  I\lp X_{\msf{s}_1}^N; Z_{21}^N| X_{\msf{s}_2}^N\rp + I\lp X_{\msf{s}_1}^N; Z_1^N\rp + I\lp X_{\msf{s}_2}^N; Z_{21}^N, Z_{22}^N\rp\\
&= H\lp Z_{21}^N| X_{\msf{s}_2}^N\rp + H\lp Z_1^N\rp - H\lp Z_1^N| X_{\msf{s}_1}^N\rp + H\lp Z_{21}^N,Z_{22}^N\rp\\&\quad - H\lp Z_{21}^N,Z_{22}^N|X_{\msf{s}_2}^N\rp\\
&\overset{\cccc}{=} H\lp Z_1^N\rp + H\lp Z_{21}^N| X_{\msf{s}_2}^N\rp - H\lp Z_{21}^N|X_{\msf{s}_2}^N\rp + H\lp Z_{21}^N,Z_{22}^N\rp\\&\quad - H\lp Z_1^N| X_{\msf{s}_1}^N\rp\\
&\overset{\dddd}{\le} H\lp Z_1^N\rp + H\lp Z_{21}^N,Z_{22}^N\rp - H\lp Z_{22}^N\rp\\
&= H\lp Z_1^N\rp + H\lp Z_{21}^N|Z_{22}^N\rp \overset{\eeee}{\le} 2N
\end{align*}
where $\epsilon_N\ra0$ as $N\ra\infty$. (a) is due to condition 2) and 3). (b) is due to the fact that $X_{\msf{s}_1}^N$ and $X_{\msf{s}_2}^N$ are independent. (c) is due to condition 5) and rearranging terms. (d) is due to condition 4). (e) is due to condition 1).
%\begin{align}
%&N\lp 2R_1+R_2-\epsilon_N\rp\\ 
%&\le I\lp X_{\msf{s}_1}^N; Y_{\msf{d}_1}^N\rp + I\lp X_{\msf{s}_1}^N; Y_{\msf{d}_1}^N\rp + I\lp X_{\msf{s}_2}^N; Y_{\msf{d}_2}^N\rp\\
%&\overset{\aaaa}{\le} I\lp X_{\msf{s}_1}^N; Z_1^N| X_{\msf{s}_2}^N\rp + I\lp X_{\msf{s}_1}^N; Z_1^N\rp + I\lp X_{\msf{s}_2}^N; Z_{21}^N, Z_{22}^N\rp \\
%&= H\lp Z_1^N| X_{\msf{s}_2}^N\rp + H\lp Z_1^N\rp - H\lp Z_1^N| X_{\msf{s}_1}^N\rp + H\lp Z_{21}^N,Z_{22}^N\rp - H\lp Z_{21}^N,Z_{22}^N|X_{\msf{s}_2}^N\rp\\
%&\overset{\bbbb}{=} H\lp Z_1^N\rp + H\lp Z_1^N| X_{\msf{s}_2}^N\rp - H\lp Z_{21}^N|X_{\msf{s}_2}^N\rp + H\lp Z_{21}^N,Z_{22}^N\rp - H\lp Z_1^N| X_{\msf{s}_1}^N\rp\\
%&\overset{\cccc}{\le} H\lp Z_1^N\rp + H\lp Z_1^N, Z_{21}^N| X_{\msf{s}_2}^N\rp - H\lp Z_{21}^N|X_{\msf{s}_2}^N\rp + H\lp Z_{21}^N,Z_{22}^N\rp - H\lp Z_{22}^N\rp\\
%&= H\lp Z_1^N\rp + H\lp Z_1^N| X_{\msf{s}_2}^N, Z_{21}^N\rp + H\lp Z_{21}^N|Z_{22}^N\rp\\
%&\overset{\dddd}{\le} 2N
%\end{align}
%where $\epsilon_N\ra0$ as $N\ra\infty$. (a) is due to condition (2). (b) is due to condition (3) and rearranging terms. (c) is due to that conditioning reduces entropy and condition (4). (d) is due to condition (1) and (5).
\end{proof} 
\subsection{Proof of Claim~\ref{claim_2R12R2}}
\begin{proof}
If $(R_1,R_2)$ is achievable, from data processing inequality and Fano's inequality, we have
\begin{align*}
&N\lp 2R_1+R_2-\epsilon_{1,N}\rp\\ 
&\le I\lp X_{\msf{s}_1}^N; Y_{\msf{d}_1}^N\rp + I\lp X_{\msf{s}_1}^N; Y_{\msf{d}_1}^N\rp + I\lp X_{\msf{s}_2}^N; Y_{\msf{d}_2}^N\rp\\
&\overset{\aaaa}{\le} I\lp X_{\msf{s}_1}^N; Z_{21}^N, Z_{22}^N, X_{\msf{s}_2}^N\rp + I\lp X_{\msf{s}_1}^N; Z_{11}^N\rp\\&\quad + I\lp X_{\msf{s}_2}^N; Z_{21}^N, Z_{22}^N\rp \\
&\overset{\bbbb}{=}  I\lp X_{\msf{s}_1}^N; Z_{21}^N, Z_{22}^N| X_{\msf{s}_2}^N\rp + I\lp X_{\msf{s}_1}^N; Z_{11}^N\rp\\&\quad + I\lp X_{\msf{s}_2}^N; Z_{21}^N, Z_{22}^N\rp\\
&= H\lp Z_{21}^N, Z_{22}^N| X_{\msf{s}_2}^N\rp + H\lp Z_{11}^N\rp - H\lp Z_{11}^N| X_{\msf{s}_1}^N\rp\\&\quad + H\lp Z_{21}^N,Z_{22}^N\rp - H\lp Z_{21}^N,Z_{22}^N|X_{\msf{s}_2}^N\rp\\
%&\overset{\cccc}{\le} H\lp Z_{21}^N, Z_{22}^N| X_{\msf{s}_2}^N\rp + H\lp Z_{11}^N\rp + I\lp X_{\msf{s}_1}^N; Z_{22}^N\rp + H\lp Z_{21}^N| Z_{22}^N\rp - H\lp Z_{21}^N,Z_{22}^N|X_{\msf{s}_2}^N\rp\\
&\overset{\cccc}{\le} H\lp Z_{11}^N\rp - H\lp Z_{22}^N| X_{\msf{s}_1}^N\rp + H\lp Z_{22}^N\rp + H\lp Z_{21}^N| Z_{22}^N\rp \\
%&\overset{\cccc}{=} H\lp Z_{11}^N\rp + H\lp Z_{21}^N| X_{\msf{s}_2}^N\rp - H\lp Z_{21}^N|X_{\msf{s}_2}^N\rp + H\lp Z_{21}^N,Z_{22}^N\rp - H\lp Z_{11}^N| X_{\msf{s}_1}^N\rp\\
&\overset{\dddd}{=} H\lp Z_{11}^N\rp + H\lp Z_{21}^N|Z_{22}^N\rp + I\lp X_{\msf{s}_1}^N; Z_{22}^N\rp\\
&\overset{\eeee}{\le} 2N + I\lp X_{\msf{s}_1}^N; Z_{12}^N\rp,
\end{align*}
where $\epsilon_{1,N}\ra0$ as $N\ra\infty$. (a) is due to condition 2) and 3). (b) is due to the fact that $X_{\msf{s}_1}^N$ and $X_{\msf{s}_2}^N$ are independent. (c) is due to cancellation of terms and condition 4). (d) is due to $I\lp X_{\msf{s}_1}^N; Z_{22}^N\rp = H\lp Z_{22}^N\rp - H\lp Z_{22}^N| X_{\msf{s}_1}^N\rp$. (e) is due to condition 1) and 5).

%\begin{align}
%&N\lp R_1+R_2-\epsilon_{1,N}\rp\\
%&\le I\lp X_{\msf{s}_1}^N; Y_{\msf{d}_1}^N\rp + I\lp X_{\msf{s}_2}^N; Y_{\msf{d}_2}^N\rp
%\overset{\aaaa}{\le} I\lp X_{\msf{s}_1}^N; Z_{11}^N\rp + I\lp X_{\msf{s}_2}^N; Z_{21}^N,Z_{22}^N\rp\\
%&= H\lp Z_{11}^N\rp - H\lp Z_{11}^N|X_{\msf{s}_1}^N\rp + H\lp Z_{21}^N,Z_{22}^N\rp - H\lp Z_{21}^N,Z_{22}^N| X_{\msf{s}_2}^N\rp\\
%&\overset{\bbbb}{\le} H\lp Z_{11}^N\rp - H\lp Z_{22}^N|X_{\msf{s}_1}^N\rp + H\lp Z_{21}^N| Z_{22}^N\rp + H\lp Z_{22}^N\rp - H\lp Z_{21}^N,Z_{22}^N| X_{\msf{s}_2}^N\rp\\
%&= H\lp Z_{11}^N\rp + H\lp Z_{21}^N| Z_{22}^N\rp + I\lp X_{\msf{s}_1}^N; Z_{22}^N\rp  - H\lp Z_{21}^N,Z_{22}^N| X_{\msf{s}_2}^N\rp\\
%&\overset{\cccc}{\le} 2N + I\lp X_{\msf{s}_1}^N; Z_{12}^N\rp  - H\lp Z_{21}^N,Z_{22}^N| X_{\msf{s}_2}^N\rp,
%\end{align}
%where $\epsilon_{1,N}\ra0$ as $N\ra\infty$. (a) is due to condition (2). (b) is due to condition (4). (c) is due to condition (1), (5), and data processing inequality.

We see that we cannot upper bound $2R_1+R_2$ by $2$ in this case. On the other hand, 
\begin{align*}
&N\lp R_2-\epsilon_{2,N}\rp \le I\lp X_{\msf{s}_2}^N; Y_{\msf{d}_2}^N\rp\\ 
&\overset{\aaaa}{\le} I\lp X_{\msf{s}_2}^N; Z_{12}^N,X_{\msf{s}_1}^N\rp \overset{\bbbb}{=} I\lp X_{\msf{s}_2}^N; Z_{12}^N| X_{\msf{s}_1}^N\rp = H\lp Z_{12}^N| X_{\msf{s}_1}^N\rp.
\end{align*}
where $\epsilon_{2,N}\ra0$ as $N\ra\infty$. (a) is due to condition (3). (b) is due to the fact that $X_{\msf{s}_1}^N$ and $X_{\msf{s}_2}^N$ are independent. 
%
%\begin{align}
%&N\lp R_2-\epsilon_{2,N}\rp\\
%&\le I\lp X_{\msf{s}_1}^N; Y_{\msf{d}_1}^N\rp + I\lp X_{\msf{s}_2}^N; Y_{\msf{d}_2}^N\rp 
%\overset{\aaaa}{\le} I\lp X_{\msf{s}_1}^N; Z_{21}^N, Z_{22}^N, X_{\msf{s}_2}^N\rp + I\lp X_{\msf{s}_2}^N; Z_{12}^N,X_{\msf{s}_1}^N\rp\\
%&\overset{\bbbb}{=} I\lp X_{\msf{s}_1}^N; Z_{21}^N, Z_{22}^N| X_{\msf{s}_2}^N\rp + I\lp X_{\msf{s}_2}^N; Z_{12}^N| X_{\msf{s}_1}^N\rp = H\lp Z_{21}^N, Z_{22}^N| X_{\msf{s}_2}^N\rp + H\lp Z_{12}^N| X_{\msf{s}_1}^N\rp.
%\end{align}
%where $\epsilon_{1,N}\ra0$ as $N\ra\infty$. (a) is due to condition (3). (b) is due to the fact that $X_{\msf{s}_1}^N$ and $X_{\msf{s}_2}^N$ are independent. 

Combining the above two, we have
\begin{align*}
&N\lp 2R_1+2R_2-\epsilon_{N}\rp\\
&\le 2N + I\lp X_{\msf{s}_1}^N; Z_{12}^N\rp + H\lp Z_{12}^N| X_{\msf{s}_1}^N\rp = 2N+H\lp Z_{12}^N\rp\\
&\overset{\aaaa}{\le} 3N,
\end{align*}
where $\epsilon_N = \epsilon_{1,N}+\epsilon_{2,N}\ra0$ as $N\ra\infty$. (a) is due to condition 1). Proof complete.
\end{proof}

\end{document}